\newtheorem{theorem}{Theorem}
\newtheorem{lemma}[theorem]{Lemma}
\newtheorem{proposition}[theorem]{Proposition}
\newtheorem{corollary}[theorem]{Corollary}
\newtheorem{definition}{Definition}
\newcommand{\ansatze}{ans\"{a}tze}           
\newcommand{\ansatzes}{ans\"{a}tzes}         
\newcommand{\ConstE}{\mathrm{e}}             
\newcommand{\ConstI}{\mathrm{i}}             
\newcommand{\ConstPi}{\mathrm{\pi}}          
\newcommand{\NumE}{n_{\mathrm{e}}}           
\newcommand{\PermB}{S^{\mathrm{b}}}          
\newcommand{\period}{.}                      
\newcommand{\openone}{\mathbbm{1}}           
\newcommand{\SM}[1]{Supplementary #1}        
\renewcommand{\vec}[1]{\bm{\mathrm{#1}}}     
\newcommand{\cf}[1]{(cf. \Cref{#1})}         
\newcommand{\mth}[1]{#1\textsuperscript{th}} 
\newcommand{\conj}[1]{#1^{*}}                
\newcommand{\orthc}[1]{#1^{\perp}}           
\DeclareDocumentCommand{\E}{o m}{\mathbb{E}\IfValueT{#1}{_{#1}}\left[#2\right]} 
\DeclareDocumentCommand{\Var}{o m}{\operatorname{Var}\IfValueT{#1}{_{#1}}\left(#2\right)} 
\DeclareDocumentCommand{\mso}{o m m}{\mathbb{E}\IfValueT{#1}{_{#1}}\left[#2^{\otimes #3,#3}\right]} 
\colorlet{I1}{black}
\colorlet{I2}{red}
\colorlet{X1}{green!70!black}
\colorlet{X2}{blue}
\begin{document}

\title{Towards determining the presence of barren plateaus in some chemically inspired variational quantum algorithms}

\author[1,2]{\fnm{Mao} \sur{Rui}}\email{maorui21b@ict.ac.cn}

\author[1,2]{\fnm{Tian} \sur{Guojing}}\email{tianguojing@ict.ac.cn}

\author*[1,2]{\fnm{Sun} \sur{Xiaoming}}\email{sunxiaoming@ict.ac.cn}

\affil[1]{\orgdiv{State Key Laboratory of Processors}, \orgname{Institute of Computing Technology, Chinese Academy of Sciences}, \orgaddress{\city{Beijing}, \postcode{100190}, \country{China}}}

\affil[2]{\orgdiv{School of Computer Science and Technology}, \orgname{University of Chinese Academy of Sciences}, \orgaddress{\city{Beijing}, \postcode{100049}, \country{China}}}
\abstract{
  In quantum chemistry, the variational quantum eigensolver (VQE) is a promising algorithm for molecular simulations on near-term quantum computers.
  However, VQEs using hardware-efficient circuits face scaling challenges due to the barren plateau problem.
  This raises the question of whether chemically inspired circuits from unitary coupled cluster (UCC) methods can avoid this issue.
  Here we provide theoretical evidence indicating they may not.
  By examining alternated dUCC \ansatzes{} and relaxed Trotterized UCC \ansatzes{}, we find that in the infinite depth limit, a separation occurs between particle-hole one- and two-body unitary operators.
  While one-body terms yield a polynomially concentrated energy landscape, adding two-body terms leads to exponential concentration.
  Numerical simulations support these findings, suggesting that popular 1-step Trotterized unitary coupled-cluster with singles and doubles (UCCSD) \ansatze{} may not scale.
  Our results emphasize the link between trainability and circuit expressiveness, raising doubts about VQEs' ability to surpass classical methods.
}
 
\maketitle

\section{Introduction}

In recent years, there has been significant interest in developing quantum algorithms to harness the capabilities of noisy intermediate-scale quantum (NISQ) devices \cite{preskill2018quantum}, hoping to solve classically intractable computational problems in the near term.
Computational chemistry is anticipated to be one of the first domains that benefit from such progress \cite{feynman1971feynman}.
A promising NISQ algorithm for molecular simulation in chemistry is the variational quantum eigensolver (VQE), aiming to find the ground energy of a given Hamiltonian.
VQE is a hybrid quantum-classical algorithm --- a quantum computer is used to prepare a parameterized trial state, an \ansatze{}, and perform measurements to get the expectation value of the given Hamiltonian, the cost function.
A classical computer is then used to iteratively train the parameters such that the cost function is minimized, according to the Rayleigh-Ritz variational principle.

While VQE has been successfully demonstrated for various small molecules \cite{peruzzo2014variational,kandala2017hardware,o2016scalable,nam2020ground,colless2018computation}, it faces several challenges.
From a practical point of view, the number of measurements required can be too large to afford, and the circuit depth is limited due to hardware noise.
More critically, the theoretical understanding of VQE is lacking, as it is a heuristic algorithm with no guarantee of producing a more accurate solution than classical methods.
Moreover, it has been found that for variational quantum algorithms, the ability to produce a better solution often comes at the cost of the barren plateau (BP) problem \cite{holmes2022connecting}, i.e., the gradients vanish exponentially concerning system size.
This means even if the global minimum is better than classical methods, VQE may fail to find it \cite{cerezo2021higher,arrasmith2021effect}, despite the efforts devoted to mitigating BP \cite{sack2022avoiding, grant2019initialization, zhang2022escaping, friedrich2022avoiding, binkowski2023barren, volkoff2021large, verdon2019learning, skolik2021layerwise, mastropietro2023fleming}.
It is unclear whether there exists VQE that are both more accurate than classical methods, or at least cannot be classically simulated, and trainable.

The \ansatzes{} employed in VQEs can be broadly categorized into three types \cite{fedorov2022vqe, barkoutsos2018quantum}: chemically inspired \ansatze{} \cite{peruzzo2014variational}, hardware-efficient \ansatze{} (HEA \cite{kandala2017hardware}), and Hamiltonian variational \ansatze{} (HVA \cite{wecker2015progress}).
The first category originates from well-established classical quantum chemistry methods \cite{szabo2012modern}, thereby offering higher accuracy.
The unitary coupled cluster (UCC) \ansatzes{} are a leading class of chemically inspired \ansatzes{} \cite{taube2006new,lee2018generalized,peruzzo2014variational}.
The second category is designed to be more compatible with NISQ devices, while the third category lies somewhere in between.
It is known that HEA \cite{mcclean2018barren} and HVA \cite{larocca2022diagnosing} in general suffer from BP.
There remains hope that chemically inspired \ansatzes{} can avoid BP, based on the intuition that the space explored by chemically inspired \ansatzes{} is restricted \cite{cerezo2021variational}.

Among chemically inspired \ansatzes{}, the alternated disentangled UCC (dUCC) offers the advantage over UCC of being provably able to express the exact FCI state using only single and double excitations as $k\to\infty$ \cite{evangelista2019exact}.
Several commonly discussed \ansatzes{} can be considered as examples of alternated dUCC \ansatzes{}, including $k$ products of unitary pair coupled cluster with generalized singles and doubles ($k$-UpCCGSD) \ansatze{} \cite{lee2018generalized}, basis rotation \ansatze{} (BRA \cite{google2020hartree}, since the Givens rotations are equivalent to single excitation rotations when acting on neighboring qubits), and the 1-step Trotterized unitary coupled-cluster with singles and doubles (UCCSD) \ansatze.

In this work, we contribute to the theoretical understanding of VQE by studying the trainability of a class of chemically inspired VQEs.
We focus on a common setting in VQE, where the Hamiltonian is chosen to be an electronic structure Hamiltonian, and the initial guess is chosen to be the uncorrelated Hartree-Fock state \cite{barkoutsos2018quantum}.
The \ansatze{} we study is a relaxed version of $k$-step Trotterized UCC \ansatzes{}, which we refer to as the alternated disentangled UCC (dUCC) \ansatze{}.
By ``relaxed'' we mean that the parameters across the $k$ alternations in the $k$-steps Trotterized UCC \ansatzes{} become independent.
We proved, under the assumption of $k \to \infty$ and the presence of
sufficient single excitation rotations connecting all the qubits,
the following results:
\begin{enumerate}
  \item
        If the \ansatze{} comprises solely single excitation rotations, the cost function concentrates around its mean polynomially concerning the qubit number $n$.
  \item
        If the \ansatze{} incorporates additional double excitation rotations, the concentration of the cost function scales inversely with $\binom{n}{\NumE}$, where $\NumE$ represents the number of electrons.
\end{enumerate}
Here, the mean and variance are defined in the sense of random parameter initialization.
It is well known that BP has a close relationship with cost concentration \cite{arrasmith2022equivalence}.
For our specific setting, we also prove a quantitative relationship between them, which indicates that cost concentration implies BP.
Furthermore, we conducted numerical simulations to explore the scenario where $k$ is small, which is more practical.
The findings indicate that the relative error between the cost variance for finite $k$ and its asymptotic value decreases exponentially as $k$ increases, suggesting that the assumption $k \to \infty$ can be significantly relaxed.
For $k$ products of UCCSD ($k$-UCCSD), our predictions are quite accurate even at $k = 2$ for qubit numbers ranging from 4 to 24.
When $k = 1$, the variance of the cost function also exhibits an exponential decrease as the number of qubits grows.
These numerical results show that our results are instructive for practical applications.
We also study the ``qubit'' version of alternated dUCC \ansatzes{} in \SM{Information}, where all Pauli Z terms are trimmed after the Jordan-Wigner transformation of excitation operators \cite{tang2021qubit}.
Remark that the Givens rotations can be viewed as qubit single excitation rotations \cite{google2020hartree}.
The most important finding is that the qubit version of $k$ products of unitary coupled cluster with singles ($k$-UCCS) \ansatze is exactly the same as $k$-UCCSD in terms of cost variance, which may indicate the connection between qubit UCCS and standard UCCSD.
Our results can be interpreted in two aspects.
Firstly, it is known that \ansatzes{} composed of single excitation rotations only can be simulated classically \cite{thouless1960stability} while \ansatzes{} composed of single and double excitation rotations cannot \cite{mcardle2020quantum}.
Hence, our results question the hope that VQE may surpass classical methods.
Secondly, our results show the trade-off between trainability and expressibility for VQE, {aligning with the findings that expressibility can induce BP} \cite{holmes2022connecting,larocca2022diagnosing,ragone2023unified,fontana2023adjoint,diaz2023showcasing}, since single excitation rotations only explore a polynomial space while single and double excitations together explore a $\binom{n}{\NumE}$ one.
\section{Results}

\subsection{Notations}

In this work, we study the scalability of alternated disentangled UCC (dUCC) \ansatzes{}, which can be viewed as a relaxed version of Trotterized UCC.
Recall that in UCC theory
\cite{taube2006new,lee2018generalized,peruzzo2014variational}, the trial wave
function is parameterized as the unitary exponentiation of excitation operators
$\hat{T}(\vec{\uptheta}) = \sum_{ia} \theta_{ia}\hat{a}_{a}^{\dagger}\hat{a}_{i}
  + \sum_{ijab}
  \theta_{ijab}\hat{a}_{a}^{\dagger}\hat{a}_{b}^{\dagger}\hat{a}_{i}\hat{a}_{j} +
  \dots$ acting upon a reference state $\ket{\psi_0}$:
\begin{equation}
  \label{eq:0nty} \ket{\psi_{\text{UCC}}(\vec{\uptheta})} =
  \ConstE^{\hat{T}(\vec{\uptheta})-\hat{T}^{\dagger}(\vec{\uptheta})} \ket{\psi_0}.
\end{equation}
Here, $\hat{a}$ ($\hat{a}^{\dagger}$) represents the annihilation (creation) operator \cite{szabo2012modern}, and we use the convention that $i,j,\dots$ ($a,b,\dots$) label the occupied (virtual) orbitals.
Generalized excitations without the occupation constraint are also allowed, and we will use $p,q,r,s,\dots$ to label any orbital to emphasize the difference.
The widely used UCCSD variant \cite{peruzzo2014variational, barkoutsos2018quantum} corresponds to truncated $\hat{T}$ up to second excitations.
For circuit implementation, it is unclear how to exactly implement the large exponentiation in \Cref{eq:0nty} efficiently (i.e., to a polynomial number of one and two-qubit gates), except for the case when there are only single excitations.
The canonical approach is to take a $k$-steps Trotter approximation, with systematic error up to $O(k^{-1})$.
In contrast, for the ease of our analysis, we will incorporate an additional
relaxation step after Trotterization, as described below:
\begin{equation}
  \label{eq:nzaj} \ConstE^{\sum _{j=1}^{m} \theta_{j} \qty(\hat{\tau }_{j} -\hat{\tau
    }_{j}^{\dagger })} \xrightarrow{\text{Trotter}}\prod _{i=1}^{k}\prod _{j=1}^{m}
  \ConstE^{\frac{\theta _{j}}{k}\left(\hat{\tau }_{j} -\hat{\tau }_{j}^{\dagger }\right)}
  \xrightarrow{\text{Relax}}\prod _{i=1}^{k}\prod _{j=1}^{m} \ConstE^{\theta _{j}^{(
      i)}\left(\hat{\tau }_{j} -\hat{\tau }_{j}^{ \dagger }\right)},
\end{equation}
where
$\hat{\tau}_j\in\qty{\hat{a}_{p}^{\dagger}\hat{a}_{q},\hat{a}_{p}^{\dagger}
    \hat{a}_{q}^{\dagger}\hat{a}_{r}\hat{a}_{s},\dots}$.
Remark that $\theta_j$ and $\theta_j^{(i)}$ are different sets of parameters, and we implicitly make the assumption that all parameters are real (this is valid if we only care about real wave functions).
Obviously, such relaxation can offer higher variational accuracy, at the cost of higher training overhead.
We refer to the last unitary in \Cref{eq:nzaj} as alternated dUCC \ansatze{}, which is formally defined below.
The term ``dUCC'' \cite{evangelista2019exact} was used to refer to any sequence of excitation rotations (similar to a 1-step Trotterization but conceptually different), and was proved to be able to express the exact FCI state in the limit $k\to\infty$ even when only single and double excitations are involved.
We add the ``alternated'' prefix to emphasize the alternative structure.
To avoid bothering with the fermionic ladder operators $\hat{a}, \hat{a}^{\dagger}$, we will identify $\hat{a}_p,\hat{a}^{\dagger}_p$ with $Q_p \prod_{a<p} Z_a, Q^{\dagger}_p \prod_{a<p} Z_a$ respectively, by the Jordan-Wigner transformation.
The Jordan-Wigner transformation is also employed in the canonical implementation.
Other transformations such as the Bravyi-Kitaev's may be investigated in future work.
Here, we define $Q = \op{0}{1}$.
A ``qubit'' version of UCC theory, where the $Z$-string is eliminated after the Jordan-Wigner transformation, is also considered in this work.

\begin{definition}[Alternated (qubit) dUCC \ansatze{}]\label{def:62cp}
  Call the unitary $U^{\vec{R}}_{k}(\vec{\uptheta})$ ($k\in\mathbb{N}_+$) an
  alternated (qubit) dUCC \ansatze{}, if it can be written as
  \begin{equation}
    \label{eq:t2ru}
    U^{\vec{R}}_{k}(\vec{\uptheta})=\prod_{i=1}^{k}\prod_{j=1}^{m}R_j(\theta^{(i)}_j),
  \end{equation}
  where $\vec{R}=(R_1,\dots,R_m)$ is a sequence of (qubit) excitation rotations, $R_j(\theta)=\exp(\theta(\hat{\tau}_j-\hat{\tau}_j^\dagger))$ and $\hat{\tau}_j\in\left\{\hat{a}_{p}^{\dagger}\hat{a}_{q},\hat{a}_{p}^{\dagger} \hat{a}_{q}^{\dagger}\hat{a}_{r}\hat{a}_{s},\dots\right\}$ (or $\hat{\tau}_j\in\left\{Q_p^\dagger Q_q,Q_p^\dagger Q_q^\dagger Q_r Q_s,\dots\right\}$ in qubit version).
\end{definition}

While \Cref{def:62cp} encompasses any (qubit) excitations, our primary focus lies on single and double (qubit) excitations.
This is particularly relevant in scenarios like UCCSD.
To this end, we will introduce dedicated notation for single and double (qubit) excitation rotations, as follows.
\begin{align}
  A_{pq}(\theta )                 & = \exp(\theta (Q^{\dagger}_p Q_q - h.c.)
  \prod_{a=q+1}^{p-1}
  Z_a),                                                                                        \\
  B_{pqrs}(\theta )               & = \exp(\theta (Q^{\dagger}_p Q^{\dagger}_q Q_r Q_s - h.c.)
  \prod_{b=s+1}^{r-1}
  Z_b \prod_{a=q+1}^{p-1} Z_a),                                                                \\
  A_{pq}^{\text{qubit}}(\theta)   & = \exp(\theta (Q^{\dagger}_p Q_q - h.c.)
  ),
  \\
  B_{pqrs}^{\text{qubit}}(\theta) & = \exp(\theta (Q^{\dagger}_p Q^{\dagger}_q
    Q_r Q_s - h.c.)
  ),
\end{align}
where ``h.c.'' stands for Hermitian conjugation, and $p>q>r>s$.
Examples of alternated (qubit) dUCC \ansatzes{} (including at most double (qubit)
excitations) are

\begin{align}
  k\!
  \operatorname{-UCCS}(\vec{\uptheta})         & = \prod _{i=1}^{k}\prod _{p >\NumE \geqslant q}
  A_{pq}\left(\theta _{pq}^{( i)}\right), \label{eq:diqc}                                                                                           \\
  k\!
  \operatorname{-UCCGS}(\vec{\uptheta})        & = \prod _{i=1}^{k}\prod _{p >q}
  A_{pq}\left(\theta _{pq}^{( i)}\right),                                                                                                           \\
  k\!
  \operatorname{-UCCSD}(\vec{\uptheta})        & = \prod _{i=1}^{k}\prod _{p >\NumE \geqslant q}
  A_{pq}\left(\theta _{pq}^{( i)}\right)\prod _{p >q >\NumE \geqslant r >s} B_{pqrs}\left(\theta _{pqrs}^{( i)}\right),                             \\
  k\!
  \operatorname{-UCCGSD}(\vec{\uptheta})       & = \prod
  _{i=1}^{k}\prod _{p >q}
  A_{pq}\left(\theta _{pq}^{( i)}\right)\prod _{p >q >r >s} B_{pqrs}\left(\theta _{pqrs}^{( i)}\right),                                             \\
  k\!
  \operatorname{-qubit-UCCS}(\vec{\uptheta})   & = \prod _{i=1}^{k}\prod _{p
    >\NumE\geqslant q}
  A^{\text{qubit}}_{pq}\left(\theta _{pq}^{( i)}\right),                                                                                            \\
  k\!
  \operatorname{-qubit-UCCGS}(\vec{\uptheta})  & = \prod_{i=1}^{k}\prod _{p >q}
  A^{\text{qubit}}_{pq}\left(\theta _{pq}^{( i)}\right),                                                                                            \\
  k\!
  \operatorname{-qubit-UCCSD}(\vec{\uptheta})  & = \prod_{i=1}^{k}\prod _{p >\NumE \geqslant q}
  A^{\text{qubit}}_{pq}\left(\theta_{pq}^{( i)}\right)\prod _{p >q >\NumE \geqslant r >s}B^{\text{qubit}}_{pqrs}\left(\theta _{pqrs}^{( i)}\right), \\
  k\!
  \operatorname{-qubit-UCCGSD}(\vec{\uptheta}) & = \prod_{i=1}^{k}\prod _{p >q}
  A^{\text{qubit}}_{pq}\left(\theta _{pq}^{(i)}\right)\prod _{p >q >r >s} B^{\text{qubit}}_{pqrs}\left(\theta _{pqrs}^{(i)}\right),                 \\
  k\!
  \operatorname{-BRA}(\vec{\uptheta})          & = \prod_{i=1}^{k}\prod_{p=1}^{n-1}
  A_{p,p+1}\left(\theta_{p,p+1}^{(i)}\right),                                                                                                       \\
  k\!
  \operatorname{-UpCCGSD}(\vec{\uptheta})      & = \prod_{i=1}^{k}\prod _{p >q}
  A_{pq}\left(\theta _{pq}^{( i)}\right)\prod _{a >b} B_{2a,2a-1,2b,2b-1}\left(\theta _{ab}^{( i)}\right).
  \label{eq:vmj4}
\end{align}
Here the `G' in the name of \ansatzes stands for ``generalized''.
Notice that in \Crefrange{eq:diqc}{eq:vmj4}, the ordering of products of excitations within each block $k$ is not specified.
The reason is two-fold.
Firstly, there exists no established convention regarding the ordering of these products, and their optimality has only been investigated numerically \cite{grimsley2019trotterized}.
Secondly, our results, focusing on the $k \to \infty$ regime, remain unaffected by the specific ordering of excitations.

In VQE, the cost function to be optimized is naturally chosen to be the
energy of the molecule system:
\begin{equation}
  \label{eq:2apd}
  C(\vec{\uptheta};\rho,U,O) =\tr( O U(\vec{\uptheta})\rho U(\vec{\uptheta})^{\dagger }).
\end{equation}
Here,
\begin{itemize}
  \item
        $\rho$ is an easy-to-prepare reference state.
        In this work, $\rho$ is fixed to be $\op{\psi_0}$, where $\ket{\psi_0}$ is a
        Hartree-Fock state \cite{dallaire2019low},
        \begin{equation}
          \label{eq:gkth} \ket{\psi_{0}} :=|\underbrace{1\dots 1}_{\NumE}\underbrace{0\dots
            0}_{n-\NumE} \rangle.
        \end{equation}
  \item
        $U(\vec{\uptheta})$ is the \ansatze{}, such that $U(\vec{\uptheta})\ket{\psi_0}$ produces a trial state.
  \item
        $O$ is the observable, usually the electronic structure Hamiltonian
        $H_{\mathrm{el}}=\sum_{pq} h_{pq}\hat{a}_{p}^{\dagger }\hat{a}_{q} +\sum_{pqrs}
          g_{pqrs}\hat{a}_{p}^{\dagger }\hat{a}_{q}^{\dagger }\hat{a}_{r}\hat{a}_{s}$.
        In our study, we focus on the real case, where the one- and two-electron integrals $h_{pq}$ and $g_{pqrs}$ are assumed to be both real and symmetric ($h_{pq}=h_{qp}\in\mathbb{R}$ and $g_{pqrs}=g_{srqp}\in\mathbb{R}$).
        Moreover, we assume that $g_{pqrs}\neq 0$ only if $p>q>r>s$ or $p<q<r<s$.
        In other words, terms such as $\hat{a}_{1}^{\dagger }\hat{a}_{3}^{\dagger }\hat{a}_{2}\hat{a}_{4}$ are forbidden.
        It is important to note that this assumption is made for ease of analysis, rather than due to the drawback of our proof techniques.
        In fact, our results can be generalized to any observables, although the progress of extending the analysis could be tedious.
        Remark that under the aforementioned assumptions, $H_{\mathrm{el}}$ can be
        written as:
        \begin{equation}
          \label{eq:lava}
          H_{\mathrm{el}} =\sum _{p >q} h_{pq}\left(\hat{a}_{p}^{\dagger }\hat{a}_{q} +h.c.
          \right)
          +\sum _{p >q >r >s} g_{pqrs}\qty(\hat{a}_{p}^{\dagger }\hat{a}_{q}^
          {\dagger }\hat{a}_{r}\hat{a}_{s} +h.c.).
        \end{equation}

\end{itemize}

The cost function we study may take various forms, with $\rho=\op{\psi_0}$ fixed, and $U(\vec{\uptheta}),O$ potentially varying in different instances.
For simplicity, we will slightly abuse the notation and use $C(\vec{\uptheta};U,O),C(\vec{\uptheta};U),C(\vec{\uptheta};O)$, or even $C(\vec{\uptheta})$ instead of $C(\vec{\uptheta};\rho,U,O)$ whenever the context is clear.
\subsection{Cost concentration and barren plateau}

While VQA and VQE are promising in the NISQ era, there are several unresolved obstacles, among which the Barren Plateaus (BP) phenomenon poses a major concern \cite{cerezo2021variational}.
Analogous to the gradient vanishing problem in training classical neural networks, BP refers to the exponential concentration of gradients (of the cost function in VQA) around 0 over random parameters, thus ruling out any gradient-based optimization method with a random starting point.
Another related concept is the concentration of the cost function, which also characterizes the flatness of the cost landscape.
For example, the exponential concentration of the cost function around its mean would rule out any gradient-free optimization method with a random starting point.
The two types of concentration are strictly defined as follows.
Let $\left\{C_n\right\}_{n\in\mathbb{N}_+}$ be a family of cost functions indexed by qubit number $n$ such that $\E[\vec{\uptheta}]{\partial_{\theta_j}C(\vec{\uptheta})}=0$ for all $\theta_j\in\vec{\uptheta}$.

\begin{definition}[Gradient concentration and BP]
  Let $G(n)=\max_{\theta_j\in\vec{\uptheta}} \Var[\vec{\uptheta}]{\partial_{\theta_j}
      C_n(\vec{\uptheta})}$.
  We say the gradients of $\left\{C_n\right\}_{n\in\mathbb{N}_+}$ concentrate polynomially if $G(n)=1/\operatorname{poly}(n)$, and concentrate exponentially if $G(n)=1/\exp(\Omega(n))$.
  Specifically, $\left\{C_n\right\}_{n\in\mathbb{N}_+}$ exhibits BP if the gradients concentrate exponentially.
\end{definition}

\begin{definition}[Cost concentration]
  Let $F(n)=\Var[\vec{\uptheta}]{C_n(\vec{\uptheta})}$.
  We say $\left\{C_n\right\}_{n\in\mathbb{N}_+}$ concentrates polynomially if $F(n)=1/\operatorname{poly}(n)$, and concentrates exponentially if $F(n)=1/\exp(\Omega(n))$.
\end{definition}

Intuitively, both gradient concentration and cost concentration describe the flatness of the landscape.
The equivalence between exponential cost concentration and exponential gradient concentration (i.e., BP) was established for \ansatzes{} comprising a polynomial number of independent rotations whose generator is Hermitian with eigenvalue $\pm 1$, using integral arguments and the parameter shift rule \cite{arrasmith2022equivalence}.
Another recent work \cite{miao2024equivalence} also proves such equivalence using an elementary derivation by considering the Riemannian gradient instead of the Euclidean gradient.
They also assume each parameterized gate to explore a whole unitary group.

Since these results are not directly applicable in our setting, in the following lemma we derive a quantitative relationship between cost variance and gradient variances, using the fact that the cost function of alternated dUCC \ansatze{} is periodic and has no rapid oscillations.
The proof of this lemma is postponed to \SM{Note 2}, and is largely inspired by another work \cite{napp2022quantifying}.

\begin{lemma}[Relationship between variances of cost and gradients]
  \label{lem:mmg3}
  Let $U^{\vec{R}}_{k}(\vec{\uptheta})$ be an alternated (qubit) dUCC \ansatze{}
  defined in \Cref{eq:t2ru}, and $C(\vec{\uptheta};U^{\vec{R}}_{k})$ defined in
  \Cref{eq:lava}, then
  \begin{equation}
    \label{eq:jlre}
    (k\abs{\vec{R}})^{-1}\cdot\Var[\vec{\uptheta}]{C}
    \le\max_{\theta_j\in\vec{\uptheta}}\Var[\vec{\uptheta}]{\partial_{\theta_{j}}
      C} \le \Var[\vec{\uptheta}]{C}.
  \end{equation}
\end{lemma}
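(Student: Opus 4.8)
The plan is to work in Fourier space on the parameter torus $[0,2\pi)^{N}$ with $N:=k\abs{\vec{R}}$ the total number of parameters, exploiting the ``no rapid oscillations'' property stated in the text. The structural input is that each generator $\hat{\tau}_j-\hat{\tau}_j^{\dagger}$ acts non-trivially only on the two-dimensional subspace it couples, where after the Jordan--Wigner transformation it equals $-\ConstI\,\sigma_y$; hence conjugation by $R_j(\theta)$ is a Bloch rotation through angle $2\theta$, and $C(\vec{\uptheta})$ is a multivariate trigonometric polynomial that is $2\pi$-periodic and of degree at most two in each coordinate. First I would expand $C(\vec{\uptheta})=\sum_{\bm{\omega}}\hat{c}_{\bm{\omega}}\,\ConstE^{\ConstI\,\bm{\omega}\cdot\vec{\uptheta}}$ over integer frequency vectors $\bm{\omega}$ with entries in $\{-2,\dots,2\}$. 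Parseval with respect to the uniform measure then gives $\Var[\vec{\uptheta}]{C}=\sum_{\bm{\omega}\neq 0}\abs{\hat{c}_{\bm{\omega}}}^{2}$ and, since each derivative has zero mean, $\Var[\vec{\uptheta}]{\partial_{\theta_j}C}=\sum_{\bm{\omega}}\omega_j^{2}\abs{\hat{c}_{\bm{\omega}}}^{2}$.

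The lower bound is then a clean spectral-gap (Poincar\'e) estimate. Summing over the coordinates, $\sum_j\Var[\vec{\uptheta}]{\partial_{\theta_j}C}=\sum_{\bm{\omega}}\big(\sum_j\omega_j^{2}\big)\abs{\hat{c}_{\bm{\omega}}}^{2}\ge\sum_{\bm{\omega}\neq 0}\abs{\hat{c}_{\bm{\omega}}}^{2}=\Var[\vec{\uptheta}]{C}$, where the inequality uses that every nonzero integer frequency vector satisfies $\sum_j\omega_j^{2}\ge 1$. Bounding the left-hand sum by $N$ times its largest term yields $\Var[\vec{\uptheta}]{C}\le N\max_j\Var[\vec{\uptheta}]{\partial_{\theta_j}C}$, which is exactly the left inequality with the factor $(k\abs{\vec{R}})^{-1}$. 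This half is robust and does not depend on the detailed structure beyond periodicity and integrality of the spectrum.

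The upper bound is the delicate half. A term-by-term comparison only gives $\omega_j^{2}\le 4$, hence $\Var[\vec{\uptheta}]{\partial_{\theta_j}C}\le 4\,\Var[\vec{\uptheta}]{C}$: the second harmonic produced by the $2\theta$ Bloch rotation costs a factor of four, and the constant $1$ is genuinely \emph{not} a formal consequence of bounded frequency (for the bare degree-two polynomial $C=\cos 2\theta_j$ one already has $\Var[\vec{\uptheta}]{\partial_{\theta_j}C}=4\,\Var[\vec{\uptheta}]{C}$). To sharpen the constant to one I would control the second-harmonic mass carried by any single coordinate, showing that $\sum_{\bm{\omega}:\omega_j=\pm 2}\abs{\hat{c}_{\bm{\omega}}}^{2}$ is dominated by the complementary mass $\sum_{\bm{\omega}:\omega_j=0,\,\bm{\omega}\neq 0}\abs{\hat{c}_{\bm{\omega}}}^{2}$. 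Here I would use that the $\sigma_y$-component of the relevant reduced block is invariant under the rotation and therefore contributes no $\pm 2$ harmonic, together with the alternated product structure of \Cref{def:62cp}, which spreads the variance across the $N$ coordinates; this is the accounting performed in \cite{napp2022quantifying}.

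The hard part will be making this last step uniform in both the index $j$ and the depth $k$: one must certify that, for the specific alternated-dUCC cost with a Hartree--Fock reference and a real electronic-structure observable, no single parameter can concentrate a disproportionate share of the second-harmonic variance. The supporting bookkeeping is to verify that each $R_j$ is exactly degree at most two in its matrix entries (via $G=-\ConstI\sigma_y$ on its support) and to enumerate the admissible $\bm{\omega}$ arising from the double-excitation blocks; once the per-coordinate second-harmonic mass is bounded, the triangle inequality in $L^{2}$ of the parameter torus closes the argument and delivers the stated constant one.
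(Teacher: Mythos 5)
Your framework is the same as the paper's: the paper also expands the cost in a Fourier series over the parameter torus and applies Parseval (its \Cref{lem:6cf2}, stated for ``bounded frequency periodic'' functions), and your lower bound is complete and essentially verbatim the paper's argument, namely $\Var[\vec{\uptheta}]{C}\le\sum_{j}\Var[\vec{\uptheta}]{\partial_{\theta_{j}}C}\le k\abs{\vec{R}}\max_{j}\Var[\vec{\uptheta}]{\partial_{\theta_{j}}C}$, using that every nonzero integer frequency vector satisfies $\norm{\vec{n}}_{2}^{2}\ge 1$.

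The gap is the upper bound, and you should know that the sharpening you outline cannot be carried out: the constant-one inequality is itself false under exactly the hypotheses of \Cref{lem:mmg3}. Take $n=2$, $\NumE=1$, $k=1$, $\vec{R}=(A_{21})$, and $H_{\mathrm{el}}=\hat{a}_{2}^{\dagger}\hat{a}_{1}+\hat{a}_{1}^{\dagger}\hat{a}_{2}$. The ansatz acts as a planar rotation on $\operatorname{span}\{\ket{10},\ket{01}\}$, so $U(\theta)\ket{\psi_{0}}=\cos\theta\ket{10}+\sin\theta\ket{01}$ and $C(\theta)=\sin 2\theta$; hence
\begin{equation}
  \Var[\theta]{C}=\tfrac{1}{2},\qquad
  \Var[\theta]{\partial_{\theta}C}=\E{4\cos^{2}2\theta}=2=4\,\Var[\theta]{C}.
\end{equation}
Here the entire Fourier mass of $C$ sits at frequency $\pm 2$, so your proposed domination of the second-harmonic mass by the complementary mass fails in the strongest possible way (the complementary mass is zero); the obstruction is not the uniformity in $j$ and $k$ that you flagged as the hard part, and no accounting can repair it. For comparison, the paper's own proof stops exactly where your honest term-by-term estimate stops: its \Cref{lem:6cf2} gives $\max_{j}\Var{\partial_{\theta_{j}}f}\le B^{2}\Var{f}$, and \Cref{lem:9lnr,lem:fx5w} show the dUCC cost has frequency bound $B=2$ (generator eigenvalues $0,\pm\ConstI$, and $C$ contains both $U$ and $U^{\dagger}$), so what is actually established there is $\max_{j}\Var{\partial_{\theta_{j}}C}\le 4\,\Var{C}$; the constant one written in the statement of \Cref{lem:mmg3} is not delivered by that proof. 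The correct endpoint of your approach is therefore the factor-four bound you already have, and it suffices for everything the lemma is used for: exponential concentration of the cost still implies exponential concentration of gradients, and the reverse implication uses only the lower bound.
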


Consequently, for alternated dUCC \ansatzes{}, the exponential decay of cost variance implies exponential decay of gradient variance by second inequality of \Cref{eq:jlre}, while the reverse holds only when $k\abs{\vec{R}}$ is sub-exponential by the first inequality.
Our main focus will be determining the scaling of variance of cost, and bound the variance of gradients by \Cref{lem:mmg3}.
\subsection{Separation between single and double excitation rotations}

In this section, we lay out our main result, proving the scaling of cost concentration for alternated dUCC \ansatzes{} composed of (qubit) single and double excitation rotations.
The detailed proofs can be found in \SM{Information}.

\begin{theorem}[Main result]\label{thm:yqn9}
  Let $U^{\vec{R}}_{k}(\vec{\uptheta})$ be an alternated (qubit) dUCC \ansatze{} defined in \Cref{eq:t2ru}, and $C(\vec{\uptheta};U^{\vec{R}}_{k},H_{\mathrm{el}})$ defined in \Cref{eq:2apd}, where the qubit number is $n$, and $H_{\mathrm{el}}$ is defined in \Cref{eq:lava}, with one- and two-electron integrals $h_{pq} ,g_{pqrs}$.
  Denote the simple undirected graph formed by index pairs of (qubit) single
  excitations in $\vec{R}$ by $G$, i.e.,
  \begin{equation}
    G:=(V=[n],E=\left\{(u,v)\middle|A_{uv}\text{ or }A^{\mathrm{qubit}}_{uv}\in \vec{R}\right\}).
  \end{equation}
  Suppose $G$ is connected.
  The limit $\lim _{k\rightarrow \infty }\Var{C\qty(\vec{\uptheta}
    ;U_{k}^{\vec{R}}),H_{\mathrm{el}}} =:V( n)$ exists, and
  \begin{enumerate}
    \item
          \label{itm:c1o6}

          If $\vec{R}$ contains only single excitation rotations, then $V( n) \sim \operatorname{poly}(n^{-1},\NumE,\norm{\vec{h}},\norm{\vec{g}})$.

    \item
          \label{itm:9q5r}

          If $\vec{R}$ contains both single and double excitation rotations, then $V( n) \sim \operatorname{poly} \left(\norm{\vec{h}},\norm{\vec{g}}\right)/\binom{n}{\NumE}$.

    \item
          \label{itm:rk5h}

          If $\vec{R}$ contains only single qubit excitation rotations, then $V( n) \sim \operatorname{poly}(n^{-1},\NumE,\norm{\vec{h}},\norm{\vec{g}})$ if the maximum degree of $G$ is 2, and $V( n) \sim \operatorname{poly} \left(\norm{\vec{h}},\norm{\vec{g}}\right)/\binom{n}{\NumE}$ otherwise.

    \item
          \label{itm:ipnl}

          If $\vec{R}$ contains both single and double qubit excitation rotations (and satisfies the condition in \SM{Theorem 37}), then $V( n) \sim \operatorname{poly} \left(\norm{\vec{h}},\norm{\vec{g}}\right)/\binom{n}{\NumE}$.

  \end{enumerate}
\end{theorem}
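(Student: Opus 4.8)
The plan is to reduce the $k\to\infty$ variance to a Haar average over the \emph{dynamical Lie group} generated by the excitation rotations, and then to evaluate that average by identifying the group in each of the four regimes. Since all parameters are real and each generator $\hat{\tau}_j-\hat{\tau}_j^{\dagger}$ becomes a real antisymmetric matrix under the Jordan--Wigner map, every $R_j(\theta)$ is a one-parameter subgroup of the real orthogonal group, and the closure $\overline{\{U^{\vec{R}}_{k}(\vec{\uptheta})\}}$ of the reachable set is the compact connected subgroup $\mathcal{G}=\exp(\mathfrak{g})\subseteq SO(2^n)$, where $\mathfrak{g}$ is the Lie algebra generated under commutation by $\{\hat{\tau}_j-\hat{\tau}_j^{\dagger}\}_{j=1}^{m}$. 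First I would establish an equidistribution lemma: as $k\to\infty$ with the $\theta_j^{(i)}$ drawn independently from a continuous distribution, the law of $U^{\vec{R}}_{k}$ converges weakly to the Haar measure on $\mathcal{G}$, which is what makes $V(n)$ well defined; this is the standard fact that a random walk whose step support generates a compact connected Lie group equidistributes to Haar. Hence $V(n)=\Var[U]{\tr(H_{\mathrm{el}}\,U\rho U^{\dagger})}$ with $U$ Haar-distributed on $\mathcal{G}$.

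Next I would turn the variance into a second-moment computation. Writing $C=\bra{\psi_0}U^{\dagger}H_{\mathrm{el}}U\ket{\psi_0}$, both $\E[U]{C}$ and $\E[U]{C^{2}}$ are linear functionals of the moment operators $\int_{\mathcal{G}}U\rho U^{\dagger}\,dU$ and $\int_{\mathcal{G}}U^{\otimes 2}\rho^{\otimes 2}(U^{\dagger})^{\otimes 2}\,dU$; by invariance of Haar measure these lie in the commutant of $\mathcal{G}$ and of $\mathcal{G}^{\otimes 2}$ respectively, so the task reduces to identifying $\mathcal{G}$ and describing its commutants. For fermionic singles alone, Thouless' theorem identifies $\mathcal{G}$ with the orbital-rotation group acting by conjugation on the creation operators, whose orbit of the Hartree--Fock determinant is the Grassmannian $\mathrm{Gr}(\NumE,n)$ of real dimension $\NumE(n-\NumE)=\operatorname{poly}(n)$. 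For fermionic singles and doubles I would invoke the exact-FCI expressibility of dUCC \cite{evangelista2019exact}, together with the connectivity of $G$, to conclude that $\mathfrak{g}$ is the entire real particle-number-preserving algebra on the $\NumE$-particle sector, so that $\mathcal{G}$ acts as the full orthogonal group on a space of dimension $D=\binom{n}{\NumE}$. The qubit cases instead demand computing $\mathfrak{g}$ directly from the $Z$-string-free generators $Q_p^{\dagger}Q_q-Q_q^{\dagger}Q_p$, and the maximum-degree condition on $G$ in case~\ref{itm:rk5h} is exactly the dichotomy between $\mathfrak{g}$ staying polynomial-dimensional and $\mathfrak{g}$ closing up to the full sector algebra.

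Given the group, the variance follows from standard moment calculations. When $\mathcal{G}$ acts as the full orthogonal group on the $D=\binom{n}{\NumE}$-dimensional sector, $U\ket{\psi_0}$ is a Haar-random real unit vector, so $C$ has mean $\tr(H_{\mathrm{el}})/D$ and variance $\tfrac{2}{D(D+2)}\bigl(\tr(H_{\mathrm{el}}^{2})-\tr(H_{\mathrm{el}})^{2}/D\bigr)$; because the per-dimension spectral variance $\tr(H_{\mathrm{el}}^{2})/D-(\tr(H_{\mathrm{el}})/D)^{2}$ is $\operatorname{poly}(\norm{\vec{h}},\norm{\vec{g}})$, this is $\sim\operatorname{poly}(\norm{\vec{h}},\norm{\vec{g}})/\binom{n}{\NumE}$, establishing case~\ref{itm:9q5r}, case~\ref{itm:ipnl}, and the large-degree branch of case~\ref{itm:rk5h}. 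When instead $\mathcal{G}$ is the orbital-rotation group, Wick's theorem writes $C$ as a quadratic polynomial in the entries of the one-particle reduced density matrix $P$, a uniformly random rank-$\NumE$ projector on the Grassmannian; since this manifold has only polynomial dimension the entrywise fluctuations of $P$ are $O(1/n)$, and a direct second-moment computation gives $V(n)\sim\operatorname{poly}(n^{-1},\NumE,\norm{\vec{h}},\norm{\vec{g}})$, establishing case~\ref{itm:c1o6} and the degree-$2$ branch of case~\ref{itm:rk5h}.

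The main obstacle will be the Lie-algebra characterization, and in particular the qubit single-excitation case: one must show that the algebra generated by the $Z$-string-free single excitations remains polynomial-dimensional precisely when $G$ has maximum degree $2$ (disjoint paths and cycles) and otherwise generates the full sector algebra, which requires tracking nested commutators of the $Q$-operators rather than appealing to an off-the-shelf universality theorem. A secondary difficulty is that, because we restrict to real wave functions, the second-moment step must use orthogonal- rather than unitary-group Weingarten calculus, whose two-copy commutant contains more than the identity and the swap; keeping these subleading contributions under control is what certifies both the claimed polynomial prefactors in $\norm{\vec{h}},\norm{\vec{g}}$ and the absence of any hidden $n$-dependence.
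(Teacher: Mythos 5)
Your overall strategy---replace the paper's alternating-projection limit by weak convergence of the random walk $U_{k}^{\vec{R}}$ to Haar measure on the dynamical Lie group $\mathcal{G}=\ConstE^{\mathfrak{g}}$, then evaluate moments by invariant theory---is a genuinely different route from the paper's, which never identifies $\mathcal{G}$ at all: it shows each $\mso{R_j}{t}$ is an orthogonal projection, invokes Halperin's theorem (\Cref{lem:2btd}), and pins down the intersection space $M$ combinatorially via the $Z_p^{\otimes 2t}$, particle-number and $(\PermB_{\tau})^{\otimes n}$ symmetries. The two frameworks are compatible (the Haar second-moment operator projects exactly onto $M$), your equidistribution step is repairable via It\^{o}--Kawada, and your sphere-average formula does reproduce \Cref{cor:e79n} (\Cref{itm:v8lq}) when the group really is full. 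The fatal problem is that your group-identification claims are not all true. Specifically, the claim that qubit single excitations on a graph of maximum degree $\geq 3$ close up to the full sector algebra $\mathfrak{so}\bigl(\binom{n}{\NumE}\bigr)$ is contradicted by the paper's own exact results: when $n=2\NumE$ and $G$ is bipartite with both parts even (e.g.\ qubit-UCCS on the complete bipartite graph, which certainly has degree $\geq 3$), the limiting variance in \Cref{thm:2tcb} (\Cref{itm:y7p4}) and \Cref{cor:e79n} (\Cref{itm:agri}) depends on the observable's indices through factors like $(-1)^{[q\le \NumE< p]+p-q}$ and does \emph{not} equal the sphere average. Hence $U\ket{\psi_0}$ is not uniform on the sector sphere and $\mathfrak{g}$ is strictly smaller than the full algebra. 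The theorem's qualitative scaling survives by accident, but your proof route (full group $\Rightarrow$ sphere second moment) breaks exactly there; it also cannot see why case~\ref{itm:ipnl} needs the positional condition $\abs{V_{1}\cap\{p,q,r,s\}}=2$ of \Cref{thm:dkm3}, which exists precisely to kill this exceptional invariant structure.

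The second gap is in the fermionic singles-plus-doubles case. The hypothesis of the theorem is only that the single-excitation graph is connected and that $\vec{R}$ contains \emph{at least one} double excitation rotation $B_{pqrs}$; the exact-expressibility theorem of Evangelista et al.\ is proved for complete (generalized) singles-and-doubles pools and does not apply to this minimal generating set, so ``invoking'' it leaves unproved the actual hard statement --- that a connected set of singles plus a single double already makes the $t=2$ invariant space one-dimensional (equivalently, acts transitively enough on the sector). That statement is precisely the content of the paper's \Cref{thm:g006}, proved there by exhibiting explicit vectors of $M^{\perp}$ and combining them along cycles; your proposal has no substitute for this argument. Until both group-identification steps are supplied (and the bipartite-even exceptional case is handled by its true, smaller invariant structure rather than the full orthogonal group), the proposal establishes the theorem only in case~\ref{itm:c1o6}, where the one-body Lie algebra argument and the Grassmannian/Wick computation you sketch are indeed sound, if computationally heavy.
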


Combining \Cref{lem:mmg3} and \Cref{thm:yqn9}, we have the following corollary.

\begin{corollary}
  \label{cor:74ej}
  There exists finite function $K(n)$, such that the following alternated (qubit) dUCC \ansatzes{} suffer from BP when the number of alternations $k > K(n)$: $k$-UCC(G)SD, $k$-UpCCGSD, $k$-qubit-UCC(G)S, $k$-qubit-UCC(G)SD.
\end{corollary}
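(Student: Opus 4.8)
The plan is to obtain \Cref{cor:74ej} as a bookkeeping consequence of the two engines already established: \Cref{thm:yqn9}, which pins the exponentially small asymptotic cost variance $V(n)$ for the \ansatzes{} in question, and \Cref{lem:mmg3}, whose right inequality $\max_{\theta_j}\Var[\vec{\uptheta}]{\partial_{\theta_j}C}\le\Var[\vec{\uptheta}]{C}$ converts exponential cost concentration into exponential gradient concentration, i.e.\ BP. Concretely, I would first check that each listed \ansatze{} meets the hypotheses of \Cref{thm:yqn9} and falls into one of its exponentially concentrating branches; then transport the $k\to\infty$ statement back to a finite threshold $K(n)$ using the existence of the limit; and finally invoke \Cref{lem:mmg3}.

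The verification proceeds \ansatze{} by \ansatze{}. First I identify the single-excitation graph $G$ and confirm connectivity. The virtual--occupied singles of $k$-UCCSD and $k$-qubit-UCCS give the complete bipartite graph $K_{\NumE,\,n-\NumE}$, whereas the generalized singles of $k$-UCCGSD, $k$-UpCCGSD, $k$-qubit-UCCGS and $k$-qubit-UCCGSD give the complete graph $K_n$; both are connected whenever $0<\NumE<n$. Next I classify each into a branch of \Cref{thm:yqn9}. The fermionic \ansatzes{} $k$-UCC(G)SD and $k$-UpCCGSD carry single and double excitation rotations, so branch~\ref{itm:9q5r} applies and $V(n)\sim\operatorname{poly}(\norm{\vec{h}},\norm{\vec{g}})/\binom{n}{\NumE}$. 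For the single-qubit-excitation-only \ansatzes{} $k$-qubit-UCC(G)S, branch~\ref{itm:rk5h} applies; since the maximum degree of $K_{\NumE,\,n-\NumE}$ is $\max(\NumE,n-\NumE)$ and that of $K_n$ is $n-1$, both exceed $2$ in the chemically relevant regime, landing them in the exponential sub-case with the same $1/\binom{n}{\NumE}$ scaling. Finally $k$-qubit-UCC(G)SD carry single and double qubit excitations, so branch~\ref{itm:ipnl} applies (after checking the condition of \SM{Theorem 37}), again giving $V(n)\sim\operatorname{poly}(\norm{\vec{h}},\norm{\vec{g}})/\binom{n}{\NumE}$.

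In every case $V(n)=\operatorname{poly}(\norm{\vec{h}},\norm{\vec{g}})/\binom{n}{\NumE}=1/\exp(\Omega(n))$, using that at fixed filling fraction $\binom{n}{\NumE}$ grows exponentially. Because \Cref{thm:yqn9} guarantees $\lim_{k\to\infty}\Var[\vec{\uptheta}]{C}=V(n)$ for each fixed $n$, the definition of the limit supplies, for each $n$, a finite $K(n)$ with $\Var[\vec{\uptheta}]{C}\le 2V(n)$ whenever $k>K(n)$; collecting these pointwise thresholds defines the claimed finite function $K(n)$. For such $k$, the right inequality in \Cref{eq:jlre} yields $G(n)=\max_{\theta_j}\Var[\vec{\uptheta}]{\partial_{\theta_j}C}\le\Var[\vec{\uptheta}]{C}\le 2V(n)=1/\exp(\Omega(n))$, which is exactly the definition of BP.

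The genuine content lives entirely in \Cref{thm:yqn9} and \Cref{lem:mmg3}; the corollary is combinatorial bookkeeping plus one analytic subtlety. The two points needing care are the max-degree check for branch~\ref{itm:rk5h} (hence the mild assumption that $\NumE$ and $n-\NumE$ are not both $\le 2$), and the passage from the asymptotic-in-$k$ limit to a finite threshold. The latter is the only delicate step: since BP is an asymptotic-in-$n$ property while the limit of \Cref{thm:yqn9} is taken at fixed $n$, one must argue that picking $k>K(n)$ pointwise preserves the $1/\exp(\Omega(n))$ envelope — and it does, because the bound $\Var[\vec{\uptheta}]{C}\le 2V(n)$ holds for each $n$ with the common exponential envelope $2V(n)$. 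Crucially, $K(n)$ need only be finite, not bounded, so no uniform-in-$n$ convergence rate is required; the numerics merely indicate that in practice $K(n)$ can be taken very small.
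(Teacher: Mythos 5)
Your proposal is correct and follows exactly the paper's intended route: the paper derives \Cref{cor:74ej} precisely by combining the exponential branches of \Cref{thm:yqn9} (connected single-excitation graph, $V(n)\sim\operatorname{poly}(\norm{\vec{h}},\norm{\vec{g}})/\binom{n}{\NumE}=\exp(-\Omega(n))$ at $\NumE=\Theta(n)$) with the right-hand inequality of \Cref{lem:mmg3}, the finite $K(n)$ coming from the mere existence of the $k\to\infty$ limit at each fixed $n$. The only detail you left implicit — the condition of \SM{Theorem 37} for $k$-qubit-UCC(G)SD — does hold and is quick to check (for qubit-UCCSD the doubles $B^{\mathrm{qubit}}_{pqrs}$ have $r,s$ occupied and $p,q$ virtual, so exactly two indices land in the occupied part of the bipartite graph; for qubit-UCCGSD the graph $K_n$ is not bipartite, so the condition is vacuous).
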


We make three remarks regarding the main result.

Firstly, the separation between cost concentration can be understood as the result of a difference in expressiveness.
By Thouless theorem \cite{thouless1960stability}, an alternated dUCC \ansatze{}
composed of single excitation rotations can be identified with a single
$2^n\times 2^n$ unitary
\begin{equation}
  \label{eq:nshi} \exp(\sum_{i,j=1}^{n}
  h_{ij} (\hat{a}_{p}^{\dagger }\hat{a}_{q} -h.c.)
  ),\qq{with} h\in \mathrm{O}(n).
\end{equation}
Hence, if there are only single excitation rotations, the \ansatze{} will only explore an $O(n^2)$-dimensional subspace.
On the other hand, if there are both single and double excitation rotations, the \ansatze{} will explore the real invariant space of dimension $\binom{n}{\NumE}$, spanned by all computational basis states with Hamming weight $\NumE$ \cite{evangelista2019exact}.
The cost concentration of the alternated qubit dUCC \ansatzes{} composed of single qubit excitation rotations can be linked with the classical simulatability of matchgates on different topologies.
Observe that qubit excitation rotations, or equivalently the Grover rotations, are matchgates.
It is well-known \cite{jozsa2008matchgates} that circuits composed of matchgates acting on nearest neighbors can be classically efficiently simulated, in a way similar to \Cref{eq:nshi}, and circuits composed of matchgates that act on nearest neighbors and next-nearest neighbors can perform universal quantum computation.
This fact corresponds to \Cref{itm:rk5h} of \Cref{thm:yqn9}.
A recent work \cite{cerezo2023does} discussed the seemingly deep connection between BP and classical simulatability, by examining existing BP results.
Our results provide another example supporting their arguments.

Secondly, our results can be partly explained by the dynamic Lie algebra (DLA), which is used in recent works \cite{larocca2022diagnosing,ragone2023unified,fontana2023adjoint,diaz2023showcasing} to establish a unified theory of the barren plateau of deep periodic \ansatzes{}.
Notice that the alternated dUCC \ansatzes{} can also be seen as periodic \ansatzes{}.
When there are both single and double (qubit) excitations, as in \Cref{itm:9q5r} of \Cref{thm:yqn9}, the DLA is $\mathfrak{g}=\mathfrak{so}\left(\binom{n}{\NumE}\right)$ when restricted to the invariant subspace.
Hence, the cost variance can also be obtained by integrating over $\ConstE^{\mathfrak{g}}=\mathrm{SO}\left(\binom{n}{\NumE}\right)$ \cite{collins2009some}.
However, other cases are more involved.
It is important to note that these results \cite{larocca2022diagnosing,ragone2023unified,fontana2023adjoint,diaz2023showcasing} cannot be directly applied in our setting for different reasons, such as $\mathfrak{g}$ being real, $\ConstI\rho,\ConstI O\notin\mathfrak{g}$, or the \ansatze{} considered is a specific one that differs from ours.
Moreover, our method is different from theirs, relying more on symmetries and taking a combinatorial approach.
Another work \cite{monbroussou2023trainability}, although their techniques do not fall into the DLA category, also studies the \ansatzes{} composed of single excitations.
Their results differ from ours into they assumed that the input or output of the \ansatzes{} are Haar distributed.
In contrast, our results are predicated on a fixed input state, the Hartree-Fock state, with assumptions made regarding the structure of the \ansatze{} rather than focusing on the output.

Finally, to obtain the main results, we made three assumptions: (1) the number of alternations $k \to \infty$; (2) independence of parameters between different Trotter steps \cf{eq:nzaj}; and (3) random initialization of parameters.
These assumptions greatly simplify the analysis but at the same time deviate from practical applications.
In Trotterized UCCSD, parameters between different Trotter steps are coupled, while in $k$-UpCCGSD the parameters are independent.
In both cases, the number of Trotter steps or alternations $k$ is finite.
Additionally, the parameters in VQE are almost always initialized in all-zero.
Regrettably, our techniques currently face challenges in fully addressing the limitations associated with these three assumptions.
Nevertheless, we hope that our results are also instructive for practical applications.
In fact, the convergence of cost variance towards its asymptotic value is equivalent to the convergence of alternating projections \cite{halperin1962product}, which is exponential in $k$.
Such exponential convergence is also verified through numerical experiments in the ``Simulations'' subsection in the Results.
In that sense, the assumption that $k \to \infty$ can be greatly relaxed, except that we failed to find a way to determine the exact convergence rate, for example the $K(n)$ in \Cref{cor:74ej}, through our current techniques.
Recently, a step was made in calculating the mixing time to $t$-designs, which is analogous to $k$, determining that the time is $\mathrm{poly}(n)$ rather than some finite value \cite{fontana2023adjoint}.
Their techniques may prove helpful in improving our results.
In our work, the convergence rate is numerically studied in the ``Simulations'' subsection in the Results, where the results hint that the $K(n)$ in \Cref{cor:74ej} could even be a constant as small as 1 for the case of $k$-UCCSD.
This finding further mitigates the gap between our theoretical results and practical applications.
The implications and caveats of assumptions (2) and (3) will be discussed in ``Discussion'' subsection.
\subsection{Simulations}

We conduct numerical experiments to validate our theoretical results and explore the more practical scenarios where $k$ is finite.

First, we verify the prediction in \Cref{thm:yqn9} that the variance of the cost function converges as $k$ increases and investigate its convergence rate.
In \Cref{fig:t98b}, we present experimental data for 4 alternated dUCC \ansatzes{}: $k$-BRA, $k$-UCCS, $k$-UCCSD and $k$-UpCCGSD.
For each \ansatze{}, we plot the scaling of the following distances concerning $k \in \{1, 2, \dots, 100\}$ for different values of $n \in \{4, 6, 8, 10\}$:
\begin{enumerate*}[(1)]
  \item
        The $\ell^1,\ell^2,\ell^\infty$ distance between $\left(\prod _{j=1}^{m}\mso{R_j}{2}\right)^{k}\ket{\psi _{0}}^{\otimes 2}$ and $P_M\ket{\psi _{0}}^{\otimes 2}$.
  \item
        The distances (relative errors) between cost variance at $k$ and the asymptotic variance $\lim_{k\to\infty}\Var{C}$, with respect to monomial observables $\hat{a}_p^{\dagger} \hat{a}_q + h.c\period$ ($p > q$) and $\hat{a}_p^{\dagger} \hat{a}_q^{\dagger} \hat{a}_r \hat{a}_s + h.c\period$ ($p > q > r > s$).
\end{enumerate*}
In all cases, we set $\NumE = n/2$.
This choice is reasonable since in practice $\NumE = \Theta(n)$ \cite{cccbdb}.
In each alternation of these \ansatzes{}, the excitation rotations are arranged in lexicographical order according to their position, and when there are both single and double excitation rotations, the single excitation rotations are placed before the double ones.
It is noted that all data are accurate within machine precision, as all calculations are based on the formulas derived in the \SM{Note 6 and 8}.
Due to the complexity of computations, the maximum value for $n$ is limited to 10.
The results indicate that all distances decay to zero as $k$ increases, until exhausting machine precision, which confirms the correctness of our theoretical results.
Moreover, all distances converge exponentially concerning $k$, and for increasing $n$, it appears that the convergence rates decrease moderately for $k$-BRA and $k$-UCCS, and slightly increase for $k$-UCCSD and $k$-UpCCGSD.

Based on the results in \Cref{fig:t98b}, it is appealing to conjecture that $k$-UCCSD and $k$-UpCCGSD exhibit BP even for constant $k$.
To support this conjecture, we further investigate how the variance of the cost function for $k$-UCCSD changes with increasing $n$ when $k$ is small.
In \Cref{fig:27ba}, we plot the scaling of the cost variance concerning $n \in \{4, 6, \dots, 24\}$ when $k \in \{1, 2, 3, \infty\}$.
We only display the results for two fixed observables $\hat{a}^{\dagger}_2 \hat{a}_1 + h.c\period$ and $\hat{a}^{\dagger}_4 \hat{a}^{\dagger}_3 \hat{a}_2 \hat{a}_1 + h.c\period$ because, according to previous numerical and theoretical results, the cost variances corresponding to different one- (or two-) body operators are similar.
Additionally, the number of electrons is still set to $\NumE = n/2$.
Unlike Figure 1, the cost variances for $k = 1, 2, 3$ are estimated by randomly sampling 6000 times for each different $n$.
The asymptotic variance of the cost function is calculated using the formulas derived in the \SM{Note 8}.
We also fit the estimated data points with an exponential function $a \exp (b \cdot n)$, where $a$ and $b$ are fitting parameters.
The data points and the fitted curves match well, justifying our conjecture.
Moreover, the variances converge to their asymptotic values quickly.
A very small value of $k = 2$ or $3$ seems to be enough for the relative error between finite-$k$ variance and asymptotic variance to be bounded by a constant.
In \SM{Note 9}, we present numerical results for $\NumE = n/4$, which is similar to the case of $\NumE = n/2$.
This shows that our conclusion on $k$-UCCSD is not affected by the number of electrons.

In summary, these numerical results reduce the distance between our theoretical results and reality.
Specifically:
\begin{enumerate}
  \item
        The unrealistic assumption of $k \to \infty$ can be greatly relaxed.
        \Cref{fig:t98b} shows that the relative error between the variance of the cost function with finite $k$ and the asymptotic value decays exponentially with $k$, and the convergence rate does not degrade much with increasing qubit count.
  \item
        Our theoretical results regarding  $k$-UCCSD seem to extend to cases when $k$ is constant.
        \Cref{fig:27ba} shows that for $k$-UCCSD, our predictions are already accurate enough when $k = 2$, and when $k = 1$, the variance of the cost function also exhibits exponential decay with qubit count.
        This implies that our theory is also instructive for \ansatzes{} like 1-step Trotterized UCCSD that are used in practice.
\end{enumerate}

It should be noted that these conclusions are merely observations derived from numerical results and are limited by the scale of these numerical experiments due to computational resources.
Unfortunately, our current technology cannot provide more theoretical evidence to support our views.
We leave more theoretical analysis for future work.

\begin{figure*}[ht]
  \centering
  \includegraphics{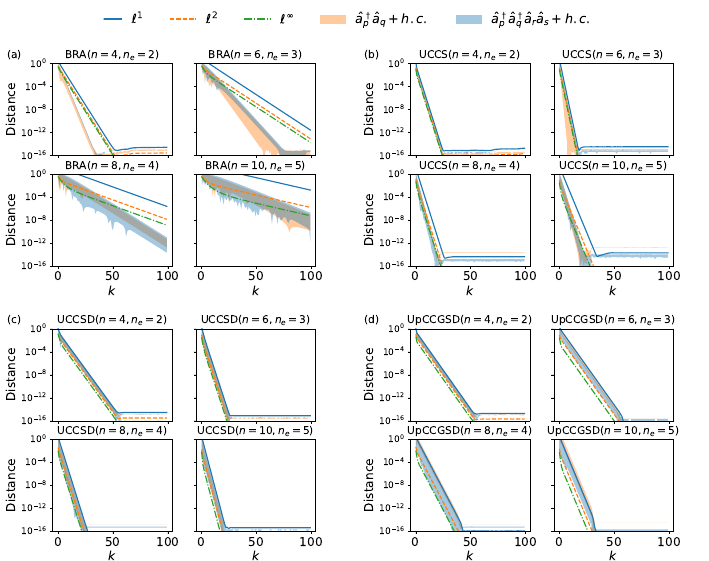}
  \caption{
    \textbf{Convergence of the projected state and cost variance for four alternated dUCC \ansatzes{}.}
    The four panels correspond to four \ansatzes{}: (a) basis rotation \ansatze (BRA), (b) unitary coupled cluster with singles (UCCS), (c) unitary coupled cluster with singles and doubles (UCCSD), and (d) unitary pair coupled cluster with generalized singles and doubles (UpCCGSD).
    For each \ansatze{}, we provide four plots corresponding to number of orbitals $n = 4,6,8,10$ and number of electrons $\NumE = n/2$.
    The number of alternations $k$ takes value in $\{1, 2, \dots, 100\}$.
    The y-axes are the dimensionless distances between the finite-alternations quantities and the infinite-alternations quantities ($k \to \infty$).
    The blue solid, orange dashed, and green dash-dotted curves represent the $\ell^1, \ell^2, \ell^{\infty}$ distance, respectively.
    The shaded areas in orange and blue encompass all single and double excitation monomials.
    The abbreviation ``h.c\period'' stands for ``Hermitian conjugate''.
  }
  \label{fig:t98b}
\end{figure*}

\begin{figure*}[ht]
  \centering
  \includegraphics{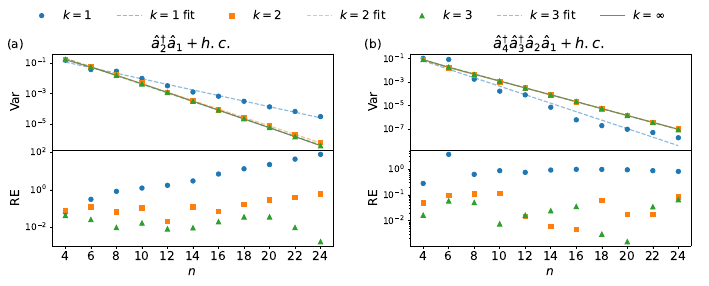}
  \caption{
    \textbf{Scaling of the cost variance and relative error to asymptotic variance for $k$-UCCSD when $\NumE = n/2$.}
    Here we present the cost variance (Var) and the relative error to asymptotic variance (RE) for the $k$ alternations of unitary coupled cluster with singles and doubles ($k$-UCCSD) \ansatze measured by (a) $\hat{a}^{\dagger}_2 \hat{a}_1 + h.c\period$ and (b) $\hat{a}^{\dagger}_4 \hat{a}^{\dagger}_3 \hat{a}_2 \hat{a}_1 + h.c\period$.
    The abbreviation ``h.c\period'' stands for ``Hermitian conjugate''.
    The number of qubit is $n \in \{4, 6, \dots, 24\}$, with the number of electrons given by $\NumE = n/2$.
    For each observable, the illustrated variances for $k = 1, 2, 3$ are estimated from 6000 random samples at each $n$ value, and are represented by blue circles, orange squares, and green triangles, respectively.
    The variances for $k = 1, 2, 3$ are fitted using the function $a \exp(b \cdot n)$, with the resulting fitting curves shown as blue, orange, and green dashed lines.
    The computed asymptotic variances ($k \to \infty$) are depicted as gray solid curves.
    In the lower half of each panel, we display the relative error of the variances at $k = 1, 2, 3$ compared to the asymptotic variances, also represented by blue circles, orange squares, and green triangles.
    The error bars, estimated as $\sqrt{2/(6000-1)} \cdot \mathrm{Var}$, are not shown as they are smaller than the data points.
  }
  \label{fig:27ba}
\end{figure*}
\section{Discussion}

In this work, we conducted a rigorous analysis of cost concentration for the alternated dUCC \ansatze{} at infinite depth.
Contrary to previous belief \cite{cerezo2021variational}, we showed that the chemically inspired \ansatze{} can also suffer from BP.
Specifically, our findings indicate that while double excitations are essential for accuracy \cite{szabo2012modern}, they can significantly increase the training cost.
Numerical results suggest that our results can be extended to the finite-depth cases, and potentially even a constant number of alternations for interesting \ansatzes{} such as $k$-UCCSD.

While we made specific assumptions regarding the domain of parameters and the form of the electronic structure Hamiltonian to derive our main results, our proof techniques can be adapted to a more general scenario without these assumptions, albeit with increased complexity.
In fact, we anticipate the applicability of our proof techniques to analyze the concentration and BP of other periodic cost functions.
Furthermore, our analysis of cost concentration involves characterizations of the \mth{$t$} moment of any excitation rotations for any $t$, which may be of independent interest.

We acknowledge that our work is at a distance from ruling out the practicality of chemically inspired \ansatzes{}, in the following aspects.

Firstly, our study focused on alternated dUCC \ansatzes{}, incorporating a large number of independent parameters.
The Trotterized UCCSD may be free of BP because its correlated parameters are limited in a smaller range with increasing Trotter steps \cf{eq:nzaj} \cite{park2024hamiltonian}.
That said, such an argument essentially builds upon the assumption that the circuit can be approximated by a local Hamiltonian evolution, which is unlikey reasonable for an \ansatze{} such as UCCSD.

Secondly, the BP phenomenon is considered under random parameter initialization, whereas VQE often benefits from reasonable heuristics for choosing an initial guess.
For example, the all-zero initialization is commonly used in the UCC \ansatzes{}.
If the initial guess happens to be inside the same local trap as the optimal solution, the optimizer may effortlessly find the optimal solution, even if the landscape is a plateau elsewhere.
Otherwise, BP prevents the optimizer from escaping the narrow gorge, resulting in a sub-optimal solution.
One possible solution to this is to use adaptive \ansatzes{}, which have been reported to be able to ``burrow'' out of local traps \cite{grimsley2023adaptive}.

Lastly, we exclusively considered the Jordan-Wigner transformation, as employed in the canonical implementation of UCCSD.
However, the Bravyi-Kitaev transformation might be a better choice due to its locality \cite{cerezo2021cost}.

In recent years, there has been significant progress in the study of BP for periodic \ansatzes{} of the form $U(\vec{\uptheta}) = \prod_{i=1}^{k} U_i(\vec{\uptheta}_i)$, where $U_i(\vec{\uptheta}_i) = \prod_{j=1}^{m} \ConstE^{-\ConstI H_j \theta_{ij}}$.
This class of \ansatze{} includes the quantum approximate optimization algorithm (QAOA) \cite{farhi2014quantum}, HVA, and the alternated dUCC \ansatze{} that we investigate.
Specifically, the pioneering work \cite{larocca2022diagnosing} points out that the dynamic Lie algebra (DLA), defined by the real span of the Lie closure of the generators $\mathfrak{g} := \operatorname{span}_{\mathbb{R}} \ev{\ConstI H_1, \dots, \ConstI H_m}_{\text{Lie}}$, could be a powerful tool for diagnosing BP of periodic \ansatzes{}.
This is because when $k$ is large enough, the set of unitaries that can be generated by the periodic \ansatzes{} is precisely $\ConstE^{\mathfrak{g}} := \{\ConstE^V: V \in \mathfrak{g}\}$.
When $\mathfrak{g} = \mathfrak{su}(2^n)$, $\mathfrak{so}(2^n)$, or is effectively full-rank within an invariant subspace containing the initial state, the distribution over $\ConstE^{\mathfrak{g}}$ converges to a 2-design as $k \to \infty$ \cite{larocca2022diagnosing}.
This allows the variance of gradients to be calculated via integration over the special unitary (or orthogonal) group.
Despite the elegance and intuition of their methods, several limitations exist.
Firstly, the characterization of the generated unitary set and its distribution relies on the assumption that $k$ is sufficiently large, with the specific scaling of $k$ remaining unknown.
Secondly, they assume the parameters to be independently and fully tunable.
But there are other parameter initialization strategies, such as constraining the parameters to be small or correlated, that are used in practice and may avoid BP \cite{park2024hamiltonian,volkoff2021large}.
Thirdly, their results for the full-rank subspace case rely on the initial state being in that subspace.
Lastly, their analysis was limited to full-rank DLA due to the intricacy of integrating over a proper subgroup of the unitary group.
Recent works \cite{ragone2023unified,fontana2023adjoint} have resolved this last limitation, providing a finer analysis of how large $k$ must be for the distribution to approximate a 2-design \cite{fontana2023adjoint}.
Nonetheless, their theories still depend on either the initial state $\rho$ or the observable $O \in \ConstI\mathfrak{g}$, which does not apply in our settings.
These constraints can be removed for specific \ansatze{}, but only for a certain \ansatze{} \cite{diaz2023showcasing}.
Our work differs from these works in two aspects.
Secondly, we address a problem where existing methods are not fully applicable.
However, our results share similar limitations concerning $k$ and parameter assumptions.
\section{Methods}

\subsection{Sketch of the proof of the main theorem}
We analyze cost concentration for the alternated dUCC in the infinite depth limit by providing an explicit formula for the cost variance.
In a unified manner, we will analyze the \mth{$t$} moment of the cost function, $\E{C^t}$, with a focus on $t=1,2$.
The calculation of $\E{C^t}$ involves two key steps:
\begin{enumerate}
  \item
        Through tensor network contraction \cite{pesah2021absence, zhao2021analyzing,
          liu2022presence, martin2023barren}, $\E{C^t}$ can be represented as product of
        enlarged tensors corresponding to the reference state, gates, and the
        observable:
        \begin{equation}
          \label{eq:9xhq} \E{C^{t}}
          =\bra{H_{\mathrm{el}}}^{\otimes t}\qty(\prod
          _{j=1}^{m}\mso{R_j}{t})^{k}\ket{\psi _{0}}^{\otimes 2t}.
        \end{equation}
        Here, $M^{\otimes t,t}:=M^{\otimes{t}}\otimes(M^{*})^{\otimes{t}}$.
        This representation allows for the separation of different parameters, resolving non-linearity by incorporating an enlarged Hilbert space with dimension $2^{2tn}$.
  \item
        The periodicity of excitation rotations implies that each $\mso{R_j}{t}$ is an orthogonal projection, denoted as $P_{M_j}$.
        Let $M:=\bigcap_j M_j$ represent the intersection space.
        Through the convergence of alternating projections \cite{halperin1962product},
        we obtain:
        \begin{equation}
          \label{eq:z86k}
          \lim_{k\to\infty}\E{C^t}=\bra{H_{\mathrm{el}}}^{\otimes t}
          P_M\ket{\psi _{0}}^{\otimes 2t}.
        \end{equation}
        This approach circumvents the challenge of tracking the evolution of $\ket{\psi _{0}}^{\otimes 2t}$ or $\ket{H_{\mathrm{el}}}^{\otimes t}$ at finite $k$.
        Instead, it suffices to determine the intersection space $M$, which turns out to be tractable.

\end{enumerate}

Let us elaborate a bit more on the two key steps.
Firstly, to facilitate a more straightforward description of the elevated tensors, the $2tn$ qubits are rearranged so that the enlarged Hilbert space can be conceptualized as a tensor product of $n$ subsystems.
Each subsystem, which we refer to as a site, comprises $2t$ qubits.
As an illustrative example, $\ket{\psi_0}^{\otimes 2t}$ is identified with
\begin{equation}
  \ket{\psi _{0}}^{\otimes 2t} =|\underbrace{1\dots 1}_{2t}
  \rangle ^{\otimes \NumE} |\underbrace{0\dots 0}_{2t} \rangle ^{\otimes (n-\NumE
    )}.
\end{equation}

Secondly, in BP studies, it is common that the \mth{$t$} moment superoperators of parameterized gates, such as $\mso{R_j}{t}$ in our case, are orthogonal projections.
For instance, the circuit or block of gates is often assumed to be Haar random up to the \nth{2} moment \cite{mcclean2018barren,cerezo2021cost,liu2022presence}.
Under such an assumption one can verify that the \nth{2} moment superoperator of the circuit or block of gates is an orthogonal projection of rank 2.
However, the projections encountered in our analysis are notably more intricate.
Specifically, the \mth{$t$} moment of qubit single excitation rotations forms a projection of rank 70 within the subsystem it acts on, let alone normal excitation rotations that are highly non-local.
Consequently, while the evolution of the reference state can be successfully tracked under the Haar random assumption \cite{cerezo2021cost}, achieving a similar goal may not be possible in the present setting.
Instead, we turn to studying the infinite case.
It is worth noting that the phenomenon where the infinite case is easier than the finite one also arises in the theoretical analysis of classical neural networks \cite{jacot2018neural}.

Thirdly, we ascertain the intersection space $M$ by leveraging the following
identity:
\begin{equation}
  \label{eq:h6uy}
  M^{\perp } =M_{1}^{\perp } +M_{2}^{\perp } +\dotsc +M_{m}^{\perp }.
\end{equation}
In other words, it suffices to determine the spanning set of $M_{j}^{\perp }$, and then take the union to obtain the spanning set of $M^{\perp }$.
To further simplify the analysis, we utilize the symmetries of $P_{M_j}=\mso{R_j}{t}$ and identify the spanning set of $M_j^{\perp}$ within the invariant spaces induced by these symmetries.
To be specific, the symmetries and their corresponding invariant spaces that contain $\ket{\psi_0}^{\otimes 2t}$ are listed below.

\begin{itemize}
  \item
        The $Z_{p}^{\otimes 2t}$-symmetry induces an invariant space
        $\mathcal{H}_{t}^{\mathrm{even}}=\operatorname{span}\mathcal{S}^{\mathrm{even}}_{t}$,
        where
        \begin{equation}
          \mathcal{S}^{\mathrm{even}}_{t}:=\qty{ \ket{b_1 b_2
              \dots b_{2t}}\middle|\sum_{i} b_i \equiv 0 }^{\otimes n}.
        \end{equation}
  \item
        The particle number symmetry further induces an invariant space $\mathcal{H}_{2}^{\mathrm{paired}}=\operatorname{span}\mathcal{S}^{\mathrm{paired}}_{2}$ at $t=2$.
        Here, we define $\mathcal{S}_{2}^{\mathrm{paired}} \subseteq
          \mathcal{S}_{2}^{\mathrm{even}}$ to be the set of paired states and a vector
        $\ket{\Phi } \in \mathcal{S}_{2}^{\mathrm{even}}$ is called a paired state, if
        \begin{align}
          \#\ket{0000} & =\#\ket{1111} +n-2\NumE , \\
          \#\ket{0011} &
          =\#\ket{1100} ,                          \\
          \#\ket{0101} & =\#\ket{1010} ,           \\
          \#\ket{0110} &
          =\#\ket{1001},
        \end{align}
        where $\#\ket{b_{1} b_{2} b_{3} b_{4}}$ counts the
        number of sites in $\ket{\Phi }$ that are in state $\ket{b_{1} b_{2} b_{3}
            b_{4}}$.
  \item
        The $\left(\PermB_{\tau }\right)^{\otimes n}$-symmetry induces an invariant space $\mathcal{H}_{t}^{\tau }$.
        Here, the operator $\PermB_{\tau }$ defines a permutation of qubits in one site by $\tau \in \mathfrak{S}_{2t}$, and $\mathcal{H}_{t}^{\tau }$ is the +1 eigenspace of $( \PermB_{\tau })^{\otimes n}$.

\end{itemize}

In \SM{Note 4}, the spanning set of $M_j^{\perp}\cap \mathcal{H}_{2}^{\mathrm{paired}}$ and $M_j^{\perp}\cap \mathcal{H}_{2}^{\mathrm{paired}}\cap\bigcap_{\tau\in\mathfrak{S}_4}\mathcal{H}_{2}^{\tau }$ is characterized at $t=2$.
With these spanning sets in hand, we can then characterize the intersection space $M$ inside these invariant spaces, according to \Cref{eq:h6uy}.
Surprisingly, as shown in \SM{Note 4}, while the dimension of each $M_j$ is exponentially large, the dimension of the intersection $M\cap\mathcal{H}_{2}^{\mathrm{paired}}$ is at most $\operatorname{poly}(n)$ at $t=2$, under a mild assumption.

\subsection{Simulation details}

The code used in this study is available \cite{mao_2024_13329063}.
The implementation is designed to run with Python version 3.10.12 and Rust version 1.72.1.
The computations were performed on a system equipped with an NVIDIA GeForce RTX 3090 (24 GB) GPU and an Intel(R) Xeon(R) Gold 5222 CPU running at 3.80 GHz.
\section*{Data availability}
The data generated in this study along with the accession codes are available at \url{https://doi.org/10.5281/zenodo.13359192}.

\section*{Code availability}
The code used in this study is available at \url{https://doi.org/10.5281/zenodo.13359192}.

\section*{Acknowledgements}
We would like to thank Reviewers for taking the time and effort necessary to review the manuscript.
This work was supported in part by the National Natural Science Foundation of China Grants No. 62325210 and the Strategic Priority Research Program of Chinese Academy of Sciences Grant No. XDB28000000.

\section*{Author contributions}
The project was conceived by all authors.
R.M. developed the methodology, conducted the simulations, and analyzed the results with inputs from G.T. and X.S.
The manuscript was written by R.M. and G.T.
All authors contributed to the manuscript review.

\section*{Competing interests}
The authors declare no competing interests.
 

\titleformat{\section}{\normalfont\Large\bfseries}{Supplementary Note \thesection: }{0pt}{}
\renewcommand{\figurename}{Supplementary Figure}

\Crefname{section}{Supplementary Note}{Supplementary Notes}
\Crefname{figure}{Supplementary Figure}{Supplementary Figures}
\Crefname{equation}{Supplementary Equation}{Supplementary Equations}

\renewcommand\refname{Supplementary References}

\pagebreak

\section*{Supplementary Methods: Notations}

Throughout the text, we will use $X,Y,Z$ for Pauli matrices, $I$ for $2\times 2$ identity matrix, and $\openone_{N}$ for $N\times N$ identity matrix.
Two additional $2\times 2$ matrices are used: qubit annihilation operator $Q=\op{0}{1}$, and occupation number operator $N=\op{1}$.
The symbol $n$ is reserved for the number of orbitals (qubits), and $\NumE$ for the number of occupied orbitals.
For any $2\times 2$ matrix $M$ (for example $X,Y,Z,I$), we define $M_i:=I^{\otimes{i-1}}\otimes M\otimes I^{\otimes n-i}$, where $i \in [n]$ and $[n]:=\{1,\dots,n\}$.

We use $z^*$ to denote the complex conjugate of $z$, and $M^{\dagger}$ to denote the conjugate transpose of matrix $M$.
$\ConstI$ represents the imaginary unit.
Denote $\mathfrak{S}_m$ as the symmetric group on a set of size $m$.
The set of bit strings of length $m$ is denoted by $\mathbb{F}_2^{m}$.
For a bit string $\mathbf{b}\in\mathbb{F}_2^{m}$, $\overline{\mathbf{b}}$ denotes the flip of $\mathbf{b}$, i.e., $\overline{\mathbf{b}}=b_1 b_2\dots b_m\in\mathbb{F}_2^{m}$ and $\overline{\mathbf{b}}_i=1-\mathbf{b}_i, \forall i$.
For two bit strings $\mathbf{b},\mathbf{b}'\in\mathbb{F}_2^{m}$, $\mathbf{b}\odot\mathbf{b}'$ denotes the bitwise dot, defined by $\sum_{i=1}^{m} \mathbf{b}_i \cdot \mathbf{b}'_i$.

We first recall some notations, definitions, and theorems from the main text.

Cost function
\begin{equation}
  \label{eq:wtfm}
  C(\vec{\uptheta};\rho,U,O) =\tr( O U(\vec{\uptheta})\rho U(\vec{\uptheta})^{\dagger }).
\end{equation}

Reference state
\begin{equation}
  \label{eq:jj66} \ket{\psi_{0}} :=|\underbrace{1\dots 1}_{\NumE}\underbrace{0\dots
    0}_{n-\NumE} \rangle.
\end{equation}

Electronic structure Hamiltonian
\begin{equation}
  \label{eq:dv4e}
  H_{\mathrm{el}} =\sum _{p >q} h_{pq}\left(\hat{a}_{p}^{\dagger }\hat{a}_{q} +h.c.
  \right)
  +\sum _{p >q >r >s} g_{pqrs}\qty(\hat{a}_{p}^{\dagger }\hat{a}_{q}^
  {\dagger }\hat{a}_{r}\hat{a}_{s} +h.c.).
\end{equation}

\begin{definition}[Alternated (qubit) dUCC \ansatze{}]\label{def:r86r}
  Call the unitary $U^{\vec{R}}_{k}(\vec{\uptheta})$ ($k\in\mathbb{N}_+$) an
  alternated (qubit) dUCC \ansatze{}, if it can be written as
  \begin{equation}
    \label{eq:o2yh}
    U^{\vec{R}}_{k}(\vec{\uptheta})=\prod_{i=1}^{k}\prod_{j=1}^{m}R_j(\theta^{(i)}_j),
  \end{equation}
  where $\vec{R}=(R_1,\dots,R_m)$ is a sequence of (qubit) excitation rotations, $R_j(\theta)=\exp(\theta(\hat{\tau}_j-\hat{\tau}_j^\dagger))$ and $\hat{\tau}_j\in\left\{\hat{a}_{p}^{\dagger}\hat{a}_{q},\hat{a}_{p}^{\dagger} \hat{a}_{q}^{\dagger}\hat{a}_{r}\hat{a}_{s},\dots\right\}$ (or $\hat{\tau}_j\in\left\{Q_p^\dagger Q_q,Q_p^\dagger Q_q^\dagger Q_r Q_s,\dots\right\}$ in qubit version).
\end{definition}

\begin{lemma}[Relationship between variances of cost and gradients]
  \label{lem:ymbr}
  Let $U^{\vec{R}}_{k}(\vec{\uptheta})$ be an alternated (qubit) dUCC \ansatze{}
  defined in \Cref{eq:o2yh}, and $C(\vec{\uptheta};U^{\vec{R}}_{k})$ defined in
  \Cref{eq:dv4e}, then
  \begin{equation}
    \label{eq:v0ff}
    (k\abs{\vec{R}})^{-1}\cdot\Var[\vec{\uptheta}]{C}
    \le\max_{\theta_j\in\vec{\uptheta}}\Var[\vec{\uptheta}]{\partial_{\theta_{j}}
      C} \le \Var[\vec{\uptheta}]{C}.
  \end{equation}
\end{lemma}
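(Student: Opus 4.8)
The plan is to treat $C(\vec{\uptheta})$ as a real multivariate trigonometric polynomial on the torus $[0,2\pi)^{P}$, with $P:=k\abs{\vec{R}}$ the total number of parameters drawn independently and uniformly over a full period, and to read off both inequalities from its Fourier spectrum. The one structural input I need is that every excitation-rotation generator $G_j:=\hat{\tau}_j-\hat{\tau}_j^{\dagger}$ (after Jordan--Wigner) is two-level, i.e. $G_j^3=-G_j$ with eigenvalues $0,\pm\ConstI$, so that $R_j(\theta)=\openone+\sin\theta\,G_j+(1-\cos\theta)G_j^2$ is $2\pi$-periodic and first-degree trigonometric in $\theta$. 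This holds for both single and double excitation rotations, since each connects a single pair of configurations. Because every coordinate $\theta_j^{(i)}$ enters $U^{\vec{R}}_k$ through exactly one factor $R_j$ and its conjugate in $U^{\vec{R}\dagger}_k$, the cost $C=\tr(H_{\mathrm{el}}\,U\rho U^{\dagger})$ is, in each single variable, a trigonometric polynomial of degree at most $2$; equivalently its Fourier support satisfies $\omega_\ell\in\{-2,-1,0,1,2\}$ for every coordinate $\ell$. This is the precise content of ``periodic with no rapid oscillations'', and it is all I will use for the two outer bounds.

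By Parseval on the torus, $\E[\vec{\uptheta}]{C}=\hat{C}(\vec{0})$, $\Var[\vec{\uptheta}]{C}=\sum_{\vec{\omega}\neq\vec{0}}\abs{\hat{C}(\vec{\omega})}^2$, $\E[\vec{\uptheta}]{\partial_{\theta_\ell}C}=0$, and $\Var[\vec{\uptheta}]{\partial_{\theta_\ell}C}=\sum_{\vec{\omega}}\omega_\ell^2\abs{\hat{C}(\vec{\omega})}^2$. Summing the last identity over all $P$ coordinates gives $\sum_\ell\Var[\vec{\uptheta}]{\partial_{\theta_\ell}C}=\sum_{\vec{\omega}}\norm{\vec{\omega}}^2\abs{\hat{C}(\vec{\omega})}^2\ge\sum_{\vec{\omega}\neq\vec{0}}\abs{\hat{C}(\vec{\omega})}^2=\Var[\vec{\uptheta}]{C}$, since $\norm{\vec{\omega}}^2\ge 1$ off the origin. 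Bounding a sum of $P$ nonnegative terms by $P$ times the largest then yields $\Var[\vec{\uptheta}]{C}\le P\max_\ell\Var[\vec{\uptheta}]{\partial_{\theta_\ell}C}$, which is exactly the lower bound $(k\abs{\vec{R}})^{-1}\Var[\vec{\uptheta}]{C}\le\max_{\theta_j}\Var[\vec{\uptheta}]{\partial_{\theta_j}C}$. This half is the clean Poincar\'e-type summation and carries no hidden difficulty.

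For the upper bound I would fix all coordinates but $\theta_\ell$ and write the conditional function as $c_0+2\mathrm{Re}(c_1\ConstE^{\ConstI\theta_\ell})+2\mathrm{Re}(c_2\ConstE^{2\ConstI\theta_\ell})$, with amplitudes $c_0,c_1,c_2$ depending on the remaining parameters. A direct computation gives the single-variable variances $\Var[\theta_\ell]{C}=2\abs{c_1}^2+2\abs{c_2}^2$ and $\Var[\theta_\ell]{\partial_{\theta_\ell}C}=2\abs{c_1}^2+8\abs{c_2}^2$. Taking expectations over the remaining parameters and applying the law of total variance converts the desired inequality $\Var[\vec{\uptheta}]{\partial_{\theta_\ell}C}\le\Var[\vec{\uptheta}]{C}$ into $\Var[\mathrm{rest}]{c_0}\ge 6\,\E[\mathrm{rest}]{\abs{c_2}^2}$, i.e. in Fourier terms $\sum_{\omega_\ell=0,\vec{\omega}\neq\vec{0}}\abs{\hat{C}}^2\ge 6\sum_{\omega_\ell=2}\abs{\hat{C}}^2$. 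In the two-eigenvalue (parameter-shift) case of \cite{arrasmith2022equivalence} one has $c_2\equiv 0$ and the two-term rule $\partial_{\theta_\ell}C=\tfrac12[C(\theta_\ell+\tfrac{\pi}{2})-C(\theta_\ell-\tfrac{\pi}{2})]$ gives the constant $1$ at once; the entire difficulty is the genuine second harmonic $c_2$ carried by excitation rotations.

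The main obstacle is therefore to control the second-harmonic mass $\sum_{\omega_\ell=2}\abs{\hat{C}}^2$ by the $\ell$-independent mass $\sum_{\omega_\ell=0,\vec{\omega}\neq\vec{0}}\abs{\hat{C}}^2$. A single layer cannot do this: for one isolated rotation with a diagonal observable the first harmonic vanishes and the ratio degrades to the tight factor $4$, so the argument must exploit aggregation over the other $P-1$ coordinates, forcing $c_0$ (as a function of the remaining parameters) to fluctuate at least as strongly as the isolated harmonic $c_2$. I would establish this by re-expanding $c_0$ and $c_2$ as degree-$\le 2$ trigonometric polynomials in the remaining variables and comparing their spectra layer by layer, following the Fourier bookkeeping of \cite{napp2022quantifying}; the realness of all gates and of $H_{\mathrm{el}}$, together with the particle-number and $Z^{\otimes}$ symmetries, restricts which harmonics can be populated and is what ultimately pins the constant to $1$. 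Verifying this second-harmonic bound carefully --- rather than the elementary summation behind the lower bound --- is where the real work of the lemma lies.
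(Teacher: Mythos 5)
Your proof of the first inequality is correct and is exactly the paper's argument (\Cref{lem:6cf2}): by Parseval,
\begin{equation*}
  \Var[\vec{\uptheta}]{C}\le\sum_{\vec{\omega}}\norm{\vec{\omega}}_2^2\abs{\hat{C}(\vec{\omega})}^2=\sum_{\ell}\Var[\vec{\uptheta}]{\partial_{\theta_\ell}C}\le k\abs{\vec{R}}\cdot\max_{\ell}\Var[\vec{\uptheta}]{\partial_{\theta_\ell}C}.
\end{equation*}
The genuine gap is in the second inequality. You correctly reduce it to the claim $\Var[\mathrm{rest}]{c_{0}}\ge 6\,\E[\mathrm{rest}]{\abs{c_{2}}^{2}}$, i.e.\ $\sum_{\omega_\ell=0,\vec{\omega}\neq\vec{0}}\abs{\hat{C}}^2\ge 6\sum_{\omega_\ell=2}\abs{\hat{C}}^2$, but then defer it to an unspecified ``layer by layer'' spectral comparison driven by symmetries. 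No such argument can exist, because the constant-$1$ upper bound is false in the lemma's own setting. Take $n=2$, $\NumE=1$, $k=1$, $\vec{R}=(A_{21})$ (this is literally $1$-UCCS on two qubits) and $H_{\mathrm{el}}=\hat{a}_{2}^{\dagger}\hat{a}_{1}+h.c.$, i.e.\ $h_{21}=1$ and all other integrals zero. Then $A_{21}(\theta)\ket{10}=\cos\theta\ket{10}+\sin\theta\ket{01}$ and $C(\theta)=\sin 2\theta$ is a pure second harmonic, so $\Var{C}=\tfrac12$ while $\Var{\partial_{\theta}C}=2=4\Var{C}$. This is precisely the single-rotation degeneration you flag yourself; since it already occurs when $k\abs{\vec{R}}=1$ there are no other coordinates to aggregate over, and the failure persists for every $k$ (repeated $A_{21}$ rotations commute, so the parameters simply add and $C=\sin(2\sum_i\theta^{(i)})$). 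No symmetry or realness argument can repair a false inequality.

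The resolution is that the constant in the upper bound should be $B^{2}=4$, not $1$, and that is in fact all the paper's own proof delivers: \Cref{lem:6cf2} gives $\max_{\ell}\Var{\partial_{\theta_\ell}f}\le B^{2}\Var{f}$ for any bounded-frequency periodic $f$, and \Cref{lem:fx5w} shows the dUCC cost has $B=2$ (frequencies $0,\pm1$ from $U$ plus $0,\pm1$ from $U^{\dagger}$), so the paper's chain of lemmas proves $\max_{j}\Var{\partial_{\theta_j}C}\le 4\Var{C}$. The constant $1$ in the statement of \Cref{lem:ymbr} is thus an overstatement even relative to its own proof --- a harmless one, since every application (e.g.\ \Cref{cor:74ej}) uses only the polynomial-versus-exponential scaling of the bound. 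Consequently, the part of your proposal you describe as ``where the real work of the lemma lies'' should simply be abandoned: your own identity $\Var{\partial_{\theta_\ell}C}=\sum_{\vec{\omega}}\omega_{\ell}^{2}\abs{\hat{C}(\vec{\omega})}^{2}\le 4\sum_{\vec{\omega}\neq\vec{0}}\abs{\hat{C}(\vec{\omega})}^{2}=4\Var{C}$ already finishes the correctly stated (constant-$4$) upper bound in one line, whereas chasing the constant $1$ is chasing a false statement.
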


\begin{theorem}[Main result]\label{thm:8pgp}
  Let $U^{\vec{R}}_{k}(\vec{\uptheta})$ be an alternated (qubit) dUCC \ansatze{} defined in \Cref{eq:o2yh}, and $C(\vec{\uptheta};U^{\vec{R}}_{k},H_{\mathrm{el}})$ defined in \Cref{eq:wtfm}, where the qubit number is $n$, and $H_{\mathrm{el}}$ is defined in \Cref{eq:dv4e}, with one- and two-electron integrals $h_{pq} ,g_{pqrs}$.
  Denote the simple undirected graph formed by index pairs of (qubit) single
  excitations in $\vec{R}$ by $G$, i.e.,
  \begin{equation}
    G:=(V=[n],E=\left\{(u,v)\middle|A_{uv}\text{ or }A^{\mathrm{qubit}}_{uv}\in \vec{R}\right\}).
  \end{equation}
  Suppose $G$ is connected.
  The limit $\lim _{k\rightarrow \infty }\Var{C\qty(\vec{\uptheta}
    ;U_{k}^{\vec{R}}),H_{\mathrm{el}}} =:V( n)$ exists, and
  \begin{enumerate}
    \item
          \label{itm:x2wa}

          If $\vec{R}$ contains only single excitation rotations, then $V( n) \sim \operatorname{poly}(n^{-1},\NumE,\norm{\vec{h}},\norm{\vec{g}})$.

    \item
          \label{itm:k2mi}

          If $\vec{R}$ contains both single and double excitation rotations, then $V( n) \sim \operatorname{poly} \left(\norm{\vec{h}},\norm{\vec{g}}\right)/\binom{n}{\NumE}$.

    \item
          \label{itm:ka9j}

          If $\vec{R}$ contains only single qubit excitation rotations, then $V( n) \sim \operatorname{poly}(n^{-1},\NumE,\norm{\vec{h}},\norm{\vec{g}})$ if the maximum degree of $G$ is 2, and $V( n) \sim \operatorname{poly} \left(\norm{\vec{h}},\norm{\vec{g}}\right)/\binom{n}{\NumE}$ otherwise.

    \item
          \label{itm:4cl4}

          If $\vec{R}$ contains both single and double qubit excitation rotations (and satisfies the condition in \SM{Theorem 44}), then $V( n) \sim \operatorname{poly} \left(\norm{\vec{h}},\norm{\vec{g}}\right)/\binom{n}{\NumE}$.

  \end{enumerate}
\end{theorem}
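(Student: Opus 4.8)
The plan is to reduce the computation of $V(n)=\lim_{k\to\infty}\Var[\vec{\uptheta}]{C}$ to a linear-algebraic question about an intersection of subspaces via the moment method. Since the parameters $\theta_j^{(i)}$ are independent across the $k$ alternations and within each block, the averaged powers $\E{C^{t}}$ factorise into a tensor-network contraction over an enlarged $2^{2tn}$-dimensional Hilbert space,
\begin{equation}
  \E{C^{t}} = \bra{H_{\mathrm{el}}}^{\otimes t}\left(\prod_{j=1}^{m}\mso{R_j}{t}\right)^{k}\ket{\psi_0}^{\otimes 2t},
\end{equation}
where each factor is the $t$-th moment superoperator of a single excitation rotation. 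The structural observation driving everything is that each $R_j(\theta)$ is periodic in $\theta$ with a fixed period, so uniform averaging over a period turns $\mso{R_j}{t}$ into an \emph{orthogonal projection} $P_{M_j}$ onto an invariant subspace $M_j$. I would compute the limit at $t=1$ and $t=2$, which together yield $V(n)=\lim_k(\E{C^2}-(\E{C})^2)$.

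The second step passes to infinite depth. By the convergence of products of orthogonal projections \cite{halperin1962product}, $(\prod_j P_{M_j})^{k}$ converges strongly to $P_M$, the projection onto $M:=\bigcap_j M_j$, so that $\lim_{k\to\infty}\E{C^{t}}=\bra{H_{\mathrm{el}}}^{\otimes t}P_M\ket{\psi_0}^{\otimes 2t}$; this bypasses tracking the state at finite $k$, leaving only the identification of $M$ and the evaluation of two overlaps. To locate $M$ I would exploit the dual identity $M^{\perp}=\sum_j M_j^{\perp}$, so that a spanning set of $M^{\perp}$ is the union of spanning sets of the individual $M_j^{\perp}$. To keep this tractable I would restrict to the joint invariant subspaces containing $\ket{\psi_0}^{\otimes 2t}$ — the $Z_p^{\otimes 2t}$-symmetry space $\mathcal{H}_t^{\mathrm{even}}$, the particle-number (paired) space $\mathcal{H}_2^{\mathrm{paired}}$ at $t=2$, and the qubit-permutation spaces $\mathcal{H}_t^{\tau}$ induced by $(\PermB_{\tau})^{\otimes n}$ — inside which the $M_j^{\perp}$ should admit explicit combinatorial spanning sets, and where I expect $\dim(M\cap\mathcal{H}_2^{\mathrm{paired}})$ to collapse to $\operatorname{poly}(n)$ despite each $M_j$ being exponentially large.

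The third step is the four-way case analysis on the resulting intersection, which I would organise by identifying $P_M$ with the two-fold twirl projector of the group generated by $\vec{R}$. For single excitations only, Thouless's theorem \cite{thouless1960stability} identifies the ansatz with a single orthogonal single-particle rotation, and connectedness of $G$ promotes the generated group to $\mathrm{SO}(n)$; the accessible space is $O(n^2)$-dimensional, and averaging over it gives $V(n)\sim\operatorname{poly}(n^{-1},\NumE,\norm{\vec{h}},\norm{\vec{g}})$. When double excitations are added, the exactness result of \cite{evangelista2019exact} shows the rotations generate $\mathfrak{so}\!\left(\binom{n}{\NumE}\right)$ on the Hamming-weight-$\NumE$ real invariant subspace, so the limiting distribution is a $2$-design on $\mathrm{SO}\!\left(\binom{n}{\NumE}\right)$ and the group-integration formula \cite{collins2009some} yields $V(n)\sim\operatorname{poly}(\norm{\vec{h}},\norm{\vec{g}})/\binom{n}{\NumE}$. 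The qubit variants follow from the same machinery combined with the matchgate dichotomy \cite{jozsa2008matchgates}: qubit single excitations are matchgates, classically simulable and hence polynomially concentrated exactly when $G$ has maximum degree $2$ (nearest-neighbour topology), and universal — hence exponentially concentrated — otherwise, reproducing the maximum-degree-$2$ condition in the statement.

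The hard part will be the explicit determination of $M$ within the invariant subspaces and the control of its dimension. The projections $\mso{R_j}{2}$ are highly non-local and of large rank — for a qubit single excitation the local projection alone has rank $70$ — so characterising each $M_j^{\perp}$ and then proving that $\sum_j M_j^{\perp}$ exhausts $M^{\perp}$ (equivalently, that $M\cap\mathcal{H}_2^{\mathrm{paired}}$ is only polynomially large) demands a careful symmetry reduction and a combinatorial exhaustion argument rather than an off-the-shelf group-integration identity. A secondary obstacle is verifying the hypotheses of the Thouless and \cite{evangelista2019exact} inputs case by case, in particular that connectedness of $G$ is precisely what lifts the generated group to the full orthogonal group or to $\mathrm{SO}\!\left(\binom{n}{\NumE}\right)$, and confirming that the alternating-projection limit $P_M$ indeed coincides with the corresponding two-fold twirl.
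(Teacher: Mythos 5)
Your first two steps coincide with the paper's own strategy: the tensor-network representation of $\E{C^{t}}$, the observation that each $\mso{R_j}{t}$ is an orthogonal projection $P_{M_j}$, the alternating-projection limit $P_M$ with $M^{\perp}=\sum_j M_j^{\perp}$, and the reduction to the invariant spaces $\mathcal{H}_t^{\mathrm{even}}$, $\mathcal{H}_2^{\mathrm{paired}}$, $\mathcal{H}_t^{\tau}$. Your third step diverges: the paper evaluates $P_M\ket{\psi_0}^{\otimes 4}$ constructively and combinatorially (crossing numbers, explicit sign functions, explicit coefficient recursions), whereas you identify $P_M$ with the two-fold Haar twirl of the group generated by the rotations and integrate. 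That identification is correct in principle (the common fixed space of a compact group is the range of its twirl), and for your case 2 the paper itself remarks that the variance can alternatively be obtained by integrating over $\mathrm{SO}\left(\binom{n}{\NumE}\right)$ via Weingarten calculus; for case 1, Thouless plus Wick's theorem reduces $C$ to a polynomial in the rotated one-particle density matrix, and orthogonal Weingarten integration over $\mathrm{SO}(n)$ is a viable, genuinely different route. Note, however, that citing \cite{evangelista2019exact} is not enough for case 2: exact expressibility of states gives transitivity on the sphere of the weight-$\NumE$ subspace, not that the generated group is all of $\mathrm{SO}\left(\binom{n}{\NumE}\right)$; you would need an explicit Lie-closure computation.

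The genuine gap is in your cases 3 and 4. Neither implication ``classically simulable $\Rightarrow$ polynomially concentrated'' nor ``universal $\Rightarrow$ exponentially concentrated'' is a theorem; the matchgate dichotomy of \cite{jozsa2008matchgates} is used by the paper only as an interpretive remark, not as a proof device. Worse, your proposed mechanism (maximum degree $\geq 3$ $\Rightarrow$ the twirl is a 2-design on the full special orthogonal group of the invariant subspace) is contradicted by the paper's own results: when $n=2\NumE$ and $G$ is bipartite with two even parts --- e.g.\ the complete bipartite graph underlying $k$-qubit-UCCS, which has maximum degree $\geq 3$ --- the limiting moment vector is \emph{not} the full-$\mathrm{SO}$ twirl. \Cref{thm:2tcb} (\Cref{itm:y7p4}) shows $\dim\left(M\cap\mathcal{H}_2^{\mathrm{paired}}\right)>1$ there, and the resulting variance in \Cref{cor:e79n} (\Cref{itm:agri}) carries parity-dependent corrections absent from the generic formula of \Cref{cor:e79n} (\Cref{itm:v8lq}); your argument would uniformly output the generic answer, which is wrong in that sub-case (the $1/\binom{n}{\NumE}$ scaling survives by accident, but the proof does not). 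Relatedly, the extra hypothesis in case 4 ($\abs{V_{1}\cap\{p,q,r,s\}}=2$ in the bipartite-even situation, \Cref{thm:dkm3}) exists precisely because a double excitation must be positioned to break this parity obstruction --- a condition your proposal has no way to detect. Closing this gap requires actually identifying $M$ (or the generated group) in the qubit cases, which is what the paper's combinatorial machinery, such as the star-motif elimination of \Cref{lem:vsdf} and the sign-function construction of \Cref{thm:2tcb}, accomplishes.
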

\section{Summary of proof}

\Crefrange{app:oql3}{app:q52u} are devoted to the proof of \Cref{thm:8pgp},
which is, in fact, the calculation of the cost variance $\Var{C}$.
The proof is divided into the following parts (illustrated in \Cref{fig:1t9s}):

\begin{itemize}
  \item
        In \Cref{app:oql3}, we give a high-level description of how we calculate the \mth{$t$} moment of the cost function.
        \begin{itemize}
          \item
                We represent the \mth{$t$} moment of cost function as a circuit-like tensor network (\Cref{lem:csxv}), by contracting the initial state, gates, and observable into larger tensors: $\E{C^{t}} =\bra{H_{\mathrm{el}}}^{\otimes t}\left(\prod _{j=1}^{|\vec{R} |}\mso{R_{j}}{t}\right)^{k}\ket{\psi _{0}}^{\otimes 2t}$.
                Since the \mth{$t$} moment of cost is essentially captured by the vector $\ket{\Psi _{t,k}^{\vec{R}}} :=\left(\prod _{j=1}^{|\vec{R} |}\mso{R_{j}}{t}\right)^{k}\ket{\psi _{0}}^{\otimes 2t}$, we refer to such vector as $(\vec{R} ,t,k)$-moment vector (\Cref{def:wzl3}).
          \item
                We introduce site decomposition (\Cref{def:0dpb}), so that the enlarged space $\mathbb{C}^{2^{2tn}}$ involved in calculating $\E{C^{t}}$ can be still viewed as tensor product of $n$ subsystems.
                These subsystems are referred to as sites, each containing $2t$ qubits.
                Under site decomposition, tensor like $\mso{R_{j}}{t}$ can be viewed as an operator acting on sites.
          \item
                We show that each $\mso{R_{j}}{t}$ is an orthogonal projection (\Cref{lem:s14e}), denoted by $P_{M_{j}}$.
          \item
                The convergence of alternating projections hence assures that $\ket{\Psi _{t,\infty }^{\vec{R}}} =P_{M}\ket{\psi _{0}}^{\otimes 2t}$, where $M=\bigcap _{j=1}^{|\vec{R} |}
                  M_{j}$ (\Cref{cor:9rac}).
                While it is not obvious how to find the intersection space $M$, it is easy to find its orthogonal complement $M^{\perp }$, since $M^{\perp } =\sum _{j=1}^{|\vec{R} |} M_{j}^{\perp }$ (\Cref{lem:hpti}).

        \end{itemize}
  \item
        In \Cref{app:s6u6}, we characterize the spanning set of $M_{j}^{\perp }$ (\Cref{lem:vio4}).
        Symmetries of $\mso{R_{j}}{t}$ are used to reduce the space.
        \begin{itemize}
          \item
                The $Z_{p}^{\otimes 2t}$-symmetry helps to reduce to the invariant space $\mathcal{H}_{2}^{\mathrm{even}}$ (\Cref{cor:5iu8}).
          \item
                The particle number symmetry helps to further reduce to the invariant space $\mathcal{H}_{2}^{\mathrm{paired}}$ (\Cref{lem:x2vk}) when $t=2$.
          \item
                The $\left(\PermB_{\tau }\right)^{\otimes n}$-symmetry helps to reduce to the invariant space $\mathcal{H}_{t}^{\tau }$ (\Cref{cor:vciu}).
        \end{itemize}
  \item
        In \Cref{app:jbcg}, we prove that the cost function is unbiased, i.e., the first moment is zero.
        We also illustrate how to calculate the second moment by an example.
  \item
        In \Cref{app:hqmr,app:mbe3,app:q52u}, we prove our main result in \Cref{thm:8pgp}, by calculating $\ket{\Psi ^{\vec{R}}_{2,\infty}}$, and taking the inner product between $\ket{\Psi ^{\vec{R}}_{2,\infty}}$ and $\ket{H_{\mathrm{el}}}^{\otimes 2}$ which evaluates to $\E{C^2}$.
        \begin{itemize}
          \item
                The proof of case 1 and part of case 2 is constructive: we give an explicit vector $\ket{\Psi ^{*}}$, and prove $\ket{\Psi _{t,\infty }^{\vec{R}}} =\ket{\Psi ^{*}}$ by showing that $\ket{\Psi ^{*}} \in M$ and $\ket{\Psi ^{*}} -\ket{\psi _{0}}^{\otimes 4} \in M^{\perp }$.
                In this proof, we use the spanning set of the orthogonal complement $M$ inside $\mathcal{H}_{2}^{\mathrm{paired}}$ --- $(M\cap\mathcal{H}_{2}^{\mathrm{paired}})^{\perp}\cap\mathcal{H}_{2}^{\mathrm{paired}}$.
          \item
                We prove the rest of the theorem by showing that $\dim\left(M\cap \mathcal{H}_{2}^{\mathrm{paired}} \cap \bigcap _{\tau }\mathcal{H}_{2}^{\tau }\right) =1$.
                In such a case, $\ket{\Psi _{t,\infty }^{\vec{R}}}$ is obvious.
        \end{itemize}
\end{itemize}

\begin{figure}[ht]
  \centering
  \includegraphics{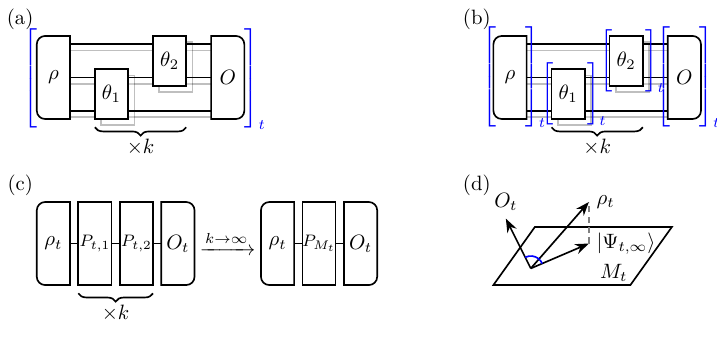}
  \caption{
    Illustration of how we calculate the \mth{$t$} moment of the cost function.
    (a) Tensor network representation of the \mth{$t$} moment.
    Here the square bracket represents taking the \mth{$t$} moment.
    (b) Use the independence of parameters to contract the tensor network into a
    circuit-like one.
    (c) The \mth{$t$} moment of each excitation rotation is an orthogonal projection.
    In the infinite depth limit, these projections contract into an orthogonal projection onto the intersection space $M_t$.
    (d) The \mth{$t$} moment in the infinite depth limit can be recovered as the inner
    product between vector $O_t$ and moment vector $\ket{\Psi_{t,\infty}}$
    (projected $\rho_t$).
  }
  \label{fig:1t9s}
\end{figure}
\section{Proof of \Cref{lem:ymbr}}\label{app:7s1s}

In this section, we prove the equivalence between the variances of the cost and gradients for alternated (qubit) dUCC \ansatze{} (\Cref{lem:ymbr}).
In fact, we prove a more generalized version in \Cref{lem:6cf2} for any bounded frequency periodic function (defined below).
Intuitively, such a function does not have rapid oscillation.

\begin{definition}[Bounded frequency periodic function]
  A function $f: \mathbb{R}^L \to \mathbb{R}$ is called \emph{bounded frequency
    periodic} if it is periodic with the following Fourier expansion
  \begin{equation}
    \label{eq:le7f} f(\vec{\uptheta})=\sum_{\vec{n}\in \mathbb{Z}^L}
    c_{\vec{n}}\exp(\ConstI\vec{n}\cdot\vec{\uptheta}), \quad \text{where }
    c_{\vec{n}}\in\mathbb{C}.
  \end{equation}
  Moreover, there exists a constant $B>0$ independent of $L$, s.t.
  \begin{equation}
    \label{eq:sk6w}
    c_{\vec{n}}=0 \qq{if} \norm{\vec{n}}_\infty > B.
  \end{equation}
\end{definition}

Notice that we implicitly assumed the bounded frequency periodic function to have a period of $2\ConstPi$, but it can be generalized to any periodic function by rescaling.
The following proof is similar to Lemma 1 of \cite{napp2022quantifying}.
We include the proof for completeness.

\begin{lemma}[Relationship between variances of cost and gradients, generalized]
  \label{lem:6cf2}
  For bounded frequency periodic function $f: \mathbb{R}^L \to \mathbb{R}$ where
  the frequencies are bounded by $B$ as in \Cref{eq:sk6w},
  \begin{equation}
    L^{-1}\cdot\Var[\vec{\uptheta}]{f(\vec{\uptheta})} \le \max_{1\le j\le L}\Var[\vec{\uptheta}]{\partial_{\theta_{j}} f(\vec{\uptheta})} \le B^2 \cdot \Var[\vec{\uptheta}]{f(\vec{\uptheta})}.
  \end{equation}
\end{lemma}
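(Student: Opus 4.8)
The plan is to transport the entire statement to Fourier space, where $\partial_{\theta_j}$ becomes multiplication by the frequency $\ConstI n_j$ and each variance becomes a weighted $\ell^2$ sum of Fourier coefficients. Throughout I take $\vec{\uptheta}$ uniform on the torus $[0,2\ConstPi)^L$, so that the characters $\{\ConstE^{\ConstI\vec{n}\cdot\vec{\uptheta}}\}_{\vec{n}\in\mathbb{Z}^L}$ form an orthonormal family and Parseval's identity applies. By the bounded-frequency hypothesis \eqref{eq:sk6w} only finitely many $c_{\vec{n}}$ are nonzero, so every sum below is finite and term-by-term differentiation of \eqref{eq:le7f} is immediate.

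First I would record the two Parseval identities that drive the argument. Since $\E[\vec{\uptheta}]{\ConstE^{\ConstI\vec{n}\cdot\vec{\uptheta}}}=\delta_{\vec{n},\vec{0}}$, the mean of $f$ is its zero mode $c_{\vec{0}}$, and orthonormality gives
\begin{equation}
  \Var[\vec{\uptheta}]{f}=\sum_{\vec{n}\neq\vec{0}}\abs{c_{\vec{n}}}^{2}.
\end{equation}
Differentiating \eqref{eq:le7f} term by term yields $\partial_{\theta_j}f=\sum_{\vec{n}}\ConstI n_j\,c_{\vec{n}}\,\ConstE^{\ConstI\vec{n}\cdot\vec{\uptheta}}$, whose zero mode vanishes because the prefactor $n_j$ annihilates $\vec{n}=\vec{0}$; hence $\partial_{\theta_j}f$ has mean zero and the same computation gives
\begin{equation}
  \Var[\vec{\uptheta}]{\partial_{\theta_j}f}=\sum_{\vec{n}}n_j^{2}\abs{c_{\vec{n}}}^{2}.
\end{equation}

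For the upper bound I would invoke the frequency cutoff $\norm{\vec{n}}_\infty\le B$: every surviving mode satisfies $n_j^{2}\le\norm{\vec{n}}_\infty^{2}\le B^{2}$, so $\Var[\vec{\uptheta}]{\partial_{\theta_j}f}\le B^{2}\sum_{\vec{n}\neq\vec{0}}\abs{c_{\vec{n}}}^{2}=B^{2}\Var[\vec{\uptheta}]{f}$ for each $j$ (the term $\vec{n}=\vec{0}$ drops since $n_j=0$ there), and maximizing over $j$ gives the right inequality. For the lower bound I would sum the gradient variances over $j$ and exchange the order of summation,
\begin{equation}
  \sum_{j=1}^{L}\Var[\vec{\uptheta}]{\partial_{\theta_j}f}
  =\sum_{\vec{n}}\norm{\vec{n}}_2^{2}\abs{c_{\vec{n}}}^{2}
  \ge\sum_{\vec{n}\neq\vec{0}}\abs{c_{\vec{n}}}^{2}=\Var[\vec{\uptheta}]{f},
\end{equation}
where the inequality uses that any nonzero \emph{integer} vector obeys $\norm{\vec{n}}_2^{2}\ge1$. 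Combining this with the elementary fact $\max_{j}(\cdot)\ge L^{-1}\sum_{j}(\cdot)$ yields $\max_j\Var[\vec{\uptheta}]{\partial_{\theta_j}f}\ge L^{-1}\Var[\vec{\uptheta}]{f}$, which is the left inequality.

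I expect no genuine obstacle: the whole argument is a single short Parseval computation once the problem is moved to Fourier space. The only points needing care are bookkeeping --- checking that the means of both $f$ and its derivatives are carried entirely by the zero mode, and noticing that the two bounds exploit complementary properties of the admissible frequencies, namely integrality ($\norm{\vec{n}}_2\ge1$ off the origin) for the lower bound and boundedness ($\norm{\vec{n}}_\infty\le B$) for the upper bound. Specializing to $f=C(\vec{\uptheta};U^{\vec{R}}_k)$, which is bounded-frequency periodic in its $L=k\abs{\vec{R}}$ parameters, then gives the two-sided control of \Cref{lem:ymbr}.
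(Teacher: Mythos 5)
Your proposal is correct and follows essentially the same route as the paper's own proof: a Parseval computation of both variances in the Fourier basis, the integrality bound $\norm{\vec{n}}_2^2\ge 1$ for nonzero integer frequencies combined with $\max_j(\cdot)\ge L^{-1}\sum_j(\cdot)$ for the lower inequality, and the cutoff $\abs{n_j}\le B$ for the upper inequality. The only cosmetic difference is that the paper chains all of these into a single displayed string of inequalities, whereas you separate the two bounds; the ingredients are identical.
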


\begin{proof}
  Expand $f$ in Fourier basis as in \Cref{eq:le7f}.
  Since $f$ is real-valued, $c_{\vec{n}}=\conj{c}_{-\vec{n}}$ for any $\vec{n}\in \mathbb{Z}^L$.
  Since $\E[\theta\in\mathbb{R}]{\ConstE^{\ConstI m\theta}}=\delta(m)$ for any integer $m$, we
  have
  \begin{align}
    \Var{f}                       & =\E{f^{2}} -\E{f}^{2} =\sum_{\vec{n}}
    \abs{c_{\vec{n}}}^{2} -\abs{c_{\vec{0}}}^{2} =\sum_{\vec{n} \neq \vec{0}}
    \abs{c_{\vec{n}}}^{2},                                                \\
    \Var{\partial_{\theta_{j}} f} & = E\qty[(
      \partial_{\theta_{j}} f)^{2}] -\E{\partial_{\theta_{j}} f}^{2}
    =\sum_{\vec{n}}\abs{c_{\vec{n}}}^{2} n_{j}^{2}-0
    =\sum_{\vec{n}}\abs{c_{\vec{n}}}^{2} n_{j}^{2}.
  \end{align}
  By the fact that $\abs{n_j}\le B$,
  \begin{equation}
    \Var{f}\le \sum _{\vec{n}}
    \abs{ c_{\vec{n}} }^{2} \norm{\vec{n}} _{2}^{2} =\sum _{j}\Var{\partial
      _{\theta _{j}} f}\le L\max_{j}( \partial _{\theta _{j}} f) \le L B^2 \cdot
    \Var{f}.
  \end{equation}
\end{proof}

The following two lemmas show that the cost function of alternated (qubit) dUCC \ansatze{} is bounded frequency periodic, completing the proof of \Cref{lem:ymbr}.

\begin{lemma}[Periodicity of (qubit) excitation rotations]
  \label{lem:9lnr}
  Let $R(\theta)$ be a (qubit) excitation rotation (\Cref{def:r86r}).
  $R(\theta)$ is periodic with a period of $2\ConstPi$.
  Moreover, there exists constant matrices $M_R^+,M_R^-,M_R^0$, such that
  \begin{equation}
    R( \theta ) =\ConstE^{\ConstI\theta } M_{R}^{+} +\ConstE^{-\ConstI\theta } M_{R}^{-} +M_{R}^{0}.
  \end{equation}
\end{lemma}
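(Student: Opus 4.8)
The plan is to reduce the lemma to a single spectral fact about the generator and then obtain the decomposition from the exponential series. Writing $G:=\hat{\tau}-\hat{\tau}^{\dagger}$, we have $R(\theta)=\ConstE^{\theta G}$, and $G$ is anti-Hermitian so $R(\theta)$ is unitary. The whole lemma follows once we show that the spectrum of $G$ lies in $\{0,+\ConstI,-\ConstI\}$, equivalently that $G^{3}=-G$.

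To get $G^{3}=-G$, I would show that every admissible generator $\hat{\tau}$ from \Cref{def:r86r} is a nilpotent partial isometry:
\begin{equation}
  \hat{\tau}^{2}=0, \qquad \hat{\tau}\hat{\tau}^{\dagger}\hat{\tau}=\hat{\tau}.
\end{equation}
Both are immediate from the structure of the monomials: each $\hat{\tau}$ contains a top creation factor $Q_{p}^{\dagger}$ with $(Q_{p}^{\dagger})^{2}=0$, while the Jordan-Wigner $Z$-string acts only on qubits strictly between the active indices and therefore commutes with every ladder operator in $\hat{\tau}$ and squares to the identity. A direct computation then gives $\hat{\tau}\hat{\tau}^{\dagger}=N_{p}(I-N_{q})\cdots$ and $\hat{\tau}^{\dagger}\hat{\tau}=(I-N_{p})N_{q}\cdots$, where $N=\op{1}$; these are orthogonal projections onto the disjoint source and target occupation patterns of the excitation. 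Setting $P:=\hat{\tau}\hat{\tau}^{\dagger}+\hat{\tau}^{\dagger}\hat{\tau}$ (a projection, since the two summands are mutually orthogonal), one checks $G^{2}=-P$ and $GP=PG=G$, whence $G^{3}=-GP=-G$.

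With $G^{2}=-P$ the closed form is a one-line functional calculus: for $m\ge 1$ one has $G^{2m}=(-1)^{m}P$ and $G^{2m+1}=(-1)^{m}G$, so summing the even and odd parts of $\ConstE^{\theta G}$ gives
\begin{equation}
  R(\theta)=(I-P)+\cos\theta\,P+\sin\theta\,G.
\end{equation}
Rewriting $\cos\theta$ and $\sin\theta$ through $\ConstE^{\pm\ConstI\theta}$ and collecting coefficients yields the asserted form with the $\theta$-independent matrices
\begin{equation}
  M_{R}^{+}=\tfrac12(P-\ConstI G), \qquad M_{R}^{-}=\tfrac12(P+\ConstI G), \qquad M_{R}^{0}=I-P,
\end{equation}
which are precisely the spectral projections of $G$ onto $+\ConstI,-\ConstI,0$; periodicity of period $2\ConstPi$ is then read off from the $\ConstE^{\pm\ConstI\theta}$ factors.

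The only step carrying genuine content is the spectral identity $G^{3}=-G$, i.e.\ the nilpotent-partial-isometry property of $\hat{\tau}$; everything afterwards is mechanical. I do not expect this to be a serious obstacle, because the excitation monomials act on pairwise distinct qubits and the $Z$-strings live on disjoint qubits, so no sign bookkeeping intervenes and the projector identities above hold uniformly for single, double, generalized, and qubit excitations alike. The one point to state carefully is that $\hat{\tau}\hat{\tau}^{\dagger}$ and $\hat{\tau}^{\dagger}\hat{\tau}$ are mutually orthogonal, which is what makes $P$ a genuine projection and hence $G^{2}=-P$ a true projection relation rather than merely a negative-semidefinite one.
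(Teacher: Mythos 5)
Your proposal is correct and follows essentially the same route as the paper: both reduce the lemma to showing that the anti-Hermitian generator $G=\hat{\tau}-\hat{\tau}^{\dagger}$ has spectrum contained in $\{0,\pm\ConstI\}$ by examining $G^{2}$ and recognizing it as minus a projection (the paper reads this off directly as a diagonal matrix of number operators with entries $0$ and $-1$, while you derive it from the partial-isometry identities $\hat{\tau}^{2}=0$ and $\hat{\tau}\hat{\tau}^{\dagger}\hat{\tau}=\hat{\tau}$). Your explicit closed form $R(\theta)=(I-P)+\cos\theta\,P+\sin\theta\,G$ and the spectral projections $M_{R}^{\pm}=\tfrac12(P\mp\ConstI G)$, $M_{R}^{0}=I-P$ are a clean refinement of the same idea rather than a different argument.
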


\begin{proof}
  Recall $R(\theta)=\exp(\theta(\hat{\tau}-\hat{\tau}^\dagger))$, where $\hat{\tau}\in\left\{\hat{a}_{p}^{\dagger}\hat{a}_{q},\hat{a}_{p}^{\dagger} \hat{a}_{q}^{\dagger}\hat{a}_{r}\hat{a}_{s},\dots\right\}\cup\left\{Q_p^\dagger Q_q,Q_p^\dagger Q_q^\dagger Q_r Q_s,\dots\right\}$.
  Notice that $\hat{\tau}-\hat{\tau}^{\dagger}$ is anti-Hermitian.
  It suffices to show that the eigenvalues of $\hat{\tau}-\hat{\tau}^\dagger$ are $0,\pm \ConstI$.
  In fact,
  \begin{equation}
    \begin{split}
      \left(\hat{\tau } -\hat{\tau }^{\dagger }\right)^{2}
       & =\qty(Q_{p}^{\dagger }
      Q_{q}^{\dagger } \dots Q_{r} Q_{s} \dots -Q_{r}^{\dagger } Q_{s}^{\dagger } \dots Q_{p} Q_{q} \dots)^{2} \\
       & =-( N_{p} N_{q} \dots +N_{r} N_{s} \dots -N_{p} N_{q} \dots N_{r} N_{s} \dots ).
    \end{split}
  \end{equation}
  Hence, $\left(\hat{\tau } -\hat{\tau }^{\dagger }\right)^{2}$ is a diagonal matrix, and each element on the diagonal is either 0 or -1.
\end{proof}

\begin{lemma}
  \label{lem:fx5w}
  Let $U^{\vec{R}}_{k}(\vec{\uptheta})$ be an alternated (qubit) dUCC \ansatze{} defined in \Cref{eq:o2yh}.
  The cost function $C(\vec{\uptheta};U^{\vec{R}}_{k})$ defined in \Cref{eq:wtfm} is bounded frequency periodic.
\end{lemma}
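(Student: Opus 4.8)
The plan is to show that the cost function $C(\vec{\uptheta};U^{\vec{R}}_{k})$ is a trigonometric polynomial in the parameters $\vec{\uptheta}=(\theta^{(i)}_j)$ whose frequencies, measured in the $\ell^\infty$ norm, are bounded by a universal constant independent of the number of parameters $L=k\abs{\vec{R}}$. The key structural input is already provided by \Cref{lem:9lnr}: each individual excitation rotation decomposes as
\begin{equation}
  R(\theta)=\ConstE^{\ConstI\theta}M_R^{+}+\ConstE^{-\ConstI\theta}M_R^{-}+M_R^{0},
\end{equation}
so that $R(\theta)$ and likewise $R(\theta)^{\dagger}$ are linear combinations of $\ConstE^{\ConstI\theta},\ConstE^{-\ConstI\theta},1$. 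Since each of the $L$ parameters $\theta^{(i)}_j$ appears in exactly one factor $R_j(\theta^{(i)}_j)$ in the product defining $U^{\vec{R}}_k$, and in exactly one conjugate factor in $U^{\vec{R}}_k{}^{\dagger}$, the dependence of $C$ on each individual $\theta^{(i)}_j$ enters only through these two factors.

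First I would expand the cost function $C=\tr(O\,U\rho\,U^{\dagger})$ by substituting the decomposition of \Cref{lem:9lnr} into every rotation factor appearing in $U$ and in $U^{\dagger}$. For a fixed parameter $\theta^{(i)}_j$, the factor $R_j(\theta^{(i)}_j)$ in $U$ contributes one of $\{\ConstE^{\ConstI\theta^{(i)}_j},\ConstE^{-\ConstI\theta^{(i)}_j},1\}$, and the corresponding factor $R_j(\theta^{(i)}_j)^{\dagger}$ in $U^{\dagger}$ contributes one of the same three options (with $M^{+}$ and $M^{-}$ swapped under conjugation). Multiplying these out, the total power of $\ConstE^{\ConstI\theta^{(i)}_j}$ that can arise from the two occurrences of $\theta^{(i)}_j$ lies in $\{-2,-1,0,1,2\}$. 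Expanding fully over all parameters yields a finite sum of terms, each a product over the parameters of factors $\ConstE^{\ConstI n^{(i)}_j\theta^{(i)}_j}$ with integer exponents, times a trace of a fixed product of the constant matrices $M^{\pm}_{R_j},M^0_{R_j},O,\rho$. Collecting these gives precisely a Fourier expansion of the form in \Cref{eq:le7f}, establishing periodicity with period $2\ConstPi$ in each variable.

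The crucial remaining point is the frequency bound \Cref{eq:sk6w}. The argument above shows that each integer exponent $n^{(i)}_j$ attached to a variable $\theta^{(i)}_j$ is a sum of at most two contributions, one from $U$ (in $\{-1,0,1\}$) and one from $U^{\dagger}$ (in $\{-1,0,1\}$), so $\abs{n^{(i)}_j}\le 2$ for every coordinate. Hence one may take $B=2$, which is manifestly independent of $L$. I would then invoke \Cref{lem:6cf2} with $L=k\abs{\vec{R}}$ and this constant $B$ to obtain \Cref{eq:v0ff} (noting $B^2=4$ can be absorbed, or the periodicity rescaled so that $B$ becomes $1$), completing the proof of \Cref{lem:ymbr}. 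The main obstacle is purely bookkeeping: making precise that each physical parameter $\theta^{(i)}_j$ occurs in exactly two tensor factors of $\tr(OU\rho U^{\dagger})$—once in $U$ and once in its adjoint—and that the $M^{\pm},M^0$ decomposition does not introduce any spurious cross-terms that would raise the frequency beyond $\pm2$. I expect no genuine difficulty here, since the independence of parameters across factors (the defining ``relaxed'' structure of \Cref{def:r86r}) is exactly what guarantees the per-coordinate frequency stays bounded by a constant regardless of how large the circuit grows.
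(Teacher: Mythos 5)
Your proof is correct and follows essentially the same route as the paper's: expand each factor via the decomposition $R(\theta)=\ConstE^{\ConstI\theta}M_R^{+}+\ConstE^{-\ConstI\theta}M_R^{-}+M_R^{0}$ of \Cref{lem:9lnr}, use the independence of parameters so that each $\theta^{(i)}_j$ appears exactly once in $U$ and once in $U^{\dagger}$, and conclude that every Fourier frequency satisfies $\abs{n^{(i)}_j}\le 2$, i.e.\ $B=2$. The trailing discussion about feeding this into \Cref{lem:6cf2} (and ``absorbing'' or ``rescaling away'' the factor $B^2=4$) goes beyond the stated lemma and is not needed for it, but the proof of the statement itself matches the paper's argument.
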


\begin{proof}
  By \Cref{lem:9lnr},
  \begin{align}
    C(\vec{\uptheta};U^{\vec{R}}_{k}) & =\tr( OU^{\vec{R}}_{k}(\vec{\uptheta})\op{\psi _{0}} U^{\vec{R}}_{k}(\vec{\uptheta})^{\dagger })                                                                                                                                  \\
                                      & =\tr( O\left(\prod _{ij} R_{j}( \theta ^{(j)}_{i})\right)\op{\psi _{0}}\left(\prod _{ij} R_{j}( \theta^{(j)} _{i})\right)^{\dagger })                                                                                             \\
                                      & =\sum _{\vec{c} ,\vec{c}'\in \{0,\pm 1\}^{m}} \ConstE^{\ConstI(\vec{c} -\vec{c}') \cdot \vec{\uptheta}} \tr( O\left(\prod _{ij} M_{R_{j}}^{c_{ij}}\right) \op{\psi _{0}}\left(\prod _{ij} M_{R_{j}}^{c'_{ij}}\right)^{\dagger }).
  \end{align}

  Since $\norm{\vec{c} -\vec{c}'}_{\infty} \le 2$, $C(\vec{\uptheta};U^{\vec{R}}_{k})$ is bounded frequency periodic.
\end{proof}

To conclude this section, we make two remarks.
First, the lower bound in \Cref{lem:6cf2} can be saturated, for example by the function $f(\vec{\uptheta}) =\sum _{j}\cos^{2}\frac{\theta _{j}}{2}$.
Notice that $f(\vec{\uptheta})$ emerges as the global cost function $C(\vec{\uptheta}) =\tr( OU(\vec{\uptheta})\op{\vec{0}} U^{\dagger }(\vec{\uptheta}))$, where $U(\vec{\uptheta}) =\prod _{j=1}^{n}\exp(\ConstI\theta _{j} X_{j})$ and $O=\sum _{j=1}^{n}\op{0}_{j}$.
Second, we only utilize the periodicity of alternated (qubit) dUCC \ansatzes{} (\Cref{lem:9lnr}) when proving the equivalence between cost variance and gradient variance.
Such an argument could possibly be strengthened using other properties such as non-locality.
\section{Moments of cost function}\label{app:oql3}

In the last section, we showed that for alternated dUCC \ansatzes{}, the variance of the gradient can be bounded by the variance of the cost itself in both directions (\Cref{lem:6cf2,lem:fx5w}).
From now on, we turn to calculating the variance of the cost function.
To start with, we employ the common trick in the study of BP \cite{pesah2021absence, zhao2021analyzing, liu2022presence, martin2023barren} to express the \mth{$t$} moment of the cost function as a circuit-like tensor network.
The motivation is to separate the initial state, gates, and observables apart, and to resolve the non-linearity in high-order moment.

\subsection{Circuit-like tensor network representation of
  \texorpdfstring{$\E{C^t}$}{ECt} and moment vector}

In this section, we express the \mth{$t$} moment of the cost function as a circuit-like tensor network.
All quantum gates related to the same parameter are contracted into one ``elevated'' tensor which has a larger dimension, namely, the \mth{$t$} moment superoperator.
The (matrix form of) \mth{$t$} moment superoperator of an operator $T(
  \theta )$ with a real parameter $\theta$ is defined as
\begin{equation}
  \mso[\theta\in\mathbb{R}]{T(\theta)}{t} = \int _{\mathbb{R}}
  T( \theta )^{\otimes t} \otimes \conj{T}( \theta )^{\otimes t} \dd\theta.
\end{equation}
To simplify the notations, we will use the shorthand $\mso{T}{t}:=\mso[\theta\in\mathbb{R}]{T(\theta)}{t}$ whenever the context is clear.
Moreover, we introduce dedicated notation $\bar{T} ,\tilde{T}$ for the \nth{1}
and \nth{2} moments of $T( \theta )$, as they will be used frequently:
\begin{equation}
  \bar{T}:=\mso{T}{1} ,\quad \tilde{T} :=\mso{T}{2}.
\end{equation}
The vectorization of an operator $U\in \mathbb{C}^{2^m\times 2^m}$ is defined as $\ket{U} :=( U\otimes \openone_{2^m})\sum _{i\in \mathbb{F}^m_2}\ket{i,i}$.

Using the fact that $\E{XY}=\E{X}\E{Y}$ for independent random variables $X,Y$, we can express the \mth{$t$} moment of the cost function of alternated (qubit) dUCC \ansatzes{} as the product of the \mth{$t$} moment of excitation operators.

\begin{lemma}[\mth{$t$} moment of cost function]
  \label{lem:csxv}
  Let $U^{\vec{R}}_{k}(\vec{\uptheta})$ be an alternated (qubit) dUCC \ansatze{} defined in \Cref{def:r86r}, $C(\vec{\uptheta};U^{\vec{R}}_{k},O)$ be the cost function defined in \Cref{eq:wtfm} where $O$ is any observable.
  For any $t\in \mathbb{N}_{+}$ and observables $O_{1} ,\dots ,O_{t}\in
    \mathbb{C}^{2^m\times 2^m}$,
  \begin{equation}
    \label{eq:gi3k} \E{\prod
      _{l=1}^{t}
      C\left(\vec{\uptheta} ;U_{k}^{\vec{R}} ,O_{l}\right)} =\left(\bigotimes_{l=1}^{t}\bra{O_{l}}\right)\left(\prod _{j=1}^{\abs{\vec{R}} }\mso{R_j}{t}\right)^{k}\ket{\psi _{0}}^{\otimes 2t} .
  \end{equation}
  In particular, the \mth{$t$} moment of the cost function $C\qty(\vec{\uptheta}
    ;U_{k}^{\vec{R}} ,O)$ for some observable $O$ is
  \begin{equation}
    \label{eq:xal6} \E{C^{t}\left(\vec{\uptheta} ;U_{k}^{\vec{R}} ,O\right)}
    =\bra{O}^{\otimes t}\left(\prod _{j=1}^{|\vec{R} |}\mso{R_j}{t}\right)^{k}\ket{\psi
      _{0}}^{\otimes 2t}.
  \end{equation}
\end{lemma}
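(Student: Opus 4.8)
The plan is to linearize the conjugation $\rho\mapsto U\rho U^{\dagger}$ by vectorization, convert the product of $t$ scalar cost values into a single contraction on a $2t$-fold tensor space, and then push the expectation over $\vec{\uptheta}$ through that space using the independence of the relaxed parameters. First I would recall the two identities attached to the vectorization $\ket{M}=(M\otimes\openone_{2^{m}})\sum_{i}\ket{i,i}$, namely $\braket{A}{B}=\tr(A^{\dagger}B)$ and $\ket{AXB}=(A\otimes B^{T})\ket{X}$. Writing $U=U^{\vec{R}}_{k}(\vec{\uptheta})$ and $\rho=\op{\psi_0}$, these give, for each (Hermitian) observable $O_l$,
\begin{equation}
  C(\vec{\uptheta};U^{\vec{R}}_{k},O_l)
  =\tr( O_l\, U\rho\, U^{\dagger})
  =\bra{O_l}( U\otimes\conj{U})\ket{\rho},
\end{equation}
where I used $(U^{\dagger})^{T}=\conj{U}$, and $\ket{\rho}=\ket{\psi_0}^{\otimes 2}$ because $\ket{\psi_0}$ is a real computational-basis state. (For a non-Hermitian $O_l$ one replaces $\bra{O_l}$ by the dual of $\ket{O_l^{\dagger}}$, which changes nothing below.)

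Next, since each $C(\vec{\uptheta};U,O_l)$ is a scalar, the product over $l$ equals the tensor product of the individual bra--operator--ket contractions. Collecting the $t$ forward copies and the $t$ conjugate copies and reordering the tensor slots accordingly, the middle operator becomes $\bigotimes_{l}(U\otimes\conj{U})=U^{\otimes t}\otimes(\conj{U})^{\otimes t}=U^{\otimes t,t}$ and the reference vector becomes $\ket{\psi_0}^{\otimes 2t}$, so that the expectation over $\vec{\uptheta}$ acts only on the middle factor. I would then expand $U=\prod_{i=1}^{k}\prod_{j=1}^{m}R_j(\theta^{(i)}_j)$ (with $m=|\vec{R}|$) and use that the map $A\mapsto A^{\otimes t,t}$ respects tensor powers and complex conjugation, hence is multiplicative, to get $U^{\otimes t,t}=\prod_{i=1}^{k}\prod_{j=1}^{m}R_j(\theta^{(i)}_j)^{\otimes t,t}$. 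Because the relaxed parameters $\{\theta^{(i)}_j\}$ are mutually independent and identically distributed, the expectation of this ordered product factorizes into the ordered product of the per-rotation expectations $\mso{R_j}{t}$ (well-defined by the periodicity in \Cref{lem:9lnr}), each independent of $i$; the $k$ identical blocks then collapse into a $k$-th power, yielding
\begin{equation}
  \E{\prod_{l=1}^{t} C}
  =\left(\bigotimes_{l=1}^{t}\bra{O_l}\right)
   \left(\prod_{j=1}^{|\vec{R}|}\mso{R_j}{t}\right)^{k}\ket{\psi_0}^{\otimes 2t},
\end{equation}
and setting $O_1=\dots=O_t=O$ gives the stated second identity.

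The crux, and the only genuinely nontrivial step, is the factorization of the expectation under independence. The elevated operators $R_j(\theta^{(i)}_j)^{\otimes t,t}$ do not commute, so one may not reorder them; instead $\E{A_1\cdots A_N}=\E{A_1}\cdots\E{A_N}$ must be read entrywise as an identity for matrix-valued functions of mutually independent variables, with the operator order preserved throughout. This is precisely where the relaxation assumption is indispensable: were the parameters shared across the $k$ alternations, as in genuine Trotterization, the blocks would be correlated and would not collapse to a clean power. The rest is bookkeeping --- keeping the $2t$ tensor slots consistently ordered so that the observables, gates, and reference state line up --- which is routine but is where indexing slips would most likely occur.
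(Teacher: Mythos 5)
Your proof is correct and follows essentially the same route as the paper's own (much terser) proof: vectorize the cost as $C(\vec{\uptheta};U,O)=\bra{O}U(\vec{\uptheta})^{\otimes 1,1}\ket{\psi_0}^{\otimes 2}$ and then factor the expectation of the ordered product of elevated gate tensors using the mutual independence of the relaxed parameters, collapsing the $k$ identically distributed blocks into a $k$-th power. The additional care you take with non-Hermitian observables, tensor-slot reordering, and the entrywise (order-preserving) reading of the factorization identity is exactly the bookkeeping the paper leaves implicit.
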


\begin{proof}
  Observe that $C(\vec{\uptheta};U,O) =\tr( OU(\vec{\uptheta}) \op{\psi _{0}} U(\vec{\uptheta})^{\dagger }) =\bra{O} U(\vec{\uptheta})^{\otimes 1,1}\ket{\psi _{0}}^{\otimes 2}$, and use the independence of parameters.
\end{proof}

We make two remarks regarding \Cref{lem:csxv}.
\begin{enumerate}
  \item
        \Cref{eq:gi3k} is useful in calculating covariance, or other quantities alike.
  \item
        The expectations in \Cref{lem:csxv} are taken over random
        parameters $\vec{\uptheta}$ sampled from $\mathbb{R}^{k|\vec{R}|}$, or
        equivalently from $[0,2\ConstPi)^{k|\vec{R}|}$ by the periodicity of (qubit)
        excitation rotations (\Cref{lem:9lnr}).
\end{enumerate}

\Cref{lem:csxv} indicates that the \mth{$t$} moment of cost function for
different observables are essentially captured by a vector, which is the
product of \mth{$t$} moment superoperators of excitation rotations and initial
state.
In other words, one can in principle calculate $\E{C(\vec{\uptheta};O)^t}$ for any observable $O$ if such vector is known --- just take an inner product of the vectorization of $O$ and the vector.
Hence, we refer to such a vector as a moment vector, as defined below.

\begin{definition}[$(\vec{R},t,k)$-moment vector] \label{def:wzl3}
  Let $\vec{R}$ be a sequence of excitation rotations defined in \Cref{def:r86r}, and $\ket{\psi_0}$ be the Hartree-Fock state defined in \Cref{eq:jj66}.
  For $t,k\in\mathbb{N}_+$, the $(\vec{R},t,k)$-moment vector is defined
  to be
  \begin{equation}
    \ket{\Psi^{\vec{R}}_{t,k}}=\qty(\prod
    _{j=1}^{\abs{\vec{R}}}\mso{T_{j}}{t})^{k}\ket{\psi _{0}}^{\otimes 2t}.
  \end{equation}
  In particular, the moment vector of $k$-UCCSD, $k$-BRA etc. is denoted by $\ket{\Psi^{\mathrm{UCCSD}}_{t,k}},\ket{\Psi^{\mathrm{BRA}}_{t,k}}$ etc. And the moment vector of $k$-qubit-UCCSD etc. is denoted by $\ket{\Psi^{\mathrm{qUCCSD}}_{t,k}}$ etc.
\end{definition}

\Cref{eq:xal6} can be rewritten as $\E{C^t}=\bra{O}^{\otimes t}\ket{\Psi^{\vec{R}}_{t,k}}$.
As an example, $\ip{\openone^{\otimes t}}{\Psi^{\vec{R}}_{t,k}} =\bra{\openone}^{\otimes t}\ket{\Psi^{\vec{R}}_{t,k}} =\E{C^t(\vec{\uptheta};U_{k}^{\vec{R}},\openone)}=1$.

\subsection{Site decomposition: a straightforward way to describe the elevated
  tensors}

So far, we have addressed the non-linearity in calculating the \mth{$t$} moment of cost by considering $2t$ replicas of the original system.
Moreover, $\E{C^t}$ turns out to be the inner product of the vectorization of $O$ and the $(\vec{R},t,k)$-moment vector $\ket{\Psi^{\vec{R}}_{t,k}}$.
Before delving into the calculation of $\ket{\Psi^{\vec{R}}_{t,k}}$, it is worthwhile to reorder the qubits in the enlarged Hilbert space so that the elevated tensor can be described naturally.
Such reordering, which we refer to as site decomposition, is formally defined below.

\begin{definition}[Site decomposition] \label{def:0dpb}
  The isomorphism between Hilbert spaces $\mathcal{H}^{\otimes 2t} \cong
    \bigotimes_{i=1}^{n}\mathcal{H}_{i}$ with $\mathcal{H} =\mathbb{C}^{2^{n}} ,
    \mathcal{H}_{i} =\mathbb{C}^{2^{2t}}$, defined by
  \begin{equation}
    \bigotimes_{j=1}^{2t} \ket{b_{1}^{( j)} \dots b_{n}^{( j)}} \rightarrow
    \bigotimes_{i=1}^{n} \ket{b_{i}^{( 1)} \dots b_{i}^{( 2t)}},
  \end{equation}
  is
  called a site decomposition.
  Each $\mathcal{H}_{i} =\mathbb{C}^{2^{2t}}$ is called a site of length $2t$.
  Moreover, we will use $\ket{\Psi}$ to denote any state in the enlarged space $\mathbb{C}^{2^{2tn}}$, while $\ket{\Phi}$ is reserved for computational basis states.
  $\Phi$ is interpreted as a bit string in $\mathbb{F}_2^{2tn}$.
  For $i\in [n], j\in [2t]$, $\Phi_i$ denotes the bit string of the \mth{$i$} site, and $\Phi_{ij}$ denotes the \mth{$j$} bit of \mth{$i$} site.
\end{definition}

This procedure described in \Cref{def:0dpb} can be understood as reordering and splitting the $2tn$ qubits into $n$ equally-sized subsystems, with each subsystem forming a site.
As an example, the \mth{$t$} moment of a qubit single excitation rotation acting on qubits 1 and 2 can be viewed as a tensor acting on sites 1 and 2, as illustrated in \Cref{fig:bb7p}.
Without site decomposition, it is less straightforward to describe such a tensor.
\begin{figure}[ht]
  \centering
  \includegraphics{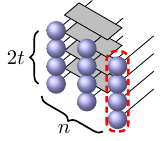}
  \caption{Illustration of the \mth{$t$} moment of a double qubit gate acting on qubits 1 and 2, which can be viewed as a larger tensor acting on sites 1 and 2.
    Here $t=2$ and $n=3$.
    Each ball represents a qubit, the gray boxes together represent the \mth{$t$} moment of the double qubit gate, and the balls in the red dashed cycle form the \nth{3} site of length $2t$.
  }
  \label{fig:bb7p}
\end{figure}
The reader should be aware that we will implicitly assume site decomposition in the subsequent text.

The following proposition reexpresses the initial state and vectorization of the observable $\hat{a}_{p}^{\dagger }\hat{a}_{q}+h.c.
$ and $\hat{a}_{p}^{\dagger }\hat{a}_{q}^{\dagger}\hat{a}_{r}\hat{a}_{s}+h.c.$
under site decomposition.

\begin{proposition}[Initial state and observable under site decomposition]
  \label{prp:7iwk}
  Let $t\in \mathbb{N}_{+}$, and $\ket{\psi _{0}}$ as defined in \Cref{eq:jj66}.
  Under site decomposition,
  \begin{equation}
    \ket{\psi _{0}}^{\otimes 2t}
    =|\underbrace{1\dots 1}_{2t} \rangle ^{\otimes \NumE} |\underbrace{0\dots
      0}_{2t} \rangle ^{\otimes (n-\NumE )}.
  \end{equation}
  After Jordan-Wigner transformation, when $O=\hat{a}_{p}^{\dagger }\hat{a}_{q} +h.c.
  $ ($p >q$),
  \begin{multline}
    \ket{O}^{\otimes t} =\left(\ket{01}_{q}\ket{10}_{p} +
    \ket{10}_{q}\ket{01}_{p}\right)^{\otimes t} \\
    \otimes \underset{a\in [ 1,q \cup ( p,n]}{\bigotimes }
    \left(\ket{00}_a +\ket{11}_a\right)^{\otimes t} \otimes
    \underset{b\in ( p,q)}{\bigotimes }\qty(\ket{00}_b -
    \ket{11}_b)^{\otimes t}.
  \end{multline}
  And when $O=\hat{a}_{p}^{\dagger }\hat{a}_{q}^{\dagger }\hat{a}_{r}\hat{a}_{s} +h.c.
  $ ($p >q >r >s$),
  \begin{multline}
    \ket{O}^{\otimes t} =\qty(\ket{01}_{s}\ket{01}_{r}\ket{10}_{q}
    \ket{10}_{p} +\ket{10}_{s}\ket{10}_{r}\ket{01}_{q}\ket{01}_{p} )^{\otimes t} \\
    \otimes \underset{a\in [ 1,s) \cup ( r,q) \cup ( p,n]}{\bigotimes }
    \left(\ket{00}_a +\ket{11}_a\right)^{\otimes t} \otimes
    \underset{b\in ( s,r) \cup ( q.p)}{\bigotimes }\qty(\ket{00}_b -
    \ket{11}_b)^{\otimes t}.
  \end{multline}
\end{proposition}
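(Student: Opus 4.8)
The plan is to reduce everything to single-qubit vectorizations and then invoke multiplicativity of the vectorization map across sites. The first identity is immediate: each of the $2t$ factors of $\ket{\psi_0}^{\otimes 2t}$ equals $\ket{1}^{\otimes\NumE}\ket{0}^{\otimes(n-\NumE)}$, so every site $i\le\NumE$ receives the bits $1\dots1$ across all $2t$ layers and every site $i>\NumE$ receives $0\dots0$; the site decomposition of \Cref{def:0dpb} then regroups these into $\ket{1\dots1}^{\otimes\NumE}\ket{0\dots0}^{\otimes(n-\NumE)}$.

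For the observable identities, I would first note that the vectorization $\ket{\cdot}$ is multiplicative across tensor factors once the $t=1$ site decomposition is applied: if $O=\bigotimes_{c=1}^n O_c$ acts qubit-wise, then $\ket O=\bigotimes_{c=1}^n\ket{O_c}$ with each $\ket{O_c}$ a two-qubit state, and consequently $\ket O^{\otimes t}=\bigotimes_{c=1}^n\ket{O_c}^{\otimes t}$ with site $c$ carrying $\ket{O_c}^{\otimes t}$. Thus it suffices to (i) write $O$ after Jordan--Wigner as a sum of qubit-wise tensor products of the elementary operators $I,Z,Q,Q^\dagger$, and (ii) record the four single-qubit vectorizations $\ket I=\ket{00}+\ket{11}$, $\ket Z=\ket{00}-\ket{11}$, $\ket Q=\ket{01}$, $\ket{Q^\dagger}=\ket{10}$, which follow directly from the rule $\op{i}{j}\mapsto\ket{ij}$.

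The heart of the argument is step (i), the $Z$-string bookkeeping. For the one-body case $O=\hat a_p^\dagger\hat a_q+h.c.$ with $p>q$, substituting $\hat a_x\mapsto Q_x\prod_{a<x}Z_a$ makes all $Z_a$ with $a<q$ appear an even number of times and cancel, leaving a $Z$ on each site in $(q,p)$; at the active site $q$ the leftover $Z_q$ is absorbed by $ZQ=Q$, giving $\hat a_p^\dagger\hat a_q=Q_p^\dagger Q_q\prod_{q<a<p}Z_a$, and adding the Hermitian conjugate produces the two terms $Q_p^\dagger Q_q+Q_p Q_q^\dagger$. Vectorizing factor-by-factor and raising to the $t$-th power then yields exactly the stated expression, with the active sites $p,q$ carrying $\ket{01}_q\ket{10}_p+\ket{10}_q\ket{01}_p$, the interior sites carrying $\ket{00}-\ket{11}$, and all remaining sites carrying $\ket{00}+\ket{11}$. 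The two-body case $O=\hat a_p^\dagger\hat a_q^\dagger\hat a_r\hat a_s+h.c.$ with $p>q>r>s$ is handled identically: counting, at each site $c$, the number of indices in $\{p,q,r,s\}$ strictly exceeding $c$ shows the multiplicity of $Z_c$ is even on $[1,s)\cup(r,q)\cup(p,n]$ (giving $I$) and odd on $(s,r)\cup(q,p)$ (giving $Z$), while the active sites contribute $Q_p^\dagger Q_q^\dagger Q_r Q_s$ and its conjugate.

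The main obstacle I anticipate is the sign bookkeeping at the active sites of the two-body term. Unlike the one-body case, here a leftover $Z$ meets a creation operator, and $ZQ^\dagger=-Q^\dagger$ rather than $+Q^\dagger$; at the site $q$ this produces a factor $-1$, so that $\hat a_p^\dagger\hat a_q^\dagger\hat a_r\hat a_s=-\,Q_p^\dagger Q_q^\dagger Q_r Q_s\prod_{c\in(s,r)\cup(q,p)}Z_c$. The displayed $+$ signs are therefore correct precisely after taking the $t$-th tensor power with the parity of $t$ accounted for; for the case $t=2$ that drives \Cref{thm:8pgp} the sign squares to $+1$ and the stated formula holds verbatim. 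I would carry out this parity check carefully, using the assumed ordering $p>q>r>s$ (which the Hamiltonian restriction $g_{pqrs}\neq0\Rightarrow p>q>r>s$ guarantees) to ensure the $Z$-multiplicity counts above are exhaustive and that no cross-site anticommutation signs are missed.
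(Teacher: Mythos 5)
Your proof is correct and follows essentially the same route as the paper's, whose entire proof is the one-line observation that Jordan--Wigner gives $\hat{a}_p = Q_p\prod_{a<p}Z_a$ together with the single-qubit vectorizations $\ket{Q}=\ket{01}$, $\ket{Q^{\dagger}}=\ket{10}$, $\ket{I}=\ket{00}+\ket{11}$, $\ket{Z}=\ket{00}-\ket{11}$ --- precisely your steps (i)--(ii) plus the same $Z$-string parity count. Your sign analysis at site $q$ is also correct, and it is a point on which you are more careful than the paper: since $Z_qQ_q^{\dagger}=-Q_q^{\dagger}$ while $Z_sQ_s=Q_s$, one indeed has $\hat{a}_p^{\dagger}\hat{a}_q^{\dagger}\hat{a}_r\hat{a}_s=-Q_p^{\dagger}Q_q^{\dagger}Q_rQ_s\prod_{c\in(s,r)\cup(q,p)}Z_c$, so the displayed two-body formula in \Cref{prp:7iwk} is exact only up to a global factor $(-1)^t$, even though the paper asserts it for all $t\in\mathbb{N}_+$. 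As you observe, this discrepancy is harmless wherever the proposition is actually used: the factor disappears at even $t$ (in particular in the $t=2$ computations behind \Cref{thm:8pgp}), and at $t=1$ the proposition is invoked only to place $\ket{O}$ inside $(\mathcal{H}_{1}^{\mathrm{even}})^{\perp}$, which is insensitive to an overall sign.
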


\begin{proof}
  Notice that after Jordan-Wigner transformation $\hat{a}_{p} =Q_{p}\prod _{a< p} Z_{a}$, and the vectorization of $Q,I,Z$ is $\ket{Q}=\ket{01},\ket{I} =\ket{00} +\ket{11},\ket{Z} =\ket{00} -\ket{11}$, respectively.
\end{proof}

The following operators related to sites will be useful.

\begin{definition}[$S_{\pi},S_{pq},\PermB_{\tau},F_{V}^{W}$]
  Let $\pi\in\mathfrak{S}_n,\tau\in\mathfrak{S}_{2t},V\subseteq [n]$, and
  $W\subseteq [2t]$.
  \begin{enumerate}
    \item
          Define $S_{\pi}\in \mathbb{C}^{2tn\times 2tn}$ as the permutation
          of sites by $\pi$:
          \begin{equation}
            S_{\pi }\ket{\Phi_1\Phi_2\dots\Phi_n} =\ket{\Phi_{\pi ^{-1} (1)} \Phi_{\pi ^{-1} (2)} \dots \Phi_{\pi ^{-1} (n)}} ,\quad\forall \Phi_i \in \mathbb{F}_{2}^{2t}.
          \end{equation}
          In particular, denote $S_{pq}:=S_{(p\, q)}$ the swap of site $p$ and $q$.
    \item
          Define $\PermB_{\tau}\in \mathbb{C}^{2t\times 2t}$ as the permutation
          of qubits in one site by $\tau$:
          \begin{equation}
            \PermB_{\tau }\ket{b_{1} b_{2} \dots b_{2t}} =\ket{b_{\tau ^{-1} (1)} b_{\tau ^{-1} (2)} \dots b_{\tau ^{-1} (2t)}} ,\quad\forall b_i \in \mathbb{F}_{2}.
          \end{equation}
    \item
          Define $F_{V}^{W}\in \mathbb{C}^{2tn\times 2tn}$ as the flip of
          the \mth{$j$} bit in the \mth{$i$} site for all $i\in V,j\in W$:
          \begin{equation}
            F_{V}^{W}=\prod_{i=1}^{n}\bigotimes_{j=1}^{2t}X_{i}^{[i\in V\land j\in W]}.
          \end{equation}
          In particular, denote $F_{pq\dots }^{ab\dots } =F_{\{p,q,\dots \}}^{\{a,b,\dots \}}$.
  \end{enumerate}
\end{definition}

\subsection{Moments of excitation rotations}

After introducing site decomposition, we now return to the calculation of moment vector $\ket{\Psi^{\vec{R}}_{t,k}}$.
Recall the definition of $\ket{\Psi^{\vec{R}}_{t,k}}$ in \Cref{def:wzl3}.
Since the initial state is fixed to be $\ket{\psi_0}$, it remains to determine each $\mso{R_j}{t}$, where $R_j$ is some (qubit) excitation rotation.
The following lemma gives some basic properties of $\mso{R_j}{t}$.
More properties will be covered in later sections.

\begin{lemma}[Basic properties of $\mso{R}{t}$]\label{lem:s14e}
  Let $R(\theta)$ be a (qubit) excitation rotation (\Cref{def:r86r}), and $t\in\mathbb{N}_+$.
  \begin{enumerate}
    \item
          \label{itm:zclt}
          $\mso{R}{t}=\E{R^{\otimes 2t}}$.
    \item
          \label{itm:egmj}
          $\mso{R}{t}$ is an orthogonal projection.
    \item
          \label{itm:at88}
          Suppose $R( \theta )$ is a qubit excitation rotation, i.e., $R( \theta ) =\exp\left( \theta \left(\hat{\tau } -\hat{\tau }^{\dagger }\right)\right)$ for some $\hat{\tau } =Q_{p_{1}}^{\dagger } \dots Q_{p_{r}}^{\dagger } Q_{p_{r+1}} \dots Q_{p_{2r}}$.
          Define $\Phi _{p_{1} :p_{2r} ,j} :=\Phi _{p_{1} ,j} \dots \Phi _{p_{2r} ,j}
            ,\vec{b}_{0} :=0^{r} 1^{r} ,\vec{b}_{1} :=1^{r} 0^{r}$, and
          \begin{equation}
            n_{ab} :=\#\{j\in [ 2t] | \Phi _{p_{1} :p_{2r} ,j} =\vec{b}_{a} ,\Phi '_{p_{1}
            :p_{2r} ,j} =\vec{b}_{b}\} ,\quad \forall a,b\in \mathbb{F}_{2} .
          \end{equation}
          Then
          \begin{equation}
            \bra{\Phi } \mso{R}{t} \ket{\Phi '} =
            \begin{cases}
              0, &
              ( *) ,                  \\
              ( -1)^{n_{01}}\frac{( n_{00} +n_{11} -1) !
              !( n_{01} +n_{10} -1) !!}{( n_{00} +n_{11} +n_{01} +n_{10}) !!} ,
                 & \text{otherwise} .
            \end{cases}
          \end{equation}
          Here we use the convention that $( -1) !
            !:=1$,
          and $(*)$ is the union of the following cases:
          \begin{itemize}
            \item
                  $\Phi _{i} \neq \Phi '_{i}$ for some $i\in [ n]\backslash \qty{p_{s} |
                      s\in [ 2r]}$.
            \item
                  $\Phi _{p_{1} :p_{2r} ,j} \neq \Phi '_{p_{1} :p_{2r} ,j}$ and $\qty{\Phi
                    _{p_{1} :p_{2r} ,j} ,\Phi '_{p_{1} :p_{2r} ,j}} \nsubseteq \qty{\vec{b}_{0}
                      ,\vec{b}_{1}}$ for some $j\in [ 2t]$.
            \item
                  One of $n_{00} +n_{11}$ and $n_{01}+n_{10}$ is odd.
          \end{itemize}

    \item
          \label{itm:nub3}
          Suppose $R( \theta )$ is an excitation rotation, i.e., $R( \theta ) =\exp\left( \theta \left(\hat{\tau } -\hat{\tau }^{\dagger }\right)\right)$ for some $\hat{\tau } =\hat{a}_{p_{1}}^{\dagger } \dots \hat{a}_{p_{r}}^{\dagger } \hat{a}_{p_{r+1}} \dots \hat{a}_{p_{2r}}$.
          Assume $\hat{\tau}=\pm Q_{p_{1}}^{\dagger } \dots Q_{p_{r}}^{\dagger } Q_{p_{r+1}} \dots Q_{p_{2r}} \prod_{i\in V} Z_i$ for index set $V\subset [n]\backslash \left\{p_{s} | s\in [ 2r]\right\}$.
          Define $\hat{\tau}':=Q_{p_{1}}^{\dagger } \allowbreak\dots Q_{p_{r}}^{\dagger } Q_{p_{r+1}} \dots Q_{p_{2r}}, R'(\theta):=\exp(\theta\left(\hat{\tau}'-(\hat{\tau}')^{\dagger}\right)), \vec{z} :=\bigotimes_{a\in V} \Phi_{a}$ and $X_{i}^{\vec{z}} := \bigotimes_{j=1}^{2t} X_{i}^{z_{j}}$.
          The following conversion rule holds:
          \begin{equation}
            \mso{R}{t}\ket{\Phi }
            =\qty(\prod _{i=1}^{n}
            X_{i}^{\vec{z}})\mso{(R')}{t}\left(\prod _{i=1}^{n} X_{i}^{\vec{z}}\right)\ket{\Phi }.
          \end{equation}
  \end{enumerate}
\end{lemma}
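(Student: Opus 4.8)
The plan is to treat the four parts in increasing order of difficulty, reducing everything to the single elementary fact that the generator $\hat{\tau}-\hat{\tau}^{\dagger}$ acts as one planar rotation on a two-dimensional subspace. For part \ref{itm:zclt} I would observe that every $\hat{\tau}$ is a product of the real matrices $Q,Q^{\dagger},Z$ in the computational basis, so $\hat{\tau}-\hat{\tau}^{\dagger}$ is a real anti-symmetric matrix and, for real $\theta$, $R(\theta)=\ConstE^{\theta(\hat{\tau}-\hat{\tau}^{\dagger})}$ is real orthogonal; hence $\conj{R}(\theta)=R(\theta)$ and $\mso{R}{t}=\E[\theta]{R(\theta)^{\otimes t}\otimes\conj{R}(\theta)^{\otimes t}}=\E{R^{\otimes 2t}}$. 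For part \ref{itm:egmj} I would use \Cref{lem:9lnr}: since $R$ is $2\ConstPi$-periodic, $\theta\mapsto R(\theta)^{\otimes 2t}$ is a unitary representation of the circle group and $\mso{R}{t}$ is its Haar average. Two short steps finish it: Hermiticity follows from the substitution $\theta\mapsto-\theta$ together with $R(\theta)^{\dagger}=R(-\theta)$, and idempotence from translation invariance, $R(\theta_0)^{\otimes 2t}\mso{R}{t}=\mso{R}{t}$, whence $(\mso{R}{t})^2=\mso{R}{t}$; a Hermitian idempotent is an orthogonal projection.

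The computational core is part \ref{itm:at88}. First I would pin down the action of the qubit rotation: the generator is supported on the $2r$ qubits $p_1,\dots,p_{2r}$, where it sends $\ket{\vec{b}_0}\mapsto\ket{\vec{b}_1}$, $\ket{\vec{b}_1}\mapsto-\ket{\vec{b}_0}$ and annihilates every other basis vector, so $R(\theta)$ is a rotation by $\theta$ on $\operatorname{span}\{\ket{\vec{b}_0},\ket{\vec{b}_1}\}$ and the identity elsewhere. Under the site decomposition of \Cref{def:0dpb}, the matrix element $\bra{\Phi}R(\theta)^{\otimes 2t}\ket{\Phi'}$ factorizes over the $2t$ copies and vanishes unless $\Phi,\Phi'$ agree off the qubits $\{p_s\}$. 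The three cases of $(*)$ then read off immediately: a mismatch on a qubit outside $\{p_s\}$ forces $0$, and in any copy where the two site-patterns differ but are not both in $\{\vec{b}_0,\vec{b}_1\}$ the identity/rotation action again contributes $0$. On the surviving configurations each copy contributes $\cos\theta$ (the $n_{00}+n_{11}$ copies), $-\sin\theta$ (the $n_{01}$ copies) or $\sin\theta$ (the $n_{10}$ copies), so that
\[
\bra{\Phi}R(\theta)^{\otimes 2t}\ket{\Phi'}=(-1)^{n_{01}}\cos^{n_{00}+n_{11}}\theta\,\sin^{n_{01}+n_{10}}\theta .
\]
Averaging over $\theta$ and invoking the elementary identity $\E[\theta]{\cos^{p}\theta\,\sin^{q}\theta}=(p-1)!!\,(q-1)!!/(p+q)!!$ for even $p,q$ (and $0$ when either is odd, which is the third case of $(*)$) yields the stated closed form, the convention $(-1)!!:=1$ covering $p=0$ or $q=0$.

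For the conversion rule \ref{itm:nub3} I would exploit that the $Z$-string commutes with the rotation. Writing $\hat{\tau}=\pm\,\hat{\tau}'\prod_{i\in V}Z_i$ with $V$ disjoint from $\{p_s\}$, the generator is $\pm(\hat{\tau}'-(\hat{\tau}')^{\dagger})\prod_{i\in V}Z_i$; on copy $j$ the factor $\prod_{i\in V}Z_i$ is the constant sign $(-1)^{z_j}$ recorded by $\vec{z}$, and it is preserved because $R$ never touches the $V$-qubits. Hence $R(\theta)$ acts on copy $j$ exactly as $R'(\pm(-1)^{z_j}\theta)$. The global sign $\pm$ disappears after the symmetric $\theta$-average (substitute $\theta\mapsto-\theta$), while the copy-dependent sign is reproduced by conjugation: flipping all $2r$ qubits $p_s$ in a copy swaps $\ket{\vec{b}_0}\leftrightarrow\ket{\vec{b}_1}$ and reverses the rotation, so conjugating $R'(\theta)^{\otimes 2t}$ by $\prod_{s}X_{p_s}^{\vec{z}}$ acts on copy $j$ as $R'((-1)^{z_j}\theta)$. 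Pulling this conjugation (fixed once $\Phi$, hence $\vec{z}$, is fixed) outside the average gives $\mso{R}{t}\ket{\Phi}=(\prod_{s}X_{p_s}^{\vec{z}})\,\mso{(R')}{t}\,(\prod_{s}X_{p_s}^{\vec{z}})\ket{\Phi}$; and since $\mso{(R')}{t}$ is supported on $\{p_s\}$, the extra flips turning $\prod_{s}X_{p_s}^{\vec{z}}$ into $\prod_{i=1}^{n}X_i^{\vec{z}}$ commute through $\mso{(R')}{t}$ and square to the identity, yielding the stated form.

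The step I expect to be most delicate is the bookkeeping in part \ref{itm:at88}: translating the per-copy action of the planar rotation under site decomposition into the combinatorial counts $n_{ab}$ and correctly enumerating the vanishing set $(*)$. Part \ref{itm:nub3} is conceptually light but requires care that the $Z$-string sign is genuinely copy-wise constant and left invariant by the evolution, so that the $X$-conjugation reproduces it exactly.
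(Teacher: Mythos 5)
Your proposal is correct and follows essentially the same route as the paper's proof: realness of $R$ for item 1, periodicity plus the group property and $R(\theta)^{\dagger}=R(-\theta)$ for the orthogonal-projection claim, the planar-rotation expansion of $R(\theta)$ with the trigonometric moment integral for item 3, and the sign-flip-by-$X$-conjugation argument (the paper phrases it as anti-commutation of $\hat{\tau}'-(\hat{\tau}')^{\dagger}$ with $\prod_{i=1}^{n}X_i$) for the conversion rule. The only cosmetic difference is that you conjugate by $X$ on the $p_s$ qubits and then pad to the full product, whereas the paper works with the all-qubit product directly; these are equivalent since $R'$ acts trivially off $\{p_s\}$.
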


\begin{proof}
  \Cref{itm:zclt}.
  Notice that $R(\theta)$ is real.

  \Cref{itm:egmj}.
  Since $R(\theta)=\exp(\theta(\hat{\tau}-\hat{\tau}^{\dagger}))$ for some
  constant operator $\hat{\tau}$,
  \begin{equation}
    \label{eq:nasu} \mso[\theta
      _{1}]{R( \theta _{1})}{t}\mso[\theta _{2}]{R( \theta _{2})}{t} =\E[\theta
      _{1}]{\mso[\theta _{2}]{R( \theta _{1} +\theta _{2})}{t}}.
  \end{equation}

  By \Cref{lem:9lnr}, $R( \theta )$ is periodic.
  Thus,
  \begin{equation}
    \label{eq:fns9} \E[\theta _{1}]{\mso[\theta _{2}]{R(
        \theta _{1} +\theta _{2})}{t}} =\E[\theta _{1}]{\mso[\theta _{2}]{R( \theta
        _{2})}{t}} =\mso[\theta _{2}]{R( \theta _{2})}{t}.
  \end{equation}

  Combining \Cref{eq:nasu} and \Cref{eq:fns9}, we have $\mso{R}{t}^{2} =\mso{R}{t}$.

  Moreover,
  \begin{equation}
    \mso{R( \theta )}{t}^{\dagger } =\mso{R( -\theta
      )}{t} =\mso{R( \theta )}{t}.
  \end{equation}

  Hence, $\mso{R}{t}$ is an orthogonal projection.

  \Cref{itm:at88}.
  By taking the Taylor expansion of matrix exponential $\ConstE^{M} =\sum _{m\geqslant 0}\frac{M^{m}}{m!
    }$, it is easy to verify that
  \begin{align}
    R( \theta ) & =\exp\left(\theta \left(\op{\vec{b}_{1}} -\op{\vec{b}_{0}}\right)_{p_{1} ,\dots ,p_{2r}}\right)                                                                                                              \\
                & =\openone_{2^{n}} +\big(\sin \theta \left(\op{\vec{b}_{1}}{\vec{b}_{0}} -\op{\vec{b}_{0}}{\vec{b}_{1}}\right) +(\cos \theta -1)\left(\op{\vec{b}_{0}} +\op{\vec{b}_{1}}\right)\big)_{p_{1} ,\dots ,p_{2r}} .
  \end{align}
  Consequently, $\bra{\Phi } \mso{R}{t} \ket{\Phi '}$ can only be non-zero if none of the first 2 cases of $( *)$ happens.
  If so,
  \begin{equation}
    \begin{split}
       & \bra{\Phi }\mso{R(\theta )}{t}\ket{\Phi
      '}                                                                         \\
       & =\E{( -1)^{n_{01}}\cos( \theta )^{n_{00} +n_{11}}\sin( \theta )^{n_{01}
      +n_{10}}}                                                                  \\
       & =
      \begin{cases}
        0,
         & \text{if } n_{00}+n_{11} \text{ or } n_{01}+n_{10} \text{ is
        odd} ,                                                          \\
        ( -1)^{n_{01}}\frac{( n_{00}+n_{11}-1) !
        !( n_{01}+n_{10} -1) !!}{( n_{00}+n_{11} +n_{01}+n_{10}) !!} ,
         & \text{otherwise} .
      \end{cases}
    \end{split}
  \end{equation}

  \Cref{itm:nub3}.
  Since $\hat{\tau}'-(\hat{\tau}')^{\dagger}$ anti-commutes with $\prod
    _{i=1}^{n} X_{i}$,
  \begin{equation}
    \begin{split}
      \E{R(\theta)^{\otimes 2t}}\ket{\Phi }
       & =\E{\bigotimes_{j=1}^{2t}
      R'\left(( -1)^{z_{j}} \theta\right)}\ket{\Phi }                                                                                        \\
       & =\left(\prod _{i=1}^{n} X_{i}^{\vec{z}}\right)\E{R'(\theta)^{\otimes 2t}} \left(\prod _{i=1}^{n} X_{i}^{\vec{z}}\right)\ket{\Phi }.
    \end{split}
  \end{equation}
\end{proof}

\subsection{Convergence of alternating projections}

One of the most important findings in \Cref{lem:s14e} is that the \mth{$t$} moment superoperator of (qubit) excitation rotations $\mso{R}{t}$ are orthogonal projections.
This is not unusual in BP studies.
For example, in \cite{mcclean2018barren,cerezo2021cost,liu2022presence} the circuit (or block of gates) is assumed to be Haar random up to \nth{2} moment.
Under such an assumption one verifies that the \nth{2} moment superoperator of the circuit (or block of gates) is an orthogonal projector of rank 2.
However, the projectors we encountered are significantly more complex compared to the Haar random case.
In fact, the \nth{2} moment of qubit single excitation rotations is a projector of rank 70 in the subsystem it acts on, let alone normal excitation rotations which are highly non-local.
Hence, we do not expect it to be easy to figure out or even bound $\ket{\Psi^{\vec{R}}_{t,k}}$ for any finite $k$.
Rather, we turn to study the infinite-$k$ case, which turns out to be tractable.
The phenomenon that the infinite case is easier than the finite one is ubiquitous, for example, in the theoretical analysis of classical neural networks \cite{jacot2018neural}.
The following lemma will play a central role.

\begin{lemma}[Convergence of alternating projections \cite{halperin1962product}]
  \label{lem:2btd}
  Let $\mathcal{H}$ be a Hilbert space and denote $P_M$ to be the orthogonal projection onto a subspace $M\subseteq \mathcal{H}$.
  Given $N$ subspace $M_1,\dots,M_N$ with intersection $M=M_1\cap\dots\cap M_N$,
  \begin{equation}
    \lim _{k\to \infty } \norm{ ( P_{M_{N}} \cdots P_{M_{1}})^{k}(
    x) -P_{M}( x) } =0,\quad \forall x\in\mathcal{H}.
  \end{equation}
\end{lemma}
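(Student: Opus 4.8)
The plan is to study the single contraction $T := P_{M_N}\cdots P_{M_1}$, for which $\norm{T}\le 1$ since each orthogonal projection is a contraction, and to show $T^k x \to P_M x$ for every $x$. Because the Hilbert space relevant to our application is finite dimensional ($\mathcal{H}=\mathbb{C}^{2^{2tn}}$), I would run the argument through the spectral theory of the power-bounded operator $T$; the statement for a general Hilbert space is exactly the cited theorem of Halperin.

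First I would record the elementary Pythagorean identity for a product of orthogonal projections. Setting $x_0=x$ and $x_i=P_{M_i}x_{i-1}$, orthogonality of each $P_{M_i}$ gives $\norm{x_{i-1}}^2=\norm{x_i}^2+\norm{x_{i-1}-x_i}^2$, so that
\begin{equation}
  \norm{x}^2-\norm{Tx}^2=\sum_{i=1}^{N}\norm{x_{i-1}-x_i}^2\ge 0.
\end{equation}
In particular, if $\norm{Tx}=\norm{x}$ then every defect $x_{i-1}-x_i$ vanishes, which forces $x\in M_i$ for all $i$, hence $x\in M$ and $Tx=x$. This identifies $\ker(I-T)=M$ and shows that the only eigenvalue of $T$ of modulus one is $\lambda=1$, with eigenspace exactly $M$. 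Applying the same reasoning to $T^{*}=P_{M_1}\cdots P_{M_N}$ (again a product of the same projections, with the same intersection) gives $\operatorname{Fix}(T^{*})=M$ as well.

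Next I would invoke finite dimensionality. Power-boundedness $\norm{T^k}\le 1$ rules out any nontrivial Jordan block at an eigenvalue of modulus one, so $\lambda=1$ is semisimple and its generalized eigenspace is just $M$. Hence $\mathcal{H}=M\oplus W$, where $W$ is the sum of the generalized eigenspaces for eigenvalues $\lambda\ne 1$, all of which satisfy $\abs{\lambda}<1$; consequently $T^k|_{W}\to 0$. Semisimplicity also gives $W=\operatorname{range}(I-T)$, and the standard relation $(\operatorname{range}(I-T))^{\perp}=\ker(I-T^{*})=\operatorname{Fix}(T^{*})=M$ shows $W=M^{\perp}$. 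Thus the projection onto $M$ along $W$ coincides with the \emph{orthogonal} projection $P_M$, and writing $x=P_Mx+w$ with $w\in M^{\perp}=W$ yields $T^kx=P_Mx+T^kw\to P_Mx$, as claimed.

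The hard part is entirely in the general infinite-dimensional statement, where the spectral shortcut is unavailable. There the natural route is to sum the telescoped identity above along the orbit, obtaining $\sum_{k\ge 0}(\norm{T^kx}^2-\norm{T^{k+1}x}^2)<\infty$ and hence asymptotic regularity $\norm{T^{k+1}x-T^kx}\to 0$, and then to upgrade this, together with the von Neumann mean-ergodic (Ces\`aro) convergence, to genuine strong convergence of the powers $T^k x$. This upgrade, which exploits the projection structure rather than mere power-boundedness, is the substantive content of Halperin's theorem; for the present paper the finite-dimensional argument above is all that is needed.
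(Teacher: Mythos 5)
Your proof is correct. Note that the paper never proves this lemma itself: it is quoted as Halperin's theorem \cite{halperin1962product}, with only a one-line remark afterwards that in the finite-dimensional setting one can even show uniform convergence $\lim_{k\to\infty}(P_{M_N}\cdots P_{M_1})^k = P_M$. Your spectral argument is therefore a genuinely different, self-contained route for the case the paper actually uses ($\mathcal{H}=\mathbb{C}^{2^{2tn}}$). The chain of steps is sound: the Pythagorean defect identity pins down $\ker(I-T)=M$ and excludes unimodular eigenvalues other than $\lambda=1$; power-boundedness $\norm{T^k}\le 1$ excludes nontrivial Jordan structure at $\lambda=1$; and the identification $\operatorname{range}(I-T)=\ker(I-T^{*})^{\perp}=M^{\perp}$ (using $T^{*}=P_{M_1}\cdots P_{M_N}$, a product of the same projections with the same intersection) shows that the spectral splitting $\mathcal{H}=M\oplus W$ is in fact orthogonal, so the limit is the \emph{orthogonal} projection $P_M$ rather than merely some oblique projection along $W$ --- this last point is where the projection structure, and not just contractivity, genuinely enters, and you handle it correctly. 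As a bonus, since $T^k\to 0$ on $W$ in operator norm, your argument delivers exactly the uniform convergence that the paper asserts without justification. What your route does not give is the lemma as literally stated (an arbitrary, possibly infinite-dimensional Hilbert space); for that one needs Halperin's argument, roughly asymptotic regularity plus a mean-ergodic upgrade, as you acknowledge. Since the paper only ever applies the lemma to the finite-dimensional moment superoperators $\mso{R_j}{t}$, this restriction costs nothing for the paper's purposes.
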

Remark that we are working in a finite Hilbert space, and in such a case uniform convergence $\lim _{k\to \infty } ( P_{M_{N}} \cdots P_{M_{1}})^{k} = P_M$ can be shown.

\begin{corollary}
  \label{cor:9rac}
  Let $\vec{R}$ be a sequence of excitation rotations defined in \Cref{def:r86r}, and $t\in\mathbb{N}_+$.
  Denote the projection $\mso{R_j}{t}$ by $P_{M_j}$, where $M_j$ is the subspace that $P_{M_j}$ projects onto.
  Define $M:=\bigcap_j M_j$.
  We have
  \begin{equation}
    \lim_{k\to\infty}\left(\prod_{j=1}^{\abs{\vec{R}}}\mso{R_j}{t}\right)^{k} = P_M
    \qq{and} \ket{\Psi^{\vec{R}}_{t,\infty}}=P_M \ket{\psi_0}^{\otimes 2t}.
  \end{equation}
\end{corollary}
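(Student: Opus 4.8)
The plan is to combine two ingredients already established in the excerpt: the fact that each $t$-th moment superoperator is an orthogonal projection (\Cref{lem:s14e}, item \ref{itm:egmj}), and the von Neumann--Halperin convergence of alternating projections (\Cref{lem:2btd}). The corollary is essentially a direct specialization of the latter, so the argument is short and the real content has been front-loaded into those two results.

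First, I would invoke \Cref{lem:s14e}, item \ref{itm:egmj}, to write each $\mso{R_j}{t}=P_{M_j}$ as the orthogonal projection onto a subspace $M_j\subseteq\mathbb{C}^{2^{2tn}}$. The single-period operator $\prod_{j=1}^{\abs{\vec{R}}}\mso{R_j}{t}$ is then literally a composition of $N:=\abs{\vec{R}}$ orthogonal projections, $P_{M_N}\cdots P_{M_1}$ up to the ordering convention for the product, which is immaterial here since the intersection $M=\bigcap_j M_j$ does not depend on the order in which the projections are composed.

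Second, I would apply \Cref{lem:2btd} to these $N$ subspaces, whose intersection is by definition $M=\bigcap_j M_j$. The lemma supplies pointwise convergence $(P_{M_N}\cdots P_{M_1})^k x\to P_M x$ for every $x$; because we work in the finite-dimensional space $\mathbb{C}^{2^{2tn}}$, the remark following \Cref{lem:2btd} upgrades this to uniform (operator-norm) convergence, giving the operator identity $\lim_{k\to\infty}\qty(\prod_{j}\mso{R_j}{t})^k=P_M$. Specializing the pointwise statement to the vector $x=\ket{\psi_0}^{\otimes 2t}$ and recalling the definition of the moment vector (\Cref{def:wzl3}) yields $\ket{\Psi^{\vec{R}}_{t,\infty}}=P_M\ket{\psi_0}^{\otimes 2t}$, as claimed.

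There is no substantive obstacle at this level: the work lies entirely in the two cited lemmas. The only points requiring care are bookkeeping ones --- confirming that the product ordering leaves the limit unchanged (it does, since $M$ is order-independent), and noting that the operator-level statement genuinely needs the finite-dimensional uniform-convergence strengthening rather than bare pointwise convergence. The real difficulty is deferred to the subsequent sections, where one must actually identify the intersection space $M$ through the relation $M^{\perp}=\sum_j M_j^{\perp}$; that combinatorial analysis, not the present convergence argument, is where the effort will be spent.
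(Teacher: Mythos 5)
Your proposal is correct and matches the paper's approach exactly: the paper treats this corollary as an immediate consequence of \Cref{lem:s14e} (\Cref{itm:egmj}) together with \Cref{lem:2btd} and the remark on uniform convergence in finite dimensions, which is precisely the combination you spell out. Your additional bookkeeping observations (order-independence of the intersection, and the need for the finite-dimensional strengthening to get the operator-level identity) are accurate and consistent with the paper's implicit argument.
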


We make two remarks regarding \Cref{cor:9rac}.

\begin{enumerate}
  \item

        The reason why the infinite case is easier is that, by \Cref{cor:9rac}, it suffices to figure out the intersection $M=\bigcap_j M_j$, rather than tracking how $\ket{\psi_0}^{\otimes 2t}$ evolves.

  \item

        In the subsequent text, regardless of the form of $\vec{R}$ and the order of moment $t$, we will denote the subspace that $\mso{R_{j}}{t}$ projects onto by $M_{j}$, and the intersection space $\bigcap _{j} M_{j}$ by $M$, as in \Cref{cor:9rac}.
        The reader should be cautious about which $\vec{R}$ and $t$ are used in context to define $M$ and $M_{j}$.

\end{enumerate}

While we may be able to characterize (albeit a bit complex) each $M_j$, since the matrix form of $\mso{R_j}{t}$ has been explicitly written out in \Cref{lem:s14e}, it is not obvious how to calculate their intersection at first sight.
On the other hand, it is straightforward to determine the spanning set of the orthogonal complement $M^{\perp}$ if one has determined the spanning set of each $M_j^{\perp}$ --- just take the union of these spanning sets.
The reason is explained in \Cref{lem:hpti} (\Cref{itm:52de}).
\Cref{lem:hpti} also includes other properties of the orthogonal complement
which will be used in later sections.
The proof of \Cref{lem:hpti} is elementary and is omitted.

\begin{lemma}
  \label{lem:hpti}
  Let $V_1,V_2,\dots,V_m$, and $V'$ be subspaces of finite dimensional vector space $V$.
  \begin{enumerate}
    \item
          $\orthc{(\orthc{V_1})}=V_1$.
    \item
          \label{itm:52de}
          $\orthc{(V_1\cap V_2 \cap\dots\cap V_m)}=\orthc{V_1}+\orthc{V_2}+\dots+\orthc{V_m}$.
    \item
          \label{itm:cbjx}
          $\orthc{(\orthc{(V_1\cap V')}\cap V')}\cap V'=V_1\cap V'$.
    \item
          \label{itm:58vf}
          $\orthc{((V_1\cap V_2 \cap\dots\cap V_m)\cap V')}\cap V'=\orthc{(V_1\cap
              V'))}\cap V'+\orthc{(V_2\cap V'))}\cap V'+\dots+\orthc{(V_m\cap V'))}\cap V'$.
  \end{enumerate}
  If, in addition, $[P_{V'},P_{V_{i}}] =0$ for all $i=1,2,\dots,m$, then
  \begin{enumerate}
    \setcounter{enumi}{4}
    \item
          \label{itm:qzeo} $(V_{1} \cap
            V')^{\perp} \cap V'=V_{1}^{\perp} \cap V'$.
    \item
          \label{itm:z8mn}
          $((V_{1} \cap V_{2} \cap \dots \cap V_{m}) \cap V')^{\perp} \cap V'=(V_{1} \cap V_{2} \cap \dots \cap V_{m})^{\perp} \cap V'$
  \end{enumerate}
\end{lemma}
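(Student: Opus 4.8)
The plan is to organize the six claims as successive refinements of two elementary facts about orthogonal complements in a finite-dimensional inner product space: the involution $\orthc{(\orthc{V})}=V$ and the De Morgan identity $\orthc{(\bigcap_i V_i)}=\sum_i \orthc{V_i}$. Claim (1) is the involution itself, which I would obtain from the dimension count $\dim V_1+\dim \orthc{V_1}=\dim V$ together with the trivial inclusion $V_1\subseteq \orthc{(\orthc{V_1})}$. For claim (2) I would first treat two subspaces using the dual identity $\orthc{(U+W)}=\orthc{U}\cap\orthc{W}$, which is immediate from the definition of orthogonality; applying it to $U=\orthc{V_1}$, $W=\orthc{V_2}$ and invoking claim (1) gives $\orthc{(\orthc{V_1}+\orthc{V_2})}=V_1\cap V_2$, and taking complements once more yields $\orthc{(V_1\cap V_2)}=\orthc{V_1}+\orthc{V_2}$. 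The general case then follows by induction on $m$.

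The next two claims are exactly claims (1) and (2) reinterpreted inside $V'$. The key observation is that for any subspace $W\subseteq V'$, the expression $\orthc{W}\cap V'$ is precisely the orthogonal complement of $W$ computed within the inner product space $V'$ equipped with the restricted inner product. Since $V_1\cap V'\subseteq V'$, claim (3) is just the involution applied to $W=V_1\cap V'$ inside $V'$; and since $(V_1\cap\dots\cap V_m)\cap V'=\bigcap_i (V_i\cap V')$ with each $V_i\cap V'\subseteq V'$, claim (4) is the De Morgan identity applied inside $V'$. No new idea is needed beyond recognizing this relative-complement interpretation.

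The substantive part, which I expect to be the only place requiring care, is the pair of claims (5) and (6) relying on $[P_{V'},P_{V_i}]=0$. For claim (5), the inclusion $\orthc{V_1}\cap V'\subseteq \orthc{(V_1\cap V')}\cap V'$ is free since $V_1\cap V'\subseteq V_1$. For the reverse inclusion I would take $v\in V'$ with $v\perp V_1\cap V'$ and arbitrary $w\in V_1$, and rewrite $\langle v,w\rangle=\langle P_{V'}v,w\rangle=\langle v,P_{V'}w\rangle$ using $P_{V'}v=v$ and the self-adjointness of $P_{V'}$. The crux is that $P_{V'}w\in V_1\cap V'$: it lies in $V'$ by construction, and it lies in $V_1$ because $P_{V_1}(P_{V'}w)=P_{V'}(P_{V_1}w)=P_{V'}w$, where the commutation hypothesis and $P_{V_1}w=w$ are used. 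Hence $\langle v,w\rangle=0$, so $v\in\orthc{V_1}$. Claim (6) follows by the identical argument with $V_1$ replaced by $W=\bigcap_i V_i$: for $w\in W$ one has $P_{V_i}(P_{V'}w)=P_{V'}w$ for every $i$, so $P_{V'}w\in W\cap V'$, and the same inner-product manipulation closes the argument. Note that this direct route only needs $P_{V'}$ to commute with each individual $P_{V_i}$, not with $P_W$, which is why the hypothesis as stated suffices.
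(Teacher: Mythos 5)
Your proof is correct. The paper itself omits the proof of this lemma, declaring it elementary, so there is no authorial argument to compare against; your write-up simply fills that gap. All six parts check out: (1)--(2) via the dimension count and the dual identity $\orthc{(U+W)}=\orthc{U}\cap\orthc{W}$, (3)--(4) via the observation that $\orthc{W}\cap V'$ is the orthogonal complement of $W\subseteq V'$ relative to the inner product space $V'$, and (5)--(6) via the inner-product manipulation $\langle v,w\rangle=\langle v,P_{V'}w\rangle$ together with the fact that commutation forces $P_{V'}w\in V_1\cap V'$ (respectively $P_{V'}w\in\bigl(\bigcap_i V_i\bigr)\cap V'$). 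Your closing remark is also apt: handling the intersection in (6) by applying the commutation hypothesis to each $P_{V_i}$ separately avoids any need to show that $P_{V'}$ commutes with the projection onto $\bigcap_i V_i$, which is exactly why the hypothesis as stated suffices.
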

\section{Reduce the space by symmetries}\label{app:s6u6}

In the last section, we have hinted at how we will calculate the \mth{$t$} moment
of cost function at $k=\infty$ for alternated dUCC \ansatzes{}:
\begin{enumerate}
  \item
        \label{itm:zvp9} we find out the spanning set of each $\orthc{M_j}$
        (recall that $P_{M_j}:=\mso{R_j}{t}$),
  \item
        \label{itm:upcm} take the union to
        get the spanning set of $\orthc{M}$ (recall that $M:=\bigcap_{j}
          M_j$),
  \item
        \label{itm:hk56} somehow calculate $P_M \ket{\psi_0}^{\otimes 2t}$, using the spanning set of $\orthc{M}$,
  \item
        \label{itm:ywdi} finally, take the inner product between $\ket{H_{\mathrm{el}}}^{\otimes t}$ and $\ket{\Psi^{\vec{R}}_{t,\infty}}=P_M \ket{\psi_0}^{\otimes 2t}$, which evaluates to $\E{C^t}$.
\end{enumerate}

Steps \ref{itm:upcm} and \ref{itm:ywdi} have been explained in the last section.
This section will be devoted to step \ref{itm:zvp9} and will sketch the idea behind step \ref{itm:hk56}.
We do so by restricting ourselves into invariant subspaces using symmetries of $\mso{R_j}{t}$.
The reduction of space is in sequence.
It is worth noting that while these symmetries should apply for any (qubit) excitation rotations and any $t\in\mathbb{N}_+$, we primarily focus on the cases of (qubit) single/double excitation rotations and $t=1,2$ since these are enough for proving \Cref{thm:8pgp}.

\subsection{\texorpdfstring{$Z_p^{\otimes 2t}$}{Zp2t}-symmetry}

The $Z_p^{\otimes 2t}$-symmetry of $\mso{R_j}{t}$ helps to reduce from the whole space $\mathbb{C}^{2^{2tn}}$ to $\mathcal{H}^{\mathrm{even}}_{t}$, the space spanned by states where each site has an even Hamming weight.

\begin{definition}[$\mathcal{S}^{\mathrm{even}}_{t},\mathcal{H}^{\mathrm{even}}_{t}$]
  Define $\mathcal{S}^{\mathrm{even}}_{t},\mathcal{H}^{\mathrm{even}}_{t}
    \subset\mathbb{C}^{2^{2tn}}$ as follows:
  \begin{equation}
    \mathcal{S}^{\mathrm{even}}_{t}:=\qty{ \ket{b_1 b_2
        \dots b_{2t}}\middle|\sum_{i=1}^{2t} b_i \equiv 0
      \pmod{2}, b_i \in
      \mathbb{F}_2 }^{\otimes n},\quad
    \mathcal{H}^{\mathrm{even}}_{t}:=
    \operatorname{span}\mathcal{S}^{\mathrm{even}}_{t}.
  \end{equation}
\end{definition}

Let $R(\theta)$ be a (qubit) excitation rotation, and $\vec{R}$ be a sequence of (qubit) excitation rotations (\Cref{def:r86r}).

\begin{lemma}[$Z_p^{\otimes 2t}$-symmetry]
  For any $p\in [n]$, $\qty[\mso{R}{t} ,Z_{p}^{\otimes 2t}] =0$.
\end{lemma}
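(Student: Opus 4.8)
The plan is to reduce the commutation statement to an elementary computation on the generator of $R$, and then to kill any resulting sign by averaging over the (sign-symmetric) parameter distribution. First I would invoke \Cref{itm:zclt} of \Cref{lem:s14e}, which gives $\mso{R}{t}=\E{R(\theta)^{\otimes 2t}}$ since $R(\theta)$ is real; here the expectation is over $\theta$ uniform on $[0,2\ConstPi)$, a distribution invariant under $\theta\mapsto-\theta$. Because $Z_p^{\otimes 2t}$ applies $Z$ to the $p$-th qubit of each of the $2t$ replicas, and the replicas are independent tensor factors, conjugation distributes over the tensor product:
\begin{equation}
  Z_p^{\otimes 2t}\,R(\theta)^{\otimes 2t}\,Z_p^{\otimes 2t}=\left(Z_p\,R(\theta)\,Z_p\right)^{\otimes 2t}.
\end{equation}

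Next I would evaluate $Z_p\,R(\theta)\,Z_p=\exp\!\left(\theta\,Z_p(\hat{\tau}-\hat{\tau}^{\dagger})Z_p\right)$, using $Z_p^2=\openone$. A direct $2\times2$ computation gives $Z\,Q^{\dagger}\,Z=-Q^{\dagger}$ and $Z\,Q\,Z=-Q$, so each ladder factor $Q_p^{\dagger}$ or $Q_p$ occurring in $\hat{\tau}$ picks up a sign $-1$ under conjugation by $Z_p$, whereas the diagonal Jordan--Wigner $Z$-string is left untouched because all $Z$ operators commute. Since in a single or double excitation each index appears at most once in the ladder part, I obtain $Z_p(\hat{\tau}-\hat{\tau}^{\dagger})Z_p=\varepsilon_p(\hat{\tau}-\hat{\tau}^{\dagger})$ with $\varepsilon_p=-1$ when $p$ is one of the excitation indices $p_1,\dots,p_{2r}$ and $\varepsilon_p=+1$ otherwise, hence $Z_p\,R(\theta)\,Z_p=R(\varepsilon_p\theta)$.

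Combining the two displays and taking the expectation then finishes the argument:
\begin{equation}
  Z_p^{\otimes 2t}\,\mso{R}{t}\,Z_p^{\otimes 2t}=\E{R(\varepsilon_p\theta)^{\otimes 2t}}=\E{R(\theta)^{\otimes 2t}}=\mso{R}{t},
\end{equation}
where the middle equality is trivial when $\varepsilon_p=+1$ and follows from the $\theta\mapsto-\theta$ symmetry of the distribution when $\varepsilon_p=-1$. Since $(Z_p^{\otimes 2t})^2=\openone$, this is precisely $\qty[\mso{R}{t},Z_p^{\otimes 2t}]=0$.

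I expect this lemma to be short, with the only delicate point being the sign bookkeeping in the generator computation: one must verify that conjugation by $Z_p$ acts as $-1$ on each ladder operator carrying index $p$ but as $+1$ on the $Z$-string, so that the net effect is a single global scalar $\varepsilon_p\in\{\pm1\}$ rather than a more complicated transformation. The same reasoning extends to higher excitations, where a repeated index merely yields $\varepsilon_p=(-1)^{(\text{multiplicity of }p)}$, leaving the conclusion intact; thus no structural obstruction arises and the averaging step does all the remaining work.
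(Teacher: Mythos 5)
Your proof is correct and takes essentially the same approach as the paper: both arguments reduce to the observation that $Z_p$ either commutes or anti-commutes with the generator $\hat{\tau}-\hat{\tau}^{\dagger}$ (your $\varepsilon_p=\pm 1$), and in the anti-commuting case conjugation sends $\theta\mapsto-\theta$, which the averaging over the sign-symmetric parameter distribution absorbs. The sign bookkeeping you spell out ($ZQZ=-Q$, $ZQ^{\dagger}Z=-Q^{\dagger}$, the $Z$-string left invariant) is precisely the paper's parenthetical justification that $Z$ commutes with $I,Z$ and anti-commutes with $Q,Q^{\dagger}$.
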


\begin{proof}
  Notice that $R(\theta)=\exp(\theta(\hat{\tau}-\hat{\tau}^{\dagger}))$ for some (qubit) excitation $\hat{\tau}$, and $Z_p$ either commutes or anti-commutes with $\hat{\tau}-\hat{\tau}^{\dagger}$ (since $Z$ commutes with $I,Z$ and anti-commutes with $Q,Q^\dagger$).
  \begin{itemize}
    \item
          If $Z_p$ commutes with $\hat{\tau}-\hat{\tau}^{\dagger}$, then $Z_p$ commutes
          with $R(\theta)$, and thus $Z_p^{\otimes 2t}$ commutes with $\E{R( \theta
              )^{\otimes 2t}}$.
    \item
          If $Z_p$ anti-commutes with $\hat{\tau}-\hat{\tau}^{\dagger}$, then $
            Z_p^{\otimes 2t} \E{R( \theta )^{\otimes 2t}}
            Z_p^{\otimes 2t} = \E{R( -\theta )^{\otimes 2t}} = \E{R( \theta )^{\otimes 2t}} $.
  \end{itemize}
\end{proof}

\begin{corollary}[Invariance of $\mathcal{H}^{\mathrm{even}}_{t}$]\label{cor:5iu8}
  $\mathcal{H}^{\mathrm{even}}_{t}$ is an invariant subspace of $\mso{R}{t}$.
  Moreover, $\ket{\Psi^{\vec{R}}_{t,k}},\ket{\Psi^{\vec{R}}_{t,\infty}}\in\mathcal{H}^{\mathrm{even}}_{t}$.
\end{corollary}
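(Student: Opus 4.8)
The plan is to identify $\mathcal{H}^{\mathrm{even}}_{t}$ as the simultaneous $+1$ eigenspace of the commuting family $\{Z_p^{\otimes 2t}\}_{p\in[n]}$ and then read off the invariance directly from the preceding $Z_p^{\otimes 2t}$-symmetry lemma. Concretely, under site decomposition each single-site basis state $\ket{b_1\dots b_{2t}}$ is an eigenvector of $Z^{\otimes 2t}$ with eigenvalue $(-1)^{\sum_i b_i}$, so on a single site the $+1$ eigenvectors are exactly the even-Hamming-weight states. Tensoring over the $n$ sites shows that the joint $+1$ eigenspace of all the $Z_p^{\otimes 2t}$ is precisely $\operatorname{span}\mathcal{S}^{\mathrm{even}}_{t}=\mathcal{H}^{\mathrm{even}}_{t}$.

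For the invariance claim, I would use that $\qty[\mso{R}{t},Z_p^{\otimes 2t}]=0$ for every $p$ by the $Z_p^{\otimes 2t}$-symmetry lemma. A standard argument then gives that $\mso{R}{t}$ maps each joint eigenspace of the family $\{Z_p^{\otimes 2t}\}_{p\in[n]}$ to itself: for $v$ in the joint $+1$ eigenspace, $Z_p^{\otimes 2t}\mso{R}{t}v=\mso{R}{t}Z_p^{\otimes 2t}v=\mso{R}{t}v$ for all $p$, so $\mso{R}{t}v$ lies in the same eigenspace. This yields that $\mathcal{H}^{\mathrm{even}}_{t}$ is invariant under $\mso{R}{t}$, and in particular under every factor $\mso{R_j}{t}$ occurring in $\vec{R}$.

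For the membership of the moment vectors, I would first verify the reference state. By \Cref{prp:7iwk}, under site decomposition $\ket{\psi_0}^{\otimes 2t}$ is a tensor product whose sites are each either $\ket{1\dots1}$ (Hamming weight $2t$) or $\ket{0\dots0}$ (Hamming weight $0$), both even, so $\ket{\psi_0}^{\otimes 2t}\in\mathcal{H}^{\mathrm{even}}_{t}$. Applying the invariant operator $\prod_{j}\mso{R_j}{t}$ repeatedly keeps the state inside $\mathcal{H}^{\mathrm{even}}_{t}$, so $\ket{\Psi^{\vec{R}}_{t,k}}\in\mathcal{H}^{\mathrm{even}}_{t}$ for all finite $k$. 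Since $\mathcal{H}^{\mathrm{even}}_{t}$ is finite-dimensional, hence closed, and $\ket{\Psi^{\vec{R}}_{t,\infty}}=\lim_{k\to\infty}\ket{\Psi^{\vec{R}}_{t,k}}$ by \Cref{cor:9rac}, the limit also lies in $\mathcal{H}^{\mathrm{even}}_{t}$.

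There is no genuine obstacle here: the corollary is an immediate consequence of the commutation relation, and the only points requiring brief care are the identification of $\mathcal{H}^{\mathrm{even}}_{t}$ with a joint eigenspace and the check that the $2t$-fold replicated Hartree-Fock reference has even Hamming weight on every site, both of which follow directly from the definitions and \Cref{prp:7iwk}.
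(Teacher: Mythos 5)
Your proof is correct and follows essentially the same route as the paper: identifying $\mathcal{H}^{\mathrm{even}}_{t}$ as the common $+1$ eigenspace of the $Z_p^{\otimes 2t}$, invoking the commutation lemma for invariance, using \Cref{prp:7iwk} for the reference state, and appealing to closedness for the $k\to\infty$ limit. The only difference is that you spell out the standard eigenspace-preservation argument that the paper leaves implicit, which is fine.
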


\begin{proof}
  Notice that $\mathcal{H}^{\mathrm{even}}_{t}$ is the common +1 eigenspace of $Z_{1}^{\otimes 2t},Z_{2}^{\otimes 2t},\dots,Z_{n}^{\otimes 2t}$.
  Since $Z_{1}^{\otimes 2t},Z_{2}^{\otimes 2t},\dots,Z_{n}^{\otimes 2t}$ and $\mso{R}{t}$ commute mutually, $\mathcal{H}^{\mathrm{even}}_{t}$ is an invariant subspace of $\mso{R}{t}$.
  By \Cref{prp:7iwk}, $\ket{\psi_0}^{\otimes 2t}\in\mathcal{H}^{\mathrm{even}}_{t}$.
  Hence, $\ket{\Psi^{\vec{R}}_{t,k}}=\left(\prod_{j}\mso{R_j}{t}\right)^k\ket{\psi_0}^{\otimes 2t}\in\mathcal{H}^{\mathrm{even}}_{t}$.
  Since $\mathcal{H}^{\mathrm{even}}_{t}$ is closed, $\ket{\Psi^{\vec{R}}_{t,\infty}}=\lim_{k\to\infty}\ket{\Psi^{\vec{R}}_{t,k}}\in\mathcal{H}^{\mathrm{even}}_{t}$.
\end{proof}

The invariance of $\mathcal{H}^{\mathrm{even}}_{t}$ is enough to calculate the first moment of the cost function.
The reader can refer to \Cref{app:jbcg} for more details.
To calculate the second moments, however, we still need to find the spanning set of $\orthc{M_j}$, or equivalently, diagonalize $\tilde{R}_j=\mso{R_j}{2}$.
Since $\mathcal{H}^{\mathrm{even}}_{t}$ is an invariant subspace that contains $\ket{\Psi^{\vec{R}}_{t,\infty}}$, we can diagonalize within $\mathcal{H}^{\mathrm{even}}_{t}$ to save some work.
But before that, we first introduce special notations for sites with an even Hamming weight at $t=2$, so that the notation for states in $\mathcal{H}^{\mathrm{even}}_{2}$ can be simpler.
Recall that when $t=2$, each site has a length of $4$, indicating that the dimension of the Hilbert space for each site is $2^4$.
Out of the 16 computational basis states of each site, there are 8 with an even Hamming weight, as follows.

\begin{definition}[8 special basis states of site at $t=2$]
  Define 8 product state in $\mathbb{C}^{2^4}$ as follows:
  \begin{equation}
    \ket{I_{ab}} =\ket{a,I(a),b,I(b)} ,\quad \ket{X_{ab}} =\ket{a,X(a),b,X(b)}
    ,\quad a,b\in \mathbb{F}_{2}.
  \end{equation}
  Namely, they are
  \begin{align}
    \ket{I_{00}} & =\ket{0000}, & \ket{I_{11}} &
    =\ket{1111}, & \ket{I_{01}} & =\ket{0011}, & \ket{I_{10}} & =\ket{1100},   \\
    \ket{X_{00}} & =\ket{0101}, & \ket{X_{11}} & =\ket{1010}, & \ket{X_{01}} &
    =\ket{0110}, & \ket{X_{10}} & =\ket{1001}.
  \end{align}
\end{definition}

Remark that $\mathcal{S}^{\mathrm{even}}_{2}=\left\{\ket{I_{ab}},\ket{X_{ab}}\middle|a,b\in\mathbb{F}_2\right\}^{\otimes n}$.
For example, the following state is a paired state at $n=8$.
Different ``pair''s are marked in different colors.
\begin{equation}
  \label{eq:tnmr}
  \textcolor{I1}{\ket{I_{00}}}
  \textcolor{X1}{\ket{X_{11}}}
  \textcolor{I2}{\ket{I_{01}}}
  \textcolor{X2}{\ket{X_{10}}}
  \textcolor{X1}{\ket{X_{00}}}
  \textcolor{I2}{\ket{I_{10}}}
  \textcolor{I1}{\ket{I_{11}}}
  \textcolor{X2}{\ket{X_{01}}}.
\end{equation}

We can give a more succinct expression to the square of the vectorization of $\hat{a}_{p}^{\dagger }\hat{a}_{q}+h.c.
$ and $\hat{a}_{p}^{\dagger }\hat{a}_{q}^{\dagger}\hat{a}_{r}\hat{a}_{s}+h.c.$
compared to that in \Cref{prp:7iwk},
using the notations $\ket{I_{ab}},\ket{X_{ab}}$ as follows.

\begin{proposition}[Succinct expression of $\ket{O}^{\otimes 2}$]\label{prp:q0g8}
  After Jordan-Wigner transformation, when $O=\hat{a}_{p}^{\dagger }\hat{a}_{q} +h.c.
  $ ($p >q$),
  \begin{multline}
    \ket{O}^{\otimes 2} =\left(\sum _{c,d\in \mathbb{F}_{2}}\ket{X_{cd}}_{q}
    \ket{X_{\overline{c}\overline{d}}}_{p}\right) \otimes \\
    \underset{a\in [ 1,q) \cup ( p,n]}{\bigotimes }\qty(\sum _{c,d\in \mathbb{F}_{2}}
    \ket{I_{cd}}_{a}) \otimes \underset{b\in ( p,q)}{\bigotimes }
    \left(\sum _{c,d\in \mathbb{F}_{2}}( -1)^{c+d}\ket{I_{cd}}_{b}\right).
  \end{multline}
  And when $O=\hat{a}_{p}^{\dagger }\hat{a}_{q}^{\dagger }\hat{a}_{r}\hat{a}_{s} +h.c.
  $ ($p >q >r >s$),
  \begin{multline}
    \ket{O}^{\otimes 2} =\qty(\sum _{c,d\in \mathbb{F}_{2}}\ket{X_{cd}}_{s}
    \ket{X_{cd}}_{r}\ket{X_{\overline{c}\overline{d}}}_{q}\ket{X_{
        \overline{c}\overline{d}}}_{p}) \otimes \\
    \underset{a\in [ 1,s) \cup ( r,q) \cup ( p,n]}{\bigotimes }\qty(
    \sum _{c,d\in \mathbb{F}_{2}}\ket{I_{cd}}_{a}) \otimes
    \underset{b\in ( s,r) \cup ( q.p)}{\bigotimes }\qty(\sum _{c,d\in
      \mathbb{F}_{2}}( -1)^{c+d}\ket{I_{cd}}_{b}).
  \end{multline}
\end{proposition}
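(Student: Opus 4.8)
The plan is to obtain \Cref{prp:q0g8} directly from \Cref{prp:7iwk} by specializing to $t=2$ and re-expressing each single-site factor in the basis $\{\ket{I_{ab}},\ket{X_{ab}}\}$. \Cref{prp:7iwk} already presents $\ket{O}^{\otimes 2}$ as a tensor product over sites, with three kinds of local factors: an \emph{active} factor on the excitation sites, an \emph{identity spectator} factor $\left(\ket{00}+\ket{11}\right)^{\otimes 2}$ on the sites outside the $Z$-string, and a \emph{signed spectator} factor $\left(\ket{00}-\ket{11}\right)^{\otimes 2}$ on the sites inside it. Thus the entire proof reduces to expanding the tensor square of each of these three local factors and recognizing the result among the eight even-weight site basis states.

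First I would set up the dictionary between the two copies of $O$ and the two indices of $\ket{I_{cd}},\ket{X_{cd}}$. Under site decomposition \cf{def:0dpb}, a length-$4$ site is the concatenation of the $2$-qubit vectorized block of the first copy of $O$ with that of the second copy; comparing this with $\ket{I_{ab}}=\ket{a,I(a),b,I(b)}$ and $\ket{X_{ab}}=\ket{a,X(a),b,X(b)}$ shows that the index $c$ records the first-copy choice and $d$ the second. Using the single-qubit vectorizations $\ket{I}=\ket{00}+\ket{11}$, $\ket{Z}=\ket{00}-\ket{11}$, $\ket{Q}=\ket{01}$, $\ket{Q^\dagger}=\ket{10}$ from the proof of \Cref{prp:7iwk}, the even-weight block $\{\ket{00},\ket{11}\}=\{\ket{c,I(c)}\}$ is exactly the $I$-type, and the odd-weight block $\{\ket{01},\ket{10}\}=\{\ket{c,X(c)}\}$ is exactly the $X$-type.

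With this dictionary the three factors follow by direct expansion. A spectator site gives $\left(\ket{00}+\ket{11}\right)^{\otimes 2}=\sum_{c,d}\ket{I_{cd}}$; a $Z$-string site contributes a sign $(-1)^{c}$ from each copy, so $\left(\ket{00}-\ket{11}\right)^{\otimes 2}=\sum_{c,d}(-1)^{c+d}\ket{I_{cd}}$. For the active factor in the single-excitation case I would expand the four cross terms of $\left(\ket{01}_q\ket{10}_p+\ket{10}_q\ket{01}_p\right)^{\otimes 2}$: within each copy the block on site $q$ is the opposite odd-weight state to the block on site $p$, which forces complementary indices and yields $\sum_{c,d}\ket{X_{cd}}_q\ket{X_{\overline{c}\overline{d}}}_p$. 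The two-body case is identical, with $s,r$ inheriting the copy indices $c,d$ and $q,p$ their complements, giving $\sum_{c,d}\ket{X_{cd}}_s\ket{X_{cd}}_r\ket{X_{\overline{c}\overline{d}}}_q\ket{X_{\overline{c}\overline{d}}}_p$.

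The computation is routine, and I expect the only delicate point to be the index bookkeeping in the active factor: one must track how the choice ($\ket{01}$ versus $\ket{10}$) made independently in each of the two copies maps to the pair $(c,d)$, and confirm that the anti-correlation between the $p$- and $q$-type sites produces the complementary index $\overline{c}\,\overline{d}$ rather than $cd$. There is no conceptual obstacle beyond getting this sign-and-index matching right for both the one- and two-body observables.
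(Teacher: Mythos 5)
Your proposal is correct and matches the paper's intent exactly: the paper states \Cref{prp:q0g8} without a separate proof, treating it as an immediate re-expression of \Cref{prp:7iwk} under site decomposition, and your expansion of the three local factors (spectator sites $(\ket{00}+\ket{11})^{\otimes 2}=\sum_{c,d}\ket{I_{cd}}$, $Z$-string sites $(\ket{00}-\ket{11})^{\otimes 2}=\sum_{c,d}(-1)^{c+d}\ket{I_{cd}}$, and the anti-correlated active sites giving $\ket{X_{cd}}$ paired with $\ket{X_{\overline{c}\overline{d}}}$) supplies precisely the omitted routine computation. The index bookkeeping you flag as the delicate point is handled correctly in your argument, so nothing is missing.
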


Now that we have defined the notation $\ket{I_{ab}},\ket{X_{ab}}$, we return to the diagonalization of $\tilde{R}_j$ inside $\mathcal{H}^{\mathrm{even}}_{t}$.
The following lemma gives the diagonalization of $\tilde{A}^{\mathrm{qubit}}$ and the partial diagonalization of $\tilde{B}^{\mathrm{qubit}}$ inside $\mathcal{H}^{\mathrm{even}}_{t}$.
These will be used in the next section to derive the spanning set of $\orthc{M_j}$ for $R_{j} \in \left\{A,A^{\mathrm{qubit}} ,B,B^{\mathrm{qubit}}\right\}$.

\begin{lemma}[Diagonalization of $\tilde{A}^{\mathrm{qubit}},\tilde{B}^{\text{qubit}}$ within $\mathcal{H}^{\mathrm{even}}_{t}$]
  \label{lem:duqi} ~
  \begin{enumerate}
    \item
          \label{itm:nag5}
          The subspace that
          $\tilde{A}_{pq}^{\mathrm{qubit}}|_{\mathcal{H}^{\mathrm{even}}_2}$ projects
          onto is spanned by $S_{1} \cup S_{2} \cup S_{3}$, where
          \begin{align}
            S_{1} & :=\left\{\ket{\Phi }\middle| \ket{\Phi } \in \mathcal{S}^{\mathrm{even}}_2 , \Phi _{p} =\Phi _{q}\right\}, \label{eq:wosj}                                                                                      \\
            S_{2} & :=\left\{\ket{\Phi } +( -1)^{\Phi _{p} \odot \Phi _{q}} S_{pq}\ket{\Phi}\middle| \ket{\Phi } \in \mathcal{S}^{\mathrm{even}}_2 , \Phi _{p} \neq \Phi _{q} , \Phi_p \neq\bar{\Phi }_{q}\right\}, \label{eq:9917} \\
            S_{3} & :=\left\{\ket{\Phi } +S_{pq}\ket{\Phi } +F_{pq}^{st}\ket{\Phi } +S_{pq} F_{pq}^{st}\ket{\Phi } \middle|
            \begin{matrix}
              \ket{\Phi } \in \mathcal{S}^{\mathrm{even}}_2 , \ket{\Phi _{p}} =\ket{I_{00}} , \\
              \ket{\Phi _{q}} =\ket{I_{11}}, 1\le s< t\le 4
            \end{matrix}
            \right\}.
            \label{eq:u9lo}
          \end{align}

    \item
          \label{itm:0tgt}
          The space spanned by $S:=\left\{\ket{\Phi } \in \mathcal{S}_{2}^{\mathrm{even}}\middle| \Phi _{p} \neq \Phi _{q} ,\Phi _{p} \neq \overline{\Phi }_{q} ,\Phi _{p} =\overline{\Phi }_{r} ,\Phi _{q} =\overline{\Phi} _{s}\right\}$ is invariant under $\tilde{B}_{pqrs}^{\mathrm{qubit}}$, and when restricted to such subspace, $\tilde{B}_{pqrs}^{\mathrm{qubit}}$ is an orthogonal projection onto the space spanned by $S':=\left\{\ket{\Phi } +( -1)^{\Phi _{p} \odot \Phi _{q}} S_{ps} S_{qr}\ket{\Phi }\middle| \ket{\Phi } \in S\right\}$.
  \end{enumerate}
\end{lemma}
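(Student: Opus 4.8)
The plan is to build on the two facts established in \Cref{lem:s14e}: that $\tilde{A}^{\mathrm{qubit}}_{pq}=\mso{A^{\mathrm{qubit}}_{pq}}{2}=\E{(A^{\mathrm{qubit}}_{pq}(\theta))^{\otimes 4}}$ is an orthogonal projection (\Cref{itm:egmj,itm:zclt}), and that its matrix elements are given explicitly by the double-factorial formula (\Cref{itm:at88}). Since $A^{\mathrm{qubit}}_{pq}(\theta)$ is supported on qubits $p,q$, its moment acts as the identity on every site other than $p,q$, so I would first reduce to the two sites $p,q$ with the remaining sites as spectators. Because $\{A^{\mathrm{qubit}}_{pq}(\theta)\}_\theta$ is a one-parameter rotation group that fixes the pair-patterns $00,11$ and rotates the plane $\operatorname{span}\{01,10\}$ in each of the four copies, the average $\tilde{A}^{\mathrm{qubit}}_{pq}$ is exactly the orthogonal projection onto the group-invariant subspace, namely the kernel of the generator $\sum_{j=1}^{4}G_j$, with $G_j$ the rotation generator on the $j$-th copy; by \Cref{cor:5iu8} this is then intersected with $\mathcal{H}^{\mathrm{even}}_2$.

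For \Cref{itm:nag5} I would block-decompose. Call a copy (column) $j$ \emph{active} if $\Phi_{p,j}\neq\Phi_{q,j}$ and \emph{inactive} otherwise; the rotation fixes inactive columns and acts only on the active ones, so $\tilde{A}^{\mathrm{qubit}}_{pq}$ preserves the set of active positions and the inactive values, splitting $\mathcal{H}^{\mathrm{even}}_2$ into blocks on which it is the $SO(2)$-average over the active columns. A short parity computation, using that flipping $01\leftrightarrow 10$ in one column flips both site parities, shows the even constraint forces the number $k$ of active columns to be even, so only $k\in\{0,2,4\}$ occur; these correspond precisely to $S_1$ ($\Phi_p=\Phi_q$, $k=0$), $S_2$ ($\Phi_p\neq\Phi_q,\ \Phi_p\neq\bar{\Phi}_q$, $k=2$) and $S_3$ ($\Phi_p=\bar{\Phi}_q$, $k=4$). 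The $k=0$ blocks are fixed pointwise, giving $S_1$; each even $k=2$ block has a one-dimensional invariant spanned by $\ket{\Phi}+(-1)^{\Phi_p\odot\Phi_q}S_{pq}\ket{\Phi}$, giving $S_2$; and on the single $k=4$ block (on sites $p,q$) the invariants intersected with $\mathcal{H}^{\mathrm{even}}_2$ are three-dimensional, where the six generators of $S_3$ indexed by pairs $(s,t)$ collapse in complementary pairs $(s,t),(s',t')$ with $\{s',t'\}=[4]\setminus\{s,t\}$ to exactly three independent vectors. I would then verify each listed vector is a $+1$-eigenvector directly from \Cref{itm:at88}, and close the argument by matching the dimension count, $|S_1|$ plus the number of $k=2$ blocks plus three, against $\operatorname{rank}\tilde{A}^{\mathrm{qubit}}_{pq}|_{\mathcal{H}^{\mathrm{even}}_2}=\operatorname{tr}\tilde{A}^{\mathrm{qubit}}_{pq}|_{\mathcal{H}^{\mathrm{even}}_2}$, proving $S_1\cup S_2\cup S_3$ spans the whole image.

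For \Cref{itm:0tgt} I would argue analogously, observing that the defining relations of $S$ are engineered to kill the nontrivial block structure. Writing each copy of sites $(p,q,r,s)$ as $(\Phi_{p,j},\Phi_{q,j},\bar{\Phi}_{p,j},\bar{\Phi}_{q,j})$ via $\Phi_r=\bar{\Phi}_p,\ \Phi_s=\bar{\Phi}_q$, a column lies in $\{\vec{b}_0,\vec{b}_1\}=\{0011,1100\}$ exactly when $\Phi_{p,j}=\Phi_{q,j}$, and $\tilde{B}^{\mathrm{qubit}}_{pqrs}$ rotates only these active columns while fixing the rest. Since the rotation keeps each active column inside $\{\vec{b}_0,\vec{b}_1\}$, the relations $\Phi_r=\bar{\Phi}_p,\ \Phi_s=\bar{\Phi}_q$ and the conditions $\Phi_p\neq\Phi_q,\ \Phi_p\neq\bar{\Phi}_q$ all persist column by column, which is precisely the invariance of $\operatorname{span}S$. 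On $\operatorname{span}S$ these conditions exclude $k=0$ and $k=4$, and the even constraint forces $k$ even, so every block has $k=2$; as in part one each contributes a one-dimensional invariant, now symmetrised by $S_{ps}S_{qr}$ (which swaps $\vec{b}_0\leftrightarrow\vec{b}_1$ in each active column), giving $\ket{\Phi}+(-1)^{\Phi_p\odot\Phi_q}S_{ps}S_{qr}\ket{\Phi}$. Being the restriction of the projection $\tilde{B}^{\mathrm{qubit}}_{pqrs}$ to the invariant subspace $\operatorname{span}S$, it is itself an orthogonal projection, with image $\operatorname{span}S'$.

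The main obstacle is twofold. In \Cref{itm:nag5} the delicate point is completeness rather than producing invariants: one must show the three families not only lie in the image but span it, which needs the trace/rank bookkeeping above together with the non-obvious collapse of the six $S_3$ generators onto a three-dimensional space. In \Cref{itm:0tgt} the crux is the invariance of $\operatorname{span}S$ under the highly non-local $\tilde{B}^{\mathrm{qubit}}_{pqrs}$; this rests entirely on the complementarity relations $\Phi_r=\bar{\Phi}_p,\ \Phi_s=\bar{\Phi}_q$ being stable under the rotation, and checking directly from \Cref{itm:at88} that no component outside $S$ is produced is where the computation is heaviest.
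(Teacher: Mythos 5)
Your proposal is correct and takes essentially the same route as the paper's proof: the listed vectors are verified to be fixed points using the matrix-element formula of \Cref{lem:s14e} (\Cref{itm:at88}), completeness in part 1 is settled by matching $\operatorname{rank}=\operatorname{tr}$ of the restricted projection against $\sum_i \dim(\operatorname{span} S_i)$, and part 2 reduces to the explicit two-term action $\tilde{B}^{\mathrm{qubit}}_{pqrs}\ket{\Phi}=\frac{1}{2}\ket{\Phi}+\frac{(-1)^{\Phi_p\odot\Phi_q}}{2}S_{ps}S_{qr}\ket{\Phi}$, which gives both invariance of $\operatorname{span}S$ and the projection onto $\operatorname{span}S'$. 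Your added $SO(2)$-block decomposition (active columns, parity forcing $k\in\{0,2,4\}$, one-dimensional invariants per even $k=2$ block, and the collapse of the six $S_3$ generators to three) is sound and nicely explains where $S_1,S_2,S_3$ come from, but it is scaffolding on top of the same two core steps rather than a different argument.
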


\begin{proof}
  \Cref{itm:nag5}. ---
  By \Cref{lem:s14e} (\Cref{itm:at88}), $\hat{A}_{pq}^{\text{qubit}}$ stabilizes
  every vector in $S_{1} \cup S_{2} \cup S_{3}$, and
  \begin{equation}
    \begin{split}
      \tr(\hat{A}_{pq}^{\text{qubit}} |_{\mathcal{H}_{2}^{\mathrm{even}}}) & =\sum
      _{\ket{\Phi } \in
      \mathcal{S}_{2}^{\mathrm{even}}}\ev{\hat{A}_{pq}^{\text{qubit}}}{\Phi}                                                                                        \\
                                                                           & =\left(8\times 1+48\times \frac{1}{2} +8\times \frac{3}{8}\right) \times 8^{n-2} =\sum
      _{i=1}^{3}\dim\qty(\operatorname{span}
      S_{i}) .
    \end{split}
  \end{equation}
  Finally, since $S_{1} ,S_{2} ,S_{3}$ are mutually orthogonal, $\hat{A}_{pq}^{\text{qubit}} |_{\mathcal{H}_{2}^{\mathrm{even}}}$ is a projection onto $\operatorname{span}( S_{1} \cup S_{2} \cup S_{3})$.

  \Cref{itm:0tgt}. ---
  By \Cref{lem:s14e} (\Cref{itm:at88}), for any $\ket{\Phi}\in S$,
  \begin{equation}
    \tilde{B}^{\text{qubit}}_{pqrs}\ket{\Phi }
    =\frac{1}{2}\ket{\Phi } + \frac{( -1)^{\Phi _{p} \odot \Phi _{q}}}{2}
    S_{ps} S_{qr}\ket{\Phi }.
  \end{equation}
  Thus, $S$ is invariant under $\tilde{B}^{\text{qubit}}_{pqrs}$, and $\tilde{B}_{pqrs}^{\text{qubit}}|_{\operatorname{span} S}$ is indeed an orthogonal projection onto the space spanned by $S':=\left\{\ket{\Phi } +( -1)^{\Phi _{p} \odot \Phi _{q}} S_{ps} S_{qr}\ket{\Phi }\middle| \ket{\Phi } \in S\right\}$.
\end{proof}
\subsection{Particle number symmetry}

The particle number symmetry of $\tilde{R}_j=\mso{R_j}{2}$ helps to reduce the space from $\mathcal{H}^{\mathrm{even}}_{2}$ to $\mathcal{H}^{\mathrm{paired}}_{2}$, where certain constraints regarding the number of $\ket{I_{ab}},\ket{X_{ab}}$ must be satisfied.
Notice that we exclusively focus on the $t=2$ case for particle number symmetry, but such symmetry should hold for general $t$.

\begin{definition}[Paired state and related notions]
  Let $\ket{\Phi } \in \mathcal{S}^{\mathrm{even}}_{2}$ and $V\subseteq [n]$.
  \begin{enumerate}
    \item
          For any $a,b\in \mathbb{F}_{2}$, define
          \begin{equation}
            n_{ab}^{I}\left(\ket{\Phi };V\right) := \#\qty{i\in
              V\middle| \ket{\Phi _{i}} =\ket{I_{ab}}}, \quad n_{ab}^{X}\left(\ket{\Phi }; V\right)
            = \#\left\{i\in V\middle| \ket{\Phi _{i}} =\ket{X_{ab}}\right\}.
          \end{equation}
          If $V$ is omitted, it is assumed that $V=[n]$, i.e., $n^I_{ab}(\ket{\Phi}):=n^I_{ab}(\ket{\Phi};[n]),n^X_{ab}(\ket{\Phi}):=n^X_{ab}(\ket{\Phi};[n])$.
    \item
          Call $\ket{\Phi }$ a paired state, if
          \begin{gather}
            n_{01}^{I}\left(\ket{\Phi }\right) -n_{10}^{I}\left(\ket{\Phi }\right) =
            n_{00}^{X}\left(\ket{\Phi }\right) -n_{11}^{X}\left(\ket{\Phi }\right) =
            n_{01}^{X}\left(\ket{\Phi }\right) -n_{10}^{X}\left(\ket{\Phi }\right) = 0,\\
            n_{00}^{I}\left(\ket{\Phi }\right) -n_{11}^{I}\left(\ket{\Phi }\right) = n-2\NumE.
          \end{gather}
    \item
          Denote the set of all paired states by $\mathcal{S}^{\mathrm{paired}} _2$, and
          the Hilbert space spanned by these states by
          $\mathcal{H}^{\mathrm{paired}}_2$.
    \item
          Define the configuration of a paired state $\ket{\Phi }$ by
          \begin{equation}
            \operatorname{conf}(\ket{\Phi})=(n^I_{01}(\ket{\Phi}),
            n^X_{00}(\ket{\Phi}),n^X_{01}(\ket{\Phi})).
          \end{equation}
    \item
          Denote the set of all paired states with configuration $(a,b,c)$ by
          $\mathcal{S}^{\mathrm{paired}} _{2,(a,b,c)}$, and the Hilbert space spanned by
          these states by $\mathcal{H}^{\mathrm{paired}}_{2,(a,b,c)}$.
  \end{enumerate}
\end{definition}

Remark that the 8 numbers $n^I_{ab}(\ket{\Phi}),n^X_{ab}(\ket{\Phi})$ are uniquely determined by the configuration of a paired state $\ket{\Phi}$, and $n^I_{01}(\ket{\Phi})+ n^X_{00}(\ket{\Phi})+n^X_{01}(\ket{\Phi}) \le \min\left\{\NumE,n-\NumE\right\}$.

\begin{lemma}[Invariance of $\mathcal{H}^{\mathrm{paired}}_{2}$]\label{lem:x2vk}
  Let $\vec{R}$ be a sequence of (qubit) excitation rotations, with $R_j\in \left\{A,A^{\mathrm{qubit}} ,B,B^{\mathrm{qubit}}\right\}$.
  $\mathcal{H}^{\mathrm{paired}}_{2}$ is an invariant subspace of each
  $\tilde{R}_j$.
  Moreover, $\ket{\Psi^{\vec{R}}_{2,k}},\ket{\Psi^{\vec{R}}_{2,\infty}}\in\mathcal{H}^{\mathrm{paired}}_{2}$.
\end{lemma}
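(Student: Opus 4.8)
The plan is to mirror the proof of the $Z_p^{\otimes 2t}$-invariance in \Cref{cor:5iu8}: I will produce a family of mutually commuting, simultaneously diagonal operators that commute with every $\tilde{R}_j$ and whose common eigenspace, intersected with $\mathcal{H}^{\mathrm{even}}_2$, is exactly $\mathcal{H}^{\mathrm{paired}}_2$. The relevant symmetry here is conservation of particle number in each of the four replicas. Since $R_j$ is real, $\tilde{R}_j=\mso{R_j}{2}=\E{R_j^{\otimes 4}}$ (\Cref{lem:s14e}, \Cref{itm:zclt}); under site decomposition the four copies correspond to the four qubits of each site, so for $c\in\{1,2,3,4\}$ let $\hat{N}^{(c)}$ denote the total occupation-number operator acting on the $c$-th qubit of every site.

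First I would check that every $R_j(\theta)$ with $R_j\in\{A,A^{\mathrm{qubit}},B,B^{\mathrm{qubit}}\}$ conserves particle number, i.e.\ $[R_j(\theta),\hat{N}]=0$ for the single-copy number operator $\hat{N}$. This is immediate from the structure of the generator $\hat{\tau}_j-\hat{\tau}_j^\dagger$: each term is a balanced product of creation and annihilation (or $Q^\dagger$ and $Q$) operators, possibly dressed by a diagonal $Z$-string that is itself number-conserving, so it creates and annihilates equally many excitations. Tensoring over the four replicas gives $[R_j(\theta)^{\otimes 4},\hat{N}^{(c)}]=0$ for each $c$, and integrating over $\theta$ preserves this, yielding $[\tilde{R}_j,\hat{N}^{(c)}]=0$.

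The heart of the argument is the combinatorial identification of $\mathcal{H}^{\mathrm{paired}}_2$ as a common eigenspace. Restricting to $\mathcal{H}^{\mathrm{even}}_2$, every basis state is a tensor product over sites of the eight states $\ket{I_{ab}},\ket{X_{ab}}$, and each such site state contributes a fixed bit pattern to the four replica occupations; hence $\hat{N}^{(c)}$ acts on a basis state $\ket{\Phi}$ as an explicit linear form in the eight counts $n^I_{ab}(\ket{\Phi}),n^X_{ab}(\ket{\Phi})$. I would then show that the four eigenvalue conditions $\hat{N}^{(c)}=\NumE$ and the four defining equalities of a paired state cut out the same subspace: suitable differences among the $\hat{N}^{(c)}$ yield $n^X_{00}=n^X_{11}$, $n^X_{01}=n^X_{10}$, and thence $n^I_{01}=n^I_{10}$, while combining $\hat{N}^{(1)}=\NumE$ with the identity $\sum_{ab}(n^I_{ab}+n^X_{ab})=n$ gives $n^I_{00}-n^I_{11}=n-2\NumE$; conversely the four paired equalities force all $\hat{N}^{(c)}$ equal and, again via the total-count identity, equal to $\NumE$. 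Thus $\mathcal{H}^{\mathrm{paired}}_2=\mathcal{H}^{\mathrm{even}}_2\cap\bigcap_{c=1}^{4}\{\hat{N}^{(c)}=\NumE\}$, and since the $\hat{N}^{(c)}$ and the $Z_p^{\otimes 4}$ are simultaneously diagonal and all commute with $\tilde{R}_j$, this space is invariant under every $\tilde{R}_j$.

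For the moment vectors I would argue as in \Cref{cor:5iu8}. By \Cref{prp:7iwk} the site decomposition of $\ket{\psi_0}^{\otimes 4}$ places its first $\NumE$ sites in $\ket{I_{11}}$ and its remaining $n-\NumE$ sites in $\ket{I_{00}}$, so $n^I_{11}=\NumE$, $n^I_{00}=n-\NumE$, and all other counts vanish, which verifies the four paired equalities; hence $\ket{\psi_0}^{\otimes 4}\in\mathcal{S}^{\mathrm{paired}}_2\subseteq\mathcal{H}^{\mathrm{paired}}_2$. Invariance then gives $\ket{\Psi^{\vec{R}}_{2,k}}=(\prod_j\tilde{R}_j)^k\ket{\psi_0}^{\otimes 4}\in\mathcal{H}^{\mathrm{paired}}_2$, and closedness of the finite-dimensional subspace passes the property to the limit $\ket{\Psi^{\vec{R}}_{2,\infty}}$. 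The main obstacle is the middle step: one must carefully tabulate the replica-bit patterns of the eight even-weight site states and verify the equivalence of the particle-number conditions and the paired equalities in both directions. Everything else is a direct transcription of the $Z_p^{\otimes 2t}$-symmetry proof.
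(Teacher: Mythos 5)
Your proof is correct, but it takes a genuinely different route from the paper's. The paper argues gate by gate and basis state by basis state: for $\tilde{A}^{\mathrm{qubit}},\tilde{B}^{\mathrm{qubit}}$ it enumerates the $8^2$ (resp.\ $8^4$) configurations of the sites the gate touches to show that any $\ket{\Phi'}$ with nonzero overlap with $\tilde{R}\ket{\Phi}$ satisfies $n^S_{ab}(\ket{\Phi'})-n^S_{ab}(\ket{\Phi})=n^S_{\bar{a}\bar{b}}(\ket{\Phi'})-n^S_{\bar{a}\bar{b}}(\ket{\Phi})$, and it then transports this identity to the fermionic rotations $A,B$ via the $X$-flip conversion rule of \Cref{lem:s14e}. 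You instead turn the phrase ``particle number symmetry'' into an operator-theoretic argument: every generator is balanced in $Q^{\dagger}$ versus $Q$ and its $Z$-string is diagonal, so the number operator commutes with each $R_j(\theta)$, hence the replica-wise operators $\hat{N}^{(c)}$ ($c=1,\dots,4$) commute with $\tilde{R}_j=\E{R_j^{\otimes 4}}$, and invariance follows once one identifies $\mathcal{H}^{\mathrm{paired}}_{2}=\mathcal{H}^{\mathrm{even}}_{2}\cap\bigcap_{c=1}^{4}\{\hat{N}^{(c)}=\NumE\}$ as a joint eigenspace of diagonal operators in the commutant of $\tilde{R}_j$. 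That identification, the only step needing care, checks out: reading off $\hat{N}^{(1)}=n^I_{11}+n^I_{10}+n^X_{11}+n^X_{10}$ and its analogues, the sum and difference of $\hat{N}^{(1)}-\hat{N}^{(2)}$ and $\hat{N}^{(3)}-\hat{N}^{(4)}$ give $n^X_{00}=n^X_{11}$ and $n^X_{01}=n^X_{10}$, then $\hat{N}^{(1)}-\hat{N}^{(3)}$ gives $n^I_{01}=n^I_{10}$, and $\hat{N}^{(1)}=\NumE$ together with $\sum_{ab}(n^I_{ab}+n^X_{ab})=n$ gives $n^I_{00}-n^I_{11}=n-2\NumE$; the converse is the same algebra run backwards. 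Your route buys three things: no enumeration, a uniform treatment of qubit and fermionic rotations (the conversion rule becomes unnecessary because $Z$-strings conserve particle number), and an argument that generalizes verbatim to arbitrary $t$ and arbitrary excitation rank --- relevant because the paper restricts its particle-number treatment to $t=2$ while remarking that the symmetry ``should hold for general $t$''. The paper's route, in exchange, needs nothing beyond the explicit matrix elements already computed in \Cref{lem:s14e}. The closing step (the Hartree--Fock state is paired by \Cref{prp:7iwk}, then invariance and closedness of the subspace handle $\ket{\Psi^{\vec{R}}_{2,k}}$ and its limit) is identical in both proofs.
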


\begin{proof}
  In order to prove invariance of $\mathcal{H}^{\mathrm{paired}}_{2}$, it suffices to show that $\ket{\Psi}:=\tilde{R}\ket{\Phi}\in\mathcal{H}^{\mathrm{paired}}_2$, for all $\ket{\Phi}\in\mathcal{S}^{\mathrm{paired}}_2$ and $R\in\left\{A,A^{\mathrm{qubit}} ,B,B^{\mathrm{qubit}}\right\}$.
  \begin{itemize}
    \item
          We first prove the case when $R\in\qty{A^{\mathrm{qubit}},
              B^{\mathrm{qubit}}}$.
          Since $\tilde{A}_{pq}^{\text{qubit}}$ (and $\tilde{B}_{pqrs}^{\text{qubit}}$)
          acts non-trivially only on 2 (and 4) sites, one can enumerate $8^2$ (and $8^4$)
          states of these sites to verify that if $\ket{\Phi'}\in
            \mathcal{S}^{\mathrm{even}}_2$ has non-zero overlap with $\ket{\Psi}$, then for
          all $a,b\in\mathbb{F}_2$ and $S\in \left\{I,X\right\}$,
          \begin{equation}
            \label{eq:xiyt}
            n^S_{ab}(\ket{\Phi'}) - n^S_{ab}(\ket{\Phi}) =
            n^S_{\bar{a}\bar{b}}(\ket{\Phi'}) - n^S_{\bar{a}\bar{b}}(\ket{\Phi}).
          \end{equation}
          Thus, $\ket{\Psi}\in\mathcal{H}^{\mathrm{paired}}_2$.
          In fact, \Cref{eq:xiyt} is obvious when $R=\tilde{A}^{\mathrm{qubit}}$ by the diagonalization given in \Cref{lem:duqi} (\Cref{itm:nag5}).
    \item
          In order to prove the case when $R\in\left\{A,B\right\}$, we utilize the conversion rule
          in \Cref{lem:s14e} (\Cref{itm:nub3}).
          Notice that for any $a,b\in \mathbb{F}_{2}$ and $S\in \{I,X\}$, there exists
          $a',b'\in \mathbb{F}_{2}$ and $S'\in \{I,X\}$, such that for any $\ket{\Phi '}
            \in \mathcal{S}_{2}^{\mathrm{even}}$,
          \begin{equation}
            n_{ab}^{S}\qty(\ket{\Phi
              '}) =n_{a'b'}^{S'}\qty(\qty(\prod _{i=1}^{n}
              X_{i}^{\vec{z}})\ket{\Phi '}) ,\quad n_{\overline{a}\overline{b}}^{S}\left(\ket{\Phi '}\right) =n_{\overline{a} '\overline{b} '}^{S'}\left(\left(\prod _{i=1}^{n} X_{i}^{\vec{z}}\right)\ket{\Phi '}\right) .
          \end{equation}
          Here $\prod _{i=1}^{n} X_{i}^{\vec{z}}$ is defined as in \Cref{lem:s14e} (\Cref{itm:nub3}) with respect to $\ket{\Phi}$.
          Thus, if $\ket{\Phi '} \in \mathcal{S}_{2}^{\mathrm{paired}}$ has a non-zero
          overlap with $\ket{\Psi }$, then
          \begin{align}
            n_{ab}^{S}\left(\ket{\Phi '}\right)
            -n_{ab}^{S}\left(\ket{\Phi }\right)
             & =n_{a'b'}^{S'}\qty(\qty(\prod
              _{i=1}^{n}
            X_{i}^{\vec{z}})\ket{\Phi '}) -n_{a'b'}^{S'}\left(\left(\prod _{i=1}^{n} X_{i}^{\vec{z}}\right)\ket{\Phi }\right)                                                                                                                           \\
             & =n_{\overline{a} '\overline{b} '}^{S'}\left(\left(\prod _{i=1}^{n} X_{i}^{\vec{z}}\right)\ket{\Phi '}\right) -n_{\overline{a} '\overline{b} '}^{S'}\left(\left(\prod _{i=1}^{n} X_{i}^{\vec{z}}\right)\ket{\Phi }\right) \label{eq:hwqa} \\
             & =n_{\overline{a}\overline{b}}^{S}\left(\ket{\Phi '}\right) -n_{\overline{a}\overline{b}}^{S}\left(\ket{\Phi }\right).
          \end{align}
          \Cref{eq:hwqa} follows from \Cref{lem:s14e} (\Cref{itm:nub3}) and \Cref{eq:xiyt}.
  \end{itemize}
  Finally, by \Cref{prp:7iwk}, $\ket{\psi_0}^{\otimes 4}\in\mathcal{H}^{\mathrm{paired}}_{2}$.
  Hence, $\ket{\Psi^{\vec{R}}_{2,k}}=\left(\prod_{j}\tilde{R}_j\right)^k\ket{\psi_0}^{\otimes 4}\in\mathcal{H}^{\mathrm{paired}}_{2}$.
  Since $\mathcal{H}^{\mathrm{paired}}_{2}$ is closed, $\ket{\Psi^{\vec{R}}_{2,\infty}}=\lim_{k\to\infty}\ket{\Psi^{\vec{R}}_{2,k}} \in\mathcal{H}^{\mathrm{paired}}_{2}$.
\end{proof}

\Cref{lem:x2vk} indicates that we can restrict ourselves to
$\mathcal{H}_{2}^{\mathrm{paired}}$ --- instead of the spanning set of each
$M_{j}^{\perp }$, it suffices to find the spanning set of the orthogonal
complement of each $M_{j}$ inside $\mathcal{H}_{2}^{\mathrm{paired}}$:
$\left(M_{j}\cap \mathcal{H}_{2}^{\mathrm{paired}}\right)^{\perp }\cap
  \mathcal{H}_{2}^{\mathrm{paired}}$.
As one may expect, the union of these spanning sets spans $\left(M\cap \mathcal{H}_{2}^{\mathrm{paired}}\right)^{\perp }\cap \mathcal{H}_{2}^{\mathrm{paired}}$, according to \Cref{lem:hpti} (\Cref{itm:58vf}).
The following lemma characterizes the spanning set of $\left(M_{j}\cap \mathcal{H}_{2}^{\mathrm{paired}}\right)^{\perp }\cap \mathcal{H}_{2}^{\mathrm{paired}}$ for $R_{j} \in \left\{A,A^{\mathrm{qubit}} ,B,B^{\mathrm{qubit}}\right\}$.

\begin{lemma}[Spanning set of $\qty(M_{j}\cap \mathcal{H}_{2}^{
        \mathrm{paired} })^{\perp }\cap \mathcal{H}_{2}^{\mathrm{paired}}$ for
    $A,A^{\mathrm{qubit}} ,B,B^{\mathrm{qubit}}$] \label{lem:vio4}
  Denote by $M_{1} ,M_{2} ,M_{3} ,M_{4}$ the space that $\tilde{A}_{pq} ,\tilde{A}_{pq}^{\mathrm{qubit}} ,\tilde{B}_{pqrs} ,\tilde{B}_{pqrs}^{\mathrm{qubit}}$ projects onto.
  Suppose $p >q$ for (qubit) single excitations and $p >q >r >s$ for (qubit) double excitations.
  \begin{enumerate}
    \item
          \label{itm:xojb}
          $\left(M_{1}\cap \mathcal{H}_{2}^{\mathrm{paired}}\right)^{\perp }\cap
            \mathcal{H}_{2}^{\mathrm{paired}}$ is spanned by all the following
          vectors: for any $\ket{\Phi } \in \mathcal{S}_{2}^{\mathrm{paired}}$,
          let $\vec{z} :=\bigoplus _{a\in ( q,p)} \Phi _{a}$,
          \begin{itemize}
            \item
                  $\ket{\Phi } -( -1)^{( \Phi _{p} \oplus \vec{z}) \odot (
                        \Phi _{q} \oplus \vec{z})}
                    S_{pq}\ket{\Phi }$.
            \item
                  $\ket{\Phi } -( -1)^{z_{1} +z_{2}}
                    F_{pq}^{12}\ket{\Phi } -( -1)^{z_{1} +z_{3}} F_{pq}^{13}\ket{\Phi } -( -1)^{z_{2} +z_{3}} F_{pq}^{23}\ket{\Phi }$ if $\ket{\Phi _{p}} =\ket{I_{00}} ,\ket{\Phi _{q}} =\ket{I_{11}}$.
          \end{itemize}
    \item
          \label{itm:rtjl}
          $\left(M_{2}\cap \mathcal{H}_{2}^{\mathrm{paired}}\right)^{\perp }\cap
            \mathcal{H}_{2}^{\mathrm{paired}}$ is spanned by all the following
          vectors: for any $\ket{\Phi } \in \mathcal{S}_{2}^{\mathrm{paired}}$,
          \begin{itemize}
            \item
                  $\ket{\Phi } -( -1)^{\Phi _{p} \odot \Phi _{q}}
                    S_{pq}\ket{\Phi }$.
            \item
                  $\ket{\Phi } -F_{pq}^{12}\ket{\Phi } -F_{pq}^{13}\ket{\Phi }
                    -F_{pq}^{23}\ket{\Phi }$ if $\ket{\Phi _{p}} =\ket{I_{00}} ,
                    \ket{\Phi _{q}} =\ket{I_{11}}$.
          \end{itemize}
    \item
          \label{itm:646w}
          $\left(M_{3}\cap \mathcal{H}_{2}^{\mathrm{paired}}\right)^{\perp }\cap
            \mathcal{H}_{2}^{\mathrm{paired}}$ contains $\ket{\Phi } -( -1)^{( \Phi _{p}
                \oplus \vec{z}) \odot ( \Phi _{q} \oplus \vec{z})}
            S_{ps} S_{qr}\ket{\Phi }$, where $\ket{\Phi } \in \mathcal{S}_{2}^{\mathrm{paired}}$, $\Phi _{p} \neq \Phi _{q} ,\Phi _{p} \neq \overline{\Phi }_{q} ,\Phi _{p} =\overline{\Phi }_{r} ,\Phi _{q} =\overline{\Phi }_{s}$, and $\vec{z} :=\bigoplus _{a\in ( s,r) \cup ( q,p)} \Phi _{a}$.
    \item
          \label{itm:qfwn}
          $\left(M_{4}\cap \mathcal{H}_{2}^{\mathrm{paired}}\right)^{\perp }\cap
            \mathcal{H}_{2}^{\mathrm{paired}}$ contains $\ket{\Phi } -( -1)^{\Phi _{p}
                \odot \Phi _{q}}
            S_{ps} S_{qr}\ket{\Phi }$, where $\ket{\Phi } \in \mathcal{S}_{2}^{\mathrm{paired}}$, and $\Phi _{p} \neq \Phi _{q} ,\Phi _{p} \neq \overline{\Phi }_{q} ,\Phi _{p} =\overline{\Phi }_{r} ,\Phi _{q} =\overline{\Phi }_{s}$.
  \end{enumerate}
\end{lemma}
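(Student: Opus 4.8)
The plan is to handle the four operators in two groups, but in every case to begin with the same reduction. By \Cref{lem:x2vk} the projector $P_{\mathcal{H}_2^{\mathrm{paired}}}$ commutes with each $\tilde{R}_j=P_{M_j}$, so \Cref{lem:hpti} (\Cref{itm:qzeo}) gives
\begin{equation}
\qty(M_j\cap\mathcal{H}_2^{\mathrm{paired}})^{\perp}\cap\mathcal{H}_2^{\mathrm{paired}}=M_j^{\perp}\cap\mathcal{H}_2^{\mathrm{paired}}.
\end{equation}
Since the two projectors commute, $(\openone-\tilde{R}_j)P_{\mathcal{H}_2^{\mathrm{paired}}}$ is the orthogonal projection onto this space, so it is spanned by the vectors $(\openone-\tilde{R}_j)\ket{\Phi}$ as $\ket{\Phi}$ ranges over $\mathcal{S}_2^{\mathrm{paired}}$. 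Thus for the two ``spanning'' claims (\Cref{itm:xojb,itm:rtjl}) it suffices to evaluate $(\openone-\tilde{R}_j)\ket{\Phi}$ on each paired basis state and match it to the listed vectors, while for the two ``contains'' claims (\Cref{itm:646w,itm:qfwn}) it suffices to check the listed vectors lie in $M_j^{\perp}\cap\mathcal{H}_2^{\mathrm{paired}}$. The qubit operators are treated via the diagonalizations of \Cref{lem:duqi}; the normal operators are then pulled back from their qubit counterparts using the conversion rule \Cref{lem:s14e} (\Cref{itm:nub3}).

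For $A_{pq}^{\mathrm{qubit}}$ (\Cref{itm:rtjl}) I would split on the site pair $p,q$ into $\Phi_p=\Phi_q$, $\Phi_p\notin\qty{\Phi_q,\overline{\Phi}_q}$, and $\Phi_p=\overline{\Phi}_q$. In the first case $\ket{\Phi}\in\operatorname{span}S_1\subseteq M_2$, so $(\openone-\tilde{A}_{pq}^{\mathrm{qubit}})\ket{\Phi}=0$; in the second, \Cref{lem:duqi} (\Cref{itm:nag5}) gives $\tilde{A}_{pq}^{\mathrm{qubit}}\ket{\Phi}=\tfrac12\qty(\ket{\Phi}+(-1)^{\Phi_p\odot\Phi_q}S_{pq}\ket{\Phi})$, so $(\openone-\tilde{A}_{pq}^{\mathrm{qubit}})\ket{\Phi}$ is half the first listed vector. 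The delicate case is $\Phi_p=\overline{\Phi}_q$, where on the eight even states of $p,q$ the projector has rank three (the trace $8\times\tfrac38$ in \Cref{lem:duqi}), namely $\operatorname{span}S_3$, all $S_{pq}$-symmetric. I would decompose $\ket{\Phi}$ into its $S_{pq}$-odd and $S_{pq}$-even parts: the odd part lies automatically in $M_2^{\perp}$ and is half the first listed vector, while the even complement of $\operatorname{span}S_3$ is one-dimensional and is spanned by the symmetric part of the second listed vector $\ket{\Phi}-F_{pq}^{12}\ket{\Phi}-F_{pq}^{13}\ket{\Phi}-F_{pq}^{23}\ket{\Phi}$, whose orthogonality to the three generators $\ket{\Phi}+S_{pq}\ket{\Phi}+F^{st}_{pq}\ket{\Phi}+S_{pq}F^{st}_{pq}\ket{\Phi}$ is a short check. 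Paired-ness is clear because $S_{pq}$ permutes sites and each $F^{st}_{pq}$ carries $\ket{\Phi_p}=\ket{I_{00}},\ket{\Phi_q}=\ket{I_{11}}$ to another balanced pair; and a short count shows that for a fixed configuration on the remaining $n-2$ sites the paired constraint still admits all four balanced pairs, so this even complement is genuinely one-dimensional and the two families indeed span.

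For the remaining operators I would invoke \Cref{lem:s14e} (\Cref{itm:nub3}), writing $\tilde{A}_{pq}\ket{\Phi}=W\tilde{A}^{\mathrm{qubit}}_{pq}W\ket{\Phi}$ with $W=\prod_i X_i^{\vec{z}}$ and $\vec{z}=\bigoplus_{a\in(q,p)}\Phi_a$ (and analogously $\tilde{B}_{pqrs}$ from $\tilde{B}^{\mathrm{qubit}}_{pqrs}$ with $\vec{z}=\bigoplus_{a\in(s,r)\cup(q,p)}\Phi_a$). The points to exploit are that the index set is disjoint from $\qty{p,q,r,s}$, so $S_{pq},S_{ps}S_{qr}$ and the flips $F^{st}_{pq}$ leave $\vec{z}$ fixed, and that the uniform flip $W$ commutes with site swaps and with $F^{st}_{pq}$. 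For $B^{\mathrm{qubit}}$ (\Cref{itm:qfwn}) and $B$ (\Cref{itm:646w}), where only containment is needed, I would use \Cref{lem:duqi} (\Cref{itm:0tgt}) to get $\tilde{B}^{\mathrm{qubit}}_{pqrs}\ket{\Phi'}=\tfrac12\qty(\ket{\Phi'}+(-1)^{\Phi'_p\odot\Phi'_q}S_{ps}S_{qr}\ket{\Phi'})$ on the invariant set $S$, observe that the conditions $\Phi_p\neq\Phi_q,\ \Phi_p\neq\overline{\Phi}_q,\ \Phi_p=\overline{\Phi}_r,\ \Phi_q=\overline{\Phi}_s$ are preserved by the $\vec{z}$-shift, and use the parity identity $\overline{u}\odot\overline{v}\equiv u\odot v\pmod 2$ (valid because $\vec{z}$ and every site string have even weight) to conclude that the antisymmetric combination $\ket{\Phi}-(-1)^{(\Phi_p\oplus\vec{z})\odot(\Phi_q\oplus\vec{z})}S_{ps}S_{qr}\ket{\Phi}$ is annihilated by $\tilde{B}_{pqrs}$; it is paired because $S_{ps}S_{qr}$ only permutes sites. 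The $\vec{z}$-dependent signs $(-1)^{z_s+z_t}$ in \Cref{itm:xojb} are precisely the images under $W$ of the plain signs of \Cref{itm:rtjl}.

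The main obstacle is the $\Phi_p=\overline{\Phi}_q$ block for the single-excitation spanning claims: there the moment superoperator is the genuinely nontrivial rank-three projection $\operatorname{span}S_3$ rather than a simple swap average, and one must verify both that the second listed vector (with its $(-1)^{z_s+z_t}$ twists in the normal case) is orthogonal to all of $S_3$ and that, once the global particle-number constraint is imposed, the symmetric complement collapses to a single dimension so that the listed families span rather than merely inject. Carrying the $\vec{z}$-dependent signs through the state-dependent conjugation $W$ is the other place requiring care, but it becomes mechanical once the disjointness of the $Z$-string support from $\qty{p,q,r,s}$ and the commutation of $W$ with swaps and flips are established.
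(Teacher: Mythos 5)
Your proposal is correct, and its logical skeleton genuinely differs from the paper's. The paper proves the spanning claims (\Cref{itm:xojb,itm:rtjl}) by double inclusion: it first checks that every listed vector is orthogonal to the spanning set $S_{1}\cup S_{2}\cup S_{3}$ of \Cref{lem:duqi}, and then takes an arbitrary $v$ in $\mathcal{H}_{2}^{\mathrm{paired}}$ orthogonal to the listed vectors, expands $v=\sum_{\Phi}c_{\Phi}\ket{\Phi}$, reads off the coefficient constraints, and exhibits $v$ explicitly as a combination of $S_{1},S_{2},S_{3}$ vectors; the identity $\left(M_{j}\cap\mathcal{H}_{2}^{\mathrm{paired}}\right)^{\perp}\cap\mathcal{H}_{2}^{\mathrm{paired}}=M_{j}^{\perp}\cap\mathcal{H}_{2}^{\mathrm{paired}}$ from \Cref{lem:hpti} (\Cref{itm:qzeo}) is invoked only for the containment claims (\Cref{itm:646w,itm:qfwn}). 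You instead apply that reduction uniformly to all four items --- legitimately, since invariance of $\mathcal{H}_{2}^{\mathrm{paired}}$ under the self-adjoint projections $\tilde{R}_{j}$ (\Cref{lem:x2vk}) does give $[\tilde{R}_{j},P_{\mathcal{H}_{2}^{\mathrm{paired}}}]=0$ --- and then observe that the space in question is the range of the commuting product $(\openone-\tilde{R}_{j})P_{\mathcal{H}_{2}^{\mathrm{paired}}}$, hence spanned by the images $(\openone-\tilde{R}_{j})\ket{\Phi}$ of paired basis states. This trades the paper's coefficient-matching step for a block-by-block computation of a projector kernel, which is cleaner and makes the rank bookkeeping transparent (four $S_{pq}$-antisymmetric vectors plus the one-dimensional symmetric complement of $\operatorname{span}S_{3}$ on each $\Phi_{p}=\overline{\Phi}_{q}$ block); the computational core --- \Cref{lem:duqi}, the conversion rule of \Cref{lem:s14e} (\Cref{itm:nub3}), the even-weight parity identity $\overline{u}\odot\overline{v}\equiv u\odot v\pmod{2}$, and disjointness of the $Z$-string support from $\{p,q,r,s\}$ so that $\vec{z}$ is constant on each block --- is shared with the paper. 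Two points you flag as routine should be written out in a full version: first, that all eight balanced pairs at sites $p,q$ remain paired once one of them is, so your blocks really do decompose $\mathcal{H}_{2}^{\mathrm{paired}}$; second, that the pullback under $W=\prod_{i}X_{i}^{\vec{z}}$ of the \Cref{itm:rtjl} flip vector is anchored at the state with $\Phi_{p}=\vec{z}$ rather than $\Phi_{p}=I_{00}$, and it coincides with the paper's $\vec{z}$-twisted vector only up to an overall sign and modulo the $S_{pq}$-antisymmetric vectors --- the spans agree, which is all the lemma needs, but that equality of spans (not of individual vectors) is what the short case check over the eight possible $\vec{z}$ actually delivers.
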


\begin{proof}
  We first prove \Cref{itm:rtjl,itm:qfwn} using the diagonalization of $\tilde{A}_{pq}^{\mathrm{qubit}} ,\tilde{B}_{pqrs}^{\mathrm{qubit}}$ in $\mathcal{H}_{2}^{\mathrm{even}}$ (\Cref{lem:duqi}), then prove \Cref{itm:xojb,itm:646w} using the qubit to non-qubit conversion rule in \Cref{lem:s14e} (\Cref{itm:nub3}).

  \Cref{itm:rtjl}. ---
  Denote the set of specified vectors by $S$.
  Obviously, $S\subseteq \mathcal{H}_{2}^{\mathrm{paired}}$.
  We need to prove (1) $S\subseteq \left(M_{2}\cap \mathcal{H}_{2}^{\mathrm{paired}}\right)^{\perp }$, (2) $S^{\perp }\cap \mathcal{H}_{2}^{\mathrm{paired}} \subseteq M_{2}\cap \mathcal{H}_{2}^{\mathrm{paired}}$ (since that would imply $\operatorname{span} S\supseteq \left(M_{2}\cap \mathcal{H}_{2}^{\mathrm{paired}}\right)^{\perp }\cap \mathcal{H}_{2}^{\mathrm{paired}}$).
  \begin{enumerate}[(1)]
    \item
          $S\subseteq \left(M_{2}\cap \mathcal{H}_{2}^{\mathrm{paired}}\right)^{\perp }$: We
          show that every vector in $S$ is orthogonal to $S_{1} \cup S_{2} \cup S_{3}$
          defined in \Cref{lem:duqi} (\Cref{itm:nag5}), hence $S\subseteq \qty(M_{2}\cap
            \mathcal{H}_{2}^{\mathrm{even}})^{\perp } \subseteq \qty(M_{2}\cap
            \mathcal{H}_{2}^{\mathrm{paired}})^{\perp }$.
          First, consider the vector $v_{1} :=\ket{\Phi } -( -1)^{\Phi _{p} \odot \Phi _{q}} S_{pq}\ket{\Phi } \in S$.
          \begin{itemize}
            \item
                  If $\Phi _{p} =\Phi _{q}$, then $v_{1} =0$ and $v_{1}$ is
                  orthogonal to $S_{1} \cup S_{2} \cup S_{3}$.
            \item
                  If $\Phi _{p} =\overline{\Phi }_{q}$, then $v_{1}$ is orthogonal
                  to $S_{1} \cup S_{2}$. $v_{1}$ is also orthogonal to $S_{3}$
                  since $v_{1} =\ket{\Phi } -S_{pq}\ket{\Phi }$ while for vectors
                  in $S_{3}$ the overlaps with $\ket{\Phi }$ and $S_{pq}\ket{\Phi }$
                  are the same.
            \item
                  Otherwise, $v_{1}$ is orthogonal to $S_{1} \cup S_{3}$. $v_{1}$
                  is also orthogonal to $S_{2}$ since for vectors in $S_{2}$ the
                  overlaps with $\ket{\Phi }$ and $S_{pq}\ket{\Phi }$ differ by
                  $( -1)^{\Phi _{p} \odot \Phi _{q}}$.
          \end{itemize}

          Next, consider the vector $v_{2} :=\ket{\Phi } -F_{pq}^{12}\ket{\Phi } -F_{pq}^{13}\ket{\Phi } -F_{pq}^{23}\ket{\Phi }$ with $\ket{\Phi _{p}} =\ket{I_{00}} ,\ket{\Phi _{q}} =\ket{I_{11}}$.
          $v_{2}$ is orthogonal to $S_{1} \cup S_{2}$.
          $v_{2}$ is also orthogonal to $S_{3}$ since for vectors in $S_{3}$ the overlap
          with $\ket{\Phi }$ equals one of the overlaps with $F_{pq}^{12}\ket{\Phi }
            ,F_{pq}^{13}\ket{\Phi } ,F_{pq}^{23}\ket{\Phi }$, while the rest two are both
          zero.
    \item
          $S^{\perp }\cap \mathcal{H}_{2}^{\mathrm{paired}} \subseteq M_{2}\cap
            \mathcal{H}_{2}^{\mathrm{paired}}$: Suppose $v\in S^{\perp }\cap
            \mathcal{H}_{2}^{\mathrm{paired}}$, we prove $v\in M_{2}\cap
            \mathcal{H}_{2}^{\mathrm{paired}}$.
          Write $v=\sum _{\ket{\Phi } \in \mathcal{S}_{2}^{\mathrm{paired}}} c_{\Phi }\ket{\Phi }$, with $c_{\Phi } \in \mathbb{C}$.
          Since $v$ is orthogonal to $S$, we have
          \begin{itemize}
            \item
                  $c_{\Phi } =(
                    -1)^{\Phi _{p} \odot \Phi _{q}} c_{\Phi '}$ if $\ket{\Phi '} =S_{pq}\ket{\Phi
                    }$.
            \item
                  $c_{\Phi } =c_{\Phi ^{12}} +c_{\Phi ^{13}} +c_{\Phi ^{23}}$ if $\ket{\Phi _{p}} =\ket{I_{00}} ,\ket{\Phi _{q}} =\ket{I_{11}}$ and $\ket{\Phi ^{ab}} =F_{pq}^{ab}\ket{\Phi }$.
          \end{itemize}

          Hence,
          \begin{multline}
            \ v=\sum _{\Phi _{p} =\Phi _{q}} c_{\Phi }\ket{\Phi }
            +\sum _{\Phi _{p} \neq \Phi _{q} ,\overline{\Phi }_{q}} c_{\Phi }\qty(\ket{\Phi
            } +( -1)^{\Phi _{p} \odot \Phi _{q}}\ket{\Phi })\\
            +\sum _{ \substack{
            \ket{\Phi _{p}} =F_{pq}^{ab}\ket{I_{00}} ,\ket{\Phi _{q}}
            =F_{pq}^{ab}\ket{I_{11}} , \\
            1\leqslant a< b\leqslant 4 } } c_{\Phi
                ^{ab}}\qty(F_{pq}^{ab}\ket{\Phi } +F_{pq}^{ab}
            S_{pq}\ket{\Phi } +\ket{\Phi } +S_{pq}\ket{\Phi }) .
          \end{multline}
          It is straightforward to verify that $v\in \mathcal{H}_{2}^{\mathrm{paired}}$ and by \Cref{lem:duqi} (\Cref{itm:nag5}) $v\in M_{2}\cap \mathcal{H}_{2}^{\mathrm{even}}$.
          Thus, $v\in M_{2}\cap \mathcal{H}_{2}^{\mathrm{paired}}$.
  \end{enumerate}

  \Cref{itm:qfwn}. ---
  Obviously, $v:=\ket{\Phi } -( -1)^{\Phi _{p} \odot \Phi _{q}}
    S_{ps} S_{qr}\ket{\Phi } \in \mathcal{H}_{2}^{\mathrm{paired}}$.
  Use the notation $S,S'$ from \Cref{lem:duqi} (\Cref{itm:0tgt}).
  Since $\operatorname{span} S$ is an invariant space of $P_{M_{4}}$ and $\operatorname{span} S\cap M_{4} =\operatorname{span} S'$, we have $M_{4} =\operatorname{span} S'\oplus \left(\left(\operatorname{span} S\right)^{\perp }\cap M_{4}\right)$.
  $v$ is orthogonal to $\qty(\operatorname{span}
    S)^{\perp }\cap M_{4}$ since $v\in \operatorname{span} S$.
  $v$ is also orthogonal to $\operatorname{span}
    S'$, since for vectors in $S'$ the overlaps with $\ket{\Phi }$ and $S_{ps} S_{qr}\ket{\Phi }$ differ by $( -1)^{\Phi _{p} \odot \Phi _{q}}$.
  Hence, $v\in M_4^{\perp}\cap\mathcal{H}_{2}^{\mathrm{paired}} =(M_4\cap\mathcal{H}_{2}^{\mathrm{paired}})^{\perp}\cap \mathcal{H}_{2}^{\mathrm{paired}}$ by \Cref{lem:hpti} (\Cref{itm:qzeo}).

  \Cref{itm:xojb,itm:646w}. ---
  Same as \Cref{itm:rtjl,itm:qfwn} but use the conversion rule.
\end{proof}

\Cref{lem:vio4} has a simple yet interesting corollary ---
if $\vec{R}$ contains enough (qubit) single excitations, the dimension of
$M\cap \mathcal{H}^{\mathrm{paired}}_{2}$ is at most $\operatorname{poly}(n)$.
Moreover, vectors in $M\cap\mathcal{H}^{\mathrm{paired}}_{2}$ have a nice decomposition which we refer to as decomposition in configuration basis.

\begin{corollary}[Configuration basis decomposition] \label{cor:h168}
  Let $\vec{R}$ be a sequence of (qubit) excitation rotations defined in \Cref{def:r86r}, $M$ be the intersection space defined in \Cref{lem:2btd}.
  Denote the simple undirected graph formed by index pairs of (qubit) single excitations in $\vec{R}$ by $G$ as in \Cref{thm:8pgp}.
  If $G$ is connected, then there exists a function
  $\operatorname{sign}(a,b,c)\in{\pm 1}$ which defines a set of
  configuration basis $\qty{\ket{\Psi_{(a,b,c)}}=\sum_{\ket{\Phi}\in
    \mathcal{S}^{\mathrm{paired}} _{2,(a,b,c)}}
    \operatorname{sign}(\ket{\Phi})\ket{\Phi}}_{a,b,c}$, such that for any
  $\ket{\Psi}\in M\cap\mathcal{H}^{\mathrm{paired}}_{2}$,
  \begin{equation}
    \label{eq:qydw} \ket{\Psi}=\sum_{a+b+c=0}^{\min\left\{\NumE,n-\NumE\right\}} c(a,b,c)
    \ket{\Psi_{(a,b,c)}}, \quad c(a,b,c)\in\mathbb{C}.
  \end{equation}
\end{corollary}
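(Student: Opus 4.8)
The plan is to extract from membership in $M$ only the constraints coming from the single excitation rotations in $\vec{R}$, and to show that these alone force the coefficients of any $\ket{\Psi}\in M\cap\mathcal{H}_2^{\mathrm{paired}}$ to be constant, up to a fixed sign, on each configuration class. Write $\ket{\Psi}=\sum_{\ket{\Phi}\in\mathcal{S}_2^{\mathrm{paired}}}c_\Phi\ket{\Phi}$. Since $M=\bigcap_j M_j\subseteq M_j$ for every single excitation rotation $R_j$ (i.e.\ for every edge $(p,q)\in E(G)$), and $\ket{\Psi}\in\mathcal{H}_2^{\mathrm{paired}}$, the vector $\ket{\Psi}$ is orthogonal to the swap vectors listed in \Cref{lem:vio4} (\Cref{itm:xojb,itm:rtjl}); this yields, for each edge $(p,q)\in E(G)$ and each $\ket{\Phi}\in\mathcal{S}_2^{\mathrm{paired}}$,
\begin{equation}
  c_\Phi=\eta_{pq}(\Phi)\,c_{S_{pq}\Phi},\qquad \eta_{pq}(\Phi)\in\{\pm1\},
\end{equation}
where $\eta_{pq}(\Phi)=(-1)^{\Phi_p\odot\Phi_q}$ in the qubit case and carries the extra $Z$-string factor $(-1)^{(\Phi_p\oplus\vec z)\odot(\Phi_q\oplus\vec z)}$, with $\vec z=\bigoplus_{a\in(q,p)}\Phi_a$, in the non-qubit case. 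The crucial observation is that $S_{pq}$ merely permutes two sites, hence preserves the multiset of site states and therefore $\operatorname{conf}(S_{pq}\Phi)=\operatorname{conf}(\Phi)$; thus every such relation couples only coefficients belonging to the same class $\mathcal{S}^{\mathrm{paired}}_{2,(a,b,c)}$.

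Next I would use connectivity of $G$. The transpositions $\{(p,q):(p,q)\in E(G)\}$ generate the full symmetric group $\mathfrak{S}_n$ whenever $G$ is connected (a spanning tree already suffices), so each configuration class $\mathcal{S}^{\mathrm{paired}}_{2,(a,b,c)}$ is a single orbit under the site permutations available as edge swaps. I would form the signed graph $\Gamma_{(a,b,c)}$ whose vertices are the states of the class and whose edges join $\ket{\Phi}$ to $S_{pq}\ket{\Phi}$ with weight $\eta_{pq}(\Phi)$ for each $(p,q)\in E(G)$; by the previous remark $\Gamma_{(a,b,c)}$ is connected. Fixing a reference state $\ket{\Phi^{(a,b,c)}_0}$ and a spanning tree of $\Gamma_{(a,b,c)}$, I would define $\operatorname{sign}(\ket{\Phi})$ to be the product of the edge weights $\eta_{pq}$ accumulated along the unique tree path from the reference to $\ket{\Phi}$ (with the reference assigned $+1$). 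This defines the configuration basis $\ket{\Psi_{(a,b,c)}}=\sum_{\ket{\Phi}\in\mathcal{S}^{\mathrm{paired}}_{2,(a,b,c)}}\operatorname{sign}(\ket{\Phi})\ket{\Phi}$ unambiguously.

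With the sign fixed, the tree relations give $c_\Phi=\operatorname{sign}(\ket{\Phi})\,c(a,b,c)$ for a single scalar $c(a,b,c):=c_{\Phi^{(a,b,c)}_0}$ on every state reachable through the tree, hence on the whole class. Summing over classes yields $\ket{\Psi}=\sum_{(a,b,c)}c(a,b,c)\ket{\Psi_{(a,b,c)}}$, which is \Cref{eq:qydw}; since the admissible configurations are exactly those with $a+b+c\le\min\{\NumE,n-\NumE\}$, the number of basis vectors, and hence $\dim(M\cap\mathcal{H}_2^{\mathrm{paired}})$, is at most $O(n^3)=\operatorname{poly}(n)$. I expect the main subtlety to lie in the non-tree edges of $\Gamma_{(a,b,c)}$: each imposes $\operatorname{sign}(\ket{\Phi})\,c(a,b,c)=\eta_{pq}(\Phi)\operatorname{sign}(S_{pq}\ket{\Phi})\,c(a,b,c)$, which is either automatically satisfied or forces $c(a,b,c)=0$, so in either case the inclusion $M\cap\mathcal{H}_2^{\mathrm{paired}}\subseteq\operatorname{span}\{\ket{\Psi_{(a,b,c)}}\}$ is unaffected and the corollary never needs the signs to be globally consistent. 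The remaining care is the non-qubit case, where the position-dependent factor $\vec z$ must be tracked; here I would pass to the qubit relations via the conversion rule of \Cref{lem:s14e} (\Cref{itm:nub3}), which intertwines $\tilde{A}_{pq}$ with $\tilde{A}^{\mathrm{qubit}}_{pq}$ by a product of $X$'s, so that the same spanning-tree construction applies once $\eta_{pq}$ is defined with this factor included.
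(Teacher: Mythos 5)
Your proposal is correct and takes essentially the same route as the paper's own proof: both extract the signed swap relations $c_\Phi=\pm c_{S_{uv}\Phi}$ from \Cref{lem:vio4} for edges of $G$, use connectivity of $G$ to propagate relative signs across each configuration class (decomposing an arbitrary site permutation into edge swaps), and resolve potentially conflicting signs by observing that any conflict forces $c(a,b,c)=0$, so the sign function can be chosen arbitrarily on such classes. Your spanning-tree/signed-graph formulation is merely a cleaner packaging of the paper's identical argument.
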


\begin{proof}
  By \Cref{lem:vio4}, $\ket{\Phi}\pm S_{uv}\ket{\Phi}\in \left(M\cap \mathcal{H}_{2}^{ \mathrm{paired} }\right)^{\perp }\cap \mathcal{H}_{2}^{\mathrm{paired}}$, for any $\ket{\Phi}\in\mathcal{H}_{2}^{\mathrm{paired}}$ and $(u,v)\in E$.
  By \Cref{lem:hpti}, $\left(M\cap \mathcal{H}_{2}^{ \mathrm{paired} }\right)^{\perp }\cap \mathcal{H}_{2}^{\mathrm{paired}}\subseteq M^{\perp}$.
  Hence, for any $\ket{\Psi}\in M\cap\mathcal{H}_{2}^{\mathrm{paired}}$, we have $\ip{\Phi}{\Psi}=\pm \mel{\Phi}{S_{uv}^{\dagger}}{\Psi}$.
  In other words, if two paired states differ only by a site swap on an edge, their overlaps with $\ket{\Psi}$ differs by either $+1$ or $-1$, and such a relative sign is independent of $\ket{\Psi}$.
  Since $G$ is connected, one can argue that if two paired states $\ket{\Phi},\ket{\Phi'}$ have the same configuration $(a,b,c)$, their overlaps with $\ket{\Psi}$ differs by at most $\pm 1$ --- two paired states with the same configuration differ by some site permutation, which can be decomposed into a product of swaps, and each swap can, in turn, be decomposed into a product of swaps on edges.
  The decomposition of a site permutation into swaps on edges may not be unique, and if the signs induced by two different decomposition conflict, it must be $c(a,b,c)=0$ for all $\ket{\Psi}\in M\cap\mathcal{H}_{2}^{\mathrm{paired}}$ (one can argue that $\ket{\Phi}\in M^{\perp}$, and thus $\mathcal{S}^{\mathrm{paired}}_{2,(a,b,c)}\subseteq M^{\perp}$).
  Otherwise, the relative signs for paired states in $\mathcal{S}^{\mathrm{paired}}_{2,(a,b,c)}$ must be unique, and are independent of $\ket{\Psi}$.
  Hence, one can pick a sign function that satisfies the requirements --- on $\mathcal{S}^{\mathrm{paired}}_{2,(a,b,c)}\subseteq M^{\perp}$, define $\operatorname{sign}$ arbitrarily, and otherwise define $\operatorname{sign}$ arbitrarily on some $\ket{\Phi}\in\mathcal{S}^{\mathrm{paired}}_{2,(a,b,c)}$, and extend to other $\ket{\Phi'}\in\mathcal{S}^{\mathrm{paired}}_{2,(a,b,c)}$ according the unique relative sign.
\end{proof}

Now that we have characterized the spanning set of $\left(M_{j}\cap \mathcal{H}_{2}^{ \mathrm{paired} }\right)^{\perp }\cap \mathcal{H}_{2}^{\mathrm{paired}}$, we are prepared to prove case 1 and part of case 2 of \Cref{thm:8pgp}.
Our proof for these parts is constructive --- we will give an explicit vector $\ket{\Psi^*}$, and prove that $\ket{\Psi^*}=\ket{\Psi^{\vec{R}}_{2,\infty}}=P_M \ket{\psi_0}^{\otimes 4}$.
To be precise, we show that the following two conditions hold:
\begin{itemize}
  \item
        $\ket{\Psi^*}$ is orthogonal to $\qty(M\cap \mathcal{H}_{2}^{
            \mathrm{paired} })^{\perp }\cap \mathcal{H}_{2}^{\mathrm{paired}}$, and
        $\ket{\Psi^*}\in\mathcal{H}_{2}^{\mathrm{paired}}$, hence $\ket{\Psi^*}\in M$
        (\Cref{lem:hpti} (\Cref{itm:cbjx})).
  \item
        $\ket{\Psi^*}-\ket{\psi_0}^{\otimes 4}\in\qty(M\cap \mathcal{H}_{2}
          ^{ \mathrm{paired} })^{\perp }\cap \mathcal{H}_{2}^{\mathrm{paired}}$,
        hence $\ket{\Psi^*}-\ket{\psi_0}^{\otimes 4}\in M^{\perp}$
        (\Cref{lem:hpti} (\Cref{itm:z8mn})).
\end{itemize}
The reader can jump to \Cref{app:hqmr,app:mbe3} for more details.
Remark that such proof relies on the complete spanning set of $\left(M\cap \mathcal{H}_{2}^{ \mathrm{paired} }\right)^{\perp }\cap \mathcal{H}_{2}^{\mathrm{paired}}$.
We have only given an incomplete spanning set for $\left(M_j\cap \mathcal{H}_{2}^{ \mathrm{paired} }\right)^{\perp }\cap \mathcal{H}_{2}^{\mathrm{paired}}$ of $R_j\in\left\{\tilde{B},\tilde{B}^{\mathrm{qubit}}\right\}$, thus simply taking the union of these spanning set is not enough.
In fact, to prove the rest of \Cref{thm:8pgp} we utilize another symmetry discussed in the next section to reduce the space down to one dimension.
In such case $\ket{\Psi^{\vec{R}}_{2,\infty}}$ is obvious.
\subsection{\texorpdfstring{$(\PermB_{\tau})^{\otimes n}$}{Sbt}-symmetry}

The $(\PermB_{\tau})^{\otimes n}$-symmetry of $\mso{R_j}{t}$ helps to further reduce the space $\mathcal{H}_{2}^{\mathrm{paired}}$.
We use this symmetry to find a one-dimensional subspace of $M\cap \mathcal{H}_{2}^{\mathrm{paired}}$, which contains $\ket{\Psi^{\vec{R}}_{2,\infty}}$.

\begin{definition}[$\mathcal{H}^{\tau}_t$]\label{def:9zvb}
  For any $\tau\in \mathfrak{S}_{2t}$, $\mathcal{H}^{\tau}_t$ is defined to be the +1 eigenspace of $(\PermB_{\tau})^{\otimes n}$.
\end{definition}

Let $R(\theta)$ be a (qubit) excitation rotation, and $\vec{R}$ be a sequence of (qubit) excitation rotations (\Cref{def:r86r}).

\begin{lemma}[$(\PermB_{\tau})^{\otimes n}$-symmetry]\label{lem:8mba}
  For any $\tau\in \mathfrak{S}_{2t}$, $\qty[\mso{R}{t} ,(\PermB_{\tau})^{\otimes n}] =0$.
\end{lemma}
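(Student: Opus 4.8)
The plan is to reduce the claim to the elementary fact that a tensor power commutes with any permutation of its tensor factors. By \Cref{lem:s14e} (\Cref{itm:zclt}), since $R(\theta)$ is real we have $\mso{R}{t}=\E[\theta]{R(\theta)^{\otimes 2t}}$; that is, the moment superoperator is merely the $\theta$-average of the $2t$-fold tensor power $R(\theta)^{\otimes 2t}$ acting on $\mathcal{H}^{\otimes 2t}$. Because commutation with a fixed operator is preserved under averaging over $\theta$, it suffices to establish, for each fixed $\theta$,
\begin{equation}
  \qty[R(\theta)^{\otimes 2t},(\PermB_\tau)^{\otimes n}]=0.
\end{equation}

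The key step is to identify what $(\PermB_\tau)^{\otimes n}$ actually does under the site decomposition of \Cref{def:0dpb}. First I would trace its action on a computational basis state $\bigotimes_{i=1}^n\ket{b_i^{(1)}\dots b_i^{(2t)}}$: site $i$ carries, as its $2t$ qubits, the $i$-th qubit of each of the $2t$ replicas, so applying $\PermB_\tau$ to every site simultaneously permutes the $2t$ replicas among themselves according to $\tau$, leaving the $n$-qubit content of each replica intact. Consequently $(\PermB_\tau)^{\otimes n}$ coincides with the operator $W_\tau$ on $\mathcal{H}^{\otimes 2t}=(\mathbb{C}^{2^n})^{\otimes 2t}$ that permutes the $2t$ replica (tensor) factors by $\tau$. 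This identification is the main bookkeeping obstacle and the one point requiring care with the $\tau$-versus-$\tau^{-1}$ conventions of \Cref{def:0dpb}; I would record it as a short preliminary computation on the basis states just described.

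With this identification in hand the result is immediate. For any single-system operator $M$ and any $\tau\in\mathfrak{S}_{2t}$ one has $W_\tau\, M^{\otimes 2t}\, W_\tau^{-1}=M^{\otimes 2t}$, since relabelling the factors and then applying $M$ to each factor yields the same operator as applying $M$ to each factor and then relabelling. Taking $M=R(\theta)$ gives the desired commutation $\qty[R(\theta)^{\otimes 2t},(\PermB_\tau)^{\otimes n}]=0$ for every $\theta$, and averaging over $\theta$ then yields $\qty[\mso{R}{t},(\PermB_\tau)^{\otimes n}]=0$. The only remaining routine check is that $\tau\mapsto W_\tau$ is a genuine unitary representation of $\mathfrak{S}_{2t}$, so that $W_\tau^{-1}=W_{\tau^{-1}}$ and the conjugation identity holds as stated; this is standard and needs no separate argument.
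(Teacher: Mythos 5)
Your proposal is correct and follows essentially the same route as the paper: both rest on the observation that, under site decomposition, $(\PermB_{\tau})^{\otimes n}$ acts by permuting the $2t$ replicas of the original system, so conjugating $R(\theta)^{\otimes 2t}$ by it leaves the tensor power unchanged, and averaging over $\theta$ then gives the commutation with $\mso{R}{t}$. The paper compresses this into a single sentence, whereas you make the replica-permutation identification and the reduction to fixed $\theta$ explicit, but the underlying argument is identical.
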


\begin{proof}
  $\left(\PermB_{\tau }\right)^{\otimes n}\E{R( \theta )^{\otimes 2t}}\left(\left(\PermB_{\tau }\right)^{\otimes n}\right)^{-1} =\E{R( \theta )^{\otimes 2t}}$, since the action of
  $\left(\PermB_{\tau }\right)^{\otimes n}$ induces a permutation of $2t$ replicas of
  $R(\theta)$ by $\tau$, which does not change the result.
\end{proof}

\begin{corollary}[Invariance of $\mathcal{H}^{\tau}_t$]\label{cor:vciu}
  For any $\tau\in \mathfrak{S}_{2t}$, $\mathcal{H}^{\tau}_{t}$ is an invariant subspace of $\mso{R}{t}$.
  Moreover, $\ket{\Psi^{\vec{R}}_{t,k}},\ket{\Psi^{\vec{R}}_{t,\infty}}\in\mathcal{H}^{\tau}_{t}$.
\end{corollary}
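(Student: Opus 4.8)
The plan is to follow exactly the template used to prove \Cref{cor:5iu8} for the $Z_p^{\otimes 2t}$-symmetry, with the single operator $(\PermB_\tau)^{\otimes n}$ playing the role that the commuting family $\{Z_p^{\otimes 2t}\}_p$ played there. First I would invoke \Cref{lem:8mba}, which supplies $\qty[\mso{R}{t},(\PermB_\tau)^{\otimes n}]=0$. By \Cref{def:9zvb} the subspace $\mathcal{H}^\tau_t$ is the $+1$ eigenspace of $(\PermB_\tau)^{\otimes n}$, and since $\mso{R}{t}$ commutes with $(\PermB_\tau)^{\otimes n}$ it maps each eigenvector of $(\PermB_\tau)^{\otimes n}$ to an eigenvector with the same eigenvalue; hence $\mathcal{H}^\tau_t$ is invariant under $\mso{R}{t}$, and in particular under every factor $\mso{R_j}{t}$.

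Next I would verify that the reference vector lies in the eigenspace, i.e.\ that $(\PermB_\tau)^{\otimes n}\ket{\psi_0}^{\otimes 2t}=\ket{\psi_0}^{\otimes 2t}$. By \Cref{prp:7iwk}, under site decomposition $\ket{\psi_0}^{\otimes 2t}=|\underbrace{1\dots1}_{2t}\rangle^{\otimes \NumE}\,|\underbrace{0\dots0}_{2t}\rangle^{\otimes(n-\NumE)}$, so every site is a constant bit string, either $1^{2t}$ or $0^{2t}$. Any within-site permutation $\PermB_\tau$ of the $2t$ qubits fixes a constant string, so $(\PermB_\tau)^{\otimes n}$ fixes $\ket{\psi_0}^{\otimes 2t}$ and therefore $\ket{\psi_0}^{\otimes 2t}\in\mathcal{H}^\tau_t$.

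Combining the two points finishes the argument. Because each $\mso{R_j}{t}$ preserves $\mathcal{H}^\tau_t$ and the initial vector already lies there, the finite product $\ket{\Psi^{\vec{R}}_{t,k}}=\qty(\prod_{j=1}^{\abs{\vec{R}}}\mso{R_j}{t})^{k}\ket{\psi_0}^{\otimes 2t}$ stays in $\mathcal{H}^\tau_t$ for every $k$. Finally, since $\mathcal{H}^\tau_t$ is a finite-dimensional, hence closed, subspace, the limit $\ket{\Psi^{\vec{R}}_{t,\infty}}=\lim_{k\to\infty}\ket{\Psi^{\vec{R}}_{t,k}}$ (which exists by \Cref{cor:9rac}) also lies in $\mathcal{H}^\tau_t$.

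I expect no genuine obstacle here: the result is structurally identical to \Cref{cor:5iu8}, and the only step requiring a moment's attention is the membership $\ket{\psi_0}^{\otimes 2t}\in\mathcal{H}^\tau_t$, which reduces to the trivial observation that a constant bit string is invariant under permutation of its entries. Everything else is bookkeeping built on the commutation relation of \Cref{lem:8mba} and the closedness of a finite-dimensional subspace.
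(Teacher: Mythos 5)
Your proposal is correct and follows exactly the same route as the paper's own proof: invariance via \Cref{lem:8mba} together with \Cref{def:9zvb}, membership of $\ket{\psi_0}^{\otimes 2t}$ via \Cref{prp:7iwk}, and closedness of the finite-dimensional subspace to pass to the limit. You merely spell out the intermediate steps (eigenspace preservation under a commuting operator, permutation-invariance of constant bit strings) that the paper leaves implicit.
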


\begin{proof}
  By \Cref{def:9zvb,lem:8mba}, $\mathcal{H}^{\tau}_{t}$ is an invariant subspace of $\mso{R}{t}$.
  By \Cref{prp:7iwk}, $\ket{\psi_0}^{\otimes 2t}\in\mathcal{H}^{\tau}_{t}$.
  Hence, $\ket{\Psi^{\vec{R}}_{t,k}}=\left(\prod_{j}\mso{R_j}{t}\right)^k\ket{\psi_0}^{\otimes 2t}\in\mathcal{H}^{\tau}_{t}$.
  Since $\mathcal{H}^{\tau}_{t}$ is closed, $\ket{\Psi^{\vec{R}}_{t,\infty}}=\lim_{k\to\infty}\ket{\Psi^{\vec{R}}_{t,k}}\in\mathcal{H}^{\tau}_{t}$.
\end{proof}

We use the following lemma to reduce the space to dimension one in later proofs.

\begin{lemma}
  \label{lem:vxe0}
  Let $\vec{R}$ be a sequence of (qubit) excitation rotations.
  Denote the simple undirected graph formed by index pairs of (qubit) single excitations in $\vec{R}$ by $G$ as in \Cref{thm:8pgp}.
  Suppose
  \begin{enumerate}
    \item
          $G$ is connected.
    \item
          For all $\ket{\Phi}\in \mathcal{S}^{\mathrm{paired}}_{2,(a,b,c)}$ where at
          least two of $a,b,c$ is non-zero, $\ket{\Phi}\in \qty(M\cap
            \mathcal{H}_{2}^{\mathrm{paired}} \cap \bigcap _{\tau \in
              \mathfrak{S}_{4}}\mathcal{H}_{2}^{\tau })^{\perp}$.
  \end{enumerate}
  Then,
  \begin{equation}
    \label{eq:47vm} \ket{\Psi^{\vec{R}}_{2,\infty}} =
    \frac{1}{\binom{n}{\NumE}^2+2\binom{n}{\NumE}} \sum_{l=0}^{m}
    \left(\ket{\Psi_{(l,0,0)}}+\ket{\Psi_{(0,l,0)}}+\ket{\Psi_{(0,0,l)}}\right),
  \end{equation}
  where $\ket{\Psi_{(a,b,c)}} :=
    \sum_{\ket{\Phi}\in\mathcal{S}^{\mathrm{paired}}_{2,(a,b,c)}} \ket{\Phi}, m :=
    \min\left\{\NumE,n-\NumE\right\}$.
\end{lemma}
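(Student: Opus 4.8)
The plan is to determine $\ket{\Psi^{\vec R}_{2,\infty}}$ by feeding the three symmetry reductions already proved into the configuration-basis decomposition, leaving a single scalar that I then fix by normalization. By \Cref{cor:9rac}, \Cref{lem:x2vk} and \Cref{cor:vciu} the vector $\ket{\Psi^{\vec R}_{2,\infty}}=P_M\ket{\psi_0}^{\otimes 4}$ lies in $W:=M\cap\mathcal H_2^{\mathrm{paired}}\cap\bigcap_{\tau\in\mathfrak S_4}\mathcal H_2^{\tau}$. Since it also lies in $M\cap\mathcal H_2^{\mathrm{paired}}$, \Cref{cor:h168} expands it in the configuration basis, $\ket{\Psi^{\vec R}_{2,\infty}}=\sum_{(a,b,c)}d(a,b,c)\ket{\Psi_{(a,b,c)}}$. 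Assumption (2) says every $\ket{\Phi}$ of a configuration with at least two nonzero coordinates lies in $W^{\perp}$; pairing such $\ket{\Phi}$ with $\ket{\Psi^{\vec R}_{2,\infty}}\in W$ and using that distinct configurations consist of disjoint basis states (hence are orthogonal) forces $d(a,b,c)=0$ for all those configurations. Thus only $(0,0,0)$ and the three one-parameter families $(l,0,0),(0,l,0),(0,0,l)$ with $1\le l\le m$ can survive.

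Next I tie the three families together at each level using the $(\PermB_{\tau})^{\otimes n}$-symmetry. The in-site transposition $(2\,3)$ fixes $\ket{I_{00}},\ket{I_{11}},\ket{X_{01}},\ket{X_{10}}$ and sends $\ket{I_{01}}\leftrightarrow\ket{X_{00}}$, $\ket{I_{10}}\leftrightarrow\ket{X_{11}}$, so it maps $\mathcal S^{\mathrm{paired}}_{2,(l,0,0)}$ bijectively onto $\mathcal S^{\mathrm{paired}}_{2,(0,l,0)}$ and hence $(\PermB_{(2\,3)})^{\otimes n}\ket{\Psi_{(l,0,0)}}=\ket{\Psi_{(0,l,0)}}$; likewise $(2\,4)$ gives $\ket{\Psi_{(l,0,0)}}\mapsto\ket{\Psi_{(0,0,l)}}$. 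Because $\ket{\Psi^{\vec R}_{2,\infty}}$ is fixed by both (it lies in $\mathcal H_2^{(2\,3)}\cap\mathcal H_2^{(2\,4)}$) and these permutations preserve $l$, comparing coefficients yields $d(l,0,0)=d(0,l,0)=d(0,0,l)=:d_l$ for every $l$. (The relevant configuration signs are $+1$, since on each of these families the dot products $\Phi_p\odot\Phi_q$ are even, which is exactly why $\ket{\Psi_{(a,b,c)}}$ is an unsigned sum.)

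I then link different levels through the $F$-type vectors of \Cref{lem:vio4} (\Cref{itm:rtjl} for $A^{\mathrm{qubit}}$, \Cref{itm:xojb} for $A$), which exist because $G$ is connected and so $\vec R$ contains at least one single excitation. For a representative $\ket{\Phi}$ of $(l,0,0)$ whose endpoints of an edge $(p,q)\in E$ carry $\ket{I_{00}},\ket{I_{11}}$ (possible whenever $l<m$), the flips $F^{12}_{pq},F^{13}_{pq},F^{23}_{pq}$ send $\ket{\Phi}$ into configurations $(l+1,0,0),(l,1,0),(l,0,1)$. For $l\ge 1$ the last two are mixed and vanish, so orthogonality to the $F$-vector collapses to $d_l=d_{l+1}$, giving $d_1=\dots=d_m$; for $l=0$ all three images $(1,0,0),(0,1,0),(0,0,1)$ survive, giving $d_0=3d_1$. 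The one delicate check is that the $Z$-string phases of the non-qubit relation are trivial on the $I$-sector: each $I$-site bitstring has the form $(x,x,y,y)$, so $\vec z=\bigoplus_{a\in(q,p)}\Phi_a=(u,u,w,w)$ and the surviving $F^{12}$ phase is $(-1)^{z_1+z_2}=(-1)^{2u}=+1$. Collecting terms, $\ket{\Psi^{\vec R}_{2,\infty}}=d_1\,v$ with $v:=3\ket{\Psi_{(0,0,0)}}+\sum_{l=1}^{m}\big(\ket{\Psi_{(l,0,0)}}+\ket{\Psi_{(0,l,0)}}+\ket{\Psi_{(0,0,l)}}\big)$, which is precisely $\sum_{l=0}^{m}(\cdots)$ in the claim.

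Finally I fix $d_1$ by the normalization $\ip{\openone^{\otimes 2}}{\Psi^{\vec R}_{2,\infty}}=1$ noted after \Cref{def:wzl3}. Under site decomposition $\ket{\openone}^{\otimes 2}=\big(\sum_{a,b}\ket{I_{ab}}\big)^{\otimes n}$, so it has overlap $1$ with a paired state built only from $\ket{I_{ab}}$ sites and $0$ otherwise; these all-$I$ states are exactly the families $(l,0,0)$. Identifying an all-$I$ paired state with a pair of weight-$\NumE$ strings gives $\sum_{l=0}^{m}\lvert\mathcal S^{\mathrm{paired}}_{2,(l,0,0)}\rvert=\binom{n}{\NumE}^2$ and $\lvert\mathcal S^{\mathrm{paired}}_{2,(0,0,0)}\rvert=\binom{n}{\NumE}$, whence $\ip{\openone^{\otimes 2}}{v}=3\binom{n}{\NumE}+\big(\binom{n}{\NumE}^2-\binom{n}{\NumE}\big)=\binom{n}{\NumE}^2+2\binom{n}{\NumE}$, so $d_1=1/\big(\binom{n}{\NumE}^2+2\binom{n}{\NumE}\big)$, as stated. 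The main obstacle is the level-linking step: one must guarantee an $F$-relation is available (this is where connectivity of $G$ enters), verify that for $l\ge 1$ exactly two of the three $F$-images are mixed so that assumption (2) produces $d_l=d_{l+1}$, and handle the $l=0$ boundary whose three surviving images generate the factor $3$ in front of $\ket{\Psi_{(0,0,0)}}$.
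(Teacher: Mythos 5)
Your proof is correct and follows essentially the same route as the paper's: reduction to $M\cap \mathcal{H}_{2}^{\mathrm{paired}}\cap\bigcap_{\tau}\mathcal{H}_{2}^{\tau}$, configuration-basis decomposition via \Cref{cor:h168}, elimination of mixed configurations by assumption (2), linking the three pure families via in-site $\mathfrak{S}_{4}$ transpositions, linking levels via the $F$-flip vectors of \Cref{lem:vio4}, and fixing the overall constant by $\ip{\openone^{\otimes 2}}{\Psi^{\vec{R}}_{2,\infty}}=1$ using the count of all-$I$ paired states. The only cosmetic difference is that you apply the transpositions $(2\,3),(2\,4)$ as global bijections between the families at every level $l$, whereas the paper uses $(1\,3),(1\,4)$ orthogonality relations only at $l=1$ and then propagates equality through the level recursion $c(a,b,c)=c(a+1,b,c)+c(a,b+1,c)+c(a,b,c+1)$.
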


\begin{proof}
  Since $G$ is connected (\nth{1} condition), vectors in $M\cap \mathcal{H}_{2}^{\mathrm{paired}}$ admits decomposition in configuration basis (\Cref{cor:h168}).
  Say, $\sum_{a,b,c} c(a,b,c)\ket{\Phi'_{(a,b,c)}}$ with $\ket{\Phi '_{(a,b,c)}} =\sum _{\ket{\Phi } \in \mathcal{S}_{2,(a,b,c)}^{\mathrm{paired}}} \pm \ket{\Phi }$.
  Moreover, by the \nth{2} condition, $c(a,b,c)=0$ if two of $a,b,c$ is non-zero.
  We prove (1) $\ket{\Phi '_{( l,0,0)}} =\ket{\Phi _{( l,0,0)}} =\sum _{\ket{\Phi } \in \mathcal{S}_{2,( l,0,0)}^{\mathrm{paired}}}\ket{\Phi }$, and likewise for configurations $( 0,l,0)$ and $( 0,0,l)$, (2) $\dim\left(M\cap \mathcal{H}_{2}^{\mathrm{paired}} \cap \bigcap _{\tau \in \mathfrak{S}_{2t}}\mathcal{H}_{2}^{\tau }\right) \leqslant 1$, (3) the coefficients of $\left\{\ket{\Phi _{( l,0,0)}} ,\ket{\Phi _{( 0,l,0)}} ,\ket{\Phi _{( 0,0,l)}}\right\}_{l}$ in $\ket{\Psi _{2,\infty }^{\vec{R}}}$ are exactly those provided in \Cref{eq:47vm}.

  \begin{enumerate}[(1)]
    \item

          We prove that $\ket{\Phi } -S_{uv}\ket{\Phi } \in \left(M\cap \mathcal{H}_{2}^{\mathrm{paired}}\right)^{\perp } \cap \mathcal{H}_{2}^{\mathrm{paired}}$, for any $\ket{\Phi } \in \mathcal{S}_{2,( l,0,0)}^{\mathrm{paired}}$ (and similarly for $\mathcal{S}_{2,( 0,l,0)}^{\mathrm{paired}}$ or $\mathcal{S}_{2,( 0,0,l)}^{\mathrm{paired}}$) and $( u,v) \in E$.
          After that, one can argue in the same way as in \Cref{cor:h168} to complete the proof.
          Suppose $u< v$.
          By the definition of $G$, either $A_{uv}$ or $A_{uv}^{\mathrm{qubit}}$ is contained in $\vec{R}$.
          \begin{itemize}
            \item
                  If $R_{j} =A_{uv}^{\mathrm{qubit}}$, by \Cref{lem:vio4}
                  (\Cref{itm:rtjl}) $\ket{\Phi } -( -1)^{\Phi _{u} \odot \Phi _{v}}
                    S_{uv}\ket{\Phi } \in \left(M_{j} \cap \mathcal{H}_{2}^{\mathrm{paired}}\right)^{\perp } \cap \mathcal{H}_{2}^{\mathrm{paired}}$.
                  Since $\operatorname{conf}\left(\ket{\Phi }\right) =( l,0,0)$, we have $\Phi _{u} ,\Phi _{v} \in \left\{\ket{I_{ab}}\middle| a,b\in \mathbb{F}_{2}\right\}$.
                  Hence, $( -1)^{\Phi _{u} \odot \Phi _{v}} =1$, and $\ket{\Phi } -S_{uv}\ket{\Phi } =\ket{\Phi } -( -1)^{\Phi _{u} \odot \Phi _{v}} S_{uv}\ket{\Phi } \in \left(M_{j} \cap \mathcal{H}_{2}^{\mathrm{paired}}\right)^{\perp } \cap \mathcal{H}_{2}^{\mathrm{paired}} \subseteq \left(M\cap \mathcal{H}_{2}^{\mathrm{paired}}\right)^{\perp }\mathcal{\cap H}_{2}^{\mathrm{paired}}$.
            \item
                  If $R_{j} =A_{uv}$, by \Cref{lem:vio4} (\Cref{itm:xojb}) $\ket{\Phi }
                    -( -1)^{( \Phi _{u} \oplus \vec{z}) \odot ( \Phi _{v} \oplus \vec{z})}
                    S_{uv}\ket{\Phi } \in \left(M_{j}\mathcal{\cap H}_{2}^{\mathrm{paired}}\right)^{\perp }\mathcal{\cap H}_{2}^{\mathrm{paired}}$, where $\vec{z} =\bigoplus _{a\in ( u,v)} \Phi _{a}$.
                  Since $\operatorname{conf}\left(\ket{\Phi }\right) =( l,0,0)$, we have $\Phi _{u} ,\Phi _{v} ,\vec{z} \in \left\{\ket{I_{ab}}\middle| a,b\in \mathbb{F}_{2}\right\}$.
                  Hence, $( -1)^{( \Phi _{u} \oplus \vec{z}) \odot ( \Phi _{v} \oplus \vec{z})} =1$.
                  The remainder is similar to the previous case.
          \end{itemize}

    \item

          Fix a vector $\ket{\Psi } \in M\cap \mathcal{H}_{2}^{\mathrm{paired}} \cap \bigcap _{\tau \in \mathfrak{S}_{2t}}\mathcal{H}_{2}^{\tau }$.
          We have proved that $\ket{\Psi}\in \operatorname{span} \left\{\ket{\Phi _{( l,0,0)}} ,\ket{\Phi _{( 0,l,0)}} ,\ket{\Phi _{( 0,0,l)}}\right\}_{l}$.
          Denote by $c( a,b,c)$ the coefficient of $\ket{\Phi _{( a,b,c)}}$ in $\ket{\Psi }$, where at most one of $a,b,c$ is non-zero.
          Define $c( a,b,c) =0$ if at least two of $a,b,c$ are non-zero.
          We prove that (2.1) $c( a,b,c) =c( a+1,b,c) +c( a,b+1,c) +c( a,b,c+1)$, (2.2) $c( 1,0,0) =c( 0,1,0) =c( 0,0,1)$.
          If so, it must be $\ket{\Psi } \varpropto \sum _{l=0}^{m}\left(\ket{\Phi _{( l,0,0)}} +\ket{\Phi _{( 0,l,0)}} +\ket{\Phi _{( 0,0,l)}}\right)$.
          Hence, $\dim\left(M\cap \mathcal{H}_{2}^{\mathrm{paired}} \cap \bigcap _{\tau \in \mathfrak{S}_{2t}}\mathcal{H}_{2}^{\tau }\right) \leqslant 1$.

    \item[(2.1)]

          Fix $( u,v) \in E$.
          Suppose $\ket{\Phi } \in \mathcal{S}_{2,( a,b,c)}^{\mathrm{paired}} ,\ket{\Phi _{u}} =\ket{I_{00}} ,\ket{\Phi _{v}} =\ket{I_{11}}$.
          We need to prove that $\ket{\Phi } -F_{uv}^{12}\ket{\Phi } -F_{uv}^{13}\ket{\Phi } -F_{uv}^{23}\ket{\Phi }$ is orthogonal to $\ket{\Psi}$, since the configuration of $F_{uv}^{12}\ket{\Phi } ,F_{uv}^{13}\ket{\Phi } ,F_{uv}^{23}\ket{\Phi }$ is $( a+1,b,c) ,( a,b+1,c) ,( a,b,c+1)$ respectively, and hence $\left(\ket{\Phi } -F_{uv}^{12}\ket{\Phi } -F_{uv}^{13}\ket{\Phi } -F_{uv}^{23}\ket{\Phi }\right)^{\dagger }\ket{\Psi } =c( a,b,c) -c( a+1,b,c) -c( a,b+1,c) -c( a,b,c+1)$.
          If two of $a,b,c$ are non-zero, then $c( a,b,c)=c( a+1,b,c) =c( a,b+1,c) =c( a,b,c+1)=0$ and the equality holds trivially.
          Assume without loss of generality that $a\ge 0,b=c=0$.
          Once more by definition of $G$, either $A_{uv}$ or $A_{uv}^{\mathrm{qubit}}$ is contained in $\vec{R}$.

          \begin{itemize}
            \item

                  If $R_{j} =A_{uv}^{\mathrm{qubit}}$, by \Cref{lem:vio4} (\Cref{itm:rtjl}) $\ket{\Phi } -F_{uv}^{12}\ket{\Phi } -F_{uv}^{13}\ket{\Phi } -F_{uv}^{23}\ket{\Phi } \in \left(M_{j} \cap \mathcal{H}_{2}^{\mathrm{paired}}\right)^{\perp } \cap \mathcal{H}_{2}^{\mathrm{paired}} \subseteq M^{\perp}$.

            \item

                  If $R_{j} =A_{uv}$, by \Cref{lem:vio4} (\Cref{itm:xojb}) $\ket{\Phi } -( -1)^{z_{1} +z_{2}} F_{uv}^{12}\ket{\Phi } -( -1)^{z_{1} +z_{3}} F_{uv}^{13}\ket{\Phi } -( -1)^{z_{2} +z_{3}} F_{uv}^{23}\ket{\Phi } \in \left(M_{j}\mathcal{\cap H}_{2}^{\mathrm{paired}}\right)^{\perp }\cap\mathcal{ H}_{2}^{\mathrm{paired}}\subseteq M^{\perp}$, where $\vec{z} =\bigoplus _{a\in ( u,v)} \Phi _{a}$.
                  Since $\operatorname{conf}\left(\ket{\Phi }\right) =( l,0,0)$, we have $\vec{z} \in \left\{\ket{I_{ab}}\middle| a,b\in \mathbb{F}_{2}\right\}$.
                  Thus, $(-1)^{z_1+z_2}=1$, while $(-1)^{z_1+z_3},(-1)^{z_2+z_3}=\pm 1$.
                  By the \nth{2} condition, $F_{uv}^{13}\ket{\Phi},F_{uv}^{23}\ket{\Phi}\in M^{\perp}$, since the configuration of $F_{uv}^{13}\ket{\Phi}$ is $(a,b+1,c)$ and $a,b+1>0$, and likewise for $F_{uv}^{23}\ket{\Phi}$.
                  Hence, $\ket{\Phi } -F_{uv}^{12}\ket{\Phi } -F_{uv}^{13}\ket{\Phi } -F_{uv}^{23}\ket{\Phi } \in M^{\perp}$.

          \end{itemize}

    \item[(2.2)]

          Fix a vector $\ket{\Phi } \in \mathcal{S}_{2,( 1,0,0)}^{\mathrm{paired}}$ and two swaps $\tau _{1} =( 1\ 3) ,\tau _{2} =( 1\ 4)$.
          Let $\ket{\Phi ^{1}} :=\left(\PermB_{\tau _{1}}\right)^{\otimes n}\ket{\Phi } ,\ket{\Phi ^{2}} :=\left(\PermB_{\tau _{2}}\right)^{\otimes n}\ket{\Phi }$.
          It is easy to check that the configuration of $\ket{\Phi ^{1}} ,\ket{\Phi ^{2}}$ is $( 0,1,0) ,( 0,0,1)$ respectively.
          By \Cref{def:9zvb}, $\ket{\Phi } -\ket{\Phi ^{1}} \in \left(\mathcal{H}_{2}^{\tau _{1}}\right)^{\perp }$ and $\ket{\Phi } -\ket{\Phi ^{1}} \in \left(\mathcal{H}_{2}^{\tau _{2}}\right)^{\perp }$.
          Finally, since $\ket{\Psi } \in \mathcal{H}_{2}^{\tau _{1}} \cap \mathcal{H}_{2}^{\tau _{2}}$, we have $0=\left(\ket{\Phi } -\ket{\Phi ^{1}}\right)^{\dagger }\ket{\Psi } =c( 1,0,0) -c( 0,1,0)$ and $0=\left(\ket{\Phi } -\ket{\Phi ^{2}}\right)^{\dagger }\ket{\Psi } =c( 1,0,0) -c( 0,0,1)$.

    \item

          Since $\ket{\Psi _{2,\infty }^{\vec{R}}} \in M\cap \mathcal{H}_{2}^{\mathrm{paired}} \cap \bigcap _{\tau \in \mathfrak{S}_{2t}}\mathcal{H}_{2}^{\tau }$, there exists constant $c$ such that $\ket{\Psi _{2,\infty }^{\vec{R}}} =c\sum _{l=0}^{m}(\ket{\Phi _{( l,0,0)}} +\ket{\Phi _{( 0,l,0)}} +\ket{\Phi _{( 0,0,l)}})$.
          Recall that $\ip{\openone^{\otimes 2t}}{\Psi _{2,\infty }^{\vec{R}}}=1$, thus
          \begin{equation}
            \begin{split}
              1 & =\ip{\openone^{\otimes 2t}}{\Psi _{2,\infty }^{\vec{R}}}
              =\left(\sum _{l=0}^{m}\ket{\Phi _{( l,0,0)}}\right)^{\dagger }\ket{\Psi _{2,\infty
              }^{\vec{R}}}                                                 \\
                & =c\sum _{l=0}^{m}\binom{n}{l,l,\NumE -l,n-\NumE -l}
              +2c\binom{n}{\NumE}
              =\left(\binom{n}{\NumE}^{2} +2\binom{n}{\NumE}\right) c.
            \end{split}
          \end{equation}
          Solving the equation yields the desired coefficients.

  \end{enumerate}
\end{proof}
\section{Average of cost is zero}\label{app:jbcg}

In this section, we will prove that the cost function of alternated dUCC \ansatzes{} is unbiased, i.e., $\E{C(\vec{\uptheta};U^{\vec{R}}_k,H_{\mathrm{el}})} =0$, where $H_{\mathrm{el}}$ is an electronic structure Hamiltonian.
We also sketch a failed attack towards the variance of the cost function, which serves as an illustration of techniques used in the proof of main results.

\begin{theorem}[Unbiased cost function]
  Let $C(\vec{\uptheta};U^{\vec{R}}_{k},H_{\mathrm{el}})$ be a cost function defined in
  \Cref{eq:wtfm}, where $U^{\vec{R}}_{k}(\vec{\uptheta})$ is an alternated (qubit)
  dUCC \ansatze{} defined in \Cref{eq:o2yh}, and $H_{\mathrm{el}}$ is an electronic structure
  Hamiltonian defined in \Cref{eq:dv4e}.
  We have $\E[\vec{\uptheta}]{C(\vec{\uptheta};U^{\vec{R}}_k,H_{\mathrm{el}})} =0$.
\end{theorem}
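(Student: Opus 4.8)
The plan is to reduce the claim to a one-line parity (particle--hole / $Z$-string) orthogonality argument, exploiting the $Z_p^{\otimes 2t}$-symmetry specialized to $t=1$. First I would invoke linearity of expectation together with the decomposition of the Hamiltonian in \Cref{eq:dv4e}: since $H_{\mathrm{el}}=\sum_{p>q} h_{pq}(\hat{a}_p^\dagger\hat{a}_q+h.c.)+\sum_{p>q>r>s} g_{pqrs}(\hat{a}_p^\dagger\hat{a}_q^\dagger\hat{a}_r\hat{a}_s+h.c.)$, it suffices to prove $\E[\vec{\uptheta}]{C(\vec{\uptheta};U_k^{\vec{R}},O)}=0$ separately for each monomial observable $O=\hat{a}_p^\dagger\hat{a}_q+h.c.$ ($p>q$) and $O=\hat{a}_p^\dagger\hat{a}_q^\dagger\hat{a}_r\hat{a}_s+h.c.$ ($p>q>r>s$). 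The all-zero Hartree--Fock term contributes nothing since $H_{\mathrm{el}}$ as written carries no identity component.

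Next I would apply \Cref{lem:csxv} at $t=1$, which expresses the first moment as an inner product in the enlarged space, $\E{C(\vec{\uptheta};U_k^{\vec{R}},O)}=\bra{O}\ket{\Psi^{\vec{R}}_{1,k}}$, where $\ket{\Psi^{\vec{R}}_{1,k}}=\big(\prod_{j}\mso{R_j}{1}\big)^k\ket{\psi_0}^{\otimes 2}$ is the $(\vec{R},1,k)$-moment vector living in the space of $n$ sites, each of length $2t=2$ (cf.\ \Cref{def:wzl3}). By \Cref{cor:5iu8}, the $Z_p^{\otimes 2}$-symmetry forces $\ket{\Psi^{\vec{R}}_{1,k}}\in\mathcal{H}^{\mathrm{even}}_1$, i.e.\ every two-qubit site is supported on $\operatorname{span}\{\ket{00},\ket{11}\}$, for \emph{every} finite $k$.

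The crux is then the observation, read off directly from \Cref{prp:7iwk} at $t=1$, that the vectorized observable $\ket{O}$ is supported on odd-Hamming-weight site states precisely at the orbitals the excitation touches: for $O=\hat{a}_p^\dagger\hat{a}_q+h.c.$ the sites $p,q$ carry the factor $\ket{01}$ or $\ket{10}$, each of Hamming weight one, while all other sites factor through states of even weight; analogously the sites $s,r,q,p$ carry weight-one factors for the double excitation. Since the odd-weight subspace of a single site is orthogonal to its even-weight subspace, and the untouched sites factor out, $\ket{O}$ is orthogonal to the whole of $\mathcal{H}^{\mathrm{even}}_1$. Hence $\bra{O}\ket{\Psi^{\vec{R}}_{1,k}}=0$, which completes the reduction and proves $\E[\vec{\uptheta}]{C(\vec{\uptheta};U^{\vec{R}}_k,H_{\mathrm{el}})}=0$.

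I expect no serious obstacle, as the entire argument is a symmetry statement rather than a computation; the only point requiring a little care is checking that the Hermitian-conjugate term does not restore an even-weight component. Both $\hat{a}_p^\dagger\hat{a}_q$ and its conjugate are genuinely off-diagonal excitations, so after Jordan--Wigner and vectorization each contributes only odd-weight ($\ket{01},\ket{10}$) factors at the touched sites and never a $\ket{00}$ or $\ket{11}$ factor there, so $\ket{O}$ stays entirely outside $\mathcal{H}^{\mathrm{even}}_1$. This parity mechanism makes the first moment vanish uniformly in $k$ and independently of the detailed structure of $\vec{R}$, in sharp contrast to the second moment, whose evaluation requires the much finer symmetry reductions of the subsequent sections.
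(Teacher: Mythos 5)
Your proposal is correct and follows essentially the same route as the paper's own proof: reduce to monomial observables by linearity, use \Cref{lem:csxv} to write the first moment as $\ip{O}{\Psi^{\vec{R}}_{1,k}}$, then note that $\ket{\Psi^{\vec{R}}_{1,k}}\in\mathcal{H}^{\mathrm{even}}_1$ by \Cref{cor:5iu8} while \Cref{prp:7iwk} places $\ket{O}$ in $(\mathcal{H}^{\mathrm{even}}_1)^{\perp}$. The only difference is presentational: you spell out the odd-Hamming-weight mechanism at the touched sites (including the check on the Hermitian-conjugate term), which the paper leaves implicit in its citation of \Cref{prp:7iwk}.
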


\begin{proof}
  By \Cref{prp:7iwk}, $\ket{\hat{a}_{p}^{\dagger }\hat{a}_{q} +h.c.
    }$ and $\ket{\hat{a}_{p}^{\dagger }\hat{a}_{q}^{\dagger }\hat{a}_{r}\hat{a}_{s}
      +h.c.}$ lie in $(\mathcal{H}^{\mathrm{even}}_1)^{\perp}$,
  and by \Cref{cor:5iu8}, $\ket{\Psi^{\vec{R}}_{1,k}} \in \mathcal{H}^{\mathrm{even}}_1$.
  Hence,
  \begin{equation}
    \E{C} = \ip{H_{\mathrm{el}}}{\Psi^{\vec{R}}_{1,k}} =
    \sum_{pq} h_{pq}\ip{\hat{a}_{p}^{\dagger }\hat{a}_{q} +h.c.
    }{\Psi^{\vec{R}}_{1,k}} + \sum_{pqrs} g_{pqrs}\ip{\hat{a}_{p}^{\dagger }\hat{a}_{q}^{\dagger }\hat{a}_{r}\hat{a}_{s} +h.c.}{\Psi^{\vec{R}}_{1,k}} = 0.
  \end{equation}
\end{proof}

As a corollary, to calculate the variance of the cost, it suffices to calculate the
\nth{2} moment, since
\begin{equation}
  \Var{C}=\E{C^2}-\E{C}^2=\E{C^2}.
\end{equation}

Now we sketch a failed attack towards the variance of the cost.
The reader can safely skip this part.
It would be great if we can give a nontrivial bound of the \nth{2} moment of
the cost function using only \nth{1} moments, for example utilizing the following
inequality:
\begin{equation}
  \label{eq:xzcr} \E{C\qty(\vec{\uptheta}
    ;U_{k}^{\vec{R}} ,O)^{2}} \leqslant \E{C\qty(\vec{\uptheta} ;U_{k}^{\vec{R}}
    ,O^{2})} ,\quad k\in \mathbb{N}_{+} \cup \{\infty \} .
\end{equation}
Unfortunately, this is not the case, at least not for \Cref{eq:xzcr}.
The derivation is sketched below and may serve as an example of the techniques used in the proof of main results.

Suppose we wish to bound $\lim_{k\to\infty}\E{C\left(\vec{\uptheta} ;U_{k}^{\vec{R}} ,O\right)^{2}}$ by calculating $\lim_{k\to\infty}\E{C\left(\vec{\uptheta} ;U_{k}^{\vec{R}} ,O^{2}\right)}$.
For simplicity, we consider the case $O=\hat{a}_{p}^{\dagger }\hat{a}_{q} +h.c.
$, since it is straightforward to generalize to other cases.
We further assume that the index pairs of (qubit) single excitations in $\vec{R}$ form a connected graph, i.e., there exists a connected simple graph $G=( V,E)$, such that $V=[ n]$ and $\vec{R}$ contains $A_{uv}$ (or $A_{uv}^{\text{qubit}}$) for all $( u,v) \in E$.
The importance of this assumption will be made clear below.

Recall that $\lim_{k\to\infty}\E{C\left(\vec{\uptheta} ;U_{k}^{\vec{R}} ,O^{2}\right)} =\lim _{k\rightarrow \infty }\bra{O^{2}}\left(\prod _{j=1}^{|\vec{R} |}\mso{R_{j}}{1}\right)^{k}\ket{\psi _{0}}^{\otimes 2} =\ip{O^{2}}{\Psi _{1,\infty }^{\vec{R}}}$ (\Cref{lem:csxv} and \Cref{def:wzl3}).
Since $\ket{\psi _{0}}^{\otimes 2} \in \mathcal{H}_{1}^{\mathrm{even}}$ (\Cref{prp:7iwk}) and $\mathcal{H}_{1}^{\mathrm{even}}$ is invariant under $\mso{R_{j}}{1}$ for all $j$ (\Cref{cor:5iu8}), we can restrict ourselves in the subspace $\mathcal{H}_{1}^{\mathrm{even}}$.
Fix an $R_{j} \in \vec{R}$, and suppose $R_{j}( \theta ) =\exp\left(\theta \left(\hat{\tau } -\hat{\tau }^{\dagger }\right)\right)$, where $\hat{\tau } =\hat{a}_{p_{1}}^{\dagger } \dots \hat{a}_{p_{r}}^{\dagger }\hat{a}_{p_{r+1}} \dots \hat{a}_{p_{2r}}$ (or the corresponding qubit version).
By \Cref{lem:s14e} (\Cref{itm:at88,itm:nub3}), for any $\ket{\Phi } \in
  \mathcal{S}_{1}^{\mathrm{even}}$,
\begin{equation}
  \begin{split}
     & \mso{R_{j}}{1}\ket{\Phi } \\
     & =
    \begin{cases}
      \frac{1}{2}\ket{\Phi } +\frac{1}{2}
      S_{( p_{1} \ p_{r+1}) \dots ( p_{r} \ p_{2r})}\ket{\Phi } , & \text{if } \Phi _{p_{1}} =\dots =\Phi _{p_{r}} =\overline{\Phi }_{p_{r+1}} =\dots =\overline{\Phi }_{p_{2r}} , \\
      \ket{\Phi } ,                                               & \text{otherwise} .
    \end{cases}
  \end{split}
\end{equation}
Notice that $\mso{R_{j}}{1}$ preserves the number of $\ket{00}$ and $\ket{11}$ in $\ket{\Phi }$, due to particle number symmetry.
In particular, if $R_{j} =A_{uv}$ (or $A_{uv}^{\text{qubit}}$) for any $( u,v) \in E$, and $\ket{\Phi } ,\ket{\Phi '} \in \mathcal{S}_{1}^{\mathrm{even}}$ differ by $S_{uv}$, the equality $\mso{R_{j}}{1}\ket{\Psi _{1,\infty }^{\vec{R}}} =\ket{\Psi _{1,\infty }^{\vec{R}}}$ implies that $\ip{\Phi}{\Psi _{1,\infty }^{\vec{R}}} =\ip{\Phi '}{\Psi _{1,\infty }^{\vec{R}}}$.
The connectivity of $G$ and the particle number symmetry thus indicates
\begin{equation}
  \label{eq:gqpm} \ket{\Psi _{1,\infty }^{\vec{R}}} \varpropto
  \sum _{\substack{ \ket{\Phi } \in \mathcal{S}_{1}^{\mathrm{even}} , \\
  \#\ket{11} \text{ in } \ket{\Phi } \text{ is } \NumE } }\ket{\Phi } .
\end{equation}
The coefficient in (\Cref{eq:gqpm}) must be $\binom{n}{\NumE}^{-1}$ by the equality $\ip{\openone_{2^n}}{\Psi _{1,\infty }^{\vec{R}}} =1$.
Lastly, since $O=\hat{a}_{p}^{\dagger }\hat{a}_{q} +h.c.
$,
\begin{equation}
  \ket{O^{2}} =\ket{N_{p} +N_{q} -N_{p}
  N_{q}} =\left(\ket{00}_{p}\ket{11}_{q} +\ket{11}_{p}\ket{00}_{q}\right) \otimes \bigotimes _{a\neq p,q}\left(\ket{00} +\ket{11}\right)_{a} .
\end{equation}
And therefore,
\begin{equation}
  \E{C\qty(\vec{\uptheta} ;U_{\infty }^{\vec{R}}
    ,O^{2})} =\ip{O^{2}}{\Psi _{1,\infty }^{\vec{R}}} =2\binom{n-2}{\NumE
    -1}\binom{n}{\NumE}^{-1} =\frac{2\NumE( n-\NumE)}{n( n-1)} .
\end{equation}
Such an upper bound is at least $1/\operatorname{poly}( n)$, and is roughly $1/2$ when $\NumE =n/2$, definitely not enough to argue for an exponential decay, which is a sufficient condition for BP.
\section{Proof of main result: Case 1}\label{app:hqmr}

In this section, we prove the polynomial concentration of cost function for alternated dUCC \ansatze{} containing only single excitation rotations, with mild connectivity assumption (\Cref{thm:8pgp} (\Cref{itm:x2wa})).
Examples of such \ansatzes{} include $k$-BRA, $k$-UCCS and $k$-UCCGS.

\begin{definition}[Crossing number of paired state]\label{def:loro}
  Let $\ket{\Phi } \in \mathcal{S}_{2}^{\mathrm{paired}}$.
  Draw a graph of $n$ vertices according to $\ket{\Phi }$ as follows:
  \begin{enumerate}
    \item
          Place the $n$ sites of $\ket{\Phi }$ in order on a circle.
          Color $\ket{I_{01}} ,\ket{I_{10}}$ by red, $\ket{X_{00}} ,\ket{X_{11}}$ by green, and $\ket{X_{01}} ,\ket{X_{10}}$ by blue.
          Do not color $\ket{I_{00}} ,\ket{I_{11}}$.
    \item
          Draw an edge inside the circle between any two sites of the same color, and color it with this same color.
          Make sure no three edges cross at the same point.
  \end{enumerate}

  The crossing number of $\ket{\Phi }$, denoted by $\operatorname{cr}\left(\ket{\Phi }\right)$, is defined to be the minimal number of crossing points of edges in color red and green, or green and blue, or red and blue.
  Alternatively, suppose site $i$ is colored by $c( i) \in \{\perp,R,G,B\}$ as
  above,
  \begin{equation}
    \operatorname{cr}\left(\ket{\Phi }\right) :=\#\qty{( i_{1}
    ,i_{2} ,i_{3} ,i_{4})\middle|
    \begin{matrix}
      0\le i_{1} < i_{2} < i_{3} < i_{4}
      \le n,c( i_{1}) \neq c( i_{2}) , \\
      c( i_{1}) =c( i_{3}) \neq \perp,c( i_{2})
      =c( i_{4}) \neq \perp
    \end{matrix}
    } .
  \end{equation}
\end{definition}

\begin{figure}[ht]
  \centering
  \includegraphics{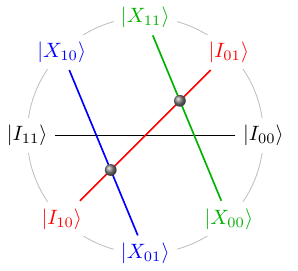}
  \caption{
    Illustration of \Cref{def:loro}.
    The crossing number of this paired state is 2, with the corresponding crossing points marked by gray balls.
    Notice that there are 3 crossing points not counted into the crossing number, since the horizontal edge is not colored.
  }
  \label{fig:8jt0}
\end{figure}

For example, the crossing number of the paired state in \Cref{eq:tnmr} is 2 (\Cref{fig:8jt0}).
We need the following properties of crossing numbers in the proof.

\begin{lemma}[Properties of crossing numbers] \label{lem:038o}
  Let $\ket{\Phi } \in \mathcal{S}_{2}^{\mathrm{paired}} ,1\leqslant u< v\leqslant n,\vec{z} :=\bigoplus _{a\in (u,v)} \Phi _{a}$.
  The following properties of crossing numbers hold.
  \begin{enumerate}
    \item
          \label{itm:ggd9}
          $\operatorname{cr}\left(S_{uv}\ket{\Phi }\right) -\operatorname{cr}\left(\ket{\Phi }\right)
            \equiv ( \Phi _{u} \oplus \vec{z}) \odot ( \Phi _{v} \oplus \vec{z})\pmod{2}$.
    \item
          \label{itm:9iq7}
          If $\ket{\Phi _{u}} =\ket{I_{00}} ,\ket{\Phi _{v}} =\ket{I_{11}}$, then
          \begin{equation}
            \begin{split}
              \label{eq:7qhi} \operatorname{cr}\left(\ket{\Phi }\right) \equiv
              z_{1} +z_{2} +\operatorname{cr}\left(F_{uv}^{12}\ket{\Phi }\right) & \equiv z_{1} +z_{3}
              +\operatorname{cr}\left(F_{uv}^{13}\ket{\Phi }\right)                                    \\
                                                                                 & \equiv z_{2} +z_{3}
              +\operatorname{cr}\left(F_{uv}^{23}\ket{\Phi }\right)\pmod{2}.
            \end{split}
          \end{equation}
  \end{enumerate}
\end{lemma}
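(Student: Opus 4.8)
The plan is to reduce the crossing number modulo $2$ to a simple ascent/inversion count of the colour sequence, after which both change formulas fall out of elementary pairing arguments. The starting point is the explicit formula in \Cref{def:loro}: $\operatorname{cr}(\ket{\Phi})$ decomposes as $X_{RG}+X_{RB}+X_{GB}$, where $X_{CC'}$ counts the bichromatic crossings between the complete chord sets on colour classes $C$ and $C'$. Iterating over $C'$-chords gives $X_{CC'}=\sum_{\{b_1,b_2\}\subseteq C'} m(|C|-m)$, where $m$ is the number of $C$-sites strictly between $b_1$ and $b_2$. The decisive input from the paired hypothesis is that every colour class has \emph{even} cardinality (indeed $|R|=2n^I_{01}$, $|G|=2n^X_{00}$, $|B|=2n^X_{01}$). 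Since $m(|C|-m)\equiv m(|C|-1)\equiv m\pmod 2$ when $|C|$ is even, this collapses to $X_{CC'}\equiv \#\{(i,j):i<j,\,\kappa_i\in C,\,\kappa_j\in C'\}\pmod 2$, where $\kappa_i$ is the colour of site $i$. Fixing a total order $R\prec G\prec B$ and summing, I obtain the clean identity
\[
  \operatorname{cr}(\ket{\Phi})\equiv \#\{(i,j):i<j,\,\kappa_i\prec\kappa_j,\,\kappa_i,\kappa_j\neq\perp\}\pmod 2 ,
\]
i.e. $\operatorname{cr}$ is the parity of the number of ascents of the colour word.

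The second ingredient is an algebraic encoding of colours. I identify each even-weight site string $\Phi_i$, modulo the global flip $1111$ (which preserves colour), with $\kappa_i:=(\Phi_{i1}\oplus\Phi_{i2},\,\Phi_{i1}\oplus\Phi_{i3})\in\mathbb{F}_2^2$, so that $\perp,R,G,B$ become $00,01,10,11$ and $\kappa$ is linear in the bits. On $\mathbb{F}_2^2$ let $\omega$ be the symplectic form $\omega((p,q),(p',q'))=pq'+qp'$; then $\omega(\chi,\chi')=1$ exactly when $\chi,\chi'$ are distinct and nonzero. Two facts tie everything together: for even-weight strings one has $x\odot y\equiv\omega(\kappa(x),\kappa(y))\pmod 2$ (the radical of $\odot$ on the even-weight subspace is precisely $\langle 1111\rangle$), and for the ascent order $[\chi\prec\chi']+[\chi'\prec\chi]=\omega(\chi,\chi')$.

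With these in hand both parts become a bookkeeping of the pairs touched by the operation. For \Cref{itm:ggd9}, applying $S_{uv}$ swaps the colours at positions $u$ and $v$: pairs involving a $w<u$ or a $w>v$ are unaffected, the pair $(u,v)$ contributes $\omega(\kappa_u,\kappa_v)$, and each $w$ with $u<w<v$ contributes $\omega(\kappa_u,\kappa_w)+\omega(\kappa_v,\kappa_w)=\omega(\kappa_u+\kappa_v,\kappa_w)$. Summing gives $\omega(\kappa_u,\kappa_v)+\sum_{w\in(u,v)}\omega(\kappa_u+\kappa_v,\kappa_w)=\omega(\kappa_u\oplus\zeta,\kappa_v\oplus\zeta)$ with $\zeta=\sum_{a\in(u,v)}\kappa_a=\kappa(\vec z)$, which by the $x\odot y$ identity equals $(\Phi_u\oplus\vec z)\odot(\Phi_v\oplus\vec z)$, as claimed. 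For \Cref{itm:9iq7}, $F_{uv}^{ab}$ recolours the two previously uncoloured sites $u,v$ by a common colour $\chi$ ($\chi=R,G,B$ for $ab=12,13,23$) and fixes every site in between; only the pairs $(u,w),(w,v)$ with $w\in(u,v)$ change, each contributing $\omega(\chi,\kappa_w)$, so $\operatorname{cr}(F_{uv}^{ab}\ket{\Phi})-\operatorname{cr}(\ket{\Phi})\equiv\omega(\chi,\zeta)$, which evaluates to $z_1+z_2$, $z_1+z_3$, $z_2+z_3$ respectively.

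I expect the main obstacle to be the first reduction step: establishing $X_{CC'}\equiv\#\{i<j:\kappa_i\in C,\,\kappa_j\in C'\}\pmod 2$ cleanly. One must match the all-pairs-chord definition of $\operatorname{cr}$ to the ascent count, verify that the orientation chosen for the three colour pairs is immaterial modulo $2$ (again using even class sizes, since $N^<_{CC'}+N^<_{C'C}=|C||C'|\equiv 0$), and confirm that the paired-state constraints genuinely force all classes even. Once this parity dictionary is in place the two change formulas are immediate, and the only remaining routine checks are the linear-algebra identity $x\odot y\equiv\omega(\kappa(x),\kappa(y))$ on even-weight strings together with linearity of $\kappa$, which convert the symplectic expressions into the stated bit-string dot products.
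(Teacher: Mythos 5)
Your proof is correct, but it takes a genuinely different route from the paper's. The paper argues locally: its key observation is that transposing two \emph{adjacent} sites $j,j+1$ changes $\operatorname{cr}$ by $\Phi_j\odot\Phi_{j+1}$ modulo $2$ (using, as you do, that every colour class of a paired state has even size, so the chord-graph degrees at $j$ and $j+1$ are odd), and both parts then follow by telescoping this increment along explicit sequences of adjacent swaps; for part (2) the paper first applies $S_{u,v-1}$ together with the part-(1) formula to reduce to the adjacent case, where recolouring two neighbouring uncoloured sites by a common colour visibly changes $\operatorname{cr}$ by an even number. You instead establish a global dictionary --- $\operatorname{cr}(\ket{\Phi})$ is, mod $2$, the number of bichromatic ascents of the colour word --- by collapsing the chord-crossing counts $m(|C|-m)$ and then $\ell_a r_a$ using evenness of both classes, and you encode colours in $\mathbb{F}_2^2$ so that $x\odot y\equiv\omega(\kappa(x),\kappa(y))$ for a symplectic form $\omega$. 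After that setup both parts are one-line bilinear computations, and your adjacent-swap specialization reproduces the paper's local observation exactly, so the two arguments are consistent. Your route costs more machinery up front but buys a closed-form characterization of $\operatorname{cr}\bmod 2$ that makes these (and any similar) change formulas mechanical; the paper's route is more elementary and never needs the inversion-count identity.

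Two points should be tightened. First, in part (2) the claim that ``only the pairs $(u,w),(w,v)$ with $w\in(u,v)$ change'' is not literally true: for $w<u$ the pairs $(w,u)$ and $(w,v)$ both change status, and likewise for $w>v$. What saves you is that $u$ and $v$ receive the \emph{same} colour $\chi$, so each outside $w$ contributes $2[\kappa_w\prec\chi]$ or $2[\chi\prec\kappa_w]$, which vanishes mod $2$; this cancellation should be stated (the analogous reading of ``unaffected'' in part (1) is harmless for the same reason). Second, your dictionary is proved for states with even colour classes, but you also evaluate $\operatorname{cr}$ on $S_{uv}\ket{\Phi}$ and $F^{ab}_{uv}\ket{\Phi}$; this is fine --- a site permutation preserves the colour multiset, and $F^{ab}_{uv}$ adds two sites of one common colour, so all classes stay even --- but the one-line check belongs in the proof.
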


\begin{proof}
  \Cref{itm:ggd9}. ---
  The key observation here is that by swapping two neighboring sites $j,j+1$ of
  $\ket{\Phi}$, the parity of crossing number will change by $\Phi _{j} \odot
    \Phi _{j+1}$.
  Indeed, if one of $\Phi _{j}$ or $\Phi _{j+1}$ is not colored or are both in the same color, then $\Phi _{j} \odot \Phi _{j+1} \equiv 0$, and swapping $\Phi _{j}$ with $\Phi _{j+1}$ will not change the crossing number.
  And if $\Phi _{j}$ and $\Phi _{j+1}$ are in different colors, then $\Phi _{j} \odot \Phi _{j+1} \equiv 1$, and swapping $\Phi _{j}$ with $\Phi _{j+1}$ will change the parity of crossing number since there are even-number of red, green or blue sites and thus the degrees of $\Phi _{j}$ and $\Phi _{j+1}$ are both odd.
  Consider the paired state sequence $\left(\ket{\Phi^{(t)}}\right)_{1\le t\le T}$ where
  $T=2v-2u-1$ and
  \begin{equation}
    \label{eq:ylwl} \ket{\Phi ^{(t)}} =
    \begin{cases}
      \ket{\Phi},                               & t=0,         \\
      S_{u+t-1,u+t}\ket{\Phi ^{( t-1)}} ,       & 1\le
      t\le v-u,                                                \\
      S_{2v-u-t-1,2v-u-t}\ket{\Phi ^{( t-1)}} , & v-u< t\le T.
    \end{cases}
  \end{equation}
  Notice that $\ket{\Phi^{(T)}}=S_{uv}\ket{\Phi}$.
  By the observation above,
  \begin{equation}
    \operatorname{cr}\qty(\ket{\Phi ^{(
        t)}}) -\operatorname{cr}\left(\ket{\Phi ^{( t-1)}}\right) \equiv
    \begin{cases}
      \Phi
      _{u} \odot \Phi _{u+t} ,         & 0< t\le v-u, \\
      \Phi _{v} \odot \Phi _{2v-u-t} , &
      v-u< t\le T.
    \end{cases}
  \end{equation}
  Thus,
  \begin{equation}
    \label{eq:kuq5} \operatorname{cr}\left(S_{uv}\ket{\Phi}\right)
    -\operatorname{cr}\left(\ket{\Phi }\right) =\sum
    _{t=1}^{T}\qty(\operatorname{cr}\left(\ket{\Phi ^{( t)}}\right)
    -\operatorname{cr}\left(\ket{\Phi ^{( t-1)}}\right)) \equiv (\Phi _{u}\oplus \vec{z})
    \odot (\Phi _{v}\oplus \vec{z}).
  \end{equation}
  The last equation uses the fact that $\vec{z}\odot\vec{z}\equiv 0$.

  \Cref{itm:9iq7}. ---
  Let $\ket{\Phi^z} =\ket{\Phi } ,\ket{\Phi^a} =F_{uv}^{12}\ket{\Phi }
    ,\ket{\Phi^b} =F_{uv}^{13}\ket{\Phi } ,\ket{\Phi^c} =F_{uv}^{23}\ket{\Phi
    }$.
  Notice that \Cref{eq:7qhi} is equivalent to
  \begin{equation}
    \label{eq:8iab}
    \Phi^z_{u} \odot \vec{z}+\operatorname{cr}\left(\ket{\Phi^z }\right) \equiv \Phi^a_{u} \odot \vec{z}+\operatorname{cr}\left(\ket{\Phi^a }\right) \equiv \Phi^b_{u} \odot \vec{z}+\operatorname{cr}\left(\ket{\Phi^b }\right) \equiv \Phi^c_{u} \odot \vec{z}+\operatorname{cr}\left(\ket{\Phi^c }\right) \pmod{2}.
  \end{equation}
  \begin{itemize}
    \item
          If $u+1=v$, then $\vec{z}=0$.
          By changing $\ket{\Phi^z}$ to $\ket{\Phi^a}$ (or $\ket{\Phi^b} ,\ket{\Phi^c}$), site $u$ and $v$ will both be colored by red (or green, blue), and the number of crossing points increased will be even.
          Hence,
          \begin{equation}
            \label{eq:2pex} \operatorname{cr}\left(\ket{\Phi^z}\right)
            \equiv \operatorname{cr}\left(\ket{\Phi^a}\right) \equiv
            \operatorname{cr}\left(\ket{\Phi^b}\right) \equiv \operatorname{cr}\left(\ket{\Phi^c}\right)
            \pmod{2},
          \end{equation}
          and \Cref{eq:8iab} follows.
    \item
          If $u+1<v$, then by \Cref{eq:kuq5},
          \begin{equation}
            \label{eq:kdup}
            \operatorname{cr}(S_{u,v-1}\ket{\Phi^x}) \equiv \operatorname{cr}(\ket{\Phi^x})
            + \Phi^x_u \odot \vec{z} + \vec{z} \odot \Phi^x_{v-1}, \quad \forall
            x\in\left\{z,a,b,c\right\}.
          \end{equation}
          Similar to \Cref{eq:2pex},
          \begin{equation}
            \label{eq:05n7}
            \operatorname{cr}\left(S_{u,v-1}\ket{\Phi^z}\right) \equiv
            \operatorname{cr}\left(S_{u,v-1}\ket{\Phi^a}\right) \equiv
            \operatorname{cr}\left(S_{u,v-1}\ket{\Phi^b}\right) \equiv
            \operatorname{cr}\left(S_{u,v-1}\ket{\Phi^c}\right) \pmod{2}.
          \end{equation}
          Finally, notice that $\Phi^z_{v-1}=\Phi^a_{v-1}=\Phi^b_{v-1}=\Phi^c_{v-1}$.
          Combined with \Cref{eq:kdup,eq:05n7} we have proved \Cref{eq:8iab}.
  \end{itemize}
\end{proof}

\begin{lemma}
  \label{lem:dbc5}
  Let $G=(V,E)$ be a connected graph with $|V|=n$ vertices.
  Suppose $\vec{R}$ contains a subsequence of single excitation rotations $(A_{uv})_{(u,v)\in E}$.
  Let $M$ be the intersection space of $\vec{R}$ defined in \Cref{cor:9rac} at $t=2$.
  For any $\ket{\Phi},\ket{\Phi'} \in \mathcal{S}_{2,( a,b,c)}^{\mathrm{paired}}$, $( -1)^{\operatorname{cr}\left(\ket{\Phi }\right)}\ket{\Phi } -( -1)^{\operatorname{cr}\left(\ket{\Phi '}\right)}\ket{\Phi '} \in M^{\perp }$.
\end{lemma}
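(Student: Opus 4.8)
The plan is to reduce the statement to the case of a single edge swap, and then to observe that the crossing-number sign and the sign appearing in the spanning vectors of \Cref{lem:vio4} are literally identical, so that the two cancel. First I would record the reduction. Two paired states $\ket{\Phi},\ket{\Phi'}\in\mathcal{S}_{2,(a,b,c)}^{\mathrm{paired}}$ with the \emph{same} configuration carry the same multiset of site labels, hence $\ket{\Phi'}=S_{\pi}\ket{\Phi}$ for some site permutation $\pi$. Because $G$ is connected, the transpositions $\{S_{uv}\}_{(u,v)\in E}$ generate the full symmetric group on the sites, so I can write $\pi$ as a product of edge swaps and factor $\ket{\Phi'}=S_{u_l v_l}\cdots S_{u_1 v_1}\ket{\Phi}$ with each $(u_i,v_i)\in E$. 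Setting $\ket{\Phi^{(0)}}=\ket{\Phi}$ and $\ket{\Phi^{(i)}}=S_{u_i v_i}\ket{\Phi^{(i-1)}}$, each intermediate state is again a paired state of configuration $(a,b,c)$ (since site permutations preserve the configuration), so the crossing number is defined along the whole chain.

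The heart of the argument is the single-edge step. Fix an edge $(u,v)\in E$, so by hypothesis $A_{uv}\in\vec{R}$, and let $\vec{z}:=\bigoplus_{a\in(u,v)}\Phi_a$ and $s_{uv}(\ket{\Phi}):=(\Phi_u\oplus\vec{z})\odot(\Phi_v\oplus\vec{z})$. By \Cref{lem:vio4} (\Cref{itm:xojb}), the vector $\ket{\Phi}-(-1)^{s_{uv}(\ket{\Phi})}S_{uv}\ket{\Phi}$ lies in $\qty(M_j\cap\mathcal{H}_2^{\mathrm{paired}})^{\perp}\cap\mathcal{H}_2^{\mathrm{paired}}$ for the projection $M_j$ associated with $A_{uv}$. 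Multiplying by the global phase $(-1)^{\operatorname{cr}(\ket{\Phi})}$ and invoking \Cref{lem:038o} (\Cref{itm:ggd9}), which gives $\operatorname{cr}(S_{uv}\ket{\Phi})\equiv\operatorname{cr}(\ket{\Phi})+s_{uv}(\ket{\Phi})\pmod 2$, yields
\begin{equation}
  (-1)^{\operatorname{cr}(\ket{\Phi})}\qty(\ket{\Phi}-(-1)^{s_{uv}(\ket{\Phi})}S_{uv}\ket{\Phi})
  =(-1)^{\operatorname{cr}(\ket{\Phi})}\ket{\Phi}-(-1)^{\operatorname{cr}(S_{uv}\ket{\Phi})}S_{uv}\ket{\Phi}.
\end{equation}
Since $M\subseteq M_j$, we have $\qty(M_j\cap\mathcal{H}_2^{\mathrm{paired}})^{\perp}\cap\mathcal{H}_2^{\mathrm{paired}}\subseteq\qty(M\cap\mathcal{H}_2^{\mathrm{paired}})^{\perp}\cap\mathcal{H}_2^{\mathrm{paired}}\subseteq M^{\perp}$, where the last inclusion follows from \Cref{lem:hpti} (using that $\mathcal{H}_2^{\mathrm{paired}}$ is invariant and hence $P_{\mathcal{H}_2^{\mathrm{paired}}}$ commutes with $P_M$, as in \Cref{cor:9rac}). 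Thus the signed difference between $\ket{\Phi}$ and its swap along any edge already lies in $M^{\perp}$.

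Finally I would telescope. Applying the single-edge step to each consecutive pair $\ket{\Phi^{(i-1)}},\ket{\Phi^{(i)}}$ shows $(-1)^{\operatorname{cr}(\ket{\Phi^{(i-1)}})}\ket{\Phi^{(i-1)}}-(-1)^{\operatorname{cr}(\ket{\Phi^{(i)}})}\ket{\Phi^{(i)}}\in M^{\perp}$ for every $i$, and summing these over $i=1,\dots,l$ collapses to $(-1)^{\operatorname{cr}(\ket{\Phi})}\ket{\Phi}-(-1)^{\operatorname{cr}(\ket{\Phi'})}\ket{\Phi'}$; since $M^{\perp}$ is a subspace, the sum stays in $M^{\perp}$, which is the claim. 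I expect the only genuinely delicate point to be the sign bookkeeping in the single-edge step: the whole proof hinges on the coincidence that the swap-parity $s_{uv}$ from \Cref{lem:038o} matches the sign in the \Cref{lem:vio4} spanning vector exactly, so one must verify that the interval $(u,v)$ and the dot-product conventions in the two lemmas agree (they do, both being symmetric in $u,v$). The combinatorial fact that edge swaps of a connected graph generate all site permutations is standard and causes no difficulty.
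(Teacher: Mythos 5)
Your proposal is correct and follows essentially the same route as the paper's proof: the single-edge step combining \Cref{lem:vio4} (\Cref{itm:xojb}) with the crossing-number parity identity of \Cref{lem:038o} (\Cref{itm:ggd9}) and the inclusion from \Cref{lem:hpti}, then decomposing an arbitrary site permutation into edge swaps via connectivity of $G$ and telescoping. The only cosmetic difference is that you phrase the reduction as ``edge transpositions generate the symmetric group'' while the paper decomposes each swap into swaps along a path in $G$; these are the same standard fact.
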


\begin{proof}
  The proof is similar to that of \Cref{cor:h168}, but more specific.
  \begin{itemize}
    \item
          Suppose $\ket{\Phi '} =S_{uv}\ket{\Phi}$ for some $(u,v) \in E$ and $u< v$.
          By assumption, there exist some $R_{j} \in \vec{R}$ such that $R_{j} =A_{uv}$.
          Thus, by \Cref{lem:vio4} (\Cref{itm:xojb}), $\ket{\Phi} -(-1)^{(\Phi _{u} \oplus \vec{z}) \odot (\Phi _{v} \oplus \vec{z})}\ket{\Phi '} \in \left(M_{j} \cap \mathcal{H}_{2}^{\mathrm{paired}}\right)^{\perp} \cap \mathcal{H}_{2}^{\mathrm{paired}}$, where $\vec{z} :=\bigoplus _{a\in (u,v)} \Phi _{a}$.
          By \Cref{lem:hpti}, $\left(M_{j} \cap \mathcal{H}_{2}^{\mathrm{paired}}\right)^{\perp} \cap \mathcal{H}_{2}^{\mathrm{paired}} \subseteq M^{\perp}$.
          And by \Cref{lem:038o}, $(\Phi _{u} \oplus \vec{z}) \odot (\Phi _{v} \oplus \vec{z}) \equiv \operatorname{cr}\left(\ket{\Phi '}\right) -\operatorname{cr}\left(\ket{\Phi}\right) \pmod{2}$.
          Hence, $(-1)^{\operatorname{cr}\left(\ket{\Phi}\right)}\ket{\Phi} -(-1)^{\operatorname{cr}\left(\ket{\Phi '}\right)}\ket{\Phi '} \in M^{\perp}$.
    \item
          Otherwise, $\ket{\Phi '} =S_{\pi}\ket{\Phi}$ for some $\pi \in \mathfrak{S}_{n}$.
          The permutation $\pi $ can be decomposed into a product of swaps, and by the fact that $G$ is connected, each swap can, in turn, be decomposed into a product of swaps on edges (pick an arbitrary path connecting $(u,v)$ in $G$, the swap $(u\ v)$ can be decomposed into a product of swaps along the path).
          Hence, there exists $T\in\mathbb{N}$ and a sequence of edges $(u_{t},v_{t})_{1\le t\le T} \subseteq E$, such that $\pi =\prod _{t=1}^{T}(u_{t} \ v_{t})$.
          Let $\ket{\Phi ^{(t)}} =\prod _{i=1}^{t} S_{u_{i} v_{i}}\ket{\Phi}$, we have $\ket{\Phi ^{(0)}} =\ket{\Phi},\ket{\Phi ^{(T)}} =\ket{\Phi '}$ and $\ket{\Phi ^{(t)}} =S_{u_{t} v_{t}}\ket{\Phi ^{(t-1)}}$ for $1< t\le T$.
          By the previous case, $(-1)^{\operatorname{cr}\left(\ket{\Phi ^{(t-1)}}\right)}\ket{\Phi ^{(t-1)}} -(-1)^{\operatorname{cr}\left(\ket{\Phi ^{(t)}}\right)}\ket{\Phi ^{(t)}} \in M^{\perp}$ for $1< t\le T$.
          The sum of these $T-1$ vectors yields $(-1)^{\operatorname{cr}\left(\ket{\Phi}\right)}\ket{\Phi} -(-1)^{\operatorname{cr}\left(\ket{\Phi '}\right)}\ket{\Phi '}$, which is also a member of $M^{\perp}$.
  \end{itemize}
\end{proof}

\begin{theorem}
  \label{thm:if80}
  Let $G=(V,E)$ be a connected graph with $|V|=n$ vertices, $\vec{R}$ be a sequence of single excitation rotations $(A_{uv})_{(u,v)\in E}$.
  Define $m := \min\left\{\NumE, n-\NumE\right\}$.
  The $(\vec{R},2,\infty)$-moment vector is
  \begin{equation}
    \label{eq:8yx3}
    \ket{\Psi ^{\vec{R}}_{2,\infty}} =
    \frac{\binom{n+4}{2}}{\binom{n}{\NumE}\binom{n+4}{\NumE+2}} \sum_{\substack{
        a,b,c\geqslant 0, \\ a+b+c\le m }}
    \frac{\ket{\Psi_{(a,b,c)}}}{\binom{a+b+c+2}{2}\binom{a+b+c}{a,b,c}},
  \end{equation}
  where $\ket{\Psi_{(a,b,c)}} := \sum
    _{\ket{\Phi}\in\mathcal{S}^{\mathrm{paired}}_{2,(a,b,c)}}(
    -1)^{\operatorname{cr}\left(\ket{\Phi }\right)}\ket{\Phi }$.
\end{theorem}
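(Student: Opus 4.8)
The plan is to exhibit the explicit vector $\ket{\Psi^*}$ given by the right-hand side of \Cref{eq:8yx3} and prove $\ket{\Psi^{\vec{R}}_{2,\infty}}=\ket{\Psi^*}$. Since $\ket{\Psi^{\vec{R}}_{2,\infty}}=P_M\ket{\psi_0}^{\otimes 4}$ by \Cref{cor:9rac}, and since $\ket{\Psi^*},\ket{\psi_0}^{\otimes 4}$ both lie in $\mathcal{H}_2^{\mathrm{paired}}$ (\Cref{prp:7iwk}, \Cref{lem:x2vk}), it suffices to verify the two conditions $\ket{\Psi^*}\in M$ and $\ket{\Psi^*}-\ket{\psi_0}^{\otimes 4}\in M^\perp$. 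By \Cref{lem:hpti} (\Cref{itm:cbjx,itm:z8mn}) both can be tested inside $\mathcal{H}_2^{\mathrm{paired}}$, where $(M\cap\mathcal{H}_2^{\mathrm{paired}})^\perp\cap\mathcal{H}_2^{\mathrm{paired}}$ is spanned, over the edges of $G$ via \Cref{lem:hpti} (\Cref{itm:58vf}), by the swap-type and flip-type vectors of \Cref{lem:vio4} (\Cref{itm:xojb}). Throughout I write $c(a,b,c)$ for the coefficient of $\ket{\Psi_{(a,b,c)}}$ in $\ket{\Psi^*}$ and $s:=a+b+c$.

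First I would check $\ket{\Psi^*}\in M$. Because each $\ket{\Psi_{(a,b,c)}}$ already carries the signs $(-1)^{\operatorname{cr}(\ket{\Phi})}$, orthogonality of $\ket{\Psi^*}$ to every swap-type vector is automatic from \Cref{lem:038o} (\Cref{itm:ggd9}) and \Cref{lem:dbc5}. Orthogonality to the flip-type vectors, rewritten with \Cref{lem:038o} (\Cref{itm:9iq7}), is equivalent to the recurrence $c(a,b,c)=c(a{+}1,b,c)+c(a,b{+}1,c)+c(a,b,c{+}1)$ for $s<m$. I would verify this by substituting $c(a,b,c)=K/\!\left(\binom{s+2}{2}\binom{s}{a,b,c}\right)$: the three right-hand terms share the factor $1/\binom{s+3}{2}$, the multinomials combine with the shifts $(a{+}1)+(b{+}1)+(c{+}1)=s+3$, and $\tfrac{s+3}{\binom{s+3}{2}}=\tfrac{2}{s+2}$ collapses the sum back to the left-hand side.

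Next, for $\ket{\Psi^*}-\ket{\psi_0}^{\otimes 4}\in M^\perp$ I would prove $\ip{v}{\Psi^*}=\ip{v}{\psi_0^{\otimes 4}}$ for every $v\in M\cap\mathcal{H}_2^{\mathrm{paired}}$. By \Cref{cor:h168} such $v$ has a configuration-basis expansion with coefficients $d(a,b,c)$ obeying the same recurrence. Since $\ket{\psi_0}^{\otimes 4}$ is a single basis state of configuration $(0,0,0)$ with zero crossing number—hence appears in $\ket{\Psi_{(0,0,0)}}$ with coefficient $1$—one gets $\ip{v}{\psi_0^{\otimes 4}}=\overline{d(0,0,0)}$, whereas $\ip{v}{\Psi^*}=\sum_{a,b,c}\overline{d(a,b,c)}\,c(a,b,c)\,N(a,b,c)$ with $N(a,b,c):=\lvert\mathcal{S}^{\mathrm{paired}}_{2,(a,b,c)}\rvert$. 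Iterating the recurrence gives, for each $0\le s\le m$, the path-counting identity $\overline{d(0,0,0)}=\sum_{a+b+c=s}\binom{s}{a,b,c}\overline{d(a,b,c)}$. The structural point I would isolate is that $\lambda_s:=c(a,b,c)\,N(a,b,c)/\binom{s}{a,b,c}$ depends only on $s$—the $a,b,c$-dependence cancels once the multinomial $N$ is divided out—so that $\ip{v}{\Psi^*}=\sum_{s}\lambda_s\sum_{a+b+c=s}\binom{s}{a,b,c}\overline{d(a,b,c)}=\left(\sum_{s=0}^m\lambda_s\right)\overline{d(0,0,0)}$.

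The main obstacle is the remaining scalar identity $\sum_{s=0}^m\lambda_s=1$. Simplifying, $\lambda_s=K\binom{n}{\NumE}\binom{\NumE}{s}\binom{n-\NumE}{s}/\binom{s+2}{2}$, so the condition becomes the binomial identity $\sum_{s=0}^m\binom{\NumE}{s}\binom{n-\NumE}{s}/\binom{s+2}{2}=\binom{n+4}{\NumE+2}/\binom{n+4}{2}$. The denominator $\binom{s+2}{2}$ obstructs a direct Vandermonde reduction, so I expect to prove it by creative telescoping (the WZ method) or via the Beta-integral $1/\binom{s+2}{2}=2\int_0^1 x^s(1-x)\,\mathrm{d}x$; this identity also fixes the prefactor $K=\binom{n+4}{2}/\!\left(\binom{n}{\NumE}\binom{n+4}{\NumE+2}\right)$ appearing in \Cref{eq:8yx3}. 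Combining the two verified conditions yields $\ket{\Psi^*}=P_M\ket{\psi_0}^{\otimes 4}=\ket{\Psi^{\vec{R}}_{2,\infty}}$, as claimed.
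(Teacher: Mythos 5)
Your proposal is correct. Its first half — showing $\ket{\Psi^*}\in M$ by pairing against the swap-type and flip-type vectors of \Cref{lem:vio4} (\Cref{itm:xojb}), killing the swap case with \Cref{lem:038o} (\Cref{itm:ggd9}) and reducing the flip case via \Cref{lem:038o} (\Cref{itm:9iq7}) to the recurrence $c(a,b,c)=c(a+1,b,c)+c(a,b+1,c)+c(a,b,c+1)$ — is exactly the paper's argument; your algebraic check of that recurrence is precisely the computation the paper leaves as ``combinatorial arguments'' around its \Cref{eq:7eft}. Your second half is genuinely different. The paper proves $\ket{\Psi^*}-\ket{\psi_0}^{\otimes 4}\in M^{\perp}$ by a primal construction: it exhibits the difference as an explicit linear combination of the $M^{\perp}$ vectors in \Cref{eq:lmik,eq:96f6}, with recursively defined weights $s(a,b,c),w(a,b,c),f(k)$, and then verifies the linear system \Cref{eq:meen,eq:s88g}. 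You argue dually: any $v\in M\cap\mathcal{H}_{2}^{\mathrm{paired}}$ expands in the crossing-signed configuration basis (\Cref{cor:h168} together with \Cref{lem:dbc5}; note the recurrence on its coefficients $d(a,b,c)$ comes not from \Cref{cor:h168} itself but from orthogonality of $v$ to the flip-type vectors, which lie in $M^{\perp}$ by \Cref{lem:hpti}), path counting then collapses $\ip{v}{\Psi^{*}}$ to $\left(\sum_{s}\lambda_{s}\right)\overline{d(0,0,0)}$, and \Cref{lem:hpti} (\Cref{itm:z8mn}) licenses testing only inside $\mathcal{H}_{2}^{\mathrm{paired}}$. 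This buys something real: there are no weight functions to guess, and the whole verification condenses into one scalar identity; the cost is a dependence on \Cref{cor:h168}, which the paper's own proof of this theorem never needs. One simplification you overlooked: your closing identity $\sum_{s=0}^{m}\binom{\NumE}{s}\binom{n-\NumE}{s}/\binom{s+2}{2}=\binom{n+4}{\NumE+2}/\binom{n+4}{2}$ requires neither creative telescoping nor a Beta integral, because the denominator can be absorbed, $\binom{\NumE}{s}/\binom{s+2}{2}=\frac{2}{(\NumE+1)(\NumE+2)}\binom{\NumE+2}{s+2}$, after which Vandermonde gives $\sum_{s}\binom{\NumE+2}{s+2}\binom{n-\NumE}{s}=\binom{n+2}{\NumE}$ and both sides of your identity equal $\frac{2\,(n+2)!}{(\NumE+2)!\,(n-\NumE+2)!}$.
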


\begin{proof}
  Denote the right-hand side of \Cref{eq:8yx3} by $\ket{\Psi^*}$.
  Use the notations $M,M_j$ from \Cref{cor:9rac}.
  Since $\ket{\Psi^{\vec{R}}_{2,\infty}}=P_M \ket{\psi _{0}}^{\otimes 4}$, in order to prove $\ket{\Psi^{\vec{R}}_{2,\infty}} =\ket{\Psi ^{*}}$, it suffices to prove that (1) $\ket{\Psi ^{*}} \in M$ and (2) $\ket{\Psi ^{*}} -\ket{\psi _{0}}^{\otimes 4} \in M^{\perp }$.
  \begin{enumerate}[(1)]
    \item
          Obviously, $\ket{\Psi ^{*}} \in \mathcal{H}_{2}^{\mathrm{paired}}$.
          We prove that $\ket{\Psi ^{*}}$ is orthogonal to $\left(M_{j} \cap \mathcal{H}_{2}^{\mathrm{paired}}\right)^{\perp} \cap \mathcal{H}_{2}^{\mathrm{paired}}$, for each $R_{j} \in \vec{R}$.
          If so, we have $\ket{\Psi ^{*}} \in M$ by \Cref{lem:hpti}.
          Recall the spanning set of $\left(M_{j} \cap \mathcal{H}_{2}^{\mathrm{paired}}\right)^{\perp} \cap \mathcal{H}_{2}^{\mathrm{paired}}$ characterized in \Cref{lem:vio4} (\Cref{itm:xojb}).
          Suppose $R_{j} =A_{uv}$.
          For any $\ket{\Phi} \in \mathcal{S}_{2,(a,b,c)}^{\mathrm{paired}}$, let $\vec{z} =\bigoplus _{a\in (u,v)} \Phi _{a}$.
          We check the following two cases.
          \begin{itemize}
            \item

                  $\ket{\Psi ^{*}}$ is orthogonal to $\ket{\Phi} -(-1)^{(\Phi _{u} \oplus
                        \vec{z}) \odot (\Phi _{v} \oplus \vec{z})}
                    S_{uv}\ket{\Phi}$.
                  In fact, the overlap between these two vectors is proportional to $(-1)^{\operatorname{cr}\left(\ket{\Phi}\right)} -(-1)^{(\Phi _{u} \oplus \vec{z}) \odot (\Phi _{v} \oplus \vec{z}) +\operatorname{cr}\left(S_{uv}\ket{\Phi}\right)}$, which is 0 by \Cref{lem:038o} (\Cref{itm:ggd9}).

            \item

                  $\ket{\Psi ^{*}}$ is orthogonal to $\ket{\Phi} -(-1)^{z_{1} +z_{2}}
                    F_{uv}^{12}\ket{\Phi} -(-1)^{z_{1} +z_{3}} F_{uv}^{13}\ket{\Phi} -(-1)^{z_{2} +z_{3}} F_{uv}^{23}\ket{\Phi}$.
                  In fact, the overlap between these two vectors is proportional to
                  \begin{multline}
                    \label{eq:7eft} \frac{( -1)^{\operatorname{cr}\qty(\ket{\Phi
                          })}}{\binom{k+2}{2}\binom{k}{a,b,c}} -\frac{(
                      -1)^{z_1+z_2+\operatorname{cr}\qty(\ket{F_{uv}^{12}\Phi
                          })}}{\binom{k+3}{2}\binom{k+1}{a+1,b,c}} \\
                    -\frac{(
                      -1)^{z_1+z_3+\operatorname{cr}\qty(\ket{F_{uv}^{13}\Phi
                          })}}{\binom{k+3}{2}\binom{k+1}{a,b+1,c}}
                    -\frac{(
                      -1)^{z_2+z_3+\operatorname{cr}\qty(\ket{F_{uv}^{23}\Phi
                          })}}{\binom{k+3}{2}\binom{k+1}{a,b,c+1}}.
                  \end{multline}
                  Here we define $k:=a+b+c$.
                  By \Cref{lem:038o} (\Cref{itm:9iq7}), the numerators in the four fractions in \Cref{eq:7eft} are the same.
                  Factoring out the four identical numerators, it can be proved that \Cref{eq:7eft} equals 0 for any integer $a,b,c$ using combinatorial arguments.
          \end{itemize}

    \item
          Next, we prove $\ket{\Psi ^{*}} -\ket{\psi _{0}}^{\otimes 4} \in M^{\perp }$,
          by expressing $\ket{\Psi ^{*}} -\ket{\psi _{0}}^{\otimes 4}$ as a linear
          combination of vectors in $\left( M\cap \mathcal{H}_{2}^{\mathrm{paired}}\right)^{\perp } \cap
            \mathcal{H}_{2}^{\mathrm{paired}} \subseteq M^{\perp }$.
          Recalls that the union of the spanning sets of each $\left( M_j\cap \mathcal{H}_{2}^{\mathrm{paired}}\right)^{\perp } \cap \mathcal{H}_{2}^{\mathrm{paired}}$ (characterized in \Cref{lem:vio4} (\Cref{itm:xojb})) spans $\left( M\cap \mathcal{H}_{2}^{\mathrm{paired}}\right)^{\perp } \cap \mathcal{H}_{2}^{\mathrm{paired}}$.

          \begin{itemize}
            \item
                  For any $\ket{\Phi } \in
                    \mathcal{S}^{\mathrm{paired}}_{2,(0,0,0)}$,
                  by \Cref{lem:dbc5}, we have
                  \begin{equation}
                    \label{eq:lmik} \ket{\Phi }
                    -\ket{\psi _{0}}^{\otimes 4} \in M^{\perp }.
                  \end{equation}
            \item
                  For $\ket{\Phi^z} \in \mathcal{S}_{2,( a,b,c)}^{\mathrm{paired}} ,\ket{\Phi^a} \in \mathcal{S}_{2,( a+1,b,c)}^{\mathrm{paired}} ,\ket{\Phi^b} \in
                    \mathcal{S}_{2,( a,b+1,c)}^{\mathrm{paired}} ,\ket{\Phi^c} \in
                    \mathcal{S}_{2,( a,b,c+1)}^{\mathrm{paired}}$,
                  by \Cref{lem:vio4}, \Cref{lem:038o} (\Cref{itm:9iq7}), and \Cref{lem:dbc5},
                  we have
                  \begin{multline}
                    \label{eq:96f6} (
                    -1)^{\operatorname{cr}\left(\ket{\Phi^z}\right)}\ket{\Phi^z}- (
                    -1)^{\operatorname{cr}\left(\ket{\Phi^a}\right)}\ket{\Phi^a} \\
                    - (
                    -1)^{\operatorname{cr}\left(\ket{\Phi^b}\right)}\ket{\Phi^b}- (
                    -1)^{\operatorname{cr}\left(\ket{\Phi^c}\right)}\ket{\Phi^c} \in M^{\perp }.
                  \end{multline}
          \end{itemize}
          We show that there exists functions $w,s:\mathbb{N}^3\to\mathbb{R}$, such that
          \begin{multline}
            \label{eq:irr5} \ket{\Psi ^{*}} -\ket{\psi _{0}}^{\otimes 4} =
            \frac{1}{s( 0,0,0)} \sum _{(\ref{eq:lmik})} \left(\ket{\Phi } -\ket{\psi
              _{0}}^{\otimes 4}\right) \\
            - \sum _{a,b,c} w(a,b,c)\sum _{(\ref{eq:96f6})}\left((
            -1)^{\operatorname{cr}\left(\ket{\Phi^z}\right)} \ket{\Phi^z} -(
            -1)^{\operatorname{cr}\left(\ket{\Phi^a}\right)} \ket{\Phi^a} \right. \\
            \left.  -(-1)^{\operatorname{cr}\left(\ket{\Phi^b}\right)} \ket{\Phi^b} -(
            -1)^{\operatorname{cr}\left(\ket{\Phi^c}\right)} \ket{\Phi^c}\right).
          \end{multline}
          Here the under script indicates that the summation is taken over vectors in \Cref{eq:lmik,eq:96f6}.
          Let $D:=\frac{\binom{n+4}{2}}{\binom{n}{\NumE}\binom{n+4}{\NumE+2}}$.
          Let $s( a,b,c)$ be the number of different paired states with configuration $(
            a,b,c)$:
          \begin{equation}
            s( a,b,c) :=\frac{n!
            }{( n-\NumE -a-b-c)( \NumE -a-b-c) !( a!b!c!)^{2}},
          \end{equation}
          and define
          \begin{equation}
            w(a,b,c):=\frac{a!b!c!}{s(a+1,b,c)s(a,b+1,c)s(a,b,c+1)} f(a+b+c),
          \end{equation}
          where
          \begin{equation}
            f( k) :=
            \begin{cases}
              \frac{1}{\binom{n}{\NumE}} - D, & k=0,          \\
              \frac{k}{( n-\NumE -k+1) (\NumE -k+1)}
              f(k-1)- \frac{D}{\binom{k+2}{2} k!
              } ,
                                              & k\geqslant 1.
            \end{cases}
          \end{equation}
          Comparing the coefficients of vectors on both sides of \Cref{eq:irr5}, we have to prove the following linear equations.
          \begin{align}
            D                                        & = \frac{1-s(0,0,0)}{s(0,0,0)}-w(0,0,0)s(1,0,0)s(0,1,0)s(0,0,1), \label{eq:meen} \\
            \frac{D}{\binom{k+2}{2}\binom{k}{a,b,c}} & = -w( a,b,c) s( a+1,b,c) s( a,b+1,c) s( a,b,c+1) \nonumber                      \\
                                                     & \quad +w(a-1,b,c)s(a-1,b,c)s(a-1,b+1,c)s(a-1,b,c+1) \nonumber                   \\
                                                     & \quad +w(a,b-1,c)s(a,b-1,c)s(a+1,b-1,c)s(a,b-1,c+1) \nonumber                   \\
                                                     & \quad +w(a,b,c-1)s(a,b,c-1)s(a+1,b,c-1)s(a,b+1,c-1).
            \label{eq:s88g}
          \end{align}
          Here we define $k:=a+b+c$, and $w(a,b,c)=s(a,b,c)=0$ if at least one of $a,b,c<0$.
          \Cref{eq:meen} follows from definition.
          It can be verified that the right-hand side of \Cref{eq:s88g} equals
          \begin{equation}
            a!
            b!c!
            \left(-f( k)+\frac{k}{( n-\NumE -k+1) (\NumE -k+1)} f(k-1)\right),
          \end{equation}
          which equals the left-hand side by definition.
  \end{enumerate}
\end{proof}

Notice that the moment vector $\ket{\Psi^{\vec{R}}_{2,\infty}}$ in \Cref{thm:if80} does not depend on the structure of $G$ except the connectivity.
As an immediate corollary, the $(\vec{R},2,\infty)$-moment vectors of $k$-BRA, $k$-UCCS, and $k$-UCCGS are the same, since the underlying graphs $G$, which are a path, a complete bipartite graph, and a complete graph respectively, are all connected.

\begin{corollary}
  Let $\vec{R}$ be defined in \Cref{thm:if80}.
  \begin{equation}
    \ket{\Psi ^{\mathrm{BRA}}_{2,\infty}} =
    \ket{\Psi ^{\mathrm{UCCS}}_{2,\infty}} =
    \ket{\Psi ^{\mathrm{UCCGS}}_{2,\infty}} =
    \ket{\Psi^{\vec{R}}_{2,\infty}}.
  \end{equation}
\end{corollary}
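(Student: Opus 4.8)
The plan is to let \Cref{thm:if80} do all the heavy lifting. The crucial observation is that the explicit formula for $\ket{\Psi^{\vec{R}}_{2,\infty}}$ in \Cref{eq:8yx3} is written entirely in terms of $n$, $\NumE$, and the crossing number $\operatorname{cr}(\ket{\Phi})$; none of these quantities references the edge set $E$ of the graph $G$ beyond the single requirement that $G$ be connected. Consequently the strategy is simply to identify the graph $G$ attached to each of $k$-BRA, $k$-UCCS, and $k$-UCCGS, verify that each is connected, and then invoke \Cref{thm:if80} to conclude that all three moment vectors equal one and the same right-hand side of \Cref{eq:8yx3}.

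First I would read off the defining sequences of single excitation rotations. For $k$-BRA the base sequence is $(A_{p,p+1})_{p=1}^{n-1}$, whose index pairs form the path graph on $[n]$. For $k$-UCCS (\Cref{eq:diqc}) the sequence is $(A_{pq})_{p>\NumE\geqslant q}$, whose index pairs join every virtual orbital to every occupied orbital, giving the complete bipartite graph $K_{\NumE,\,n-\NumE}$. For $k$-UCCGS the sequence is $(A_{pq})_{p>q}$, whose index pairs form the complete graph $K_n$. Each of these three sequences consists solely of (non-qubit) single excitation rotations, so each matches the hypothesis of \Cref{thm:if80} verbatim, with $\vec{R}=(A_{uv})_{(u,v)\in E}$ for the corresponding edge set $E$.

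Next I would check connectivity. The path graph is connected, the complete graph $K_n$ is connected, and $K_{\NumE,\,n-\NumE}$ is connected whenever $0<\NumE<n$, which holds in the physically relevant regime. With connectivity established in all three cases, \Cref{thm:if80} applies and yields, for each ansatz, precisely the vector on the right-hand side of \Cref{eq:8yx3}. Since that vector depends on $n$ and $\NumE$ alone, the three moment vectors coincide and equal $\ket{\Psi^{\vec{R}}_{2,\infty}}$ for any connected $\vec{R}$.

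There is essentially no obstacle: the corollary is immediate once \Cref{thm:if80} is in hand, because the theorem's conclusion is manifestly independent of the combinatorial structure of $G$. The only point demanding care is confirming that the three sequences literally satisfy the hypothesis of \Cref{thm:if80} --- in particular that no double excitation rotations are present (true for BRA, UCCS, and UCCGS) and that the underlying graphs are genuinely connected rather than merely spanning $[n]$. Once these routine checks are discharged, the equality follows by direct comparison of the identical closed-form expressions.
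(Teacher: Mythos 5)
Your proposal is correct and matches the paper's own argument: the paper likewise observes that the formula in \Cref{thm:if80} depends only on $n$ and $\NumE$ (not on the edge set beyond connectivity), identifies the underlying graphs of $k$-BRA, $k$-UCCS, and $k$-UCCGS as a path, a complete bipartite graph, and a complete graph respectively, and concludes immediately. Your extra care about $0<\NumE<n$ for connectivity of $K_{\NumE,\,n-\NumE}$ and about the absence of double excitations is a sensible, if routine, addition.
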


With $\ket{\Psi^{\vec{R}}_{2,\infty}}$ characterized, we can calculate the \nth{2} moment of the cost function for any observables in the $k\to\infty$ limit.
The first case of the main result is stated formally as the following corollary.

\begin{corollary}[Main result, Case 1] \label{cor:7i0g}
  Let $\vec{R}$ be defined in \Cref{thm:if80}, $C(\vec{\uptheta};U^{\vec{R}}_{k},H_{\mathrm{el}})$ be the cost function defined in \Cref{eq:wtfm}.
  Here the observable $H_{\mathrm{el}}$ is an electronic structure Hamiltonian defined in \Cref{eq:dv4e}, with coefficients $\vec{h} =( h_{pq})_{p >q} ,\vec{g} =( g_{pqrs})_{p >q >r >s}$.
  We have
  \begin{equation}
    \lim_{k\to\infty}\Var{C(\vec{\uptheta};U^{\vec{R}}_{k},H_{\mathrm{el}})}=
    \norm{\vec{h}}^2_2 \frac{4\NumE( n-\NumE)}{n( n-1)( n+2)} +\norm{\vec{g}}^2_2
    \frac{2\binom{\NumE}{2}\binom{n-\NumE}{2}}{45\binom{n+2}{6}}.
  \end{equation}
\end{corollary}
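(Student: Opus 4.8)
The plan is to reduce the corollary to a finite combinatorial evaluation against the explicit moment vector of \Cref{thm:if80}. Since the cost function is unbiased, $\Var{C}=\E{C^{2}}$, and by \Cref{lem:csxv} at $t=2$ we have $\lim_{k\to\infty}\E{C^{2}}=\bra{H_{\mathrm{el}}}^{\otimes 2}\ket{\Psi^{\vec{R}}_{2,\infty}}$. Writing $H_{\mathrm{el}}=\sum_{p>q}h_{pq}O^{(1)}_{pq}+\sum_{p>q>r>s}g_{pqrs}O^{(2)}_{pqrs}$ as a sum of one- and two-body monomials, I would expand
\begin{equation}
\bra{H_{\mathrm{el}}}^{\otimes 2}\ket{\Psi^{\vec{R}}_{2,\infty}}=\sum_{i,j}c_i c_j\,\bigl(\bra{O_i}\otimes\bra{O_j}\bigr)\ket{\Psi^{\vec{R}}_{2,\infty}},
\end{equation}
so that the quantity becomes a sum of pairwise overlaps of vectorized monomials with the moment vector.

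First I would show every off-diagonal term $i\neq j$ vanishes. The moment vector lies in $\mathcal{H}_2^{\mathrm{even}}$ (\Cref{cor:5iu8}), so a term survives only if $\ket{O_i}\otimes\ket{O_j}$ has even Hamming weight on every site. For a one-body monomial the ``special'' sites are $\{p,q\}$ and for a two-body monomial they are $\{p,q,r,s\}$; at a special site the vectorized factor lies in $\{\ket{01},\ket{10}\}$ (odd weight), and at a non-special site in $\{\ket{00},\ket{11}\}$ (even weight). If $O_i$ and $O_j$ have different special-site sets, then at any site in the symmetric difference the combined length-$4$ site has odd weight, forcing orthogonality to $\mathcal{H}_2^{\mathrm{even}}$. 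Since the ordering $p>q>r>s$ makes the special-site set determine the monomial, only the diagonal terms survive, collapsing the expansion to $\norm{\vec{h}}_2^2\,M_1+\norm{\vec{g}}_2^2\,M_2$, where $M_1,M_2$ are the self-overlaps of a single one- and two-body monomial (to be shown index-independent).

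It then remains to evaluate $M_1=\bra{O^{(1)}_{pq}}^{\otimes 2}\ket{\Psi^{\vec{R}}_{2,\infty}}$ and $M_2=\bra{O^{(2)}_{pqrs}}^{\otimes 2}\ket{\Psi^{\vec{R}}_{2,\infty}}$. Using the succinct expansions of $\ket{O}^{\otimes 2}$ in \Cref{prp:q0g8}, the support of a one-body square meets only paired states of configuration $(a,1,0)$ and $(a,0,1)$ (the two special sites form one $X$-pair), while the two-body square meets $(a,2,0)$ and $(a,0,2)$. For each such $\ket{\Phi}$ the coefficient in $\ket{\Psi^{\vec{R}}_{2,\infty}}$ is read off from \Cref{thm:if80} as $D\,(-1)^{\operatorname{cr}(\ket{\Phi})}/[\binom{k+2}{2}\binom{k}{a,b,c}]$ with $k=a+b+c$, while the sign carried by $\ket{O}^{\otimes 2}$ on the $Z$-string (gap) sites is exactly the $(-1)^{c+d}$ factor of \Cref{prp:q0g8}. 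The crux is to verify that these two sign sources cancel via \Cref{lem:038o}, so that every surviving term enters with $+1$; the overlap then reduces to a weighted count of paired states of a given configuration, i.e. a single sum over $a$ of multinomial coefficients divided by $\binom{k+2}{2}\binom{k}{a,b,c}$.

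The main obstacle is this last combinatorial evaluation: after the sign cancellation one must resum the sums over $a$ to the closed forms $\frac{4\NumE(n-\NumE)}{n(n-1)(n+2)}$ and $\frac{2\binom{\NumE}{2}\binom{n-\NumE}{2}}{45\binom{n+2}{6}}$. I expect these to follow from Vandermonde-type identities, with the normalization $\ip{\openone^{\otimes 2}}{\Psi^{\vec{R}}_{2,\infty}}=1$ serving as a cross-check, but the bookkeeping of crossing-number parities across all paired states in the support, and confirming that $M_1,M_2$ are independent of the chosen indices $p,q,r,s$, is where the genuine work lies.
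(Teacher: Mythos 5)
Your proposal follows essentially the same route as the paper's own proof: unbiasedness reduces the variance to $\bra{H_{\mathrm{el}}}^{\otimes 2}\ket{\Psi^{\vec{R}}_{2,\infty}}$, the cross terms are killed by orthogonality to $\mathcal{H}_2^{\mathrm{even}}$ (with the same-index-set double-excitation subtlety handled, as in the paper, by the assumed form of $H_{\mathrm{el}}$ in \Cref{eq:dv4e}), and the diagonal terms are evaluated by pairing \Cref{prp:q0g8} against \Cref{thm:if80}, with the crossing-number parities of \Cref{lem:038o} cancelling the $Z$-string signs before the combinatorial resummation. The configurations you identify, $(a,1,0)/(a,0,1)$ for one-body and $(a,2,0)/(a,0,2)$ for two-body squares, and the resulting single sums over $a$ are exactly the sums the paper writes down and evaluates in closed form.
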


\begin{proof}
  Recall that
  \begin{multline}
    \Var{C(\vec{\uptheta};U^{\vec{R}}_{k},H_{\mathrm{el}})} =
    \E{C(\vec{\uptheta};
    U^{\vec{R}}_{k},H_{\mathrm{el}})^2} \\
    = \left(\bra{H_{\mathrm{el}}}\right)^{\otimes 2}
    \ket{\Psi^{\vec{R}}_{2,\infty}}
    =\sum_{p>q}h_{pq}^2\qty(\bra{\hat{a}_{p}^{\dagger } \hat{a}_{q} +h.c.
    })^{\otimes 2}\ket{\Psi^{\vec{R}}_{2,\infty}}                            \\
    + \sum_{p>q>r>s}g_{pqrs}^2
    \qty(\bra{\hat{a}_{p}^{\dagger} \hat{a}_{q}^{\dagger} \hat{a}_{r}
      \hat{a}_{s} +h.c.})^{\otimes 2}\ket{\Psi^{\vec{R}}_{2,\infty}}
    + \text{cross terms}.
  \end{multline}

  The contribution of cross terms is zero.
  To see that, take $\left(\bra{O_1}\otimes\bra{O_2}\right)\ket{\Psi^{\vec{R}}_{2,\infty}}$ with $O_1=\hat{a}_{p}^{\dagger} \hat{a}_{q} +h.c., O_2=\hat{a}_{p}^{\dagger} \hat{a}_{r} +h.c.
  $ ($q\ne r$) as an example.
  Notice that $P:=\prod_{i=1}^{n}\left(\frac{1}{2} \openone_{2^n}^{\otimes 2t}+\frac{1}{2} Z_{i}^{\otimes 2t}\right)$ is the orthogonal projection onto $\mathcal{H}^{\mathrm{even}}_{t}$.
  However,
  \begin{multline}
    P\left(\ket{O_1}\otimes\ket{O_2}\right) =P(O_1\otimes\openone_{2^n}\otimes O_2\otimes\openone_{2^n})\sum_{j,j'\in [2^n]} \ket{j,j,j',j'} \\ =\left(\prod_{i\in[n]\backslash\left\{q\right\}}\left(\frac{1}{2} \openone_{2^n}^{\otimes 4}+\frac{1}{2} Z_{i}^{\otimes 4}\right)\right)(O_1\otimes\openone_{2^n}\otimes O_2\otimes\openone_{2^n}) \\ \underbrace{\left(\frac{1}{2} \openone_{2^n}^{\otimes 4}-\frac{1}{2} Z_{q}^{\otimes 4}\right)\sum_{j,j'\in [2^n]} \ket{j,j,j',j'}}_{=0} = 0.
  \end{multline}
  Hence, $\ket{O_1}\otimes\ket{O_2}\in (\mathcal{H}^{\mathrm{even}}_{t})^{\perp}$.
  Consequently, $\left(\bra{O_1}\otimes\bra{O_2}\right)\ket{\Psi^{\vec{R}}_{2,\infty}}=0$ since $\ket{\Psi^{\vec{R}}_{2,\infty}}\in \mathcal{H}^{\mathrm{even}}_{t}$.
  The same argument can be extended to when one or both of $O_{1},O_{2}$ are double excitation Hermitians, except for the tricky case where both $O_{1},O_{2}$ are double excitation Hermitians and the sets of indices are the same.
  For example, when $O_{1} =\hat{a}_{1}^{\dagger}\hat{a}_{2}^{\dagger}\hat{a}_{3}\hat{a}_{4} +h.c.,O_{2} =\hat{a}_{1}^{\dagger}\hat{a}_{3}^{\dagger}\hat{a}_{2}\hat{a}_{4} +h.c\period$, $\left(\bra{O_1}\otimes\bra{O_2}\right)\ket{\Psi^{\vec{R}}_{2,\infty}}$ may not be 0.
  However, we have assumed that terms like $\hat{a}_{1}^{\dagger}\hat{a}_{3}^{\dagger}\hat{a}_{2}\hat{a}_{4} +h.c\period$ do not appear in $H_{\mathrm{el}}$ to rule out the non-zero cases since they are not essential.

  Finally, by \Cref{thm:if80} and \Cref{prp:q0g8},
  \begin{align}
    \qty(\bra{\hat{a}_{p}^{\dagger } \hat{a}_{q} +h.c.
    })^{\otimes 2}\ket{\Psi^{\vec{R}}_{2,\infty}}
     & =\frac{4\binom{n+4}{2}}{\binom{n}{\NumE}\binom{n+4}{\NumE+2}} \sum _{a=0}^{m-1} \frac{\binom{n-2}{2a}\binom{2a}{a}\binom{n-2-2a}{\NumE
    -1-a}} {\binom{a+3}{2}\binom{a+1}{1}}                                                                                                     \\
     & =\frac{4\NumE( n-\NumE)}{n( n-1)( n+2)},                                                                                               \\
    \qty(\bra{\hat{a}_{p}^{\dagger} \hat{a}_{q}^{\dagger} \hat{a}_{r}
      \hat{a}_{s} +h.c.})^{\otimes 2}\ket{\Psi^{\vec{R}}_{2,\infty}}
     & =
    \frac{4\binom{n+4}{2}}{\binom{n}{\NumE}\binom{n+4}{\NumE+2}}
    \sum _{a=0}^{m-2}
    \frac{\binom{n-4}{2a}\binom{2a}{a}\binom{n-4-2a}{\NumE -2-a}}
    {\binom{a+4}{2}\binom{a+2}{2}}                                                                                                            \\
     & =\frac{2\binom{\NumE}{2}\binom{n-\NumE}{2}}{45\binom{n+2}{6}}.
  \end{align}
  Here $m:=\min( \NumE ,n-\NumE)$.
\end{proof}

Notice that $\lim _{k\to \infty }\Var{C}=1/\operatorname{poly}(n)$ if $h_{pq},g_{pqrs}\in O(1)$.
\section{Proof of main result: Case 2}\label{app:mbe3}

In this section, we prove the polynomial/exponential concentration of the cost function for alternated dUCC \ansatze{} containing only qubit single excitation rotations, with a mild connectivity assumption (\Cref{thm:8pgp} (\Cref{itm:ka9j})).
Examples of such \ansatzes{} include $k$-qubit-UCCS and $k$-qubit-UCCGS.

\begin{theorem}
  \label{thm:2tcb}
  Let $G=(V,E)$ be a connected graph with $|V|=n$ vertices, $\vec{R}$ be a sequence of qubit single excitation rotations $(A^{\mathrm{qubit}}_{uv})_{(u,v)\in E}$.
  Define $m := \min\left\{\NumE, n-\NumE\right\}$.
  Depending on the structure of $G$, the moment vector $\ket{\Psi^{\vec{R}}_{2,\infty}}$ admits one of the following forms.
  \begin{enumerate}
    \item
          \label{itm:fwyh}
          If $G$ is a path or a ring, then $\ket{\Psi^{\vec{R}}_{2,\infty}} =S_{\pi} \ket{\Psi _{2,\infty }^{\mathrm{BRA}}}$ with $\pi\in \mathfrak{S}_n$ and $(\pi(i),\pi(i+1))\in E$ for all $i\in [n-1]$.

    \item
          \label{itm:y7p4}
          If $n=2 \NumE$, $G$ is bipartite, and both parts of $G$ have an even size (i.e.,
          let $V_1\cup V_2$ be the unique partition of vertices such that both $V_1$ and
          $V_2$ are independent sets, $|V_1|$ and $|V_2|$ are both even), then
          $\ket{\Psi^{\vec{R}}_{2,\infty}}$ can be written in the form of \Cref{eq:qydw},
          with
          \begin{equation}
            \label{eq:ztmg} c( a,b,c) =
            \begin{cases}
              D,                                                                  & \text{if } a=b=c=0,                                                    \\
              \frac{D}{3} ,                                                       & \text{if exactly one of } a,b,c\text{ is nonzero, and } a+b+c< \NumE , \\
              \frac{\binom{n}{\NumE} -2}{\binom{n}{\NumE} +2} \cdot \frac{D}{3} , & \text{if exactly one of } a,b,c\text{ is nonzero, and } a+b+c=\NumE ,  \\
              \frac{2}{\binom{n}{\NumE} +2} \cdot \frac{D}{3} ,                   & \text{if exactly two of } a,b,c\text{ is nonzero, and } a+b+c=\NumE ,  \\
              0,                                                                  & \text{otherwise} .
            \end{cases}
          \end{equation}
          And for $\ket{\Phi } \in \mathcal{S}_{2,( a,b,c)}^{\mathrm{paired}}$,
          \begin{equation}
            \label{eq:vbgx}
            \begin{split}
               & \operatorname{sign}\left(\ket{\Phi }\right) \\
               & =
              \begin{cases}
                -( -1)^{\sum _{p< q\in V_{1}} \Phi _{p} \odot \Phi _{q}} , &
                \text{if exactly two of } a,b,c\text{ is nonzero, and } a+b+c=\NumE , \\
                1,                                                         &
                \text{otherwise} .
              \end{cases}
            \end{split}
          \end{equation}
          Here we define $D:=\frac{3\binom{n}{\NumE} +6}{\binom{n}{\NumE}^{3} +4\binom{n}{\NumE}^{2}}$.

    \item
          \label{itm:tany}
          Otherwise, $\ket{\Psi^{\vec{R}}_{2,\infty}}$ is given in \Cref{eq:47vm}.
  \end{enumerate}
\end{theorem}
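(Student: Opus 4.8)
The plan is to pin down $\ket{\Psi^{\vec{R}}_{2,\infty}} = P_M\ket{\psi_0}^{\otimes 4}$ (\Cref{cor:9rac}) by exploiting connectivity of $G$: the vector lies in $M\cap\mathcal{H}_2^{\mathrm{paired}}$ and so admits the configuration-basis decomposition of \Cref{cor:h168}, reducing everything to (i) the sign function inside each class $\mathcal{S}^{\mathrm{paired}}_{2,(a,b,c)}$ and (ii) the coefficients $c(a,b,c)$. The decisive feature of qubit single excitations is that the relations of \Cref{lem:vio4} (\Cref{itm:rtjl}) carry no Jordan--Wigner $Z$-string, so the swap relation along an edge $(u,v)$ is $\ket{\Phi}-(-1)^{\Phi_u\odot\Phi_v}S_{uv}\ket{\Phi}$ with a purely local sign. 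First I would record two facts: this sign equals the adjacent-swap crossing-number parity of \Cref{lem:038o} (\Cref{itm:ggd9}) with $\vec{z}=0$, and inside any single-nonzero class every such sign is $+1$ (a direct check of the eight site states gives $\Phi_u\odot\Phi_v\equiv 0$ there).

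For \Cref{itm:fwyh} ($G$ a path or ring) I would relabel the sites by a Hamiltonian-path permutation $\pi$ so that every edge joins consecutive sites. On adjacent sites the empty $Z$-string makes $A^{\mathrm{qubit}}$ coincide with $A$, so the relations become exactly those of \Cref{lem:vio4} (\Cref{itm:xojb}) at $\vec{z}=0$; the construction of \Cref{thm:if80} then applies verbatim and yields the BRA moment vector, which transports back to the original labeling by $S_\pi$. For a ring the single extra edge closing the cycle contributes one more projector, and I would verify that $\tilde{A}^{\mathrm{qubit}}$ already stabilizes $\ket{\Psi^{\mathrm{BRA}}_{2,\infty}}$ --- equivalently that the sign flux around the unique cycle is $+1$ --- so path and ring give the same vector. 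Since the cases are tested in order, an even ring with even parts still falls under \Cref{itm:fwyh}, consistently with the polynomial scaling forced by maximum degree $2$.

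When $G$ is neither a path nor a ring it has a vertex of degree $\ge 3$, and I would prove the generic \Cref{itm:tany} by verifying condition (2) of \Cref{lem:vxe0}, i.e. that every class with at least two of $a,b,c$ nonzero lies in $M^\perp$. For bulk classes ($a+b+c<m$) the flip relation of \Cref{lem:vio4} (\Cref{itm:rtjl}) is available and, composed with swaps around a cycle through the branch vertex, returns $\ket{\Phi}$ to itself with accumulated sign $-1$; by the conflicting-sign argument of \Cref{cor:h168} this forces $\mathcal{S}^{\mathrm{paired}}_{2,(a,b,c)}\subseteq M^\perp$. For boundary classes ($a+b+c=m$), where $n^I_{00}$ or $n^I_{11}$ vanishes and the flip relation is unavailable, the same inconsistency must be produced from swaps alone, which I would show is possible precisely when $G$ is not bipartite with both parts even at $n=2\NumE$. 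With condition (2) established, \Cref{lem:vxe0} delivers the uniform expression \Cref{eq:47vm}.

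The genuine obstacle is the surviving family, \Cref{itm:y7p4}. When $n=2\NumE$ the boundary classes $a+b+c=\NumE$ have $n^I_{00}=n^I_{11}=0$, so the only configuration-changing relation is switched off and the boundary decouples from the bulk recurrence; for bipartite $G$ with $|V_1|,|V_2|$ even the remaining swap fluxes on the two-color boundary classes are consistent, so these classes survive. I would fix the sign to $\operatorname{sign}(\ket{\Phi})=-(-1)^{\sum_{p<q\in V_1}\Phi_p\odot\Phi_q}$ by tracking how swaps across the bipartition act on colored sites restricted to $V_1$, and then solve for the coefficients \Cref{eq:ztmg} from the bulk recurrence \Cref{eq:s88g}, its modification at the boundary, and the normalization $\ip{\openone^{\otimes 4}}{\Psi^{\vec{R}}_{2,\infty}}=1$. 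The hard part is the combinatorial bookkeeping that at once identifies exactly which boundary classes survive, proves the two-color sign is well defined under --- and only under --- the even-bipartite hypothesis, and checks the resulting coefficients against every swap and flip relation; this is where the constant $D$ and the factors $\tfrac{\binom{n}{\NumE}-2}{\binom{n}{\NumE}+2}$ and $\tfrac{2}{\binom{n}{\NumE}+2}$ emerge.
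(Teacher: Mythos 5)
Your overall route coincides with the paper's: case \ref{itm:fwyh} by relabeling along a Hamiltonian path, using that $A$ and $A^{\mathrm{qubit}}$ agree on adjacent sites so that \Cref{thm:if80} applies (plus a separate check that the edge closing a ring adds no new constraints on $\mathcal{H}_2^{\mathrm{paired}}$); case \ref{itm:tany} via condition (2) of \Cref{lem:vxe0}; and case \ref{itm:y7p4} by an explicit ansatz whose signs and coefficients are pinned down by the swap/flip relations and normalization (the paper's \Cref{lem:xg0j,lem:z3zj} and the recursive weight construction; note that the bulk recurrence you should cite is the unsigned qubit flip relation, not \Cref{eq:s88g}, which carries the crossing-number signs of the non-qubit case --- a minor imprecision).

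However, there is a genuine gap in your case-\ref{itm:tany} argument. You claim that for every graph outside cases \ref{itm:fwyh} and \ref{itm:y7p4}, the boundary classes ($a+b+c=\NumE$ at $n=2\NumE$, at least two of $a,b,c$ nonzero) can be forced into $M^{\perp}$ by a sign inconsistency ``from swaps alone,'' and that this works ``precisely when $G$ is not bipartite with both parts even.'' This fails for the subcase where $G$ is bipartite with one part of \emph{odd} size (e.g.\ a star $K_{1,n-1}$ at half filling). There every edge crosses the bipartition, and the swap relations of \Cref{lem:vio4} (\Cref{itm:rtjl}) induce a \emph{consistent} sign structure on each boundary class --- this is exactly what the paper proves in \Cref{eq:oyme}: $(-1)^{n_{a}(\ket{\Phi};V_{1})}\ket{\Phi}-(-1)^{n_{a}(\ket{\Phi'};V_{1})}\ket{\Phi'}\in M^{\perp}$. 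A consistent assignment only kills differences and leaves a one-dimensional survivor per class, so swaps (and the flips still available there) cannot place these classes in $M^{\perp}$. The paper closes this subcase with the $\left(\PermB_{\tau}\right)^{\otimes n}$-symmetry: for $\tau=(1\ 4)$, the permuted state $\ket{\Phi'}=\left(\PermB_{\tau}\right)^{\otimes n}\ket{\Phi}$ has configuration $(b,a,0)$ and satisfies $n_{a}(\ket{\Phi};V_{1})+n_{a}(\ket{\Phi'};V_{1})=|V_{1}|\equiv 1 \pmod 2$, so $\ket{\Phi}+\ket{\Phi'}\in M^{\perp}$ while $\ket{\Phi}-\ket{\Phi'}\in\left(\mathcal{H}_{2}^{\tau}\right)^{\perp}$, giving $\ket{\Phi}\in M^{\perp}+\left(\mathcal{H}_{2}^{\tau}\right)^{\perp}$. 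This is precisely why condition (2) of \Cref{lem:vxe0} is phrased with the larger space $\left(M\cap\mathcal{H}_{2}^{\mathrm{paired}}\cap\bigcap_{\tau}\mathcal{H}_{2}^{\tau}\right)^{\perp}$ rather than $M^{\perp}$; your plan never invokes this symmetry, so the odd-part bipartite graphs remain unaccounted for.
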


As an immediate corollary, we can determine the $(\vec{R},2,\infty)$-moment vector of $k$-qubit-UCCS and $k$-qubit-UCCGS, where the underlying graph $G$ is a complete bipartite graph and a complete graph, respectively.

\begin{corollary}
  $\ket{\Psi^{\mathrm{qUCCS}}_{2,\infty}}$ is determined by \Cref{thm:2tcb} (\ref{itm:y7p4}),
  and $\ket{\Psi^{\mathrm{qUCCGS}}_{2,\infty}}$ is determined by \Cref{thm:2tcb} (\ref{itm:tany}).
\end{corollary}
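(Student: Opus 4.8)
The plan is to read off the underlying single-excitation graph $G$ for each of the two \ansatzes{} and match its combinatorial structure against the three cases of \Cref{thm:2tcb}. No new analytic work is required: the entire content of the corollary is graph-theoretic, since \Cref{thm:2tcb} has already computed $\ket{\Psi^{\vec{R}}_{2,\infty}}$ for an arbitrary connected single-excitation graph.

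First I would identify $G$ from the definitions. For $k$-qubit-UCCS the sequence $\vec{R}$ consists of the rotations $A^{\mathrm{qubit}}_{pq}$ with $p>\NumE\geq q$, so every edge joins an occupied index $q\in\{1,\dots,\NumE\}$ to a virtual index $p\in\{\NumE+1,\dots,n\}$, and every such pair occurs; hence $G$ is the complete bipartite graph $K_{\NumE,\,n-\NumE}$. For $k$-qubit-UCCGS the rotations $A^{\mathrm{qubit}}_{pq}$ range over all $p>q$, so $G$ is the complete graph $K_n$. Both are connected whenever $\NumE\geq 1$ and $n-\NumE\geq 1$, so the hypothesis of \Cref{thm:2tcb} is met in either case.

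Next I would match $K_{\NumE,\,n-\NumE}$ to \Cref{thm:2tcb}(\ref{itm:y7p4}). A connected bipartite graph has a unique partition into two independent sets, here $(V_1,V_2)=(\{1,\dots,\NumE\},\{\NumE+1,\dots,n\})$. In the half-filling regime $n=2\NumE$ with $\NumE$ even, both parts have even size $\NumE$, so all three hypotheses of case (\ref{itm:y7p4})---namely $n=2\NumE$, bipartiteness, and even parts---hold. To conclude that this is the applicable case I would also rule out the alternatives: for $\NumE\geq 3$ the maximum degree of $K_{\NumE,\,n-\NumE}$ exceeds $2$, so it is neither a path nor a ring and case (\ref{itm:fwyh}) is excluded. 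For $K_n$ I would match the catch-all case (\ref{itm:tany}): for $n\geq 4$ every vertex has degree $n-1\geq 3$, excluding the path/ring case (\ref{itm:fwyh}), and $K_n$ contains a triangle for $n\geq 3$, so it is not bipartite and case (\ref{itm:y7p4}) is excluded; only case (\ref{itm:tany}) remains, giving $\ket{\Psi^{\mathrm{qUCCGS}}_{2,\infty}}$ the form of \Cref{eq:47vm}.

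The only genuine care needed is bookkeeping at the boundary of the parameter regime, and this is where I expect the one subtlety rather than any real obstacle. The smallest even-$\NumE$ half-filling instance is $\NumE=2$, $n=4$, where $K_{2,2}$ is simultaneously a $4$-cycle (so case (\ref{itm:fwyh})) and bipartite with even parts (so case (\ref{itm:y7p4})); here I would check that the two closed forms coincide, confirming the cases agree on their overlap, and note that the classification is unambiguous once $\NumE\geq 4$. Likewise for qUCCGS the small coincidence $n=3$ (where $K_3$ is a ring) must be set aside, the asymptotic statement being intended for $n\geq 4$. With these edge cases dispatched, both assertions follow directly from \Cref{thm:2tcb}.
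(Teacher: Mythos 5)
Your proposal is correct and takes essentially the same route as the paper, which treats this corollary as immediate from \Cref{thm:2tcb} once one observes that the single-excitation graphs of $k$-qubit-UCCS and $k$-qubit-UCCGS are the complete bipartite graph $K_{\NumE,n-\NumE}$ and the complete graph $K_n$, respectively. Your extra bookkeeping---making explicit the tacit hypotheses $n=2\NumE$ with both (even-sized) parts for the qUCCS case, ruling out the path/ring case by degree counting, and dispatching the degenerate instances $K_{2,2}$ (simultaneously a ring, where consistency of the two forms is anyway guaranteed because both cases of the theorem then describe the same vector $\ket{\Psi^{\vec{R}}_{2,\infty}}$) and $K_3$---is sound and, if anything, more careful than the paper, which leaves these conditions implicit.
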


With $\ket{\Psi^{\vec{R}}_{2,\infty}}$ characterized, we can calculate the \nth{2} moment of the cost function for any observables in the $k\to\infty$ limit.
Case 2 of the main result is stated formally as the following corollary.

\begin{corollary}[Main result, Case 2] \label{cor:e79n}
  Let $\vec{R},G$ be defined in \Cref{thm:2tcb}, and $C(\vec{\uptheta};U^{\vec{R}}_{k},H_{\mathrm{el}})$ be the cost function defined in \Cref{eq:wtfm}.
  Here the observable $H_{\mathrm{el}}$ is an electronic structure Hamiltonian defined in \Cref{eq:dv4e}, with coefficients $\vec{h} =( h_{pq})_{p >q} ,\vec{g} =( g_{pqrs})_{p >q >r >s}$.
  \begin{enumerate}
    \item
          \label{itm:ortd}
          If $G$ is a path or a ring, then $\lim_{k\to\infty}\Var{C}$ is the same as in \Cref{cor:7i0g}.
    \item
          \label{itm:agri}
          If $n=2 \NumE$, $G$ is bipartite, and both parts of $G$ have an even size, then,
          \begin{multline}
            \lim _{k\rightarrow \infty }\Var{C(\vec{\uptheta}
            ;U_{k}^{\vec{R}} ,H_{\mathrm{el}} )}=\sum _{p >q}
            h_{pq}^{2}\frac{4\binom{n-2}{\NumE -1}\qty[\binom{n}{\NumE} +2\qty(1-( -1)^{[
                        q\leq \NumE < p] +p-q})]}{\binom{n}{\NumE}^{3} +4\binom{n}{\NumE}^{2}}\\
            +\sum
            _{p >q >r >s} g_{pqrs}^{2}\frac{4\binom{n-4}{\NumE -2}\qty[\binom{n}{\NumE}
                +2\qty(1-( -1)^{[ q\leq \NumE < p\lor s\leq \NumE < r]
                    +p-q+r-s})]}{\binom{n}{\NumE}^{3} +4\binom{n}{\NumE}^{2}} .
          \end{multline}
          Here we define $[ P] =1$ if a proposition $P$ is true and $[ P] =0$ otherwise.
    \item
          \label{itm:v8lq}
          Otherwise,
          \begin{equation}
            \lim _{k\rightarrow \infty }\Var{C(\vec{\uptheta}
            ;U_{k}^{\vec{R}} ,H_{\mathrm{el}} )}=\norm{\vec{h}}_2^2\frac{4\binom{n-2}{\NumE
                -1}}{\binom{n}{\NumE}^{2} +2\binom{n}{\NumE}}
            +\norm{\vec{g}}_2^2\frac{4\binom{n-4}{\NumE -2}}{\binom{n}{\NumE}^{2}
              +2\binom{n}{\NumE}} .
          \end{equation}
  \end{enumerate}
\end{corollary}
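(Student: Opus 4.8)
The plan is to combine the explicit description of the infinite-depth moment vector in \Cref{thm:2tcb} with the succinct expansions of $\ket{O}^{\otimes 2}$ in \Cref{prp:q0g8}, reducing each claimed variance to a finite family of combinatorial overlaps. Since $\E{C}=0$, we have $\lim_{k\to\infty}\Var{C}=\bra{H_{\mathrm{el}}}^{\otimes 2}\ket{\Psi^{\vec{R}}_{2,\infty}}$, and expanding $H_{\mathrm{el}}$ over its monomials gives a sum of diagonal terms $h_{pq}^2\,\bra{\hat a_p^\dagger\hat a_q+h.c.}^{\otimes 2}\ket{\Psi^{\vec R}_{2,\infty}}$ and $g_{pqrs}^2\,\bra{\hat a_p^\dagger\hat a_q^\dagger\hat a_r\hat a_s+h.c.}^{\otimes2}\ket{\Psi^{\vec R}_{2,\infty}}$ plus cross terms. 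First I would dispose of the cross terms exactly as in the proof of \Cref{cor:7i0g}: any product of two monomials with differing index sets places $\ket{O_1}\otimes\ket{O_2}$ into $(\mathcal{H}^{\mathrm{even}}_2)^\perp$, and since $\ket{\Psi^{\vec R}_{2,\infty}}\in\mathcal{H}^{\mathrm{even}}_2$ (\Cref{cor:5iu8}) these overlaps vanish; the excluded configurations of $H_{\mathrm{el}}$ (no terms like $\hat a_1^\dagger\hat a_3^\dagger\hat a_2\hat a_4$) are precisely what rules out the surviving same-index cross terms. It then remains to evaluate the diagonal overlaps case by case.

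For \Cref{itm:ortd} (path or ring) I would use \Cref{thm:2tcb} (\ref{itm:fwyh}), which gives $\ket{\Psi^{\vec R}_{2,\infty}}=S_\pi\ket{\Psi^{\mathrm{BRA}}_{2,\infty}}$. Because the per-monomial overlaps computed in \Cref{cor:7i0g} are independent of the orbital indices $p,q,r,s$ — they depend on the two (resp. four) $X$-sites only through crossing-number parities, and the Jordan--Wigner $Z$-string contributes $(-1)^{c+d}=+1$ on every $\ket{I_{00}},\ket{I_{11}}$ site that can occur — the site relabeling $S_\pi$ merely reindexes the sum $\sum_{p>q}h_{pq}^2(\cdots)+\sum_{p>q>r>s}g_{pqrs}^2(\cdots)$ without changing its value. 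Hence the total variance coincides with that of \Cref{cor:7i0g}.

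For \Cref{itm:v8lq} and \Cref{itm:agri} I would substitute \Cref{prp:q0g8}. For a one-body $O=\hat a_p^\dagger\hat a_q+h.c.$, $\ket{O}^{\otimes2}$ is supported on paired states carrying exactly two $X$-sites, located at $p,q$, so they lie in configurations $(a,1,0)$ or $(a,0,1)$. In the generic case \Cref{itm:v8lq} the coefficients from \Cref{eq:47vm} are uniform and only the single-nonzero configurations $(0,1,0),(0,0,1)$ survive, each contributing $\binom{n-2}{\NumE-1}$ states over the four choices of $(c,d)$; summing gives the index-independent value $4\binom{n-2}{\NumE-1}/(\binom{n}{\NumE}^2+2\binom{n}{\NumE})$, and the two-body overlap is obtained the same way with $\binom{n-4}{\NumE-2}$. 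In the bipartite even case \Cref{itm:agri}, by contrast, \Cref{eq:ztmg} places extra weight on the boundary configurations with two nonzero entries summing to $\NumE$, so the one-body overlap also receives contributions from $(\NumE-1,1,0)$ and $(\NumE-1,0,1)$; here the nontrivial sign $\operatorname{sign}(\ket{\Phi})=-(-1)^{\sum_{p<q\in V_1}\Phi_p\odot\Phi_q}$ from \Cref{eq:vbgx} combines with the $Z$-string signs to produce the parity factor $(-1)^{[q\le\NumE<p]+p-q}$, yielding the stated closed form.

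The main obstacle is the sign bookkeeping in \Cref{itm:agri}: one must show that, after fixing the $X$-sites at $p,q$ and summing over all compatible paired states in the boundary configurations, the product of the bipartite sign $-(-1)^{\sum_{p<q\in V_1}\Phi_p\odot\Phi_q}$ with the Jordan--Wigner string $\prod_{b\in(q,p)}(-1)^{c_b+d_b}$ collapses to the single parity $[q\le\NumE<p]+p-q$ (and its two-body analogue $[q\le\NumE<p\lor s\le\NumE<r]+p-q+r-s$). This requires relating the bipartition $V_1\cup V_2$ to the occupied/virtual split at the Fermi level $\NumE$ and to the interval length $p-q$, and then checking that the resulting counts reduce to $\binom{n-2}{\NumE-1}$ and $\binom{n-4}{\NumE-2}$ with the correct prefactors from $D$. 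The underlying combinatorial identities are routine, but the parity tracking is delicate and is where all the case distinctions in \Cref{eq:ztmg,eq:vbgx} are actually consumed.
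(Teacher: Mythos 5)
Your overall strategy coincides with the paper's proof: dispose of the cross terms exactly as in \Cref{cor:7i0g}, then evaluate the two diagonal overlaps by pairing the moment vectors of \Cref{thm:2tcb} with the expansions in \Cref{prp:q0g8}. Your treatment of case~\ref{itm:v8lq} is correct and is what the paper does: the support of $\ket{O}^{\otimes 2}$ meets \Cref{eq:47vm} only in the configurations $(0,1,0),(0,0,1)$ (resp.\ $(0,2,0),(0,0,2)$ for the two-body term), where every $I$-site is $\ket{I_{00}}$ or $\ket{I_{11}}$, so the $Z$-string signs are trivial and the counts give the stated values.

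The two steps you defer, however, are exactly where the content sits, and neither goes through as you describe. For case~\ref{itm:ortd}, the claim that $S_\pi$ ``merely reindexes the sum'' is false. The index-independence in \Cref{cor:7i0g} comes from a cancellation between the $Z$-string signs --- which equal $-1$ on $\ket{I_{01}},\ket{I_{10}}$ sites, and such sites \emph{do} occur in the support, namely in every configuration $(a,1,0)$ or $(a,0,1)$ with $a>0$, contrary to your parenthetical --- and the crossing-number signs of \Cref{thm:if80}. That cancellation is tied to the linear ordering of sites and is destroyed when the moment vector is permuted by $S_\pi$ while the observable keeps its $Z$-string at fixed positions. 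Concretely, for the path with edges $(1,3),(3,2),(2,4)$ at $n=4$, $\NumE=2$, and $O=\hat{a}_{2}^{\dagger}\hat{a}_{1}+h.c.$, the overlap $\bra{O}^{\otimes 2}S_{23}\ket{\Psi^{\mathrm{BRA}}_{2,\infty}}$ equals $2/15$, not the $2/9$ of \Cref{cor:7i0g}; so this case requires a restriction on the embedding $\pi$, not a reindexing argument.

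For case~\ref{itm:agri}, the ``delicate but routine'' sign bookkeeping is the entire proof, and executing it does not land on the displayed formula. Using \Cref{eq:vbgx} together with the parities $X_{cd}\odot X_{\overline{c}\overline{d}}\equiv 0$, $X\odot I_{01}\equiv X\odot I_{10}\equiv 1$, and $\abs{V_{1}}$ even, the boundary configurations contribute with per-state sign $\operatorname{sign}\qty(\ket{\Phi})\cdot(-1)^{p-q-1}=(-1)^{[\,\abs{\{p,q\}\cap V_{1}}=1\,]+p-q}$. This depends on the bipartition $V_{1}\cup V_{2}$ of $G$, not intrinsically on the Fermi level: the indicator $[\,\abs{\{p,q\}\cap V_{1}}=1\,]$ equals $[q\le\NumE<p]$ only under the additional, unstated assumption that the bipartition is the occupied/virtual split (as for $k$-qubit-UCCS). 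Worse, even granting that assumption, the bracket evaluates to $\binom{n}{\NumE}+2\qty(1+(-1)^{[q\le\NumE<p]+p-q})$, with a plus sign on the correction, in conflict with the corollary as stated. You can check this at $n=4$, $\NumE=2$, $p=2$, $q=1$: both \Cref{eq:ztmg,eq:vbgx} and, independently, the ring route through \Cref{thm:2tcb} (\ref{itm:fwyh}) give $2/15$ for the overlap, whereas the corollary's formula gives $2/9$. So a faithful execution of your plan forces you to confront the role of the bipartition and a sign discrepancy with \Cref{thm:2tcb}; deferring this as routine leaves the proof incomplete, and no amount of parity tracking of the kind you sketch can produce the stated indicator from hypotheses that never mention how $V_{1}\cup V_{2}$ relates to the Fermi level.
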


\begin{proof}
  Similar to the proof of \Cref{cor:7i0g}, it suffices to evaluate $\left(\bra{O}^{\otimes 2}\right)\ket{\Psi _{2,\infty }^{\vec{R}}}$ for $O=\hat{a}_{p}^{\dagger }\hat{a}_{q} +h.c.
  $ and $O=\hat{a}_{p}^{\dagger }\hat{a}_{q}^{\dagger }\hat{a}_{r}\hat{a}_{s} +h.c.$.

  \Cref{itm:ortd}.
  Case 1 is already proved in \Cref{cor:7i0g}.

  \Cref{itm:agri}.
  Let $D=\frac{3\binom{n}{\NumE} +6}{\binom{n}{\NumE}^{3} +4\binom{n}{\NumE}^{2}}$.
  By \Cref{thm:2tcb} and \Cref{prp:q0g8},
  \begin{equation}
    \begin{split}
       & \qty(\bra{\hat{a}_{p}^{\dagger }\hat{a}_{q} +h.c.
      })^{\otimes 2}\ket{\Psi _{2,\infty }^{\vec{R}}}                          \\
       & =\binom{n-2}{\NumE -1} \cdot \frac{D}{3} -( -1)^{[ q\leq \NumE < p] +
      p-q}\binom{n-2}{\NumE -1} \cdot \frac{2}{\binom{n}{\NumE} +2}\frac{D}{3} \\
       & =\frac{4\binom{n-2}{\NumE -1}\qty[\binom{n}{\NumE} +2\qty(1-( -1)^{
              [ q\leq \NumE < p] +p-q })]}{\binom{n}{\NumE}^{3} +4\binom{n}{\NumE}^{2}},
    \end{split}
  \end{equation}
  \begin{equation}
    \begin{split}
       & \qty(\bra{\hat{a}_{p}^{\dagger }\hat{a}_{q}^{\dagger }\hat{a}_{r}\hat{a}_{s}
      +h.c.})^{\otimes 2}\ket{\Psi _{2,\infty }^{\vec{R}}}                            \\
       & =\binom{n-4}{\NumE -2} \cdot \frac{D}{3} -( -1)^{[ q\leq \NumE < p
              \lor s\leq \NumE < r] +p-q+r-s}\binom{n-4}{\NumE -2} \cdot \frac{2}{
      \binom{n}{\NumE} +2 }\frac{D}{3}                                                \\
       & =\frac{4\binom{n-4}{\NumE -2}\qty[\binom{n}{\NumE} +2\qty(1-( -1)^{
              [ q\leq \NumE < p\lor s\leq \NumE < r] +p-q+r-s })]}{\binom{n}
        {\NumE}^{3} +4\binom{n}{\NumE}^{2}} .
    \end{split}
  \end{equation}

  \Cref{itm:v8lq}.
  By \Cref{thm:2tcb} and \Cref{prp:q0g8},
  \begin{align}
    \qty(\bra{\hat{a}_{p}^{\dagger }\hat{a}_{q} +h.c.
    })^{\otimes 2}\ket{\Psi _{2,\infty }^{\vec{R}}}      & =\frac{4\binom{n-2}{\NumE -1}}
    {\binom{n}{\NumE}^{2} +2\binom{n}{\NumE}} ,                                           \\
    \qty(\bra{\hat{a}_{p}^{\dagger }\hat{a}_{q}^{\dagger }\hat{a}_{r}\hat{a}_{s}
    +h.c.})^{\otimes 2}\ket{\Psi _{2,\infty }^{\vec{R}}} & =\frac{4\binom{n-4}{\NumE
        -2}}{\binom{n}{\NumE}^{2} +2\binom{n}{\NumE}}.
  \end{align}
\end{proof}

Notice that if $\NumE=\Theta(n)$, $\lim _{k\to \infty }\Var{C}=1/\operatorname{poly}(n)$ in case \Cref{itm:ortd}, and $\lim _{k\to \infty }\Var{C}=\exp(-\Theta(n))$ in case \Cref{itm:agri,itm:v8lq}.

\subsection{Proof of Theorem \ref{thm:2tcb} (\ref{itm:fwyh})}

\begin{proof}[Proof of \Cref{thm:2tcb} (\Cref{itm:fwyh})]
  The basis rotation \ansatze{} consists of single excitation rotations acting on
  edges of a path that connects all neighboring qubits, i.e., $E=\qty{( i,i+1) | i\in [
        n-1]}$.
  When acting on two neighboring qubits, single excitations are equivalent to qubit single excitations ($\hat{a}_{i+1}^{\dagger }\hat{a}_{i} =Q_{i+1} Q_{i}$).
  Thus, when $G$ is a path connecting all neighboring qubits, $\ket{\Psi^{\vec{R}}_{2,\infty}}$ is exactly $\ket{\Psi _{2,\infty }^{\text{BRA}}}$.
  Suppose $G$ is an arbitrary path.
  Fix one of the two corresponding permutations $\pi \in \mathfrak{S}_{n}$, such that $( \pi ( i) ,\pi ( i+1)) \in E,\forall i\in [ n-1]$.
  By examining the proof of \Cref{thm:if80}, it can be checked that the moment vector remains unchanged if one replaces the initial state $\ket{\psi _{0}}^{\otimes 4}$ by any paired state in $\mathcal{S}_{2,(0,0,0)}^{\mathrm{paired}}$.
  Thus,
  \begin{equation}
    \label{eq:kb83}
    \begin{split}
      S_{\pi }\ket{\Psi _{2,\infty }^{\text{BRA}}} = S_{\pi }\lim _{k\rightarrow \infty }\prod _{l=1}^{k}\prod _{i=1}^{n-1}\tilde{A}_{i,i+1} \ket{\psi_{0}}^{\otimes 4} & = S_{\pi }\lim _{k\rightarrow \infty }\prod _{l=1}^{k}\prod _{i=1}^{n-1}\tilde{A}^{\mathrm{qubit}}_{i,i+1} \qty(S_{\pi }^{-1}\ket{\psi _{0}}^{\otimes 4})                      \\
                                                                                                                                                                        & =\lim _{k\rightarrow \infty }\prod _{l=1}^{k}\prod _{i=1}^{n-1}\tilde{A}^{\mathrm{qubit}}_{\pi ( i) ,\pi ( i+1)}\ket{\psi _{0}}^{\otimes 4} =\ket{\Psi^{\vec{R}}_{2,\infty}} .
    \end{split}
  \end{equation}

  To extend the conclusion to rings, it suffices to show the equality for the ``standard'' ring, i.e., $E=\{( i,i+1) | i\in [ n-1]\} \cup \{( 1,n)\}$, and argue for arbitrary rings similar to \Cref{eq:kb83}.
  Let $P_{M_{1}} =\tilde{A}_{1n} ,P_{M_{2}} =\tilde{A}_{1n}^{\mathrm{qubit}}$.
  It suffices to prove that $M_{1} \cap \mathcal{H}_{2}^{\mathrm{paired}} =M_{2} \cap \mathcal{H}_{2}^{\mathrm{paired}}$.
  If so, one can replace $A_{1n}^{\mathrm{qubit}}$ by $A_{1n}$ without changing $\E{C^{2}}$.
  Since $A_{i,i+1}^{\mathrm{qubit}}$ can also be replaced by $A_{i,i+1}$ for $i\in [ n-1]$, one concludes that $\ket{\Psi _{2,\infty }^{\vec{R}}} =\ket{\Psi _{2,\infty }^{\mathrm{BRA}}}$ according to \Cref{thm:if80}.
  Equivalently, we prove that the spanning sets are the same for $\left(M_{1} \cap \mathcal{H}_{2}^{\mathrm{paired}}\right)^{\perp } \cap \mathcal{H}_{2}^{\mathrm{paired}}$ and $\left(M_{2} \cap \mathcal{H}_{2}^{\mathrm{paired}}\right)^{\perp } \cap \mathcal{H}_{2}^{\mathrm{paired}}$.
  By comparing \Cref{itm:xojb} and \Cref{itm:rtjl} in \Cref{lem:vio4}, we need to prove for any $\ket{\Phi } \in \mathcal{S}_{2}^{\mathrm{paired}}$ with $\vec{z} :=\bigoplus _{1< i< n} \Phi _{i}$, that (1) $( \Phi _{1} \oplus \vec{z}) \odot ( \Phi _{n} \oplus \vec{z}) \equiv \Phi _{1} \odot \Phi _{n}\pmod{2}$, (2) if $\ket{\Phi _{1}} =\ket{I_{00}} ,\ket{\Phi _{n}} =\ket{I_{11}}$, then $z_{1} +z_{2} \equiv z_{1} +z_{3} \equiv z_{2} +z_{3}\pmod{2}$.

  \begin{enumerate}[(1)]
    \item
          Since $\ket{\Phi }$ is a paired state, we have $\ket{\vec{z} \oplus \Phi _{1}
              \oplus \Phi _{n}} \in \left\{\ket{I_{00}} ,\ket{I_{11}}\right\}$.
          Hence,
          \begin{equation}
            \Phi _{1} \odot \Phi _{n} \equiv ( \Phi _{1} \oplus
              (\vec{z} \oplus \Phi _{1} \oplus \Phi _{n})) \odot ( \Phi _{n} \oplus (\vec{z}
              \oplus \Phi _{1} \oplus \Phi _{n})) \equiv ( \Phi _{1} \oplus \vec{z}) \odot (
            \Phi _{n} \oplus \vec{z})\pmod{2}.
          \end{equation}

    \item
          Since $\ket{\Phi }$ is a paired state, and $\ket{\Phi _{1}} =\ket{I_{00}}
            ,\ket{\Phi _{n}} =\ket{I_{11}}$, we have $\ket{\vec{z}} \in \qty{\ket{I_{00}}
              ,\ket{I_{11}}}$.
          Hence, $z_{1} +z_{2} \equiv z_{1} +z_{3} \equiv z_{2} +z_{3}\pmod{2}$.
  \end{enumerate}
\end{proof}
\subsection{Proof of Theorem \ref{thm:2tcb} (\ref{itm:y7p4})}

\begin{lemma}
  \label{lem:xg0j}
  Let $G,\operatorname{sign},c$ be defined in \Cref{thm:2tcb} (\Cref{itm:y7p4}).
  Suppose $( u,v)$ is an edge of $G$.
  For any $\ket{\Phi } \in \mathcal{S}_{2,(a,b,c)}^{\mathrm{paired}}$, if $c(a,b,c)\neq 0$, then $\operatorname{sign}\left(S_{uv}\ket{\Phi }\right) =( -1)^{\Phi _{u} \odot \Phi _{v}}\operatorname{sign}\left(\ket{\Phi }\right)$.
\end{lemma}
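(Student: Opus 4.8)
The plan is to reduce the identity to a parity computation and then case-split on the configuration $(a,b,c)$, exploiting that $S_{uv}$ leaves the configuration invariant and that $(u,v)$ is a cut edge of the bipartite graph $G$. First I would record two structural facts. Since swapping two sites merely permutes the multiset of site states, $S_{uv}\ket{\Phi}$ lies in the same $\mathcal{S}_{2,(a,b,c)}^{\mathrm{paired}}$ as $\ket{\Phi}$, so the hypothesis $c(a,b,c)\neq 0$ applies to both and they fall in the same clause of the sign formula \Cref{eq:vbgx}. Moreover, because $G$ is bipartite with parts $V_1,V_2$ and $(u,v)\in E$, exactly one endpoint lies in $V_1$; call it $w_0$ and the other $w_1$, so $\{u,v\}=\{w_0,w_1\}$.

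For the configurations where $\operatorname{sign}\equiv 1$ --- namely $(0,0,0)$ and those with exactly one of $a,b,c$ nonzero --- the claim reduces to showing $(-1)^{\Phi_u\odot\Phi_v}=1$. In each such case every site state is drawn from an explicit four-element subset of $\{\ket{I_{ab}},\ket{X_{ab}}\}$ (for instance $\{\ket{I_{00}},\ket{I_{11}},\ket{I_{01}},\ket{I_{10}}\}$ when only $a\neq 0$), and a direct check shows that any two of these four bit strings have an even $\odot$. Hence $\Phi_u\odot\Phi_v$ is even and both sides equal $1$.

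The substance lies in the remaining case: exactly two of $a,b,c$ nonzero with $a+b+c=\NumE$, where $\operatorname{sign}(\ket{\Phi})=-(-1)^{\sum_{p<q\in V_1}\Phi_p\odot\Phi_q}$. Here $\NumE-a-b-c=0$, so there are no $\ket{I_{00}},\ket{I_{11}}$ sites and every site belongs to a fixed set of four weight-$2$ strings. I would compute the ratio $\operatorname{sign}(S_{uv}\ket{\Phi})/\operatorname{sign}(\ket{\Phi})=(-1)^{\Delta}$, where $\Delta$ is the change in $\sum_{p<q\in V_1}\Phi_p\odot\Phi_q$; since only the $V_1$-site $w_0$ is altered (to $\Phi_{w_1}$), bilinearity of $\odot$ over $\mathbb{F}_2$ gives $\Delta\equiv\sum_{w\in V_1,\,w\neq w_0}(\Phi_u\oplus\Phi_v)\odot\Phi_w$. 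Writing $\vec{s}:=\Phi_u\oplus\Phi_v$ and reinstating the $w=w_0$ term (which contributes $\Phi_u\odot\Phi_v$, using that $\Phi_{w_0}$ has even weight) completes the sum over all of $V_1$ and yields $\Delta\equiv\Phi_u\odot\Phi_v+\vec{s}\odot\bigoplus_{w\in V_1}\Phi_w\pmod 2$.

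It then remains to show $\vec{s}\odot\bigoplus_{w\in V_1}\Phi_w\equiv 0$, which is where the hypotheses $n=2\NumE$, $G$ bipartite, and $|V_1|$ even enter, and which I expect to be the only delicate point. The vector $\vec{s}$ is a pairwise XOR of the four admissible weight-$2$ strings, so either $\vec{s}\in\{0000,1111\}$ or $\vec{s}$ is one of the two weight-$2$ strings outside the admissible set. For $\vec{s}=0000$ the inner product vanishes, and for $\vec{s}=1111$ it equals the (even) Hamming weight of $\bigoplus_{w\in V_1}\Phi_w$, hence $0$. For the two remaining values of $\vec{s}$ one checks that $\vec{s}\odot\Phi_w=1$ for every admissible site state $\Phi_w$; since all $|V_1|$ sites are admissible, $\vec{s}\odot\bigoplus_{w\in V_1}\Phi_w\equiv|V_1|\equiv 0\pmod 2$ precisely because $|V_1|$ is even. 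Therefore $\Delta\equiv\Phi_u\odot\Phi_v$ and $\operatorname{sign}(S_{uv}\ket{\Phi})=(-1)^{\Phi_u\odot\Phi_v}\operatorname{sign}(\ket{\Phi})$, as claimed; the two nontrivial-$\vec{s}$ subcases for each of the three choices of which pair of $a,b,c$ is nonzero are symmetric and are handled identically.
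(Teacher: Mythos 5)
Your proposal is correct and follows essentially the same route as the paper's own proof: both reduce the sign ratio to the parity of $(\Phi_u\oplus\Phi_v)\odot\left(\text{XOR of the }V_1\text{ site states}\right)$, case-split on the configuration clause of the sign formula, and invoke the same three facts (even Hamming weight of sites, pairwise $\odot$ parities of the admissible strings, and evenness of $|V_1|$). Your only departure is organizational: you complete the sum over all of $V_1$ and evaluate $\vec{s}\odot\Phi_w$ term by term to get $|V_1|\equiv 0 \pmod 2$, whereas the paper instead characterizes $\bigoplus_{w\in V_1\setminus\{u\}}\Phi_w$ as one of the four admissible strings --- the same computation in slightly different bookkeeping.
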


\begin{proof}
  Let $\ket{\Phi '} =S_{uv}\ket{\Phi }$.
  Recall that $G$ is a bipartite graph, and the two parts $V_{1} ,V_{2}$ of $G$ are both even.
  Assume without loss of generality that $u\in V_{1},v\in V_2$.
  By definition, $\operatorname{sign}\left(\ket{\Phi }\right) /\operatorname{sign}\left(\ket{\Phi '}\right) =( -1)^{( \Phi _{u} \oplus \Phi _{v}) \odot \bigoplus _{w\in V_{1} \backslash \{v\}} \Phi _{w}}$.
  We prove by comparing $\operatorname{sign}\left(\ket{\Phi }\right) /\operatorname{sign}\left(\ket{\Phi '}\right)$ and $( -1)^{\Phi _{u} \odot \Phi _{v}}$.

  \begin{itemize}
    \item
          If $\Phi_u=\bar{\Phi}_v$ or $\Phi_u=\Phi_v$, then
          $\Phi _{u} \oplus \Phi _{v} \in \left\{\ket{I_{00}} ,\ket{I_{11}}\right\}$.
          Hence, $\operatorname{sign}\left(\ket{\Phi }\right) /\operatorname{sign}\left(\ket{\Phi '}\right)=1$.
          Meanwhile, $(-1)^{\Phi_u\odot \Phi_v}=1$.

    \item
          If one of $\Phi_u,\Phi_v$ is $\ket{I_{00}}$ or $\ket{I_{11}}$, then by the
          definition of $c( a,b,c)$, at most one of $a,b,c$ is non-zero.
          In such case, we have $\Phi _{p} \odot \Phi _{q} \equiv 0\pmod{2}$ for any $p,q\in V$.
          Hence, $\operatorname{sign}\left(\ket{\Phi}\right) /\operatorname{sign}\left(\ket{\Phi '}\right) =1$.
          Meanwhile, $( -1)^{\Phi _{u} \odot \Phi _{v}} =1$.

    \item
          Otherwise, it must be the case that exactly two of $a,b,c$ are non-zero,
          and $\Phi_u \notin \left\{\Phi_v,\bar{\Phi}_v\right\}$.
          We have $\operatorname{sign}\left(\ket{\Phi }\right) /\operatorname{sign}\left(\ket{\Phi '}\right)=-1$, since $\bigoplus _{w\in V_{1} \backslash u} \Phi _{w}$ must be one of $\Phi _{u} ,\overline{\Phi }_{u} ,\Phi _{v} ,\overline{\Phi }_{v}$, by the fact that $|V_1|$ is even.
          Meanwhile, $( -1)^{\Phi _{u} \odot \Phi _{v}} =-1$.
  \end{itemize}
\end{proof}

\begin{lemma}
  \label{lem:z3zj}
  Let $G,\operatorname{sign},c$ be defined in \Cref{thm:2tcb} (\Cref{itm:y7p4}).
  Suppose $( u,v)$ is an edge of $G$.
  For any $\ket{\Phi^z} \in \mathcal{S}_{2,(a,b,c)}^{\mathrm{paired}}$, with $\ket{\Phi^z_{u}} =\ket{I_{00}} ,\ket{\Phi^z_{v}} =\ket{I_{11}}$, let $\ket{\Phi^a} =F_{uv}^{12}\ket{\Phi^z} ,\ket{\Phi^b} =F_{uv}^{13}\ket{\Phi^z} ,\ket{\Phi^c} =F_{uv}^{23}\ket{\Phi^z}$.
  We have
  \begin{multline}
    \label{eq:lonl} \operatorname{sign}\left(\ket{\Phi^z}\right) c( a,b,c) =\operatorname{sign}\left(\ket{\Phi^a}\right) c( a+1,b,c) \\
    +\operatorname{sign}\left(\ket{\Phi^b}\right) c( a,b+1,c)
    +\operatorname{sign}\left(\ket{\Phi^c}\right) c( a,b,c+1) .
  \end{multline}
\end{lemma}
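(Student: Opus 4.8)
The plan is to verify the identity \eqref{eq:lonl} by a finite case analysis on the configuration $(a,b,c)$ of $\ket{\Phi^z}$, after first extracting the crucial constraint hidden in the hypotheses. Since $\ket{\Phi^z_u}=\ket{I_{00}}$ and $\ket{\Phi^z_v}=\ket{I_{11}}$ force $n^I_{00}(\ket{\Phi^z}),n^I_{11}(\ket{\Phi^z})\ge 1$, and since $n=2\NumE$ gives $n^I_{00}=n^I_{11}=\NumE-a-b-c$, the hypothesis implies $a+b+c\le \NumE-1$. Consequently $\ket{\Phi^z}$ is never of the ``exactly two nonzero, sum $\NumE$'' type, so by \Cref{eq:vbgx} we have $\operatorname{sign}(\ket{\Phi^z})=1$, and by \Cref{eq:ztmg} the left-hand side of \eqref{eq:lonl} is simply $c(a,b,c)\in\{D,\,D/3,\,0\}$. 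Reading off the $8$-state table (exactly as in the proof of \Cref{lem:vxe0}), I would record that $F^{12}_{uv},F^{13}_{uv},F^{23}_{uv}$ send the pair $(\ket{I_{00}},\ket{I_{11}})$ at sites $(u,v)$ to $(\ket{I_{10}},\ket{I_{01}})$, $(\ket{X_{11}},\ket{X_{00}})$, $(\ket{X_{01}},\ket{X_{10}})$ respectively, so that $\ket{\Phi^a},\ket{\Phi^b},\ket{\Phi^c}$ carry configurations $(a+1,b,c),(a,b+1,c),(a,b,c+1)$.

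First I would dispatch the routine cases. When $(a,b,c)=(0,0,0)$, or when exactly one of $a,b,c$ is nonzero with sum $\le \NumE-2$, the three target configurations all have sign $1$ and the coefficient arithmetic of \Cref{eq:ztmg} gives the identity immediately (here the structural fact that both parts of $G$ are even and nonempty forces $\NumE\ge 2$, ruling out the degenerate $\NumE=1$ boundary). When two or three of $a,b,c$ are nonzero with sum $\le\NumE-2$, both sides vanish. This leaves exactly two genuine cases: (ii) exactly one of $a,b,c$ equal to $\NumE-1$, where \eqref{eq:lonl} reduces—after cancelling the factor $D/(3(\binom{n}{\NumE}+2))$—to the requirement that the two ``two-nonzero, sum $\NumE$'' neighbours both have $\operatorname{sign}=+1$; and (iii) two of $a,b,c$ nonzero with sum $\NumE-1$, where the identity reduces to the requirement that the two surviving ``two-nonzero, sum $\NumE$'' neighbours have \emph{opposite} signs.

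The heart of the argument, and the step I expect to be the main obstacle, is the parity bookkeeping that resolves cases (ii) and (iii). Writing $\sigma(\ket{\Phi}):=\sum_{p<q\in V_1}\Phi_p\odot\Phi_q \bmod 2$, I would use $\mathbb{F}_2$-bilinearity of $\odot$: flipping a single site $w\in V_1$ by an $\mathbb{F}_2$-mask $\mathbf{m}$ changes $\sigma$ by $\sum_{q\in V_1\setminus\{w\}}\mathbf{m}\odot\Phi^z_q$. Since $F^{\cdot\cdot}_{uv}$ acts identically at $u$ and $v$ but only the one of $u,v$ lying in $V_1$ contributes to $\sigma$, each relevant sign difference becomes $\sum_{q\in V_1\setminus\{w\}}\mathbf{m}\odot\Phi^z_q$ with $w$ the $V_1$-endpoint of the edge and $\mathbf{m}$ the XOR of the two flip-masks involved ($1010$, $0110$, or $1100$ depending on the subcase). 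The decisive elementary observation, checkable directly from the $8$-state table, is that in each relevant configuration \emph{every} state occurring at a site other than $u,v$ has $\mathbf{m}\odot(\cdot)=1$; hence the sum equals $|V_1\setminus\{w\}|=|V_1|-1$, which is \emph{odd} precisely because $|V_1|$ is even. This odd parity yields $\operatorname{sign}=+1$ in case (ii) and opposite signs in case (iii), completing the identity. I would flag that this is exactly the point where the ``both parts even'' hypothesis of \Cref{thm:2tcb}~(\ref{itm:y7p4}) is indispensable, and that the symmetric subcases (the distinguished component being $b$ or $c$ rather than $a$, and $u\in V_2$ rather than $u\in V_1$) follow by the same computation with the analogous mask, mirroring the swap-based reasoning already used in \Cref{lem:xg0j}.
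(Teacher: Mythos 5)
Your proposal is correct and takes essentially the same route as the paper's proof: the same case analysis on $(a,b,c)$, the same reduction of the two nontrivial cases to sign identities for the ``two-nonzero, sum $\NumE$'' neighbours, and the same parity argument resting on $|V_1|$ being even --- your XOR-mask bookkeeping is exactly the computation the paper performs directly (compare its evaluation of $\operatorname{sign}\left(\ket{\Phi^a}\right)/\operatorname{sign}\left(\ket{\Phi^b}\right)$ via $(\Phi^a_v\oplus\Phi^b_v)\odot\bigoplus_{w\in V_1\setminus\{v\}}\Phi^a_w$). One gloss to make explicit: in your case (ii) the masked-flip formula yields the parity of $\sigma$ of the neighbours only \emph{relative} to $\sigma(\ket{\Phi^z})$, and since $\operatorname{sign}(\ket{\Phi^z})=1$ holds by the ``otherwise'' clause of the definition rather than via $\sigma$, you must additionally record that $\sigma(\ket{\Phi^z})\equiv 0$ (all pairwise dots among the remaining sites, which lie in $\{I_{01},I_{10}\}$, $\{X_{00},X_{11}\}$, or $\{X_{01},X_{10}\}$, are even) before concluding $\operatorname{sign}=+1$; this is the same routine eight-state table check you already invoke, and it is the counterpart of the paper's observation that the cross terms among those sites vanish mod 2.
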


\begin{proof}
  We prove this by evaluating the values of $\operatorname{sign}$ and $c$ in \Cref{eq:lonl}.
  \begin{itemize}
    \item
          If $a=b=c=0$, the signs of $\ket{\Phi^z} ,\ket{\Phi^a} ,
            \ket{\Phi^b} ,\ket{\Phi^c}$ are all 1, and \Cref{eq:lonl}
          follows from the equality $D=\frac{D}{3} +\frac{D}{3} +\frac{D}{3}$.

    \item
          If exactly one of $a,b,c$ is non-zero, and $a+b+c< \NumE -1$,
          the signs of $\ket{\Phi^z} ,\ket{\Phi^a} ,\ket{\Phi^b} ,
            \ket{\Phi^c}$ are all 1, and \Cref{eq:lonl} follows from the
          equality $\frac{D}{3} =\frac{D}{3} +0+0$.

    \item
          If exactly one of $a,b,c$ is non-zero, and $a+b+c=\NumE -1$,
          the signs of $\ket{\Phi^z} ,\ket{\Phi^a} ,\ket{\Phi^b} ,
            \ket{\Phi^c}$ are all 1.
          To see that, we may assume without loss of generality that $u\in V_{1}$ and $v\in V_{2}$, and $a=\NumE -1$.
          By definition, $\operatorname{sign}\left(\ket{\Phi^z}\right)
            =\operatorname{sign}\left(\ket{\Phi^a}\right) =1$, and
          \begin{equation}
            \operatorname{sign}\left(\ket{\Phi^b}\right) =-( -1)^{\sum _{p< q\in V_{1}} \Phi^b_{p} \odot \Phi^b_{q}} =-( -1)^{\Phi^b_{u} \odot \bigoplus _{w\in
              V_{1} \backslash u} \Phi^b_{w}} =1,
          \end{equation}
          since $\ket{\Phi^b_{v}} \in \left\{\ket{X_{00}} ,\ket{X_{11}}\right\}$ and $\ket{\Phi^b_{w}} \in
            \left\{\ket{I_{01}} ,\ket{I_{10}}\right\}$ for $w\in V_{1} \backslash \{v\}$, and
          $|V_{1} |$ is even.
          Similarly, $\operatorname{sign}\left(\ket{\Phi^c}\right) =1$.
          \Cref{eq:lonl} follows from the equality
          \begin{equation}
            \frac{D}{3} =\frac{2}{\binom{n}{\NumE} +2} \cdot \frac{D}{3} +
            \frac{2}{\binom{n}{\NumE} +2} \cdot \frac{D}{3} +
            \frac{\binom{n}{\NumE} -2}{\binom{n}{\NumE} +2} \cdot \frac{D}{3} .
          \end{equation}

    \item
          Otherwise, we may assume that exactly two of $a,b,c$ are non-zero,
          and $a+b+c=\NumE -1$, since in other cases the coefficients of
          $\ket{\Phi^z} ,\ket{\Phi^a} ,\ket{\Phi^b} ,\ket{\Phi^c}$
          are all zero.
          Assume without loss of generality that $c=0$, $v\in V_{1}$, and $u\in V_{2}$.
          By definition,
          \begin{equation}
            \label{eq:960l}
            \operatorname{sign}\left(\ket{\Phi^a}\right) /\operatorname{sign}\left(\ket{\Phi^b}\right) =( -1)^{\left(\Phi^a_{v} \oplus \Phi^b_{v}\right) \odot \bigoplus _{w\in
              V_{1} \backslash \{v\}} \Phi^a_{w}} =-1,
          \end{equation}
          since $\ket{\Phi^a_{v}} \in \left\{\ket{I_{01}} ,\ket{I_{10}}\right\}$, $\ket{\Phi^b_{v}} \in
            \left\{\ket{X_{00}} ,\ket{X_{11}}\right\}$, $\ket{\Phi^a_{w}} \in \qty{\ket{I_{01}}
              ,\ket{I_{10}} ,\ket{X_{00}} ,\ket{X_{11}}}$ for $w\in V_{1} \backslash \{v\}$,
          and $|V_{1} |$ is even.
          \Cref{eq:lonl} follows from the equality
          \begin{equation}
            0=\frac{2}{\binom{n}{\NumE} +2} \cdot \frac{D}{3} -\frac{2}{
              \binom{n}{\NumE} +2} \cdot \frac{D}{3} +0.
          \end{equation}
  \end{itemize}
\end{proof}

\begin{lemma}
  \label{lem:vsdf}
  Let $G=(V,E)$ be a simple, connected graph with $|V|=n$ vertices and $\max_{v\in V}\deg( v) \geqslant 3$.
  Suppose $\vec{R}$ contains a subsequence of qubit single excitation rotations $(A_{uv}^{\mathrm{qubit}} )_{(u,v)\in E}$.
  Let $M$ be the intersection space of $\vec{R}$ defined in \Cref{cor:9rac} at $t=2$.
  For any $\ket{\Phi } \in \mathcal{S}_{2,( a,b,c)}^{\mathrm{paired}}$, if one of the following two cases happens, then $\ket{\Phi } \in M^{\perp }$.
  \begin{enumerate}
    \item
          Two of $a,b,c$ are non-zero and $2( a+b+c) < n$.
    \item
          All of $a,b,c$ are non-zero.
  \end{enumerate}
\end{lemma}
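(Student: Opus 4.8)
The plan is to prove that each such $\ket{\Phi}$ lies in $M^{\perp}$ by the same frustration mechanism that underlies \Cref{cor:h168} and \Cref{lem:dbc5}, but now exploiting the fact that the sign attached to a qubit single-excitation swap \emph{cannot} be realized as a coboundary once $G$ has a branch vertex. Concretely, by \Cref{lem:vio4} (\Cref{itm:rtjl}) together with \Cref{lem:hpti} (\Cref{itm:52de}), for every edge $(u,v)\in E$ and every paired state $\ket{\Psi}$ one has $\ket{\Psi}-(-1)^{\Psi_u\odot\Psi_v}S_{uv}\ket{\Psi}\in M^{\perp}$. Chaining such relations along a sequence of edge-swaps $\ket{\Phi}=\ket{\Phi^{(0)}},\dots,\ket{\Phi^{(T)}}$ with $\ket{\Phi^{(t)}}=S_{u_tv_t}\ket{\Phi^{(t-1)}}$ and telescoping with the accumulated signs gives $\ket{\Phi^{(0)}}-(-1)^{\sigma}\ket{\Phi^{(T)}}\in M^{\perp}$, where $\sigma=\sum_t \Phi^{(t-1)}_{u_t}\odot\Phi^{(t-1)}_{v_t}$. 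Hence if I can exhibit a \emph{closed} loop ($\ket{\Phi^{(T)}}=\ket{\Phi}$) along edges of $G$, through paired states, with $\sigma$ odd, then $2\ket{\Phi}\in M^{\perp}$ and we are done; swaps permute sites and therefore keep us in $\mathcal{H}_2^{\mathrm{paired}}$ throughout, so the relations apply at each step.

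The whole problem thus reduces to producing a sign-$(-1)$ loop, and the observation that guides the construction is that the qubit swap sign differs from the \emph{consistent} crossing-number sign of the non-qubit case by an explicitly computable term. Using $\vec z\odot\vec z\equiv 0$ (each site has even Hamming weight) and \Cref{lem:038o} (\Cref{itm:ggd9}), one checks $\Phi_u\odot\Phi_v\equiv\big(\operatorname{cr}(S_{uv}\ket{\Phi})-\operatorname{cr}(\ket{\Phi})\big)+(\Phi_u\oplus\Phi_v)\odot\vec z \pmod 2$, with $\vec z:=\bigoplus_{a\in(u,v)}\Phi_a$. Around any closed loop the crossing-number contribution telescopes away, so the holonomy is governed entirely by $\sum_t(\Phi^{(t-1)}_{u_t}\oplus\Phi^{(t-1)}_{v_t})\odot\vec z^{(t)}$, a quantity measuring the interplay between the \emph{graph} edges and the \emph{linear} site order. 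For a path or ring compatible with that order every edge joins consecutive sites, $\vec z=\emptyset$, the correction vanishes, and no frustration can occur — exactly the dichotomy recorded in \Cref{thm:2tcb} (\Cref{itm:fwyh}). Once $\max_v\deg(v)\ge 3$, however, in any linear order at least one edge at the branch vertex must skip over intermediate sites, making $\vec z$ nonempty and opening the way to nonzero holonomy.

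The core construction is therefore to route tokens around the branch vertex so that such a skipping edge is traversed while differently-colored sites occupy the skipped interval, closing the loop with order-respecting swaps that contribute nothing. The two counting hypotheses supply precisely the ingredients: when all of $a,b,c$ are nonzero all three color classes $\{\ket{I_{01}},\ket{I_{10}}\}$, $\{\ket{X_{00}},\ket{X_{11}}\}$, $\{\ket{X_{01}},\ket{X_{10}}\}$ are simultaneously available to populate the interval and the endpoints, whereas when only two are nonzero the hypothesis $2(a+b+c)<n$ guarantees spare uncolored sites $\ket{I_{00}},\ket{I_{11}}$ that act as a buffer for transporting a colored token through the branch and back. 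In each regime I would fix a convenient representative $\ket{\Phi}$, write down the explicit swap loop around the degree-$3$ vertex (optionally invoking the flip vectors of \Cref{lem:vio4} (\Cref{itm:rtjl}) when recoloring a site is cleaner than transporting one), verify $\sigma\equiv 1$, and then propagate to an arbitrary member of the class by order-respecting swaps — since $\ket{\Phi}\equiv\pm\ket{\Phi'}\pmod{M^{\perp}}$ for states of the same configuration (as in \Cref{lem:dbc5}), one concludes $\mathcal{S}_{2,(a,b,c)}^{\mathrm{paired}}\subseteq M^{\perp}$.

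The main obstacle is precisely this explicit gadget and its parity bookkeeping: the loop must be arranged so that the different-color transpositions come out odd in \emph{every} sub-case, the degenerate boundaries (e.g.\ $a+b+c$ close to $\NumE$ or $n-\NumE$, where colored and uncolored sites are scarce and the buffer argument is tightest) must be handled separately, and one must confirm that the branch edge genuinely forces a nonvanishing $(\Phi_u\oplus\Phi_v)\odot\vec z$ rather than one that cancels along the loop. I expect the cleanest route is to reduce every case to a single elementary frustrated loop around one degree-$3$ vertex involving at most four sites, prove that this loop has odd holonomy whenever two distinct colors meet the skipped interval, and then invoke connectivity (as in \Cref{cor:9rac} and \Cref{lem:dbc5}) to spread the local frustration across the whole configuration class.
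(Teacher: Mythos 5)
Your skeleton is exactly the paper's: the edge relations from \Cref{lem:vio4} (\Cref{itm:rtjl}), the telescoped chain giving $\ket{\Phi^{(0)}}-(-1)^{\sigma}\ket{\Phi^{(T)}}\in M^{\perp}$, the observation that a closed loop with odd $\sigma$ yields $2\ket{\Phi}\in M^{\perp}$, and the propagation across a configuration class by connectivity (as in \Cref{cor:h168}, \Cref{lem:dbc5}) are all precisely the ingredients of the paper's proof. But there is a genuine gap: you never exhibit a frustrated loop, and the existence of one is the entire content of the lemma. The paper's proof is essentially nothing but this gadget: at a degree-$3$ vertex $v$ with neighbors $u_1,u_2,u_3$ it takes the explicit eight-swap closed walk $S_{vu_1},S_{vu_2},S_{vu_1},S_{vu_3},S_{vu_2},S_{vu_1},S_{vu_2},S_{vu_3}$, applied to the color pattern $\ket{\Phi_v}=\ket{I_{01}}$, $\ket{\Phi_{u_1}}\in\{\ket{I_{00}},\ket{I_{11}}\}$, $\ket{\Phi_{u_2}}=\ket{X_{00}}$, $\ket{\Phi_{u_3}}=\ket{X_{11}}$ in case 1 (with $\ket{\Phi_{u_1}}=\ket{X_{01}}$ instead in case 2), checks it returns to the initial state, and counts an odd number of sign-$(-1)$ swaps (three, respectively seven, of the eight). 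Nothing in your proposal establishes that such a loop exists: a priori the parity of bichromatic swaps around closed loops could be an invariant (for a path it \emph{is}, which is exactly why \Cref{thm:2tcb} (\Cref{itm:fwyh}) goes the other way), and disproving that invariance for a star is what the explicit construction accomplishes. ``I would write down the explicit swap loop and verify $\sigma\equiv 1$'' is a promissory note for the crux, not a proof; the two numbered hypotheses of the lemma enter only through this construction (they guarantee the needed colors, plus a spare $\ket{I_{00}}$/$\ket{I_{11}}$ buffer in case 1, can be brought to the star), so no part of the lemma is actually proved without it.

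A secondary problem is that the heuristic you propose to guide the construction is unreliable. From $\sigma\equiv\sum_t(\Phi^{(t-1)}_{u_t}\oplus\Phi^{(t-1)}_{v_t})\odot\vec z^{(t)}$ you infer that frustration should come from edges that skip sites in the linear order; but skipping is not the mechanism. A ring's wrap-around edge $(1,n)$ skips $n-2$ sites, yet for paired states the correction term there always vanishes (the XOR of all sites of a paired state is $0000$ or $1111$, so $\vec z\oplus\Phi_1\oplus\Phi_n\in\{0000,1111\}$ and $(\Phi_1\oplus\Phi_n)\odot\vec z\equiv 0$), which is precisely why rings land in the unfrustrated case of \Cref{thm:2tcb}. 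The quantity that actually controls the loop parity is purely the color pattern: $\Phi_u\odot\Phi_v$ is odd if and only if the two swapped sites carry \emph{distinct} colors from the three classes $\{\ket{I_{01}},\ket{I_{10}}\}$, $\{\ket{X_{00}},\ket{X_{11}}\}$, $\{\ket{X_{01}},\ket{X_{10}}\}$, and even otherwise (in particular even whenever one site is $\ket{I_{00}}$ or $\ket{I_{11}}$). Working directly with this rule, independently of any embedding or crossing numbers, is what makes the paper's eight-swap bookkeeping a finite check; routing your argument through $\vec z^{(t)}$ would force you to track the contents of skipped intervals at every step and would still leave the same construction to be done.
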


\begin{proof}
  Fix a vertex $v\in V$ such that $\deg( v) \geqslant 3$, and pick $u_{1} ,u_{2} ,u_{3}$ from the neighborhood of $v$.
  Then $v,u_{1} ,u_{2} ,u_{3}$ form a star motif centered at $v$.

  (1)
  Assume without loss of generality that $a,b >0$.
  Since $2( a+b+c) < n$, there is at least one $\ket{I_{00}}$ or $\ket{I_{11}}$ in $\ket{\Phi }$.
  We may assume that
  \begin{equation}
    \ket{\Phi _{v}} =\ket{I_{01}} ,\ket{\Phi
      _{u_{1}}} \in \left\{\ket{I_{00}} ,\ket{I_{11}}\right\} ,\ket{\Phi _{u_{2}}}
    =\ket{X_{00}} ,\ket{\Phi _{u_{3}}} =\ket{X_{11}} .
  \end{equation}
  This is because by \Cref{cor:h168}, once we prove $\ket{\Phi } \in M^{\perp }$, we can argue that $\ket{\Phi '} \in M^{\perp }$ for any $\ket{\Phi '} \in \mathcal{S}_{2,( a,b,c)}^{\mathrm{paired}}$.
  Consider the vector sequence $\ket{\Phi ^{( 0)}} =\ket{\Phi }$, and
  \begin{align}
    \ket{\Phi^{(1)}} & = S_{vu_{1}} \ket{\Phi},       & \ket{\Phi^{(2)}} & = S_{vu_{2}} \ket{\Phi^{(1)}},                 & \ket{\Phi^{(3)}} & = S_{vu_{1}} \ket{\Phi^{(2)}}, \label{eq:ek71} \\
    \ket{\Phi^{(4)}} & = S_{vu_{3}} \ket{\Phi^{(3)}}, & \ket{\Phi^{(5)}} & = S_{vu_{2}} \ket{\Phi^{(4)}},                 & \ket{\Phi^{(6)}} & = S_{vu_{1}} \ket{\Phi^{(5)}},                 \\
    \ket{\Phi^{(7)}} & = S_{vu_{2}} \ket{\Phi^{(6)}}, & \ket{\Phi^{(8)}} & = S_{vu_{3}} \ket{\Phi^{(7)}}. \label{eq:f91f}
  \end{align}
  It can be verified that $\ket{\Phi ^{( 8)}} =\ket{\Phi }$.
  By \Cref{lem:vio4} (\Cref{itm:rtjl}), the following vectors lie in $\left(M\cap \mathcal{H}_{2}^{\mathrm{paired}}\right)^{\perp } \cap \mathcal{H}_{2}^{\mathrm{paired}} \subseteq M^{\perp }$.
  \begin{align}
    \ket{\Phi ^{( 0)}} -\ket{\Phi ^{( 1)}} , &  & \ket{\Phi ^{( 1)}} -\ket{\Phi ^{( 2)}} , &  & \ket{\Phi ^{( 2)}} +\ket{\Phi ^{( 3)}} , &  & \ket{\Phi ^{( 3)}} +\ket{\Phi ^{( 4)}} , \\
    \ket{\Phi ^{( 4)}} -\ket{\Phi ^{( 5)}} , &  & \ket{\Phi ^{( 5)}} -\ket{\Phi ^{( 6)}} , &  & \ket{\Phi ^{( 6)}} -\ket{\Phi ^{( 7)}} , &  & \ket{\Phi ^{( 7)}} +\ket{\Phi ^{( 8)}} .
  \end{align}
  Combining these vectors to eliminate $\ket{\Phi ^{( 1)}} ,\ket{\Phi ^{( 2)}} ,\dotsc ,\ket{\Phi ^{( 7)}}$, we get $\ket{\Phi } \in M^{\perp }$, since $\ket{\Phi ^{( 0)}} =\ket{\Phi ^{( 8)}} =\ket{\Phi }$.

  (2)
  Similarly, assume that
  \begin{equation}
    \ket{\Phi _{v}} =\ket{I_{01}} ,\ket{\Phi
      _{u_{1}}}= \ket{X_{01}},\ket{\Phi _{u_{2}}}
    =\ket{X_{00}} ,\ket{\Phi _{u_{3}}} =\ket{X_{11}} .
  \end{equation}
  Consider the vector sequence in \Crefrange{eq:ek71}{eq:f91f}.
  By \Cref{lem:vio4} (\Cref{itm:rtjl}), the following vectors lie in $\left(M\cap \mathcal{H}_{2}^{\mathrm{paired}}\right)^{\perp } \cap \mathcal{H}_{2}^{\mathrm{paired}} \subseteq M^{\perp }$.
  \begin{align}
    \ket{\Phi ^{( 0)}} +\ket{\Phi ^{( 1)}} , &  & \ket{\Phi ^{( 1)}} +\ket{\Phi ^{( 2)}} , &  & \ket{\Phi ^{( 2)}} +\ket{\Phi ^{( 3)}} , &  & \ket{\Phi ^{( 3)}} +\ket{\Phi ^{( 4)}} , \\
    \ket{\Phi ^{( 4)}} +\ket{\Phi ^{( 5)}} , &  & \ket{\Phi ^{( 5)}} +\ket{\Phi ^{( 6)}} , &  & \ket{\Phi ^{( 6)}} -\ket{\Phi ^{( 7)}} , &  & \ket{\Phi ^{( 7)}} +\ket{\Phi ^{( 8)}} .
  \end{align}
  Combining these vectors to eliminate $\ket{\Phi ^{( 1)}} ,\ket{\Phi ^{( 2)}} ,\dotsc ,\ket{\Phi ^{( 7)}}$, we get $\ket{\Phi } \in M^{\perp }$, since $\ket{\Phi ^{( 0)}} =\ket{\Phi ^{( 8)}} =\ket{\Phi }$.
\end{proof}

\begin{lemma}
  \label{lem:ls2d}
  Same conditions as \Cref{thm:2tcb} (\Cref{itm:y7p4}).
  For $\ket{\Phi},\ket{\Phi'}\in\mathcal{S}_{2}^{\mathrm{paired}}$, if $\operatorname{conf}(\ket{\Phi})=\operatorname{conf}(\ket{\Phi'})$, then $\operatorname{sign}\left(\ket{\Phi }\right)\ket{\Phi } -\operatorname{sign}\left(\ket{\Phi '}\right)\ket{\Phi '} \in M^{\perp }$.
\end{lemma}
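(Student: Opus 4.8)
The plan is to follow the same scheme as the proof of \Cref{lem:dbc5}, with the crossing-number parity replaced by the explicit sign function of \Cref{eq:vbgx} and the single-swap identity supplied by \Cref{lem:xg0j}. Write $(a,b,c):=\operatorname{conf}(\ket{\Phi})=\operatorname{conf}(\ket{\Phi'})$ (recall a site swap preserves the configuration). The argument splits according to whether the coefficient $c(a,b,c)$ of \Cref{eq:ztmg} vanishes, since \Cref{lem:xg0j} is only valid when $c(a,b,c)\ne 0$.

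First I would dispose of the degenerate configurations, namely those with $c(a,b,c)=0$. Inspecting \Cref{eq:ztmg} and using $n=2\NumE$ (so $a+b+c\le \NumE$), the configurations with $c(a,b,c)=0$ are exactly those in which either precisely two of $a,b,c$ are nonzero with $a+b+c<\NumE$, or all three are nonzero. In the former case $2(a+b+c)<2\NumE=n$, and in the latter all of $a,b,c$ are nonzero; these are precisely the two hypotheses of \Cref{lem:vsdf}. Its degree requirement $\max_v\deg(v)\ge 3$ holds here because a connected graph that is neither a path nor a ring always has a vertex of degree at least three, and case (\Cref{itm:y7p4}) is understood to exclude the paths and rings already handled in (\Cref{itm:fwyh}). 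Hence \Cref{lem:vsdf} gives $\ket{\Phi},\ket{\Phi'}\in M^{\perp}$, so their signed difference lies in $M^{\perp}$ trivially.

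It then remains to treat $c(a,b,c)\ne 0$. For a single edge swap $\ket{\Phi'}=S_{uv}\ket{\Phi}$ with $(u,v)\in E$, the rotation $A^{\mathrm{qubit}}_{uv}$ occurs in $\vec{R}$, so \Cref{lem:vio4} (\Cref{itm:rtjl}) gives $\ket{\Phi}-(-1)^{\Phi_u\odot\Phi_v}S_{uv}\ket{\Phi}\in\bigl(M_j\cap\mathcal{H}^{\mathrm{paired}}_{2}\bigr)^{\perp}\cap\mathcal{H}^{\mathrm{paired}}_{2}\subseteq M^{\perp}$, while \Cref{lem:xg0j} yields $\operatorname{sign}(\ket{\Phi'})=(-1)^{\Phi_u\odot\Phi_v}\operatorname{sign}(\ket{\Phi})$. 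Since both signs equal $\pm 1$, multiplying the first vector by $\operatorname{sign}(\ket{\Phi})$ turns it into $\operatorname{sign}(\ket{\Phi})\ket{\Phi}-\operatorname{sign}(\ket{\Phi'})\ket{\Phi'}$, which is therefore in $M^{\perp}$. For the general case, $\operatorname{conf}(\ket{\Phi})=\operatorname{conf}(\ket{\Phi'})$ means $\ket{\Phi'}=S_{\pi}\ket{\Phi}$ for some $\pi\in\mathfrak{S}_n$; connectivity of $G$ lets me decompose $\pi$ into swaps along edges, producing a chain $\ket{\Phi}=\ket{\Phi^{(0)}},\dots,\ket{\Phi^{(T)}}=\ket{\Phi'}$ with each consecutive pair related by an edge swap. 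Every $\ket{\Phi^{(t)}}$ shares the configuration $(a,b,c)$, so the single-edge result applies at each step, and telescoping the family $\operatorname{sign}(\ket{\Phi^{(t-1)}})\ket{\Phi^{(t-1)}}-\operatorname{sign}(\ket{\Phi^{(t)}})\ket{\Phi^{(t)}}\in M^{\perp}$ over $t=1,\dots,T$ gives the desired vector.

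The part that needs the most care is the case split itself: one must verify that every configuration with $c(a,b,c)=0$ is genuinely covered by \Cref{lem:vsdf} (which is why the identity $n=2\NumE$ and the even-parts hypothesis are used) and that \Cref{lem:xg0j} is invoked only on configurations with $c(a,b,c)\ne 0$. The latter is automatic, since all intermediate states $\ket{\Phi^{(t)}}$ inherit the configuration of the endpoints; the former is the genuine content, and it is exactly where the arithmetic of \Cref{eq:ztmg} and the degree structure of $G$ interact.
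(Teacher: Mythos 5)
Your proof is correct and follows essentially the same route as the paper's: split on whether $c(a,b,c)=0$, dispose of the vanishing configurations via \Cref{lem:vsdf}, and otherwise combine \Cref{lem:vio4} (\Cref{itm:rtjl}), \Cref{lem:xg0j}, and connectivity of $G$ in a telescoping edge-swap chain exactly as in \Cref{lem:dbc5}. You are in fact more explicit than the paper on one point it glosses over --- that the degree requirement $\max_v \deg(v)\ge 3$ of \Cref{lem:vsdf} is met because case (\Cref{itm:y7p4}) must be read as excluding the paths and rings of case (\Cref{itm:fwyh}) --- and your justification of that reading is sound, since the two asserted forms of the moment vector genuinely differ for larger bipartite paths.
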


\begin{proof}
  Suppose $\operatorname{conf}(\ket{\Phi})=(a,b,c)$.
  If $(a,b,c)$ satisfies the conditions in \Cref{lem:vsdf}, then $\ket{\Phi},\ket{\Phi'}\in M^{\perp}$, hence $( -1)^{\operatorname{sign}\left(\ket{\Phi }\right)}\ket{\Phi } -( -1)^{\operatorname{sign}\left(\ket{\Phi '}\right)}\ket{\Phi '} \in M^{\perp }$.
  Otherwise, $c(a,b,c)\neq 0$ by definition in \Cref{eq:ztmg}.
  The rest of the proof is similar to that of \Cref{lem:dbc5}, using \Cref{lem:vio4} (\Cref{itm:rtjl}), \Cref{lem:xg0j} and connectivity of $G$.
\end{proof}

\begin{lemma}
  \label{lem:fsi5}
  Same condition as \Cref{thm:2tcb} (\Cref{itm:y7p4}).
  For any $\ket{\Phi } \in \mathcal{S}_{2,( a,b,c)}^{\mathrm{paired}} ,\ket{\Phi '} \in \mathcal{S}_{2,( a',b',c')}^{\mathrm{paired}}$, if one of the following 3 cases happens, we have $\operatorname{sign}\left(\ket{\Phi }\right)\ket{\Phi } -\operatorname{sign}\left(\ket{\Phi '}\right)\ket{\Phi '} \in M^{\perp }$.
  \begin{enumerate}
    \item
          $a+b=a'+b'=\NumE$ and $a,b,a',b'>0$.
    \item
          $a+c=a'+c'=\NumE$ and $a,c,a',c'>0$.
    \item
          $b+c=b'+c'=\NumE$ and $b,c,b',c'>0$.
  \end{enumerate}
\end{lemma}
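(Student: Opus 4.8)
The plan is to reduce the claim to relating \emph{adjacent} boundary configurations and then chain the resulting equivalences using that $M^{\perp}$ is a subspace. I focus on Case~1 (Cases~2 and~3 are identical after permuting the three colours). Since $n=2\NumE$, the constraint $a+b+c\le\min\{\NumE,n-\NumE\}=\NumE$ together with $a+b=\NumE$ forces $c=0$, so the configurations in play are exactly $(a,\NumE-a,0)$ with $1\le a\le\NumE-1$. By \Cref{lem:ls2d} any two signed states in the \emph{same} configuration already differ by an element of $M^{\perp}$, so it suffices to produce, for each $1\le a\le\NumE-2$, one state in $\mathcal{S}_{2,(a,\NumE-a,0)}^{\mathrm{paired}}$ and one in $\mathcal{S}_{2,(a+1,\NumE-a-1,0)}^{\mathrm{paired}}$ whose signed difference lies in $M^{\perp}$. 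The obstruction is that every boundary configuration has $d:=\NumE-a-b-c=0$, hence contains no $\ket{I_{00}}$ site, so the flip vectors of \Cref{lem:vio4} (\Cref{itm:rtjl}) cannot act on it directly; the idea is to bridge through an intermediate state with $d=1$.

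First I would fix an edge $(u,v)\in E$, relabelled so that $u\in V_1$, and choose $\ket{\chi}\in\mathcal{S}_{2,(a,\NumE-a-1,0)}^{\mathrm{paired}}$ (which has $d=1$) with the two flipped sites carrying $\ket{I_{00}}$ and $\ket{I_{11}}$; such a state exists because $d=1$ supplies exactly one $\ket{I_{00}}$ and one $\ket{I_{11}}$. The flip relation of \Cref{lem:vio4} (\Cref{itm:rtjl}) then gives
\begin{equation}
\ket{\chi}-F_{uv}^{12}\ket{\chi}-F_{uv}^{13}\ket{\chi}-F_{uv}^{23}\ket{\chi}\in M^{\perp}.
\end{equation}
A direct check of the action of the three flips on $\ket{I_{00}},\ket{I_{11}}$ shows that $F_{uv}^{12}\ket{\chi}$, $F_{uv}^{13}\ket{\chi}$, $F_{uv}^{23}\ket{\chi}$ have configurations $(a+1,\NumE-a-1,0)$, $(a,\NumE-a,0)$, $(a,\NumE-a-1,1)$. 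For $1\le a\le\NumE-2$ the intermediate $\ket{\chi}$ has two nonzero colour counts with $2(\NumE-1)<n$, so $\ket{\chi}\in M^{\perp}$ by \Cref{lem:vsdf}(1), while $F_{uv}^{23}\ket{\chi}$ has all three counts nonzero, so $F_{uv}^{23}\ket{\chi}\in M^{\perp}$ by \Cref{lem:vsdf}(2). Cancelling these terms leaves
\begin{equation}
F_{uv}^{12}\ket{\chi}+F_{uv}^{13}\ket{\chi}\in M^{\perp},
\end{equation}
where $F_{uv}^{13}\ket{\chi}\in\mathcal{S}_{2,(a,\NumE-a,0)}^{\mathrm{paired}}$ and $F_{uv}^{12}\ket{\chi}\in\mathcal{S}_{2,(a+1,\NumE-a-1,0)}^{\mathrm{paired}}$ are the two boundary states to be related.

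It remains to convert this unsigned sum into the signed difference of the statement, i.e.\ to prove $\operatorname{sign}(F_{uv}^{13}\ket{\chi})=-\operatorname{sign}(F_{uv}^{12}\ket{\chi})$. Both states have two nonzero counts summing to $\NumE$, so both use the nontrivial branch of \Cref{eq:vbgx}, and since they differ only at sites $u,v$ their signs differ by $(-1)^{\Delta}$ with $\Delta$ the change in $\sum_{p<q\in V_1}\Phi_p\odot\Phi_q$. Because $v\in V_2$ does not enter the $V_1$-sum, $\Delta\equiv(\Phi^{13}_u\oplus\Phi^{12}_u)\odot\sigma\pmod 2$ with $\sigma:=\bigoplus_{w\in V_1\setminus\{u\}}\Phi_w$, and $\Phi^{13}_u\oplus\Phi^{12}_u$ is the string selecting bits $2$ and $3$ (independently of whether the site started as $\ket{I_{00}}$ or $\ket{I_{11}}$). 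The key arithmetic fact is that every site type occurring in a boundary configuration with $c=d=0$, namely $\ket{I_{01}},\ket{I_{10}},\ket{X_{00}},\ket{X_{11}}$, has the sum of its $2$nd and $3$rd bits equal to $1$, whence $\Delta\equiv\sigma_2+\sigma_3\equiv|V_1|-1\pmod 2$, which is odd precisely because $|V_1|$ is even. Hence $\operatorname{sign}(F_{uv}^{13}\ket{\chi})=-\operatorname{sign}(F_{uv}^{12}\ket{\chi})$, so the unsigned membership above is equivalent to $\operatorname{sign}(F_{uv}^{13}\ket{\chi})F_{uv}^{13}\ket{\chi}-\operatorname{sign}(F_{uv}^{12}\ket{\chi})F_{uv}^{12}\ket{\chi}\in M^{\perp}$. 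Passing to arbitrary representatives via \Cref{lem:ls2d} and chaining over $a$ proves Case~1; Cases~2 and~3 run verbatim with the flip pairs $(F^{12},F^{23})$ and $(F^{13},F^{23})$, where the relevant bit-sums become $\sigma_1+\sigma_3$ and $\sigma_1+\sigma_2$, again forced odd by the same parity of $|V_1|$. The main obstacle is exactly this sign bookkeeping: one must check that in each of the three cases the distinguished pair of flipped bits dots with $\sigma$ to an odd number, which is where the even sizes of the bipartition classes (and hence the bipartiteness hypothesis on $G$) enter essentially.
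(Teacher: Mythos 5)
Your proof is correct and takes essentially the same route as the paper's: both arguments bridge adjacent boundary configurations through an intermediate state of configuration $(a,\NumE-a-1,0)$, apply the flip vector of \Cref{lem:vio4} (\Cref{itm:rtjl}), eliminate the intermediate state and the third flip via \Cref{lem:vsdf}, and then show the two surviving states carry opposite signs because $\abs{V_1}$ is even, finishing with \Cref{lem:ls2d} and chaining over configurations. The only cosmetic differences are that you re-derive the sign parity computation (which the paper obtains by citing \Cref{eq:960l} from the proof of \Cref{lem:z3zj}) and that you spell out the flip pairs for the symmetric cases 2 and 3, which the paper dismisses as similar.
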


\begin{proof}
  We prove the first case since the other 2 cases are similar.
  It suffices to prove that for any $0< l< \NumE -1$, if $a=l+1,b=\NumE -a,a'=l,b'=\NumE -a'$, then $\operatorname{sign}\left(\ket{\Phi }\right)\ket{\Phi } -\operatorname{sign}\left(\ket{\Phi '}\right)\ket{\Phi '} \in M^{\perp }$.
  Furthermore, we may assume that there exists $( u,v) \in E$ and $\ket{\Phi^z} \in \mathcal{S}_{2,( l,\NumE -l-1,0)}^{\mathrm{paired}}$ such that $\ket{\Phi } =F_{uv}^{12}\ket{\Phi^z} ,\ket{\Phi '} =F_{uv}^{13}\ket{\Phi^z}$, since one can then use \Cref{lem:ls2d} to argue for general cases.
  By definition of $G$, there exists $R_{j} \in \vec{R}$ such that $R_{j} =A_{uv}^{\mathrm{qubit}}$.
  By \Cref{lem:vio4} (\Cref{itm:rtjl}) and \Cref{lem:hpti}, $\ket{\Phi^z} -\ket{\Phi } -\ket{\Phi '} -F_{uv}^{23}\ket{\Phi^z} \in \left( M_{j} \cap \mathcal{H}_{2}^{\mathrm{paired}}\right)^{\perp } \cap \mathcal{H}_{2}^{\mathrm{paired}} \subseteq M^{\perp }$.
  Since $\operatorname{conf}\left(\ket{\Phi^z}\right) =( l,\NumE -l-1,0) ,\operatorname{conf}\left( F_{uv}^{23}\ket{\Phi^z}\right) =( l,\NumE -l-1,1)$, we have $\ket{\Phi^z} ,F_{uv}^{23}\ket{\Phi^z} \in M^{\perp }$ according to \Cref{lem:vsdf}.
  Hence, $\ket{\Phi } +\ket{\Phi '} \in M^{\perp }$.
  It remains to show that $\operatorname{sign}\left(\ket{\Phi }\right) /\operatorname{sign}\left(\ket{\Phi '}\right) =-1$, which has been proved in \Cref{eq:960l}.
\end{proof}

\begin{proof}[Proof of \Cref{thm:2tcb}
    (\Cref{itm:y7p4})]
  Denote the state specified in \Crefrange{eq:ztmg}{eq:vbgx} by $\ket{\Psi^*}$.
  Use the notations $M,M_j$ from \Cref{cor:9rac}.
  We proceed similarly to the proof of \Cref{thm:if80} by showing that (1) $\ket{\Psi ^{*}} \in M$ and (2) $\ket{\Psi ^{*}} -\ket{\psi _{0}}^{\otimes 4} \in M^{\perp }$.

  \begin{enumerate}[(1)]
    \item
          Obviously, $\ket{\Psi ^{*}} \in \mathcal{H}_{2}^{\mathrm{paired}}$.
          We prove that $\ket{\Psi ^{*}}$ is orthogonal to $\left(M_{j} \cap \mathcal{H}_{2}^{\mathrm{paired}}\right)^{\perp} \cap \mathcal{H}_{2}^{\mathrm{paired}}$, for each $R_{j} \in \vec{R}$.
          If so, we have $\ket{\Psi ^{*}} \in M$ by \Cref{lem:hpti}.
          Recall the spanning set of $\left(M_{j} \cap \mathcal{H}_{2}^{\mathrm{paired}}\right)^{\perp} \cap \mathcal{H}_{2}^{\mathrm{paired}}$ characterized in \Cref{lem:vio4} (\Cref{itm:rtjl}).
          Suppose $R_{j} =A^{\mathrm{qubit}}_{uv}$.
          For any $\ket{\Phi} \in \mathcal{S}_{2,(a,b,c)}^{\mathrm{paired}}$, we check the following two cases.
          \begin{itemize}
            \item

                  $\ket{\Psi ^{*}}$ is orthogonal to $\ket{\Phi} -(-1)^{\Phi _{u} \odot \Phi
                        _{v}}
                    S_{uv}\ket{\Phi}$.
                  In fact, the overlap between these two vectors is proportional to $\operatorname{sign}\left(\ket{\Phi }\right) -( -1)^{\Phi _{u} \odot \Phi _{v}}\operatorname{sign}\left(S_{uv}\ket{\Phi }\right)$, which is 0 by \Cref{lem:xg0j}.

            \item

                  $\ket{\Psi ^{*}}$ is orthogonal to $\ket{\Phi} -F_{uv}^{12}\ket{\Phi}
                    -F_{uv}^{13}\ket{\Phi} - F_{uv}^{23}\ket{\Phi}$.
                  In fact, the overlap between these two vectors is proportional to $\operatorname{sign}\left(\ket{\Phi }\right) c( a,b,c) -\operatorname{sign}\left(F_{uv}^{12}\ket{\Phi }\right) c( a+1,b,c) -\operatorname{sign}\left(F_{uv}^{13}\ket{\Phi }\right) c( a,b+1,c) -\operatorname{sign}\left(F_{uv}^{23}\ket{\Phi }\right) c( a,b,c+1)$, which is 0 by \Cref{lem:z3zj}.

          \end{itemize}
    \item
          Next, we prove $\ket{\Psi ^{*}} -\ket{\psi _{0}}^{\otimes 4} \in M^{\perp }$,
          by expressing $\ket{\Psi ^{*}} -\ket{\psi _{0}}^{\otimes 4}$ as a linear
          combination of vectors in $\left( M\cap \mathcal{H}_{2}^{\mathrm{paired}}\right)^{\perp } \cap
            \mathcal{H}_{2}^{\mathrm{paired}} \subseteq M^{\perp }$.
          Recalls that the union of the spanning sets of each $\left( M_j\cap \mathcal{H}_{2}^{\mathrm{paired}}\right)^{\perp } \cap \mathcal{H}_{2}^{\mathrm{paired}}$ (characterized in \Cref{lem:vio4} (\Cref{itm:rtjl})) spans $\left( M\cap \mathcal{H}_{2}^{\mathrm{paired}}\right)^{\perp } \cap \mathcal{H}_{2}^{\mathrm{paired}}$.
          \begin{itemize}
            \item
                  For any $\ket{\Phi } \in \mathcal{S}_{2,( 0,0,0)}^{\mathrm{paired}}$,
                  by \Cref{lem:ls2d},
                  \begin{equation}
                    \label{eq:y76y} \ket{\Phi }
                    -\ket{\psi _{0}}^{\otimes 4} \in M^{\perp } .
                  \end{equation}
            \item
                  For any $\ket{\Phi^z} \in \mathcal{S}_{2,(
                      0,0,0)}^{\mathrm{paired}} ,\ket{\Phi^a} \in \mathcal{S}_{2,(
                      1,0,0)}^{\mathrm{paired}} ,\ket{\Phi^b} \in \mathcal{S}_{2,(
                      0,1,0)}^{\mathrm{paired}} ,\ket{\Phi^c} \in \mathcal{S}_{2,(
                      0,0,1)}^{\mathrm{paired}}$, by \Cref{lem:vio4}
                  (\Cref{itm:rtjl}) and \Cref{lem:ls2d},
                  \begin{equation}
                    \label{eq:nbp4} \ket{\Phi^z}
                    -\ket{\Phi^a} -\ket{\Phi^b} -\ket{\Phi^c} \in M^{\perp } .
                  \end{equation}
            \item
                  For any $\ket{\Phi ^{a'}} \in \mathcal{S}_{2,(
                      l,0,0)}^{\mathrm{paired}} ,\ket{\Phi ^{a''}} \in \mathcal{S}_{2,(
                      l+1,0,0)}^{\mathrm{paired}}$ with $1\le l\le \NumE-2$,
                  by \Cref{lem:vio4}
                  (\Cref{itm:rtjl}), \Cref{lem:vsdf} and \Cref{lem:ls2d},
                  \begin{equation}
                    \label{eq:pifm} \ket{\Phi ^{a'}}
                    -\ket{\Phi ^{a''}} \in M^{\perp } .
                  \end{equation}
                  And similarly by replace $a$ by $b,c$ (and the corresponding configurations),
                  we have
                  \begin{gather}
                    \ket{\Phi ^{b'}} -\ket{\Phi ^{b''}} \in M^{\perp } ,
                    \label{eq:4djs} \\
                    \ket{\Phi ^{c'}} -\ket{\Phi ^{c''}} \in M^{\perp }
                    \label{eq:y8d4}.
                  \end{gather}
            \item
                  For any $\ket{\Phi ^{az}} \in \mathcal{S}_{2,( \NumE
                      -1,0,0)}^{\mathrm{paired}} ,\ket{\Phi ^{aa}} \in \mathcal{S}_{2,( \NumE
                      ,0,0)}^{\mathrm{paired}},\ket{\Phi ^{ab}} \in \mathcal{S}_{2,(
                      l,\NumE-l,0)}^{\mathrm{paired}},\ket{\Phi ^{ac}} \in \mathcal{S}_{2,(
                      l,0,\NumE-l)}^{\mathrm{paired}}$ with $0<l<\NumE$,
                  by \Cref{lem:vio4}
                  (\Cref{itm:rtjl}), \Cref{lem:ls2d} and \Cref{lem:fsi5},
                  \begin{equation}
                    \label{eq:hk3q} \ket{\Phi ^{az}}-\ket{\Phi ^{aa}}
                    -\operatorname{sign}\left(\ket{\Phi ^{ab}}\right)\ket{\Phi ^{ab}}
                    -\operatorname{sign}\left(\ket{\Phi ^{ac}}\right)\ket{\Phi ^{ac}}.
                  \end{equation}
                  And similarly by replacing $a,b,c$ (and the corresponding configurations), we
                  have
                  \begin{gather}
                    \ket{\Phi ^{bz}}-\ket{\Phi ^{bb}}
                    -\operatorname{sign}\left(\ket{\Phi ^{ab}}\right)\ket{\Phi
                      ^{ab}}-\operatorname{sign}\left(\ket{\Phi ^{bc}}\right)\ket{\Phi ^{bc}} ,
                    \label{eq:x4xm} \\ \ket{\Phi ^{cz}}-\ket{\Phi ^{cc}}
                    -\operatorname{sign}\left(\ket{\Phi ^{ac}}\right)\ket{\Phi
                      ^{ac}}-\operatorname{sign}\left(\ket{\Phi ^{bc}}\right)\ket{\Phi ^{bc}}
                    \label{eq:zhvv}.
                  \end{gather}
          \end{itemize}
          We prove that there exists functions $s,w:\mathbb{N}\to\mathbb{R}$, such that
          \begin{equation}
            \label{eq:7aav}
            \begin{split}
               & \ket{\Psi ^{*}} -\ket{\psi
              _{0}}^{\otimes 4}                                                                                    \\
               & = \frac{1}{s( 0)}\sum_{(\ref{eq:y76y})} \qty(\ket{\Phi }
              -\ket{\psi _{0}}^{\otimes 4})                                                                        \\
               & - w( 0)\sum_{(\ref{eq:nbp4})} \left(\ket{\Phi^z} -\ket{\Phi^a} -\ket{\Phi^b} -\ket{\Phi^c}\right) \\
               & - \sum
              _{l=1}^{\NumE -2} w( l)\left(\sum_{(\ref{eq:pifm})} \qty(\ket{\Phi ^{a'}}
              -\ket{\Phi ^{a''}}) \right.                                                                          \\
               & \hskip 6em \left. +\sum_{(\ref{eq:4djs})} \qty(\ket{\Phi ^{b'}} -\ket{\Phi
              ^{b''}}) +\sum_{(\ref{eq:y8d4})} \left(\ket{\Phi ^{c'}} -\ket{\Phi ^{c''}}\right)\right)             \\
               & - w( \NumE -1)\sum_{(\ref{eq:hk3q})} \qty(\ket{\Phi ^{az}}-\ket{\Phi ^{aa}}
              -\operatorname{sign}\left(\ket{\Phi ^{ab}}\right)\ket{\Phi ^{ab}}
              -\operatorname{sign}\left(\ket{\Phi ^{ac}}\right)\ket{\Phi ^{ac}})                                   \\
               & - w( \NumE
              -1)\sum_{(\ref{eq:x4xm})} \qty(\ket{\Phi ^{bz}}-\ket{\Phi ^{bb}}
              -\operatorname{sign}\left(\ket{\Phi ^{ab}}\right)\ket{\Phi
              ^{ab}}-\operatorname{sign}\left(\ket{\Phi ^{bc}}\right)\ket{\Phi ^{bc}})                             \\
               & - w(
              \NumE -1)\sum_{(\ref{eq:zhvv})} \qty(\ket{\Phi ^{cz}}-\ket{\Phi ^{cc}}
              -\operatorname{sign}\left(\ket{\Phi ^{ac}}\right)\ket{\Phi
                ^{ac}}-\operatorname{sign}\left(\ket{\Phi ^{bc}}\right)\ket{\Phi ^{bc}}) .
            \end{split}
          \end{equation}
          Here, the subscripts indicate that the summation is taken over vectors in \Crefrange{eq:y76y}{eq:zhvv}.

          Abusing the notation $s( \cdot ,\cdot ,\cdot )$ in the proof of \Cref{thm:if80}, we define $s( l) :=s( l,0,0)$.
          $w( \cdot )$ is defined recursively as follows:
          \begin{equation}
            w( l) =
            \begin{cases}
              s( 1)^{-3}\left(s( 0)^{-1} -D\right) ,                      & l=0,                \\
              s( 2)^{-1}\left(s( 0) s( 1)^{2} w( 0) -\frac{D}{3}\right) , & l=1,                \\
              s( l+1)^{-1}\left(s( l-1) w( l-1) -\frac{D}{3}\right) ,     & 2\le l\le \NumE -2, \\
              \frac{1}{6s( \NumE -1) s( \NumE)} \cdot \frac{2}{\binom{n}{\NumE} +2}
              \cdot \frac{3}{\binom{n}{\NumE}^2-2\binom{n}{\NumE}}
              \cdot \frac{D}{3} ,                                         & l=\NumE -1.
            \end{cases}
          \end{equation}
          By comparing coefficients of vectors on both sides of \Cref{eq:7aav}, we have
          to prove the following linear equations:
          \begin{align}
            D                                                                 & = \frac{1}{s( 0)} -s( 1)^{3} w( 0), \label{eq:mvvu}                                                                         \\
            \frac{D}{3}                                                       & = s( 0) s(1)^{2} w( 0) -s( 2) w( 1), \label{eq:5cmg}                                                                        \\
            \frac{D}{3}                                                       & = s( l-1) w( l-1) -s( l+1) w( l),\quad ( 2\le l\le \NumE -2) , \label{eq:o2md}                                              \\
            \frac{D}{3}                                                       & = s( \NumE -2) w( \NumE -2) -\left(\sum _{i=1}^{\NumE -1} s( i,\NumE -i,0)\right)^2 s( \NumE) w( \NumE -1), \label{eq:96ft} \\
            \frac{2}{\binom{n}{\NumE} +2} \cdot \frac{D}{3}                   & = 6s( \NumE -1) s( \NumE) w( \NumE -1), \label{eq:eebw}                                                                     \\
            \frac{\binom{n}{\NumE} -2}{\binom{n}{\NumE} +2} \cdot \frac{D}{3} & = \qty(\sum _{i=1}^{\NumE -1} s( i,\NumE -i,0))^2 s( \NumE -1) w( \NumE -1) .
            \label{eq:kj1u}
          \end{align}
          Notice that \Cref{eq:mvvu,eq:5cmg,eq:o2md,eq:eebw} holds by definition, and one can check that \Cref{eq:kj1u} also holds.
          To prove \Cref{eq:96ft}, we multiply $s( l)$ on both sides of \Cref{eq:o2md} to
          get
          \begin{equation}
            s( l-1) s( l) w( l-1) -s( l) s( l+1) w( l) =\frac{D}{3} s(
            l) ,\quad 2\le l\le \NumE -2.
          \end{equation}
          Thus,
          \begin{equation}
            s( 1) s( 2) w( 1) -s( \NumE -2) s( \NumE -1) w( \NumE
            -2) =\frac{D}{3}\sum _{l=2}^{\NumE -2} s( l) .
          \end{equation}
          And \Cref{eq:96ft} follows.
  \end{enumerate}
\end{proof}
\subsection{Proof of Theorem \ref{thm:2tcb} (\ref{itm:tany})}

\begin{proof}[Proof of \Cref{thm:2tcb} (\ref{itm:tany})]
  It suffices to prove that if $\ket{\Phi } \in \mathcal{S}_{2,(
      a,b,c)}^{\mathrm{paired}}$ and at least two of $a,b,c$ is non-zero, then
  $\ket{\Phi } \in \qty(M\cap
    \mathcal{H}_{2}^{\mathrm{paired}} \cap \bigcap _{\tau \in
      \mathfrak{S}_{4}}\mathcal{H}_{2}^{\tau })^{\perp}$.
  The rest follows immediately from \Cref{lem:vxe0}.
  Since we did not fall into case 1 or 2, it must be $\max_{v\in V}\deg
    v\geqslant 3$, and
  \begin{enumerate}[(1)]
    \item
          $n\neq 2\NumE$,
    \item
          or $G$
          contains an odd ring as a subgraph, and the size of the ring is smaller than $n$,
    \item
          or $G$ is bipartite, but one part of $G$ has an odd size.
  \end{enumerate}
  We prove that $\ket{\Phi } \in M^{\perp }$ in the first 2 cases, and $\ket{\Phi } \in \left(M\cap \mathcal{H}_{2}^{\mathrm{paired}} \cap \bigcap _{\tau \in \mathfrak{S}_{4}}\mathcal{H}_{2}^{\tau }\right)^{\perp}$ in the \nth{3} case.
  It suffices to prove the case where exactly two of $a,b,c$ are non-zero and $2( a+b+c) =n$, since otherwise $\ket{\Phi } \in M^{\perp }$ according to \Cref{lem:vsdf}.
  Assume without loss of generality that $a,b >0$ and $2( a+b) =n$.

  \begin{enumerate}[(1)]
    \item
          If $n\neq 2\NumE$, then $a+b+c\le \min( \NumE ,n-\NumE) < \frac{n}{2}$.
          Hence, $\ket{\Phi } \in M^{\perp }$ by \Cref{lem:vsdf}.
    \item
          Suppose $G$ contains a ring $G'=( V',E')$ as subgraph, where $V'=\{v_{1} ,\dots
            ,v_{r}\} ,E'=\{( v_{i} ,v_{i+1}) | i\in [ r-1]\} \cup \{( v_{1} ,v_{r})\}$, and
          $3\le r\le n$ is an odd number.
          Assume without loss of generality that $\ket{\Phi _{v_{1}}} =\ket{I_{01}} ,\ket{\Phi _{v_{r}}} =\ket{X_{00}}$.
          Consider the vector sequence $\left(\ket{\Phi ^{( t)}}\right)_{0\leqslant t\leqslant
              T}$ where $T=2r-2$ such that
          \begin{equation}
            \ket{\Phi ^{( t)}} =
            \begin{cases}
              \ket{\Phi},                             & t=0,       \\
              S_{t,t+1}\ket{\Phi ^{( t-1)}} ,         & 1\le t<
              r,                                                   \\
              S_{2r-t-2,2r-t-1}\ket{\Phi ^{( t-1)}} , & r\le t< T, \\
              S_{1r}\ket{\Phi
              ^{( t-1)}} ,                            & t=T.
            \end{cases}
          \end{equation}
          It is easy to see that $\ket{\Phi ^{( T)}} =\ket{\Phi }$.
          Construct a vector $\ket{\Phi ^{( t)}} \pm \ket{\Phi ^{( t+1)}} \in M^{\perp }$ for each $0\leqslant t< T-1$ according to \Cref{lem:vio4} (\Cref{itm:rtjl}), one can argue that $\ket{\Phi } \in M^{\perp }$.
    \item
          Suppose the two parts of $G$ are $V_1,V_2$, and $|V_1|$ is odd.
          We first show that for any $\ket{\Phi} \in
            \mathcal{S}_{2,(a,b,c)}^{\mathrm{paired}} ,\ket{\Phi '} \in
            \mathcal{S}_{2,(a',b',c')}^{\mathrm{paired}}$, if $2(a+b) =2(a'+b') =n$ and
          $a,b,a',b' >0$, then
          \begin{equation}
            \label{eq:oyme}
            (-1)^{n_{a}\left(\ket{\Phi} ;V_{1}\right)}\ket{\Phi}
            -(-1)^{n_{a}\left(\ket{\Phi '} ;V_{1}\right)}\ket{\Phi '} \in M^{\perp} .
          \end{equation}
          Here we define $n_{a}\left(\ket{\Phi} ;V_{1}\right) :=n_{01}^{I}\left(\ket{\Phi} ;V_{1}\right) +n_{10}^{I}\left(\ket{\Phi} ;V_{1}\right)$.
          The proof is similar to \Cref{lem:ls2d} and \Cref{lem:fsi5}, hence we only
          provide a proof sketch.
          \begin{itemize}
            \item
                  \Cref{eq:oyme} holds if $a=a'$. ---
                  For any edge $(u,v)$, there exists $R_{j} \in \vec{R}$ such that $R_{j}
                    =A_{pq}^{\mathrm{qubit}}$.
                  By \Cref{lem:vio4} (\Cref{itm:rtjl}) and \Cref{lem:hpti}, we have $\ket{\Phi} -(-1)^{\Phi _{u} \odot \Phi _{v}} S_{uv}\ket{\Phi} \in \left(M\cap \mathcal{H}_{2}^{\mathrm{paired}}\right)^{\perp} \cap \mathcal{H}_{2}^{\mathrm{paired}} \subseteq M^{\perp}$.
                  On the other hand, one checks that $n_{a}\left(\ket{\Phi} ;V_{1}\right) -n_{a}\left(S_{uv}\ket{\Phi} ;V_{1}\right) \equiv \Phi _{u} \odot \Phi _{v}\pmod {2}$.
                  Hence, \Cref{eq:oyme} holds if $\ket{\Phi '} =S_{uv}\ket{\Phi}$.
                  The connectivity of $G$ then implies that \Cref{eq:oyme} holds if $\operatorname{conf}\left(\ket{\Phi}\right) =\operatorname{conf}\left(\ket{\Phi '}\right) =(a,b,0)$.
            \item
                  \Cref{eq:oyme} holds if $a=a'+1$. ---
                  We may assume that there exists $(u,v) \in E$ and $\ket{\Phi^z} \in
                    \mathcal{S}_{2,(a',b,0)}^{\mathrm{paired}}$ such that $\ket{\Phi}
                    =F_{uv}^{12}\ket{\Phi^z} ,\ket{\Phi '} =F_{uv}^{13}\ket{\Phi^z}$.
                  By \Cref{lem:vio4} (\Cref{itm:rtjl}) and \Cref{lem:hpti}, we have $\ket{\Phi^z} -\ket{\Phi} -\ket{\Phi '} -F_{uv}^{23}\ket{\Phi^z} \in \left(M\cap \mathcal{H}_{2}^{\mathrm{paired}}\right)^{\perp} \cap \mathcal{H}_{2}^{\mathrm{paired}} \subseteq M^{\perp}$.
                  By \Cref{lem:vsdf}, $\ket{\Phi^z} ,F_{uv}^{23}\ket{\Phi^z} \in M^{\perp}$.
                  Hence, $\ket{\Phi} +\ket{\Phi '} \in M^{\perp}$.
                  On the other hand, $n_{a}\left(\ket{\Phi} ;V_{1}\right) -n_{a}\left(\ket{\Phi '} ;V_{1}\right) =1$.
                  Hence, \Cref{eq:oyme} holds.
          \end{itemize}

          Remark that when $|V_{1} |$ is even, $-(-1)^{n_{a}\left(\ket{\Phi} ;V_{1}\right)}$ coincides with $\operatorname{sign}\left(\ket{\Phi}\right)$ defined in \Cref{eq:vbgx}, while when $|V_{1} |$ is odd $\operatorname{sign}\left(\ket{\Phi}\right)$ is always -1.

          Next, we show that $\ket{\Phi} \in M^{\perp} +\left(\mathcal{H}_{2}^{\tau}\right)^{\perp}$ with $\tau =(1\ 4)$, which completes the proof.
          Let $\ket{\Phi '} =\left(\PermB_{\tau}\right)^{\otimes n}\ket{\Phi}$.
          It is straightforward to check that $\operatorname{conf}\left(\ket{\Phi '}\right) =(b,a,0)$ and $n_{a}\left(\ket{\Phi} ;V_{1}\right) +n_{a}\left(\ket{\Phi '} ;V_{1}\right) =|V_{1} |\equiv 1 \pmod{2}$.
          Hence, by \Cref{eq:oyme} we have $\ket{\Phi} +\ket{\Phi '} \in M^{\perp}$.
          On the other hand, by definition of $\mathcal{H}_{2}^{\tau}$ we have $\ket{\Phi} -\ket{\Phi '} \in \left(\mathcal{H}_{2}^{\tau}\right)^{\perp}$.
          Thus, $\ket{\Phi} \in M^{\perp} +\left(\mathcal{H}_{2}^{\tau}\right)^{\perp}$.
  \end{enumerate}
\end{proof}
\section{Proof of main result: Case 3 and 4}\label{app:q52u}

In this section, we prove the exponential concentration of the cost function for alternated dUCC \ansatze{} containing both single and double (qubit) excitation rotations, with a mild connectivity assumption (\Cref{thm:8pgp} (\Cref{itm:k2mi,itm:4cl4})).
Examples of such \ansatzes{} include $k$-UCCSD, $k$-UCCGSD, $k$-UpCCGSD, $k$-qubit-UCCSD and $k$-qubit-UCCGSD.

\begin{theorem}
  \label{thm:g006}
  Let $G=(V,E)$ be a connected graph with $|V|=n$ vertices, $\vec{R}$ be a sequence of single excitation rotations $(A_{uv})_{(u,v)\in E}$ concatenated with at least one double excitation rotation $B_{pqrs}$.
  We have $\ket{\Psi ^{\vec{R}} _{2,\infty}}=\ket{\Psi^{\mathrm{qUCCGS}}_{2,\infty}}$.
\end{theorem}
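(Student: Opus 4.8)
The plan is to derive \Cref{thm:g006} from \Cref{lem:vxe0}. Since the single excitations of $\vec{R}$ already form the connected graph $G$, the first hypothesis of \Cref{lem:vxe0} holds automatically, and the target vector \Cref{eq:47vm} — which by \Cref{thm:2tcb} (\Cref{itm:tany}) is exactly $\ket{\Psi^{\mathrm{qUCCGS}}_{2,\infty}}$ — will follow once the second hypothesis is checked: every paired state $\ket{\Phi}\in\mathcal{S}_{2,(a,b,c)}^{\mathrm{paired}}$ with at least two of $a,b,c$ nonzero must lie in $(M\cap\mathcal{H}_2^{\mathrm{paired}}\cap\bigcap_{\tau}\mathcal{H}_2^{\tau})^{\perp}$. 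Thus the entire proof reduces to killing these ``mixed-configuration'' sectors, and the single double excitation $B_{pqrs}$ is the ingredient that must do this.

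The mechanism I would use is a clash of two sign relations. The single excitations on the connected $G$ supply, via \Cref{lem:vio4} (\Cref{itm:xojb}), \Cref{lem:038o}, and the propagation argument of \Cref{lem:dbc5}, one consistent sign $(-1)^{\operatorname{cr}}$ linking any two paired states of the same configuration: for $\ket{\Phi'}=S_{\pi}\ket{\Phi}$ one has $(-1)^{\operatorname{cr}(\ket{\Phi})}\ket{\Phi}-(-1)^{\operatorname{cr}(\ket{\Phi'})}\ket{\Phi'}\in M^{\perp}$. The double excitation contributes, via \Cref{lem:vio4} (\Cref{itm:646w}), the additional relation $\ket{\Phi}-(-1)^{(\Phi_p\oplus\vec{z})\odot(\Phi_q\oplus\vec{z})}S_{ps}S_{qr}\ket{\Phi}\in M^{\perp}$, where $\vec{z}=\bigoplus_{a\in(s,r)\cup(q,p)}\Phi_a$. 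Because the double swap $S_{ps}S_{qr}$ merely permutes site contents, it preserves the configuration, so the crossing-number relation applies to the pair $(\ket{\Phi},S_{ps}S_{qr}\ket{\Phi})$ as well. Subtracting the two relations, whenever the double-excitation phase $(\Phi_p\oplus\vec{z})\odot(\Phi_q\oplus\vec{z})$ disagrees modulo $2$ with $\operatorname{cr}(S_{ps}S_{qr}\ket{\Phi})-\operatorname{cr}(\ket{\Phi})$ one obtains $S_{ps}S_{qr}\ket{\Phi}\in M^{\perp}$, hence $\ket{\Phi}\in M^{\perp}$.

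For the case analysis I would exploit connectivity: single-excitation swaps move any chosen configuration onto the fixed double-excitation sites, so in a sector with, say, $a,b\ge 1$ I place a complementary red pair at $p,r$ and a complementary green pair at $q,s$ (which is exactly possible when two of $a,b,c$ are nonzero, matching the constraints $\Phi_p=\bar{\Phi}_r$, $\Phi_q=\bar{\Phi}_s$, $\Phi_p\neq\Phi_q$, $\Phi_p\neq\bar{\Phi}_q$ of \Cref{itm:646w}) and then evaluate both signs. The claim to establish is that this parity discrepancy occurs precisely for the mixed sectors, mirroring the star-motif computation of \Cref{lem:vsdf} and the sign bookkeeping of \Cref{lem:z3zj}; once a single representative is shown to lie in $M^{\perp}$, \Cref{lem:dbc5} propagates it to the whole sector and \Cref{cor:h168} closes the sub-case. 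In the roomier regimes, where $2(a+b+c)<n$ (an uncolored site is available) or all three of $a,b,c$ are nonzero (a third color is present), I expect the discrepancy to push the sector fully into $M^{\perp}$.

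The hard part will be the boundary configurations where exactly two of $a,b,c$ are nonzero and $2(a+b+c)=n$, so that every site is colored. Computing $\operatorname{cr}(S_{ps}S_{qr}\ket{\Phi})-\operatorname{cr}(\ket{\Phi})$ by iterating \Cref{lem:038o} (\Cref{itm:ggd9}) through the two constituent swaps and comparing it against the double-excitation phase is delicate there, and in the worst sub-case I anticipate recovering only $\ket{\Phi}\in M^{\perp}+(\mathcal{H}_2^{\tau})^{\perp}$ for a transposition $\tau\in\mathfrak{S}_4$, exactly as in the bipartite-with-odd-part situation of \Cref{thm:2tcb} (\Cref{itm:tany}). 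Combining this with the $(\PermB_{\tau})^{\otimes n}$-symmetry (\Cref{cor:vciu}), which relates $\ket{\Phi}$ to $(\PermB_{\tau})^{\otimes n}\ket{\Phi}$ in a different configuration, then places $\ket{\Phi}$ in $(M\cap\mathcal{H}_2^{\mathrm{paired}}\cap\bigcap_{\tau}\mathcal{H}_2^{\tau})^{\perp}$, which is all that \Cref{lem:vxe0} requires to conclude $\ket{\Psi^{\vec{R}}_{2,\infty}}=\ket{\Psi^{\mathrm{qUCCGS}}_{2,\infty}}$.
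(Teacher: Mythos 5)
Your proposal is essentially the paper's own proof: the paper reduces to \Cref{lem:vxe0} in exactly this way, places a red complementary pair at $(p,r)$ and a green one at $(q,s)$ (i.e.\ $\ket{\Phi_p}=\ket{I_{01}}$, $\ket{\Phi_q}=\ket{X_{00}}$, $\ket{\Phi_r}=\ket{I_{10}}$, $\ket{\Phi_s}=\ket{X_{11}}$), and obtains $\ket{\Phi}\in M^{\perp}$ from precisely your sign clash, realized as a three-step cycle $\ket{\Phi}\to S_{sp}S_{rq}\ket{\Phi}\to S_{sp}\ket{\Phi}\to\ket{\Phi}$ whose first leg uses \Cref{lem:vio4} (\Cref{itm:646w}) and whose remaining legs use \Cref{lem:dbc5}. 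Your anticipated hard case never arises: the net parity around this cycle collapses to $\Phi_p\odot\Phi_q\equiv 1$ (two distinct colors at the double-excitation sites) independently of all other sites, so the clash is uniform over every mixed sector --- including $2(a+b+c)=n$ --- and neither the case split nor the $(\PermB_{\tau})^{\otimes n}$-symmetry fallback is needed; those complications genuinely occur only in the qubit-excitation setting of \Cref{thm:2tcb} and \Cref{thm:dkm3}.
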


\begin{proof}
  It suffices to prove that if $\ket{\Phi } \in \mathcal{S}_{2,( a,b,c)}^{\mathrm{paired}}$ and at least two of $a,b,c$ are non-zero, then $\ket{\Phi } \in M^{\perp }$.
  The theorem then follows immediately from \Cref{lem:vxe0}.
  Assume without loss of generality that $a,b >0, p>q>r>s$, and
  \begin{equation}
    \label{eq:f6h5} \ket{\Phi_p} = \ket{I_{01}}, \quad \ket{\Phi_q} = \ket{X_{00}},
    \quad \ket{\Phi_r} = \ket{I_{10}}, \quad \ket{\Phi_s} = \ket{X_{11}}.
  \end{equation}
  Denote $\vec{z}_{1} =\underset{a\in ( s,r)}{\bigoplus } \Phi _{a} ,\vec{z}_{2} =\underset{a\in ( r,q)}{\bigoplus } \Phi _{a} ,\vec{z}_{3} =\underset{a\in ( q,p)}{\bigoplus } \Phi _{a}$.
  Consider the following state sequence:
  \begin{equation}
    \label{eq:a5fa}
    \ket{\Phi^{(0)}}=\ket{\Phi},\quad \ket{\Phi ^{( 1)}} =S_{sp}
    S_{rq}\ket{\Phi } ,\quad \ket{\Phi ^{( 2)}} =S_{rq}\ket{\Phi ^{( 1)}} ,\quad \ket{\Phi ^{( 3)}} =S_{sp}\ket{\Phi ^{( 2)}} .
  \end{equation}
  Obviously $\ket{\Phi ^{( 3)}} =\ket{\Phi }$.
  By \Cref{lem:vio4} (\Cref{itm:646w}),
  \begin{equation}
    \label{eq:024f}
    \ket{\Phi ^{( 0)}} -( -1)^{( \Phi _{p} \oplus \vec{z}_{1} \oplus \vec{z}_{3})
        \odot ( \Phi _{q} \oplus \vec{z}_{1} \oplus \vec{z}_{3})}\ket{\Phi ^{( 1)}} \in
    \left(M\cap \mathcal{H}_{2}^{\mathrm{paired}}\right)^{\perp } \cap
    \mathcal{H}_{2}^{\mathrm{paired}} \subseteq M^{\perp } .
  \end{equation}
  By \Cref{lem:dbc5},
  \begin{gather}
    \ket{\Phi ^{( 1)}} -( -1)^{( \Phi _{q}
        \oplus \vec{z}_{2}) \odot ( \Phi _{r} \oplus \vec{z}_{2}) }\ket{\Phi ^{( 2)}}
    \in M^{\perp } ,\label{eq:v3ny}\\
    \ket{\Phi ^{( 2)}} -( -1)^{( \Phi _{p} \oplus
        \vec{z}_{1} \oplus \vec{z}_{2} \oplus \vec{z}_{3} \oplus \Phi _{q} \oplus \Phi
        _{r}) \odot ( \Phi _{s} \oplus \vec{z}_{1} \oplus \vec{z}_{2} \oplus
        \vec{z}_{3} \oplus \Phi _{q} \oplus \Phi _{r}) }\ket{\Phi ^{( 3)}} \in M^{\perp
      } .
    \label{eq:a6dv}
  \end{gather}
  Combining \Cref{eq:024f,eq:v3ny,eq:a6dv} to eliminate $\ket{\Phi ^{( 1)}} ,\ket{\Phi ^{( 2)}}$ we get $\ket{\Phi } \in M^{\perp }$, since $\ket{\Phi ^{(0)}}=\ket{\Phi ^{(3)}}=\ket{\Phi}$.
\end{proof}

For alternated qubit dUCC \ansatzes{}, we make mild assumptions about the topology $G$ as well as the position of qubit double excitation rotations.

\begin{theorem}
  \label{thm:dkm3}
  Let $G=(V,E)$ be a connected graph with $|V|=n$ vertices and $\max_{v\in V}\deg(v)\ge 3$, $\vec{R}$ be a sequence of qubit single excitation rotations $(A^{\mathrm{qubit}}_{uv})_{(u,v)\in E}$ concatenated with at least one qubit double excitation rotation $B^{\mathrm{qubit}}_{pqrs}$.
  Assume, only when $n=2\NumE$ and $G$ is a bipartite graph with two non-empty even parts $V=V_1\cup V_2$, that $\abs{V_{1} \cap \left\{p,q,r,s\right\}}=2$.
  We have $\ket{\Psi ^{\vec{R}} _{2,\infty}}=\ket{\Psi^{\mathrm{qUCCGS}}_{2,\infty}}$.
\end{theorem}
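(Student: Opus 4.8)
The plan is to mirror the proof of \Cref{thm:g006}: reduce everything to a single orthogonality statement and then invoke \Cref{lem:vxe0}. Since $G$ is connected, it suffices to show that every paired state $\ket{\Phi}\in\mathcal{S}_{2,(a,b,c)}^{\mathrm{paired}}$ with at least two of $a,b,c$ nonzero lies in $(M\cap\mathcal{H}_{2}^{\mathrm{paired}}\cap\bigcap_{\tau\in\mathfrak{S}_4}\mathcal{H}_{2}^{\tau})^{\perp}$; \Cref{lem:vxe0} then forces $\ket{\Psi^{\vec{R}}_{2,\infty}}$ into the form \Cref{eq:47vm}, which is exactly $\ket{\Psi^{\mathrm{qUCCGS}}_{2,\infty}}$ by \Cref{thm:2tcb} (\ref{itm:tany}). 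Because $\max_{v}\deg(v)\ge 3$, \Cref{lem:vsdf} immediately disposes of the cases where all three of $a,b,c$ are nonzero, or exactly two are nonzero with $2(a+b+c)<n$. Thus the only surviving case is \emph{exactly two} of $a,b,c$ nonzero with $2(a+b+c)=n$; since $a+b+c\le\min(\NumE,n-\NumE)$, this already forces $n=2\NumE$ and $a+b+c=\NumE$.

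For this residual case I would first handle the topologies for which the qubit single excitations alone already yield the generic answer. If $G$ is not bipartite with two even parts, then, exactly as in cases (2) and (3) of the proof of \Cref{thm:2tcb} (\ref{itm:tany}), either $G$ carries an odd subring, giving $\ket{\Phi}\in M_{\mathrm{sing}}^{\perp}$ via the odd-cycle swap argument, or $G$ is bipartite with an odd part, giving $\ket{\Phi}\in M_{\mathrm{sing}}^{\perp}+(\mathcal{H}_{2}^{\tau})^{\perp}$ with $\tau=(1\,4)$. Here $M_{\mathrm{sing}}$ denotes the intersection space of the single-excitation subsequence; appending the double excitation only shrinks the intersection, so $M\subseteq M_{\mathrm{sing}}$ and hence $M_{\mathrm{sing}}^{\perp}\subseteq M^{\perp}$, and these states remain in the required complement.

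The crux is the bipartite-even case $V=V_1\cup V_2$, precisely the configuration excluded by \Cref{thm:2tcb} (\ref{itm:tany}) and where the single-excitation moment vector of \Cref{thm:2tcb} (\ref{itm:y7p4}) is \emph{not} generic. I would exploit the one qubit double excitation $B^{\mathrm{qubit}}_{pqrs}$ together with the hypothesis $\abs{V_1\cap\{p,q,r,s\}}=2$, combining two relations. First, the single-excitation sign relation derived exactly as \Cref{eq:oyme} in the proof of \Cref{thm:2tcb} (\ref{itm:tany}) (that derivation never uses the parity of $\abs{V_1}$): for states of the residual type, $(-1)^{n_a(\ket{\Phi};V_1)}\ket{\Phi}-(-1)^{n_a(\ket{\Phi'};V_1)}\ket{\Phi'}\in M^{\perp}$, where $n_a(\ket{\Phi};V_1)$ counts the red sites of $\ket{\Phi}$ inside $V_1$. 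Second, choosing a representative $\ket{\Phi}$ of the configuration (legitimate by \Cref{cor:h168}, since membership in $M^{\perp}$ is a configuration-level property up to edge swaps) with the colors at $p,q,r,s$ arranged so that \Cref{lem:vio4} (\Cref{itm:qfwn}) applies, I obtain $\ket{\Phi}-(-1)^{\Phi_p\odot\Phi_q}S_{ps}S_{qr}\ket{\Phi}\in M^{\perp}$. Writing $\ket{\Phi'}:=S_{ps}S_{qr}\ket{\Phi}$, which has the same configuration so that the first relation applies to the pair, subtracting the two relations collapses $\ket{\Phi}$ into $M^{\perp}$ provided the sign $(-1)^{\Phi_p\odot\Phi_q}$ from the double excitation is the negative of the sign $(-1)^{n_a(\ket{\Phi'};V_1)-n_a(\ket{\Phi};V_1)}$ coming from the single excitations.

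Establishing that sign mismatch is the main obstacle: it amounts to proving $\Phi_p\odot\Phi_q+[\,n_a(\ket{\Phi'};V_1)-n_a(\ket{\Phi};V_1)\,]\equiv 1\pmod 2$ under the standing hypothesis $\abs{V_1\cap\{p,q,r,s\}}=2$. This is a finite but delicate parity computation: one must track how the double swap $S_{ps}S_{qr}$ moves the red and green colored sites across the bipartition, case-splitting on which two of $p,q,r,s$ fall in $V_1$. The hypothesis $\abs{V_1\cap\{p,q,r,s\}}=2$ is exactly what makes the two signs opposite in every sub-case, whereas for $\abs{V_1\cap\{p,q,r,s\}}\in\{0,1,3,4\}$ they would coincide and the state would survive; this is where the hypothesis is consumed and where the bookkeeping must be carried out with care.
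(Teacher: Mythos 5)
Your proposal is correct and is essentially the paper's own proof: the paper likewise reduces to showing that every paired state with exactly two of $a,b,c$ nonzero and $a+b+c=\NumE$ (the only case surviving \Cref{lem:vsdf} and the single-excitation analysis of \Cref{thm:2tcb}) lies in $M^{\perp}$, and in the bipartite-even case it plays the double-excitation relation of \Cref{lem:vio4} (\Cref{itm:qfwn}) against single-excitation sign relations. The only cosmetic difference is that the paper splits $S_{ps}S_{qr}$ into two single swaps and applies \Cref{lem:ls2d} twice around the cycle \Cref{eq:a5fa}, whereas you compare $\ket{\Phi}$ with $S_{ps}S_{qr}\ket{\Phi}$ in one step using the $(-1)^{n_{a}(\,\cdot\,;V_{1})}$ weights of \Cref{eq:oyme}; these are equivalent bookkeeping devices, since $-(-1)^{n_{a}(\,\cdot\,;V_{1})}$ coincides with $\operatorname{sign}(\cdot)$ of \Cref{eq:vbgx} when $\abs{V_{1}}$ is even.

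One aside you make is false, although it does not damage the proof of the theorem as stated. The parity you must control is $n_{a}(S_{ps}S_{qr}\ket{\Phi};V_{1})-n_{a}(\ket{\Phi};V_{1})=\abs{\{q,s\}\cap V_{1}}-\abs{\{p,r\}\cap V_{1}}$, because the double swap moves the red sites among $\{p,q,r,s\}$ from $\{p,r\}$ to $\{q,s\}$; this difference has the same parity as $\abs{V_{1}\cap\{p,q,r,s\}}$, so it is even exactly when that intersection has even size. Hence the elimination goes through not only for intersection size $2$ (in all six placements, as you correctly assert) but also for sizes $0$ and $4$; only the odd sizes $1$ and $3$ make the double-excitation sign and the single-excitation sign coincide and kill the argument. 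What the mechanism consumes is therefore evenness of $\abs{V_{1}\cap\{p,q,r,s\}}$; the hypothesis ``$=2$'' is sufficient, but contrary to your last paragraph it is not the exact boundary of where the argument works.
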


\begin{proof}
  If $n\neq 2\NumE$ or $G$ is not a bipartite graph with two non-empty even parts, then $\dim M = 1$ (\Cref{thm:2tcb} (\Cref{itm:tany})), and it must be $\ket{\Psi^{\vec{R}}_{2,\infty}}=\ket{\Psi^{\text{qUCCGS}}_{2,\infty}}$.
  Otherwise, we show that if $\ket{\Phi } \in \mathcal{S}_{2,( a,b,c)}^{\mathrm{paired}}$ and at least two of $a,b,c$ is non-zero, then $\ket{\Phi } \in M^{\perp }$.
  The theorem then follows immediately from \Cref{lem:vxe0}.
  Similar to the proof of \Cref{thm:g006}, we assume \Cref{eq:f6h5}, and consider the paired state sequence \Cref{eq:a5fa}.
  By \Cref{lem:vio4} (\Cref{itm:qfwn}),
  \begin{equation}
    \label{eq:jhwd}
    \ket{\Phi ^{( 0)}} +\ket{\Phi ^{( 1)}} \in \qty(M\cap
    \mathcal{H}_{2}^{\mathrm{paired}})^{\perp } \cap
    \mathcal{H}_{2}^{\mathrm{paired}} \subseteq M^{\perp } .
  \end{equation}
  By \Cref{lem:ls2d},
  \begin{equation}
    \label{eq:cghj} \ket{\Phi ^{( 1)}}
    +\ket{\Phi ^{( 2)}}, \ket{\Phi ^{( 2)}} +\ket{\Phi ^{( 3)}} \in M^{\perp }.
  \end{equation}
  Combining \Cref{eq:jhwd,eq:cghj} to eliminate $\ket{\Phi ^{( 1)}} ,\ket{\Phi ^{( 2)}}$ we get $\ket{\Phi } \in M^{\perp }$, since $\ket{\Phi ^{(0)}}=\ket{\Phi ^{(3)}}=\ket{\Phi}$.
\end{proof}

As an immediate corollary, the moment vectors of the \nth{2} moment at infinity step of $k$-UCCSD, $k$-UCCGSD, $k$-UpCCGSD, $k$-qubit-UCCSD, and $k$-qubit-UCCGSD are the same.

\begin{corollary}
  We have
  \begin{equation}
    \ket{\Psi ^{\mathrm{UCCSD}}_{2,\infty}} = 
    \ket{\Psi ^{\mathrm{UCCGSD}}_{2,\infty}} = 
    \ket{\Psi ^{\mathrm{UpCCGSD}}_{2,\infty}} = 
    \ket{\Psi ^{\mathrm{qUCCSD}}_{2,\infty}} = 
    \ket{\Psi ^{\mathrm{qUCCGSD}}_{2,\infty}} = 
    \ket{\Psi ^{\mathrm{qUCCGS}}_{2,\infty}}.
  \end{equation}
\end{corollary}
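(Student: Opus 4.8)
The plan is to derive all six equalities from the two theorems just established, \Cref{thm:g006} and \Cref{thm:dkm3}, both of which conclude $\ket{\Psi^{\vec{R}}_{2,\infty}}=\ket{\Psi^{\mathrm{qUCCGS}}_{2,\infty}}$ whenever their hypotheses hold. Since $\ket{\Psi^{\mathrm{qUCCGS}}_{2,\infty}}$ serves as a common anchor, it suffices to verify, ansatz by ansatz, that the single-excitation graph $G$ of \Cref{thm:8pgp} meets the required graph-theoretic conditions and that at least one double excitation rotation is present; transitivity of equality then closes the chain.

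First, I would treat the three non-qubit \ansatzes{} with \Cref{thm:g006}. For $k$-UCCSD the singles $A_{pq}$ ($p>\NumE\ge q$) form the complete bipartite graph $K_{\NumE,n-\NumE}$, which is connected for $0<\NumE<n$; for $k$-UCCGSD and $k$-UpCCGSD the generalized singles $A_{pq}$ ($p>q$) form the complete graph $K_n$. Each of the three includes a double excitation $B_{pqrs}$ (the pair excitations $B_{2a,2a-1,2b,2b-1}$ in the UpCCGSD case). Hence the hypotheses of \Cref{thm:g006} are met and $\ket{\Psi^{\mathrm{UCCSD}}_{2,\infty}}=\ket{\Psi^{\mathrm{UCCGSD}}_{2,\infty}}=\ket{\Psi^{\mathrm{UpCCGSD}}_{2,\infty}}=\ket{\Psi^{\mathrm{qUCCGS}}_{2,\infty}}$.

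Second, I would treat the qubit \ansatzes{} with \Cref{thm:dkm3}, whose hypotheses additionally demand $\max_v\deg(v)\ge 3$ and, in the exceptional regime $n=2\NumE$ with $G$ bipartite into two even parts $V=V_1\cup V_2$, the index condition $\abs{V_1\cap\{p,q,r,s\}}=2$. For $k$-qubit-UCCGSD the singles give $K_n$, which for $n\ge 4$ is non-bipartite and has $\max_v\deg(v)=n-1\ge 3$, so the theorem applies outright. For $k$-qubit-UCCSD the singles give $K_{\NumE,n-\NumE}$, bipartite with parts $\{1,\dots,\NumE\}$ and $\{\NumE+1,\dots,n\}$; its doubles $B^{\mathrm{qubit}}_{pqrs}$ with $p>q>\NumE\ge r>s$ place exactly two indices in each part, so $\abs{V_1\cap\{p,q,r,s\}}=2$ holds automatically, while $\max_v\deg(v)=\max(\NumE,n-\NumE)\ge 3$ in the regime $\NumE=\Theta(n)$ of interest. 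Both qubit moment vectors therefore also collapse onto $\ket{\Psi^{\mathrm{qUCCGS}}_{2,\infty}}$, completing the equalities.

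The main obstacle is not conceptual but one of uniform case-checking: the reduction is immediate once connectivity, the degree bound, and the bipartite index condition are verified for each definition. The only genuinely delicate point is a degenerate small system—since a double excitation already forces $n\ge 4$, the unique problematic instance is $k$-qubit-UCCSD at $n=2\NumE=4$, where $K_{2,2}$ is a mere $4$-ring with $\max_v\deg(v)=2$ and lies outside \Cref{thm:dkm3} (the degree-$3$ hypothesis of \Cref{lem:vsdf}, on which \Cref{thm:dkm3} rests, fails, and this graph also falls under the path/ring branch \Cref{thm:2tcb} (\Cref{itm:fwyh}) for the singles alone). For this exceptional low-dimensional case I would fall back on a direct computation of $M\cap\mathcal{H}_2^{\mathrm{paired}}$ in the presence of the added $B^{\mathrm{qubit}}_{pqrs}$, checking by hand that the intersection still reduces to the one-dimensional \Cref{thm:2tcb} (\Cref{itm:tany}) form; generically, where $\NumE=\Theta(n)$ guarantees $\max_v\deg(v)\ge 3$, no such special handling is needed and the corollary is an immediate consequence of the two theorems.
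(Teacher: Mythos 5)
Your proposal is correct and takes essentially the same route as the paper: there the corollary is stated as an ``immediate'' consequence of \Cref{thm:g006,thm:dkm3}, i.e., precisely your check that each ansatz's singles graph is connected ($K_{\NumE,n-\NumE}$ or $K_n$), that at least one (qubit) double rotation is present, and that the doubles of $k$-qubit-UCCSD place exactly two indices in each part of the bipartition so the hypothesis of \Cref{thm:dkm3} is satisfied. Your additional flagging of the degenerate instance $n=2\NumE=4$ for $k$-qubit-UCCSD, where $K_{2,2}$ has maximum degree $2$ and \Cref{thm:dkm3} does not apply, identifies a boundary case the paper passes over silently, and handling it by direct computation of the intersection space is a sensible patch.
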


With the moment vector of the \nth{2} moment characterized, we can calculate the \nth{2} moment of the cost function for any observables.
Cases 3 and 4 of the main result are stated formally as the following corollary.

\begin{corollary}[Main result, Cases 3 and 4]
  Let $\vec{R}$ be defined in \Cref{thm:g006,thm:dkm3}, and
  $C(\vec{\uptheta};U^{\vec{R}}_{k},H_{\mathrm{el}})$ be the cost function defined in
  \Cref{eq:wtfm}, where $H_{\mathrm{el}}$ is an electronic structure Hamiltonian
  defined in
  \Cref{eq:dv4e}.
  $\lim_{k\to\infty}\Var{C}$ is the same as in \Cref{cor:e79n} (\Cref{itm:v8lq}).
\end{corollary}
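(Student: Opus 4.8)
The plan is to recognize that the statement is a direct assembly of results already proved, because the cost variance is fully determined by the $(\vec{R},2,\infty)$-moment vector. First I would invoke the unbiased-cost theorem to replace $\Var{C}$ by $\E{C^{2}}$, and then apply \Cref{lem:csxv} at $t=2$ to write $\lim_{k\to\infty}\E{C^{2}}=\bra{H_{\mathrm{el}}}^{\otimes 2}\ket{\Psi^{\vec{R}}_{2,\infty}}$. Thus it suffices to identify $\ket{\Psi^{\vec{R}}_{2,\infty}}$ and pair it with $\ket{H_{\mathrm{el}}}^{\otimes 2}$.

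The key reduction is to match $\ket{\Psi^{\vec{R}}_{2,\infty}}$ to a moment vector whose overlap with $\ket{H_{\mathrm{el}}}^{\otimes 2}$ has already been computed. By hypothesis $\vec{R}$ contains both single and double (qubit) excitation rotations, so it falls under either \Cref{thm:g006} (the non-qubit Case 3) or \Cref{thm:dkm3} (the qubit Case 4); both conclude $\ket{\Psi^{\vec{R}}_{2,\infty}}=\ket{\Psi^{\mathrm{qUCCGS}}_{2,\infty}}$. In turn $\ket{\Psi^{\mathrm{qUCCGS}}_{2,\infty}}$ is the moment vector supplied by \Cref{thm:2tcb} (\ref{itm:tany}), namely the generic one-dimensional form of \Cref{eq:47vm}. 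Consequently, for every $\vec{R}$ in Cases 3 and 4 the moment vector coincides with the one appearing in \Cref{cor:e79n} (\Cref{itm:v8lq}).

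Finally, since $\ket{\Psi}\mapsto\bra{H_{\mathrm{el}}}^{\otimes 2}\ket{\Psi}$ is linear, equal moment vectors force equal variances, and I would simply reuse the evaluation from \Cref{cor:e79n}. The cross terms vanish by the same $\mathcal{H}^{\mathrm{even}}_{2}$-orthogonality argument used in the proof of \Cref{cor:7i0g}, which applies because $\ket{\Psi^{\vec{R}}_{2,\infty}}\in\mathcal{H}^{\mathrm{even}}_{2}$ by \Cref{cor:5iu8}; only the diagonal contributions weighted by $h_{pq}^{2}$ and $g_{pqrs}^{2}$ survive. The two surviving per-term overlaps $\bra{\hat{a}_{p}^{\dagger}\hat{a}_{q}+h.c.}^{\otimes 2}\ket{\Psi^{\vec{R}}_{2,\infty}}$ and $\bra{\hat{a}_{p}^{\dagger}\hat{a}_{q}^{\dagger}\hat{a}_{r}\hat{a}_{s}+h.c.}^{\otimes 2}\ket{\Psi^{\vec{R}}_{2,\infty}}$ are exactly the values already tabulated in \Cref{cor:e79n} (\Cref{itm:v8lq}), and summing them against $\norm{\vec{h}}_2^2$ and $\norm{\vec{g}}_2^2$ reproduces the stated formula. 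The genuine obstacle lies entirely upstream, in \Cref{thm:g006} and \Cref{thm:dkm3}: the content there is that inserting even one (qubit) double excitation collapses $M\cap\mathcal{H}_{2}^{\mathrm{paired}}\cap\bigcap_{\tau}\mathcal{H}_{2}^{\tau}$ onto the one-dimensional qUCCGS intersection. Granting those theorems, the present corollary is pure bookkeeping.
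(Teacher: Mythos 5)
Your proposal is correct and takes essentially the same route as the paper: the paper presents this corollary as an immediate consequence of \Cref{thm:g006,thm:dkm3} (which identify $\ket{\Psi^{\vec{R}}_{2,\infty}}$ with $\ket{\Psi^{\mathrm{qUCCGS}}_{2,\infty}}$, itself given by \Cref{thm:2tcb}), followed by reuse of the evaluation in \Cref{cor:e79n}. The bookkeeping you spell out --- unbiasedness reducing $\Var{C}$ to $\E{C^{2}}$, linearity of the pairing with $\bra{H_{\mathrm{el}}}^{\otimes 2}$, and the vanishing of cross terms via the $\mathcal{H}^{\mathrm{even}}_{2}$-orthogonality argument from \Cref{cor:7i0g} --- is exactly what the paper leaves implicit.
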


Notice that $\lim _{k\to \infty }\Var{C}=\exp(-\Theta(n))$.
\section{More numerical results}\label{app:y9qk}

To support this conjecture, we further investigate how the variance of the cost function for $k$-UCCSD changes with increasing $n$ when $k$ is small.
In \Cref{fig:ghj3}, we plot the scaling of the cost variance for $k$-UCCSD, with respect to $n \in \{4, 8, \dots, 24\}$ when $k \in \{1, 2, 3, \infty\}$.
Like in main text Figure 2, we only display the results for two fixed observables $\hat{a}^{\dagger}_2 \hat{a}_1 + h.c\period$ and $\hat{a}^{\dagger}_4 \hat{a}^{\dagger}_3 \hat{a}_2 \hat{a}_1 + h.c\period$, and the estimated data points with an exponential function $a \exp (b \cdot n)$.
Unlike the main text Figure 2, the number of electrons is set to $\NumE = n/4$.
It can be seen that the observation of main text Figure 2 also holds for \Cref{fig:ghj3}, indicating that our conclusions for $k$-UCCSD are not affected by the number of electrons $\NumE$.

\begin{figure}[ht]
  \centering
  \includegraphics{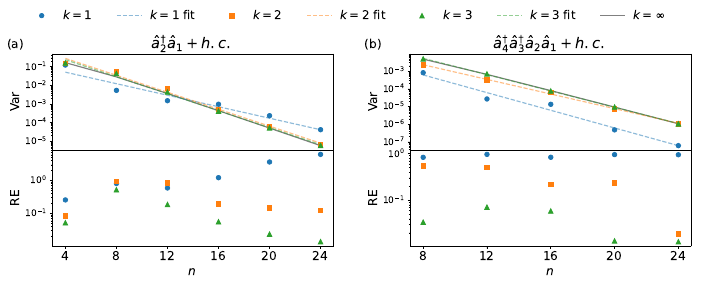}
  \caption{
    \textbf{Scaling of the cost variance and relative error to asymptotic variance for $k$-UCCSD when $\NumE = n/4$.}
    Here we present the cost variance (Var) and the relative error to asymptotic variance (RE) for the $k$ alternations of unitary coupled cluster with singles and doubles ($k$-UCCSD) \ansatze measured by (a) $\hat{a}^{\dagger}_2 \hat{a}_1 + h.c\period$ and (b) $\hat{a}^{\dagger}_4 \hat{a}^{\dagger}_3 \hat{a}_2 \hat{a}_1 + h.c\period$.
    The abbreviation ``h.c\period'' stands for ``Hermitian conjugate''.
    The number of qubit is $n \in \{4, 8, \dots, 24\}$, with the number of electrons given by $\NumE = n/4$.
    For each observable, the illustrated variances for $k = 1, 2, 3$ are estimated from 6000 random samples at each $n$ value, and are represented by blue circles, orange squares, and green triangles, respectively.
    The variances for $k = 1, 2, 3$ are fitted using the function $a \exp(b \cdot n)$, with the resulting fitting curves shown as blue, orange, and green dashed lines.
    The computed asymptotic variances ($k \to \infty$) are depicted as gray solid curves.
    In the lower half of each panel, we display the relative error of the variances at $k = 1, 2, 3$ compared to the asymptotic variances, also represented by blue circles, orange squares, and green triangles.
    The error bars, estimated as $\sqrt{2/(6000-1)} \cdot \mathrm{Var}$, are not shown as they are smaller than the data points.
  }
  \label{fig:ghj3}
\end{figure}

\end{document}